\definecolor{oxfordColor}{HTML}{002147}
\newtheorem{theorem}{Theorem}
\newtheorem{lemma}[theorem]{Lemma}
\newtheorem{proposition}[theorem]{Proposition}
\newtheorem{corollary}[theorem]{Corollary}
\newtheorem{remark}[theorem]{Remark}
\newtheorem{example}[theorem]{Example}
\newtheorem{definition}[theorem]{Definition}
\newtheorem*{thmsampling}{Theorem~\ref{thm:sampling}}
\newtheorem*{thmconnectivity}{Theorem~\ref{thm:connectivity}}
\newtheorem*{thmlooseness}{Theorem~\ref{thm:looseness}}
\newtheorem*{lemccgeneral}{Lemma~\ref{lem:cc-general}}
\newtheorem*{lemsample}{Lemma~\ref{lem:sample}}
\newtheorem*{lemmt}{Lemma~\ref{lem:mixing-time}}
\newtheorem*{lembad}{Lemma~\ref{lem:bad}}
\newtheorem*{lembadall}{Lemma~\ref{lem:bad:2}}
\newtheorem*{lemsi}{Lemma~\ref{lem:si}}
\newtheorem*{propbad}{Proposition~\ref{prop:bad}}
\newcommand{\faild}{\mathcal{F}_{\mathrm d}}
\newcommand{\failu}{\mathcal{F}_{\mathrm u}}
\newcommand{\verticesd}{\mathcal{V}_{\mathrm d}}
\let\epsilon=\varepsilon
\def\gv{\mathcal{V}_{\mathrm{good}}}
\def\bv{\mathcal{V}_{\mathrm{bad}}}
\def\gc{\mathcal{C}_{\mathrm{good}}}
\def\bc{\mathcal{C}_{\mathrm{bad}}}
\def\var{\operatorname{var}}
\def\pr{\operatorname{Pr}}
\def\density{\alpha}
\def\rh{r_0}
\def\densitydef{2^{(r_0- \delta) k} /k^3 }
\def\densitymaindef{2^{r k}}
\def\mdegdef{\lceil 2^{(r_0-\delta) k} \rceil}
\def\hd{\operatorname{HD}}
\def\sizecc{2 k^4 \lceil \log (n / \varepsilon) \rceil}
\def\mtdef{\lceil 2^{2k+3} n^\theta \log \frac{2n}{\varepsilon^2} \rceil}
\def\rvalue{0.117841}
\def\rone{0.227092}
\def\deltadef{0.00001}
\DeclareMathOperator{\TV}{TV}
\DeclareMathOperator{\mix}{\mathrm{mix}}
\DeclareMathOperator{\bad}{\mathrm{bad}}
\newcommand\restr[2]{{
  \left.\kern-\nulldelimiterspace 
  #1 
  \right|_{#2} 
  }}
\xpatchcmd{\algorithmic}{\setcounter}{\algorithmicfont\setcounter}{}{}
\providecommand{\algorithmicfont}{}
\providecommand{\setalgorithmicfont}[1]{\renewcommand{\algorithmicfont}{#1}}
\title{Fast sampling of satisfying assignments from random $k$-SAT with applications to connectivity\thanks{A preliminary version of the results (with weaker bounds and different proofs) appeared in the proceedings of SODA 2023 by
a subset of the authors. For the purpose of Open Access, the
authors have applied a CC BY public copyright licence to any Author Accepted Manuscript version arising
from this submission. All data is provided in full in the results section of this paper.}}
\author{ Zongchen Chen\thanks{Department of Computer Science and Engineering, University at Buffalo, Buffalo, NY 14260, USA.}
\and Andreas Galanis \thanks{ Department of Computer Science, University of Oxford, Wolfson Building, Parks Road, Oxford, OX1~3QD, UK.}
  \and
  Leslie Ann Goldberg \footnotemark[3]
  \and Heng Guo \thanks{School of Informatics, University of Edinburgh, Informatics Forum, 10 Crichton Street, Edinburgh, EH8 9AB, UK. HG has received funding from the European Research Council (ERC) under the European Union's Horizon 2020 research and innovation programme (grant agreement No.~947778).}
\and
Andr\'es Herrera-Poyatos \thanks{Andalusian Institute of Data Science and Computational Intelligence (DaSCI), University of Granada, ESP. This author was supported by an Oxford-DeepMind Graduate Scholarship and an EPSRC Doctoral Training Partnership.}
\and Nitya Mani\thanks{Department of Mathematics, Massachusetts Institute of Technology, Cambridge, MA 20139, USA.}
\and Ankur Moitra\footnotemark[6]
 }
\date{4th August 2024}
\begin{document}
\maketitle
 \thispagestyle{empty}

\begin{abstract}
We give a nearly linear-time algorithm to approximately sample satisfying assignments in the random $k$-SAT model when the density of the formula scales exponentially with $k$. The best previously known sampling algorithm for the random $k$-SAT model applies when the density $\alpha=m/n$ of the formula  is less than $2^{k/300}$ and runs in time $n^{\exp(\Theta(k))}$ (Galanis, Goldberg, Guo and Yang, SIAM J. Comput., 2021). Here $n$ is the number of variables and $m$ is the number of clauses.
Our algorithm achieves a significantly faster running time of $n^{1 + o_k(1)}$ and samples satisfying assignments up to density $\alpha\leq 2^{0.039  k}$.

The main challenge in our setting is the presence of  many variables with unbounded degree, which causes significant correlations within the formula and impedes the application of relevant Markov chain methods from the bounded-degree setting (Feng, Guo, Yin and Zhang, J. ACM, 2021; Jain, Pham and Vuong, 2021). Our main technical contribution is a $o_k(\log n )$ bound of the sum of influences in the $k$-SAT model which turns out to be robust against the presence of high-degree variables. This allows us to apply the spectral independence framework and obtain fast mixing results  of a uniform-block Glauber dynamics on a carefully selected subset of the variables. The final key ingredient in our method is to take advantage of the sparsity of logarithmic-sized connected sets and the expansion properties of the random formula, and establish relevant connectivity properties of the set of satisfying assignments that enable the fast simulation of this Glauber dynamics. 

Our results also allow us to conclude that, with high probability, a random $k$-CNF formula with density at most $2^{0.227 k}$ has a giant component of solutions that are connected in a graph where solutions are adjacent if they have Hamming distance $O_k(\log n)$. We are also able to deduce looseness results for random $k$-CNFs in the same regime.
\end{abstract}

\newpage
\pagenumbering{arabic}

\section{Introduction} \label{sec:intro}

The random $k$-SAT model is 
a foundational model in the study of randomised algorithms.
For integers $k,n,m \ge 2$, the random formula $\Phi = \Phi(k, n, m)$ is a $k$-CNF formula chosen uniformly at random 
from the set of formulae with $n$ Boolean variables and $m$ clauses,  where each clause has $k$ literals (repetitions allowed).  Here, we consider the sparse regime where the density of the formula, $\alpha=m/n$, is bounded by an absolute constant. An important question is 
determining the probability that the random formula is satisfiable as a function of its density (in the limit $n\rightarrow \infty$). Interestingly, for all sufficiently large $k$, the probability that $\Phi$ is satisfiable drops abruptly from $1$ to $0$ when the density $\alpha$ crosses a certain threshold $\alpha_{\star}(k)$. Recently there has been  tremendous progress in establishing this phase transition, concluding that $\alpha_{\star}(k) = 2^k \log 2 -\tfrac{1}{2} (1 + \log 2) + o_k(1)$ as $k \to \infty$~\cite{ding2015, coja2016}. Despite the  good progress on pinning down  this phase transition, finding satisfying assignments for densities up to $\alpha^*$ poses severe challenges. In fact, the best known algorithm~\cite{coja2010} for finding a satisfying assignment of a random formula $\Phi$ succeeds up to densities $(1+ o_k(1)) \tfrac{2^k}{k}\log k$, and going beyond such densities is a major open problem with links to phase transitions~\cite{AC08}.

Lately there has been significant interest in the related computational problem of sampling satisfying assignments of $\Phi$ uniformly at random. This problem is closely connected to the problem of estimating the number of satisfying assignments of $\Phi$, also known as the value of the partition function of the model. From a probabilistic viewpoint, the analysis of the partition function depends on subtle properties 
of the solution set~$\Omega = \Omega_{\Phi}$ consisting of the satisfying assignments of $\Phi$~\cite{2SAT,AminBelief2, SlySunZhang, montanari2007}. 
In this direction, there has been substantial work on finding the so-called free energy of the model, i.e., the asymptotic value of the quantity $\tfrac{1}{n}\mathbf{E}[\log (1+|\Omega|)]$. 
Computing the $k$-SAT free energy is 
a difficult problem which is still open
(roughly, the difficulty comes from the asymmetry of the model and the unbounded degrees), but there have been   results for 
closely related models including  the permissive version of the model ~\cite{AminBelief2,montanari2007,CR13},  the regular $k$-SAT model~\cite{coja-oghlan_wormald_2018}, and the regular NAE-SAT model~\cite{nam2020onestep,SlySunZhang}. Very recently, a formula for the free energy of the $2$-SAT model was given in~\cite{2SAT}. 

Regarding the algorithmic problem of sampling satisfying assignments uniformly at random, in the random $k$-SAT model progress has been slower relative to other well-studied models on random graphs (such as $k$-colourings or independent sets). One of the main reasons for this is that the usual distribution properties that are typically used to obtain fast algorithms (such as correlation decay and spatial mixing) fail to hold for densities as low as $\alpha=o_k(1)$~\cite{montanari2007}.  These issues are in fact present already in the bounded-degree $k$-SAT setting, where the formulae are worst-case but every variable is constrained to have a bounded-number of occurrences. For random formulae, these issues are further aggravated by the fact that the degrees of a linear number of variables are unbounded.
Very recently, the authors of~\cite{galanis2019counting} gave an approximate counting algorithm (FPTAS) for the number of satisfying assignments of $\Phi$ when $k$ is large enough and $\alpha \lesssim 2^{k / 300}$ (where $\lesssim$ hides a polynomial factor in~$1/k$). This algorithm elevates Moitra's counting method for bounded-degree $k$-SAT~\cite{moitra19} to the random formula setting, and is the first polynomial-time approximate-counting algorithm to achieve an exponential-in-$k$ bound on $\alpha$. However, its running time is  $n^{\exp(\Theta(k))}$ because the algorithm repeatedly has to enumerate local structures (including solving LPs as a subroutine), which does not scale well with $k$. Hence, the problem of  
finding a \emph{fast} algorithm for sampling the satisfying assignments in  the random $k$-SAT model has remained open.

In this work we give a fast algorithm that in time $n^{1+o_k(1)}$ approximately samples satisfying assignments of a random $k$-SAT formula of density $\alpha \leq 2^{0.039 k}$, within arbitrarily small polynomial error.
Our work also delves into the connections between the solution space geometry of $k$-CNF $\Phi$ and algorithms for efficiently sampling from the solutions of $\Phi$.

A unifying theme of previous approaches to counting and sampling CSP solutions is a tool called \textit{marking}, first introduced in~\cite{moitra19}, which finds a set of ``marked" variables such that the set of satisfying assignments projected on these variables is connected. Marking is also an essential step in the developing of our sampling algorithm. Our algorithm first runs a Markov chain to sample assignments of a judiciously-chosen subset of marked variables of $\Phi$ (from the relevant marginal distribution), and subsequently extending this random assignment to all the variables. This has the advantage that it avoids the enumeration of local structures, and in fact achieves a nearly-linear running time. We give a high-level overview of the techniques developed in our proofs in Section~\ref{sec:intro:techniques}. Roughly, our Markov chain is a uniform-block Glauber dynamics which, interestingly, mixes quickly despite the presence of high-degree variables in the random formula. The main point of departure from similar approaches that have been applied to the bounded-degree setting is that we completely circumvent sophisticated coupling arguments that have been used there and which are unfortunately severely constricted by the unbounded degrees in our setting (and made inapplicable). Instead, our main technical contribution is to show that the stationary distribution of our chain is $(c^{k} \log n)$-spectrally independent for some constant $c \in (0,1)$, allowing us to apply recently-developed tools in the analysis of Markov chains. Unlike most applications of spectral independence, our proof does not rely on correlation decay (which, as we mentioned, fails to hold for densities exponential in $k$). We show our spectral-independence bounds by relating the probabilistic properties of the solution space with the structure of the formula using coupling techniques, so that we can exploit local sparsity properties of random $k$-SAT.

To formally state our main result, we say that an event $\mathcal{E}$ regarding the choice of the random formula $\Phi$ holds \emph{with high probability} (abbreviated w.h.p.) if $\pr(\mathcal{E}) = 1 - o(1)$ as $n \to \infty$.
The total variation distance between  two probability distributions $\mu$ and $\nu$ over the same space $\Omega$ is given by $\tfrac{1}{2} \sum_{x \in \Omega} \lvert \mu(x) - \nu(x) \rvert$ and is denoted by $d_{\TV}(\mu, \nu)$. Our main result can now be stated as follows.

\newcommand{\statethmsampling}{
For any real $\theta \in (0,1)$, there is $k_0 \ge 3$ with $k_0 = O(\log(1/\theta) )$ such that, for any integers $k\ge k_0$ and $ \xi \ge 1$, and for any positive real $\alpha \le 2^{0.039 k}$, the following holds. 

There is an efficient algorithm to sample from the satisfying assignments of a random $k$-CNF formula $\Phi = \Phi(k, n, \lfloor \alpha n \rfloor)$ within $n^{-\xi}$ total variation distance of the uniform distribution. The algorithm  runs in time $O( n^{1+\theta} )$, and succeeds w.h.p. over the choice of $\Phi$.
} 
\begin{theorem} \label{thm:sampling}
	\statethmsampling
\end{theorem}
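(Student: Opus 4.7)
The plan is to reduce the sampling problem on $\Phi$ to running a fast-mixing Markov chain on a carefully chosen subset $\gv$ of good variables, and then to extend the sampled partial assignment to all of $\Phi$. First I classify variables as \emph{bad} when their degree in $\Phi$ exceeds $\mdeg \approx 2^{(\rh-2\delta)k}$ and collect the rest into $\gv$. Standard concentration arguments for random formulas show that w.h.p.\ bad variables are rare and, crucially, the subformula induced by clauses meeting any bad variable has all connected components of size $O(\log n)$; at the same time $\Phi$ satisfies local sparsity and expansion properties that are used throughout the analysis. I then run a uniform-block Glauber dynamics on $\gv$ whose stationary distribution is the projection $\mu_{\gv}$ of the uniform measure on satisfying assignments onto $\gv$: a step picks a good variable $v$ uniformly at random, exposes a local block around $v$ consisting of $v$ together with the bad variables reachable from it through ``threatened'' clauses, and resamples the block from the correct conditional distribution. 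The logarithmic component bound ensures each block has size $O(\log n)$ w.h.p., so each step can be executed in $n^{o_k(1)}$ time by enumeration.

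For rapid mixing I apply the spectral independence framework: it suffices to bound, uniformly over pinnings of the remaining good variables, the sum of pairwise influences $\sum_{v\in\gv}\lvert\mathcal{I}(u\to v)\rvert$ from any good variable $u$ by $o_k(\log n)$. This bound is the main obstacle and the key technical contribution, because the usual approach via correlation decay fails at densities exponential in $k$. My plan is to proceed via a coupling / disagreement-percolation argument: one couples two $\mu_{\gv}$-samples that agree outside $\{u\}$ and tracks the set of clauses that become threatened when the flip at $u$ is propagated. Local sparsity of $\Phi$ together with the fact that a random partial assignment makes a $k$-clause threatened with probability at most $2^{-k+O(1)}$ lets this propagation be dominated by a sub-critical branching process on the clause-variable hypergraph whose expected total progeny is $O(c^k)$ per starting vertex for some $c\in(0,1)$; the residual $\log n$ factor absorbs the cost of routing through bad-variable hubs, whose components have logarithmic size.

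Plugging this spectral-independence bound into recent entropy-factorisation results for uniform-block dynamics yields mixing within $n^{-\xi}$ total variation distance in $n \cdot \mathrm{poly}(\log n)$ steps. Combined with the $n^\theta$ per-step cost, which is achievable once $k\ge k_0 = O(\log(1/\theta))$ (so that enumeration on an $O(\log n)$-size component takes at most $n^\theta$ time), this produces the claimed $O(n^{1+\theta})$ runtime. Given a sample of the $\gv$-marginal, I then extend it to a full satisfying assignment by independently resampling inside each logarithmic-size bad component conditional on its good-variable boundary, again by enumeration. The overall total-variation error is at most the Markov chain mixing error plus the $o(1)$ failure probability of the structural good event on $\Phi$, yielding the claimed guarantees.
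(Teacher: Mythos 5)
Your proposal follows the same high-level template as the paper (bad/good decomposition, block Glauber dynamics, spectral independence via a coupling argument, and a final extension step), but there are two places where the details you give would not go through, and both are precisely the points where the paper has to work hardest.

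First, you propose running the Glauber dynamics on \emph{all} good variables $\gv$. This is not enough: the paper's analysis only works because the good variables are further split via a Lov\'asz local lemma argument into \emph{marked}, \emph{auxiliary}, and \emph{control} variables (Definition~\ref{def:distributed-marking}, Lemma~\ref{lem:marking}), with the chain running only on $\mathcal{V}_{\mathrm m}$. The marginal-near-$1/2$ property (Lemma~\ref{lem:marginals}) holds \emph{only} while the control variables are unassigned; the LLL argument requires each good clause to retain $\Theta(k)$ free (control) variables. If you instead pin all good variables except one, a good variable's conditional marginal can be $0$ or $1$, the state space can disconnect, and the $\varepsilon$-uniformity that underlies connectivity, marginal boundedness, and the entire spectral-independence estimate disappears. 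The auxiliary variables are not cosmetic either: the paper's influence bound is obtained by running the coupling process on $\mathcal{V}_{\mathrm a}$ only, so that the marked-variable marginals stay controlled throughout the coupling. Your branching-process rate ``$2^{-k+O(1)}$'' implicitly assumes all $k$ variables of a clause are resampled near-uniformly, but in fact only the $\approx r_0 k$ auxiliary variables per good clause are, giving a rate closer to $2^{-r_0 k}$; this is exactly why the density threshold in the theorem is $2^{0.039 k}$ rather than something near $2^{k}$.

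Second, the dynamics you describe is effectively \emph{single-site} on $\gv$ (one good variable per step; the ``local block'' around $v$ is only an implementation device for computing that one conditional). With spectral independence only of order $\eta = o_k(\log n)$ rather than $O(1)$, single-site Glauber dynamics does not mix in nearly linear time; the known bounds degrade as $n^{\Theta(\eta)}$. The paper circumvents this by running the $\rho$-\emph{uniform-block} Glauber dynamics with $\rho = \Theta(\lvert\mathcal{V}_{\mathrm m}\rvert)$ (Lemma~\ref{lem:mixing-time}), for which Lemma~\ref{lem:block-glauber} gives $T_{\mathrm{mix}} = O\big(C_\rho \cdot (M/\rho) \cdot \log(\cdot)\big)$ with $C_\rho = (2/\kappa)^{O(\eta)} = n^{O(2^{-r_0 k})} = n^{o_k(1)}$; this linear block size is what absorbs the $\log n$ in the spectral-independence bound. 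Implementing such a step (and the final extension) then requires the separate component-size analysis of $\Phi^{\Lambda}$ under a $(1/k)$-uniform distribution conditioned on pinning all but a random $\rho$-subset (Lemma~\ref{lem:cc-general}), which is a different and more delicate statement than the purely structural ``bad components are $O(\log n)$'' claim you invoke. Finally, your definition of bad variables as ``just high-degree'' would leave good clauses with arbitrarily many bad variables; the paper's recursive closure (Algorithm~\ref{alg:bad}, Proposition~\ref{prop:bad}) is what guarantees good clauses have at most two bad variables and that bad clauses contain only bad variables, which is needed both for the LLL marking and for the extension step.
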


Using standard techniques from the literature, this $O( n^{1+\theta} )$ uniform sampling algorithm can be used to obtain a  randomised approximation scheme for counting satisfying  assignments of $\Phi$ in time $O(n^{2+\theta}/\varepsilon^2)$, where $\varepsilon$ is the multiplicative error, see~\cite[Section~7]{feng2020} and Remark~\ref{rem:counting} for details. 

Our results can be applied to analyse the solution space geometry of random $k$-CNF formulae for the densities under consideration. Many involved heuristics in statistical physics make predictions about the geometry of the solution space of a random $k$-CNF instance, often depicted in diagrams like Figure~\ref{fig:phases}. Some phases and transitions in this diagram are precisely understood. For example, as mentioned above, the satisfiability threshold (pictured in the transition to the rightmost image in Figure~\ref{fig:phases}) was determined by~\cite{ding2015}. Another transition of interest is the clustering threshold, above which the solution space of a random $k$-CNF shatters into exponentially many linearly separated connected components, each of which contains an exponentially small fraction of the satisfying assignments of the formula, as rigorously understood in~\cite{CP12, ACR11, MMZ05, MMZ205}.

\begin{figure}[H]
    \centering
    \includegraphics[width=.9\textwidth]{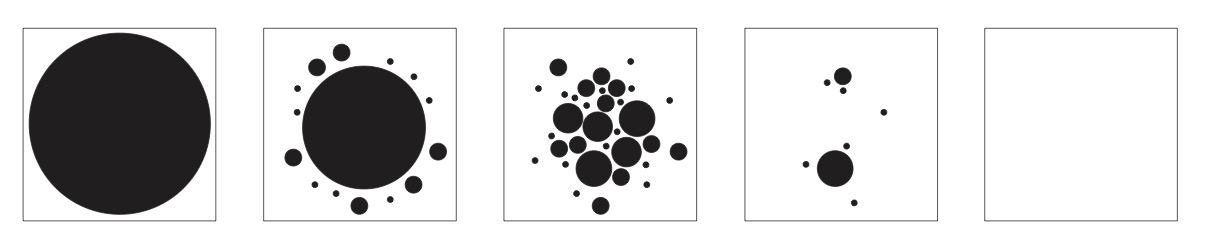}
    \caption{Heuristic phase diagrams such as above~\cite{KMR07} depict the predicted evolution of the structure of the solution space of a random $k$-CNF as the density $\alpha$ of the formula increases from left to right. We primarily study the leftmost regime.}
    \label{fig:phases}
\end{figure}

In the lower-density regime, the solution space geometry of random $k$-CNFs appears poorly understood. It is widely believed that beneath a critical clause density, the solution space of a random $k$-CNF is “connected.” However, from the literature, it is not even clear what “connected” means. Connectivity is sometimes used in the statistical physics literature as a characterization of the entropy or energy profile of the solution space of a random $k$-CNF formula as in~\cite{Zde08}. In such settings, connectivity is often characterized by an absence of clustering behavior, leaving somewhat of a mystery as to the graphical properties of the solution space of a low density random $k$-CNF.

Conjectures about connectivity take different forms, and different notions of what connectivity might mean are articulated in~\cite{Zde08, KMR07, CP12}. The most common precise notion of connectivity is with respect to Hamming distance, i.e. understanding connectivity properties of the graph of solutions to a random $k$-CNF, where solutions are $f(n)$-connected if their Hamming distance is at most $f(n)$. At lower densities, random $k$-CNFs still can have isolated solutions far in Hamming distance from other satisfying assignments. However, the prevailing belief is that below some threshold, the overwhelming majority of solutions to a random $k$-CNF lie in a giant component that is $o(n)$-connected.

Much more is known about related notions and local versions of connectivity, like looseness, which characterises how rigid a particular satisfying assignment is. Roughly speaking, a satisfying assignment to a formula is $f(n)$-loose if any variable can be flipped to yield a new satisfying assignment by changing at most $f(n)$ additional variable assignments. In~\cite{AC08}, the authors showed $o(n)$-looseness holds in the connectivity regime for related, simpler random models, random $q$-coloring, and hypergraph $2$-coloring, conjecturing that $o(n)$-looseness holds for random $k$-CNF instances below the clustering threshold. This conjecture was partially resolved in~\cite{CP12}, where in an analysis of the decimation process for random $k$-SAT, the authors observed that with high probability over formulae and satisfying assignments, at least $99\%$ of the variables were $O(\log n)$-loose. Looseness, however, is a local notion, not a global one. The set of elements in $\{0,1\}^n$ that have Hamming weight at least $2n/3$ or at most $n/3$ is $1$-loose, but $\Omega(n)$-connected.

We will concern ourselves with the following precise notion of connectivity. 

\begin{definition}[$D$-Connectivity]
Let $\Phi = (\mathcal{V}, \mathcal{C})$ be a $k$-CNF formula. For any assignment $\Lambda \colon \mathcal{V} \to \{\mathsf{F}, \mathsf{T}\}$, let $\| \Lambda \|_1$ be the number of variables $\Lambda$ assigns to be $\mathsf{T}$. Throughout, we implicitly consider variable assignments in $\mathbb{F}_2^n$, so $\|\cdot\|_1$ encodes Hamming weight and $\|\Lambda_1 - \Lambda_2\|_1$ encodes Hamming distance.

We say a sequence of satisfying assignments $\zeta_0 \leftrightarrow \zeta_1 \leftrightarrow \cdots \leftrightarrow \zeta_{\ell}$ of $\Phi$ is a \emph{$D$-path} if $\|\zeta_{i} - \zeta_{i-1}\|_1 \le D$ for each $i \in [\ell]$. 
We say two satisfying assignments of $\Phi$, $\Lambda, \Lambda' \in \Omega$, are \textit{$D$-connected} if there exists a $D$-path connecting $\Lambda$ and $\Lambda'$ (that is, $\zeta_0 = \Lambda$ and $\zeta_\ell = \Lambda'$).
\end{definition}

Marking-based deterministic and MCMC algorithms are mysterious at first glance, as they enable counting and sampling of $k$-CNF solutions even in regimes where the solution space is disconnected (i.e. not $1$-connected). In this work, we leverage the idea of marking in a novel way to construct paths that certify global connectivity properties of the solution space of $k$-CNFs at densities close to where counting algorithms are known.

\newcommand{\statethmconnectivity}{
    There is $k_0 \ge 3$ and a polynomial $p(k)$ with non-negative integer coefficients such that, for any integer $k\ge k_0$, and for any positive real $\alpha \le 2^{0.227 k}$, the following claim holds with high probability over the choice of a random $k$-CNF formula $\Phi = \Phi(k, n, \lfloor \alpha n \rfloor)$. Two satisfying assignments chosen uniformly at random are $p(k) \log(n)$-connected with probability at least $1 - 1/n$. 
}
\begin{theorem} \label{thm:connectivity} \label{t:robustconn}
  \statethmconnectivity
\end{theorem}

In fact, we show it suffices to take $p(k) = 2k^5$. Our new applications of marking also have implications for other, more local, structural properties of the $k$-CNF solution space, like looseness.

\begin{definition}\label{d:loose}
Given a $k$-CNF formula $\Phi = (\mathcal{V}, \mathcal{C})$ and a satisfying assignment $\Lambda$, a variable $v \in \mathcal{V}$ is $f(n)$-\emph{loose} with respect to $\Lambda$ if there exists a satisfying assignment to $\Phi$, $\tau \in \Omega$, with $\tau(v) \neq \Lambda(v)$ and $\|\Lambda - \tau\|_1 \le f(n)$.
 
For a random $k$-CNF formula $\Phi = \Phi(k, n, m)$ and a satisfying assignment  $\Lambda$ chosen uniformly at random, we say that $\Phi$ is \emph{$f(n)$-loose} if with high probability over $(\Phi, \Lambda)$, all variables $v \in V$ are $f(n)$-loose with respect to $\Lambda$.
\end{definition}

We observed earlier that looseness does not imply connectivity; in fact, the other direction of implication is also false as looseness is an incomparable goal to connectivity. Looseness requires that \textit{locally}, we are able to flip \textit{any} variable and get to a nearby solution rather than merely the existence of a path away from a solution. Nonetheless, we are able to deduce some nontrivial results about the looseness of the solution space of  random $k$-CNFs.

\newcommand{\statethmlooseness}{
    There is $k_0 \ge 3$ such that, for any integer $k\ge k_0$, and for any positive real $\alpha \le 2^{0.227 k}$, the random $k$-CNF formula $\Phi(k, n, \lfloor \alpha n \rfloor)$ is $\operatorname{poly}(k)\log(n)$-loose. 
}
\begin{theorem} \label{thm:looseness} \label{t:robustloose}
    \statethmlooseness
\end{theorem}

We note here that, independently of this work, He, Wu, and Wang~\cite{Kun1} also obtained sampling algorithms for random $k$-CNF formulae. The approach of~\cite{Kun1} is based on bounding chains following the recursive sampler method developed in~\cite{Anand2022, Kun2, Kun3}. 
Their algorithm works up to densities roughly equal to $2^{k/3}$ and samples satisfying assignments within $\varepsilon$ total variation distance of the uniform distribution in time $(n/\varepsilon)^{1 + O(k^{-5})}$. 

\section{Proof outline} \label{sec:intro:techniques} \label{sec:proof-outline} \label{sec:po}

Our  nearly linear-time sampling algorithm is based on running a Markov chain; this is a standard technique in approximate counting, where typically one runs a Markov chain on the whole state space that converges to the desired distribution. The twist in $k$-SAT is that the state space of the Markov chain needs to be carefully selected in order to avoid certain bottleneck phenomena that impede fast convergence. This approach has been recently applied to bounded-degree $k$-CNF formulae~\cite{feng2020, feng2021,   jain2021sampling} building on the work of Moitra~\cite{moitra19}  and using the Markov chain known as single-site Glauber dynamics. The main difficulties in all of these works are that the usual distribution properties that are typically used to obtain fast algorithms (such as correlation decay and spatial mixing) fail on the set of all SAT solutions, and in fact even ensuring a connected state space is a major problem. Working around this is one of the main challenges for us too, and in the random $k$-SAT setting it is further aggravated  by the fact that a linear number of variables have degrees much higher than average. In fact, w.h.p., a good portion of vertices have degrees depending on $n$. with the maximum degree of the formula scaling as $\log n / \log \log n$.

This poses several new challenges for the Markov chain approach to work in our setting. First of all, we have to ensure that the set of satisfying assignments that our Markov chain considers has good connectivity properties. We address this problem in Section~\ref{sec:po:marking} of this proof outline, where we find a suitable subset of marked variables where we can run the Glauber dynamics; this part is inspired by Moitra's ``marking'' approach, though here we need to add an extra layer of marking to facilitate later the analysis of the Markov chain. Second and more importantly, state-of-the-art arguments for bounding the mixing time of the single-site Glauber dynamics on $k$-CNF formulae, such as~\cite{feng2020}, break under the presence of high-degree variables. We focus on this in Section~\ref{sec:po:si}, where we outline a novel argument that analyses the mixing time of the uniform-block Glauber dynamics using recent advances in spectral independence~\cite{AlevLau,KO20,alg20, Chen2020SI}. This is the first application of the spectral-independence framework for $k$-CNF formulae, where the absence of correlation decay limits the application of standard techniques (based on self-avoiding walk trees~\cite{alg20,Chen2020SI}). To obtain our spectral-independence bounds we need to combine the probabilistic structure of satisfying assignments with the local sparsity properties of the random formula. The third challenge in our approach is simulating the individual steps of the uniform-block Glauber dynamics since they involve updating a linear number of variables, making the computation of the transition probabilities more challenging. To this end, we need to initialise our block Glauber dynamics to random values (instead of an arbitrary assignment that is typically used as initialisation), and show that the formula breaks into small tree-like connected components that allows us to do the relevant computations throughout the algorithm's execution (cf. Section~\ref{sec:po:structure}).  Based on these pieces, the full algorithm is presented in Section~\ref{sec:po:alg}. 

The fact that the formula breaks into small tree-like connected components when marked variables are assigned random values will also allow us to analyse the geometry of the space of satisfying assignment of the random formula, and we will delve into this connection in Section~\ref{sec:po:structure}.

\subsection{Marking variables in the random $k$-SAT model} \label{sec:po:marking}

In order to ensure good connectivity properties which are essential for fast convergence of the relevant Markov chain, our algorithm runs  Glauber dynamics on a large subset $\mathcal{V}_{\mathrm m}$ of so-called ``marked" variables  of the random formula, leaving the rest of the variables unassigned. The variables in $\mathcal{V}_{\mathrm m}$ are chosen in a way that ensures that their marginals are near~$1/2$, which is important for ensuring rapid mixing. Moitra~\cite{moitra19} introduced a random ``marking'' procedure to identify such a subset of variables in the bounded-degree case. The presence of high-degree variables impedes a direct application of this technique in the random-formula setting, but in~\cite{galanis2019counting} the authors show that by temporarily removing a small linear number of ``bad" clauses that contain high-degree variables, one can also achieve marginals near~$1/2$ for an appropriate set of variables in the random $k$-SAT model. Here, we further refine these arguments, as we need more control over the high-degree variables of the formula in order to conclude rapid mixing of the Glauber dynamics.  Recall that the degree of a variable $v$ is the number of occurrences of literals involving the variable $v$ in $\Phi$ and that the maximum degree of the formula $\Phi$ is the maximum degree among its variables. The following important definitions will be used throughout the paper. We usually use $\mathcal{V}$ to denote the set of variables and $\mathcal{C}$ to denote the set of clauses of a $k$-CNF formula $\Phi$. For any $c \in \mathcal{C}$ we denote by $\var(c)$ the set of variables appearing in $c$, and for any $S \subseteq \mathcal{C}$ we denote $\var(S) = \bigcup_{c \in S} \var(c)$.

\begin{definition} [high-degree, $\Delta_r$]
\label{def:degree}
  Let $r \in (0,1)$ and let $k \ge 3$ be an integer. Let $\Phi = (\mathcal{V}, \mathcal{C})$ be a $k$-CNF formula. We say that a variable $v \in \mathcal{V}$ is \emph{high-degree} if the degree of $v$ is at least $\Delta_r := \lceil 2^{rk} \rceil$. 
\end{definition}

We refer to Section~\ref{sec:bad} for details on our procedure to determine the bad variables/clauses of the formula~$\Phi$. Roughly, bad variables consist of high-degree variables (as in Definition~\ref{def:degree}), plus those variables that appear in a clause with at least two other bad variables (recursively); bad clauses are those clauses that contain at least three bad variables. We use $\bv(r)$ and $\bc(r)$ to denote the sets of bad variables and clauses. We use $\gv(r) = \mathcal{V} \setminus \bv(r)$ to denote the set of \emph{good variables}, and  $\gc(r) = \mathcal{C} \setminus \bc(r)$ to denote the set of \emph{good clauses}.  The following proposition,  proved in Section~\ref{sec:bad}, summarises the main properties of the above sets.
  
\newcommand{\statepropbad}{
  Let $\Phi = (\mathcal{V}, \mathcal{C})$ be a $k$-CNF formula. For any $c \in \gc(r)$, we have $\lvert \var(c) \cap \bv(r) \rvert \le 2$, and for any $c \in \bc(r)$, we have $\lvert \var(c) \cap \gv(r) \rvert = 0$. Moreover, every good variable has degree less than $\Delta_r$. There is a procedure to determine $\bc$ that runs in time $O(n + m k)$, where $n$ is the number of variables of $\Phi$ and $m$ is the number of clauses of $\Phi$. 
}

\begin{proposition} \label{prop:bad}
\statepropbad
\end{proposition}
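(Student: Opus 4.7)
The plan is to verify each assertion of the proposition by unpacking the recursive definitions of $\bv$ and $\bc$ from Section~\ref{sec:bad}. I would formulate these declaratively as the least pair $(\bv,\bc)$ such that $\bv$ contains every high-degree variable and every variable sharing a clause with at least two other variables of $\bv$, while $\bc=\{c\in\mathcal{C}:|\var(c)\cap\bv|\ge 3\}$.

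Two of the three structural claims are then immediate. For $c\in\gc=\mathcal{C}\setminus\bc$ one has $|\var(c)\cap\bv|\le 2$ directly from the characterisation of $\bc$; and every good variable has degree below $\Delta$ because high-degree variables are by definition placed into $\bv$. The substantive part is the claim that $\var(c)\subseteq\bv$ whenever $c\in\bc$. I would argue this by contradiction: if there were $c\in\bc$ and $v\in\var(c)\cap\gv$, then $c$ would contain at least three bad variables, all distinct from $v$ (since $v\notin \bv$); so $v$ would share the clause $c$ with at least two other bad variables, and the recursive rule would then force $v\in\bv$, a contradiction.

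For the algorithmic claim, I would describe a standard worklist procedure that maintains, for each clause $c$, a counter $N_c=|\var(c)\cap\bv|$ together with a bad-flag for each variable. Initialisation adds the high-degree variables to $\bv$ (after an $O(mk)$ computation of all variable degrees), sets each counter $N_c$ in $O(mk)$ total time, and places every clause with $N_c\ge 3$ onto a queue. While the queue is nonempty, pop a clause $c$, insert it into $\bc$, and for each $v\in\var(c)\setminus\bv$ mark $v$ bad and increment $N_{c'}$ for every clause $c'$ containing $v$, queueing $c'$ the first moment $N_{c'}$ reaches $3$. A routine induction on the order of insertions shows that this procedure terminates at the least fixed point $(\bv,\bc)$ defined above.

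The running-time analysis is a standard amortisation: each variable is marked bad at most once and then triggers work proportional to its degree, giving total main-loop work $\sum_v\deg(v)=mk$; combined with $O(n+mk)$ for initialisation this yields the stated $O(n+mk)$ bound. I do not expect any serious obstacle here, since the entire proposition is essentially an unpacking of the recursive definition together with a textbook fixed-point computation; the only slightly non-trivial step is the contradiction argument establishing $\var(c)\subseteq\bv$ for every $c\in\bc$.
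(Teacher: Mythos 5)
Your proof is correct and, where it overlaps with the paper, takes essentially the same route: the paper's own proof of Proposition~\ref{prop:bad} is a one-paragraph description of precisely the worklist/counter implementation you describe (a stack of unprocessed bad variables and a per-clause counter of bad occurrences, amortised to $O(n+mk)$), and it treats the three structural claims as immediate from Algorithm~\ref{alg:bad} without spelling out a verification. You go slightly further by actually verifying those claims, which is harmless and arguably clearer.

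One small imprecision is worth flagging. You describe $\bv$ as the least set containing all high-degree variables and \emph{every variable sharing a clause with at least two other variables of $\bv$}; this echoes the paper's informal gloss in Section~\ref{sec:po:marking}, but it does not exactly match Algorithm~\ref{alg:bad}. The algorithm only propagates when a clause already has \emph{three} bad variables (it then marks the entire clause and all of its variables bad), so your stated closure rule is strictly stronger and would, in general, close to a larger set (e.g.\ a clause containing exactly two high-degree variables $w_1,w_2$ and a low-degree $v$ would make $v$ bad under your rule but not under Algorithm~\ref{alg:bad}). This does not damage your contradiction argument, because there you first derive that $c$ contains at least three bad variables distinct from $v$, after which the conclusion $\var(c)\subseteq\bv$ follows directly from the step $\mathcal{V}_{i+1}\leftarrow\mathcal{V}_i\cup\var(\mathcal{C}_i)$ of Algorithm~\ref{alg:bad}. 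But the declarative characterisation as stated should be replaced by: $\bv$ is the least superset of the high-degree variables closed under the rule that if $\lvert\var(c)\cap\bv\rvert\ge 3$ then $\var(c)\subseteq\bv$, with $\bc=\{c:\lvert\var(c)\cap\bv\rvert\ge 3\}$.
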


It turns out that, w.h.p. over the choice of $\Phi$, most clauses (and variables) in the random formula $\Phi$ are good, see Lemma~\ref{lem:bad} for a precise statement. At this stage, it would be natural to try to   rework the Markov chain approach of~\cite{feng2020}.  To do this, we would split the set of good variables into \emph{marked variables} and \emph{control variables} in such a way that marked variables have marginals close to $1/2$. Then we  run the Glauber dynamics on the set of marked variables. However, as we explain in Section~\ref{sec:po:si}, the state-of-the-art techniques used to analyse the mixing time of the single-site Glauber dynamics on bounded-degree formulae do not generalise to the random $k$-SAT setting; the main reason for this is that they fail to capture the effect that the high-degree variables have on the marginal probabilities of other variables. Therefore, we need to develop an alternative approach that is robust against the presence of high-degree variables. Our main contribution is an argument to apply the spectral independence framework~\cite{Chen2020SI, Chen2021} to the random $k$-SAT model that leads to nearly linear sampling algorithms. To do this, it is important to introduce a third type of good variables, which we call the \emph{auxiliary variables}.  This motivates the following definition of marking.

 \begin{definition}[$\rho$-distributed, $(r, r_{\mathrm m}, r_{\mathrm a}, r_{\mathrm c})$-marking, $r_0$, $r_1$, $\delta$] \label{def:distributed-marking}
  Let $r\in(0,1)$. Let $\Phi = (\mathcal{V}, \mathcal{C})$ be a $k$-CNF formula and let $V$ be a subset of $\gv(r)$. We say that $V$ is \emph{$\rho$-distributed} if for each $c \in \gc(r)$ we have $\lvert \var(c) \cap V \rvert \ge \rho (k-3)$.  An \emph{$(r, r_{\mathrm m}, r_{\mathrm a}, r_{\mathrm c})$-marking} of $\Phi$ is a partition $(\mathcal{V}_{\mathrm m}, \mathcal{V}_{\mathrm a}, \mathcal{V}_{\mathrm c})$ of the variables of $\Phi$ such that 
  \begin{enumerate}
      \item \label{item:def-marking:2} the set of good variables $\mathcal{V}_{\mathrm m}$ is $r_{\mathrm m}$-distributed;
      \item \label{item:def-marking:3} the set of good variables $\mathcal{V}_{\mathrm a}$ is $r_{\mathrm a}$-distributed.
      \item \label{item:def-marking:1} $\mathcal{V}_{\mathrm c}$ contains all the bad variables and the set $\mathcal{V}_{\mathrm c} \setminus \bv(r)$ is $r_{\mathrm c}$-distributed;
  \end{enumerate}
  The variables in $\mathcal{V}_{\mathrm m}$ are called \emph{marked} variables, the variables in $\mathcal{V}_{\mathrm a}$ are called \emph{auxiliary} variables, and the variables in $\mathcal{V}_{\mathrm c}$ are called \emph{control} variables.
  
  In our sampling algorithm we work with $r = r_0 - \delta$ for $r_0 := \rvalue$ and $\delta := \deltadef$, and work with an $(r, r_0, r_0, 2r_0)$-marking. In our connectivity results (Theorems~\ref{thm:connectivity} and~\ref{thm:looseness}) we choose $r = r_1 - \delta$ for $r_1 := 0.227092$ and work with an $(r, r_1, 0, r_1)$-marking in order to achieve the larger density threshold.
\end{definition}

In Section~\ref{sec:marking} we show that random $k$-CNF formulae have $(r_0-\delta, r_0, r_0, 2r_0)$-markings when the density $\alpha$ is below the threshold $\densitydef$, and that the marginals of good variables are close to $1/2$; this is where the value of $r_0$ becomes important in the argument. We also show that random $k$-CNF formulae have $(r_1-\delta, r_1, 0, r_1)$-markings when the density $\alpha$ is below the threshold $2^{(r_1-\delta) k}/k^3$. We state this result for $r_0$ in Proposition~\ref{prop:marginals} below; first we give some relevant definitions.
 
\begin{definition}[$\Omega^*$, $\mu_A$, $\Omega$, $\Phi^\Lambda$, $\mathcal{C}^{\Lambda}$, $\mathcal{V}^{\Lambda}$, $\Omega^\Lambda$] \label{def:mu} \label{def:subformula}
  Let $\Phi = (\mathcal{V}, \mathcal{C})$ be a $k$-CNF formula. Let $\Omega^*$ be the set of all assignments $\mathcal{V} \to \{\mathsf{F}, \mathsf{T}\}$. Given any subset $A \subseteq \Omega^*$, let $\mu_A$ be the uniform distribution on $A$. Let $\Omega$ be the set of satisfying assignments of $\Phi$. For any partial assignment $\Lambda$ we denote by $\Phi^\Lambda$ the formula obtained by simplifying $\Phi$ under $\Lambda$, i.e., removing the clauses which are already satisfied by $\Lambda$, and removing false literals  from the remaining clauses. We denote by $\mathcal{C}^{\Lambda}$ and $\mathcal{V}^{\Lambda}$ the sets of clauses and variables of $\Phi^\Lambda$. Moreover, we denote by $\Omega^\Lambda$ the set of satisfying assignments of $\Phi^\Lambda$. 
\end{definition}
 
 \newcommand{\statepropmarginals}{
 There is an integer $k_0$ such that for any $k \ge k_0$ and any density $\alpha$  with $\density \le \densitydef$ the following holds w.h.p. over the choice of the random $k$-CNF formula $\Phi = \Phi(k, n, \lfloor \alpha n\rfloor)$. There exists an $(r_0-\delta, r_0, r_0, 2r_0)$-marking $(\mathcal{V}_{\mathrm m}, \mathcal{V}_{\mathrm a}, \mathcal{V}_{\mathrm c})$ of $\Phi$. Moreover, for any such marking, for any $v \in \gv(r_0 - \delta)$, any $V \subseteq \mathcal{V}_{\mathrm m} \cup \mathcal{V}_{\mathrm a}$ with $v \not \in V$, and any $\Lambda \colon V \to \{\mathsf{F}, \mathsf{T}\}$, we have
\begin{equation*}
    \max \left\{ \pr_{\mu_{\Omega^\Lambda}}\left( v \mapsto \mathsf{F} \right), \pr_{\mu_{\Omega^\Lambda}}\left( v \mapsto \mathsf{T}\right)  \right\} \le \frac{1}{2} \exp\left(\frac{1}{k 2^{r_0 k}}\right).
\end{equation*}
}
\begin{proposition} \label{prop:marginals}
  \statepropmarginals  
\end{proposition}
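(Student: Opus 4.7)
The plan is to establish Proposition~\ref{prop:marginals} in two stages: first show that an $r_0$-marking exists w.h.p.\ over $\Phi$, and then bound the marginals of marked and auxiliary variables for any such marking.

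For existence, I would use a randomized construction combined with the Lov\'asz Local Lemma (LLL). Place every bad variable in $\mathcal{V}_{\mathrm c}$, and independently assign each good variable to one of $\mathcal{V}_{\mathrm m}$, $\mathcal{V}_{\mathrm a}$ or $\mathcal{V}_{\mathrm c} \setminus \bv$ with probabilities $p_m = p_a = p$ and $p_c = 1 - 2p$ for some $p \in (r_0, \tfrac{1}{2} - r_0)$; this interval is nonempty because $r_0 < 1/4$. By Proposition~\ref{prop:bad}, each good clause contains at least $k-2$ good variables, so a Chernoff bound shows that the probability that a good clause $c$ violates any of the three distribution conditions of Definition~\ref{def:distributed-marking} is at most $\exp(-\Omega(k))$, with a rate depending on the gaps $p - r_0$ and $\tfrac{1}{2} - p - r_0$. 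Since good variables have degree strictly less than $\Delta = 2^{(r_0-2\delta)k}$, the dependency degree of these bad events in the LLL graph is at most $k \Delta$. Tuning $p$ and exploiting the $\delta$-slack built into $\Delta$ by Definition~\ref{def:degree}, the LLL condition $e \cdot p_{\mathrm{bad}} \cdot (k \Delta + 1) \le 1$ is satisfied for $k$ large enough, yielding an $r_0$-marking with positive probability. The structural hypotheses on $\Phi$ needed to apply this argument (in particular, that $\gv$ is dense enough) hold w.h.p.\ over $\Phi$ via Lemma~\ref{lem:bad}.

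For the marginal bound, fix any $r_0$-marking $(\mathcal{V}_{\mathrm m}, \mathcal{V}_{\mathrm a}, \mathcal{V}_{\mathrm c})$, a variable $v \in \mathcal{V}_{\mathrm m} \cup \mathcal{V}_{\mathrm a}$, a set $V \subseteq (\mathcal{V}_{\mathrm m} \cup \mathcal{V}_{\mathrm a}) \setminus \{v\}$, and a partial assignment $\Lambda \colon V \to \{\mathsf F, \mathsf T\}$. The key observation is that $\Lambda$ only touches marked or auxiliary variables, while $\mathcal{V}_{\mathrm c} \setminus \bv$ is $(2 r_0)$-distributed, so every good clause surviving in $\Phi^\Lambda$ retains at least $2 r_0 (k-3)$ free variables. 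Since bad clauses contain no good variables (Proposition~\ref{prop:bad}), every clause of $\Phi^\Lambda$ containing $v$ has width at least $2 r_0 (k-3)$, and $v$ has degree at most $\Delta$. An LLL-based distribution bound (in the style of Haeupler--Saha--Srinivasan, as used in~\cite{moitra19, galanis2019counting}) then controls the deviation of $\pr_{\mu_{\Omega^\Lambda}}(v \mapsto \mathsf T)$ from $\tfrac{1}{2}$ by a factor of $\exp(O(\Delta \cdot 2^{-2 r_0 (k-3)}))$. Substituting $\Delta = 2^{(r_0-2\delta)k}$, this exponent simplifies to $2^{-r_0 k - 2\delta k + O(1)}$, which for $k$ large enough is at most $2^{-(r_0+\delta)k}/k$, and the symmetric bound on $\pr_{\mu_{\Omega^\Lambda}}(v \mapsto \mathsf F)$ follows identically.

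The main obstacle is the tight numerical balance among $r_0$, $\delta$, and $\alpha_0$ enforced by Definition~\ref{def:degree}. For existence, the LLL barely closes: the Chernoff rate per clause must exceed $r_0 k \log 2$ to overcome the dependency degree $k \Delta$, forcing both $r_0 < 1/4$ and $\delta > 0$. For the marginals, we need the clause-width slack $2 r_0 (k-3) - (r_0 - 2\delta) k$ to exceed $(r_0 + \delta) k + \log_2 k + O(1)$, i.e., $\delta k \gtrsim \log_2 k$, which holds for $k$ sufficiently large relative to $1/\delta$. Simultaneously satisfying both constraints is what dictates the specific values of $r_0$ and $\delta$ and ultimately the density threshold in the theorem.
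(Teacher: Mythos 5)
Your first stage (existence of an $r_0$-marking) follows the paper's argument: randomize the assignment of good variables to the three types, apply a Chernoff bound per good clause, and close with the Lov\'asz Local Lemma over $G_{\Phi_{\mathrm{good}}}$, whose maximum degree is at most $k(\Delta-1)$. You gloss over the requirement that the Chernoff rate actually exceed roughly $r_0 k\log 2$ — it is not enough that $p\in(r_0,\tfrac12-r_0)$, one must check that $D(r_0,p)\ge r_0\log 2$ and $D(2r_0,1-2p)\ge r_0\log 2$ hold simultaneously, which is exactly what fixes the numerical values of $r_0$ and $p$ in the paper — but the structure of the argument is the same.

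The second stage has a genuine gap: you apply an LLL conditional-distribution bound directly to $\Phi^\Lambda$, but $\Phi^\Lambda$ still contains every bad clause, since $\Lambda$ assigns only marked/auxiliary (hence good) variables. A bad clause consists entirely of bad variables, which may have degree $\omega(\Delta)$, so the LLL dependency degree for the event ``bad clause $c$ is unsatisfied'' is not controlled by $k\Delta$, and the LLL hypotheses cannot be verified for those events. Observing that $v$ itself does not appear in any bad clause bounds $\lvert\Gamma(\{v\mapsto\mathsf T\})\rvert$, but it does nothing to repair the local conditions for the bad-clause events, without which the Haeupler--Saha--Srinivasan distribution bound is not available. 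The paper avoids this by first fixing an assignment $\Lambda_{\mathrm{bad}}\colon\bv\to\{\mathsf F,\mathsf T\}$ that satisfies all bad clauses (it exists since $\Phi_{\mathrm{bad}}$ is satisfiable w.h.p.), applying the LLL to $\Phi' = \Phi^{\Lambda\cup\Lambda_{\mathrm{bad}}}$ — whose variables and clauses are all good, so that $G_{\Phi'}\subseteq G_{\Phi_{\mathrm{good}}}$ has max degree at most $k(\Delta-1)$ and every surviving clause still has at least $2r_0(k-3)$ free good control variables — and then recovering the marginal of $v$ under $\mu_{\Omega^\Lambda}$ by averaging over $\Lambda_{\mathrm{bad}}$ via the law of total probability. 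You should incorporate this conditioning step; once it is in place, your calculation $\Delta\cdot 2^{-2r_0(k-3)} \le 2^{-(r_0+\delta)k}/k$ for large $k$ goes through as you describe.
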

\begin{proof} \label{prop:marginals:proof}
  This follows directly by combining Lemmas~\ref{lem:marking} and~\ref{lem:marginals}, which are stated and proved in Section~\ref{sec:marking}. 
\end{proof}

We note that the density threshold of Theorem~\ref{thm:sampling} is $2^{0.039 k}$, which is significantly smaller than the threshold $\densitydef$ in Proposition~\ref{prop:marginals}. The bottleneck for the threshold Theorem~\ref{thm:sampling} comes from our mixing time results, see Section~\ref{sec:po:si}.

The bound given in Proposition~\ref{prop:marginals} on the marginal probabilities of the marked and auxiliary variables is exploited several times in this work, and we will explain some of these applications in this proof outline. We remark that the bound on the marginals of good variables holds for \emph{any} pinning of \emph{any} subset of marked and auxiliary variables,  which will be relevant in the spectral independence argument.

 \begin{definition}[$\restr{\mu}{V}$] \label{def:marginal}
    Let $\mathcal{V}$ be a finite set and let $\Omega \subseteq \{\mathsf{F}, \mathsf{T}\}^{\mathcal{V}}$. Let $\mu$ be a distribution over $\Omega$. For a set $V \subseteq \mathcal{V}$, we denote by $\restr{\mu}{V}$ the marginal distribution of $\mu$ on $V$.
\end{definition}

Proposition~\ref{prop:marginals} implies that the distribution $\restr{\mu_{\Omega}}{\mathcal{V}_{\mathrm m} \cup \mathcal{V}_{\mathrm a}}$ is very close to the uniform distribution over all assignments $\mathcal{V}_{\mathrm m} \cup \mathcal{V}_{\mathrm a}  \to \{\mathsf{F}, \mathsf{T}\}$. This concept is formalised in the following definition.

\begin{definition}[$\varepsilon$-uniform] \label{def:uniform}
  Let $V$ be a set of variables and $\mu$ be a probability distribution over the assignments $V \to \{\mathsf{F}, \mathsf{T}\}$. Let $\Lambda \colon S \to \{\mathsf{F}, \mathsf{T}\}$ be an assignment of some subset of variables $S \subseteq V$. We denote by $\pr_\mu(\Lambda)$ the probability under $\mu$ of the event that the variables in $S$ are assigned values according to $\Lambda$, and by $\pr_\mu(\cdot \vert \Lambda)$ the corresponding conditional distribution of $\mu$.
  
  For $\varepsilon \in (0,1)$, we say that the distribution $\mu$ is \emph{$\varepsilon$-uniform} if for any variable $v \in V$ and any partial assignment $\Lambda \colon V \setminus \{v\} \to \{\mathsf{F}, \mathsf{T}\}$, we have
  \begin{equation*}
    \max \left\{ \pr_{\mu}\left( \left. v \mapsto \mathsf{F} \right| \Lambda\right), \pr_{\mu}\left( \left. v \mapsto \mathsf{T} \right| \Lambda \right)  \right\} \le \frac{1}{2}e^{\varepsilon}.
  \end{equation*}
\end{definition}

From Proposition~\ref{prop:marginals}, it follows that the distribution $\restr{\mu_{\Omega}}{\mathcal{V}_{\mathrm m}}$  is
$\epsilon$-uniform for $\epsilon = (2^{-r_0 k}/k)$, so 
for any $\Lambda \colon \mathcal{V}_{\mathrm m}  \to \{\mathsf{F}, \mathsf{T}\}$,
the 
probability that the assignment of the marked variables is~$\Lambda$ is at least $(1-e^\varepsilon/2)^{\lvert \mathcal{V}_{\mathrm m} \rvert}$. The $\epsilon$-uniform property also (trivially) guarantees that the space of assignments $\Lambda \colon \mathcal{V}_{\mathrm m}  \to \{\mathsf{F}, \mathsf{T}\}$  with $\pr_{\mu_\Omega}(\Lambda) > 0$ is connected via single-variable updates, so we can indeed consider the Glauber dynamics over $\mathcal{V}_{\mathrm m}$. This leads to the main challenge of this work: does this chain mix rapidly?
 
\subsection{Mixing time of the Glauber dynamics on the marked variables} \label{sec:po:si}
 
Recently, there has been significant progress in showing that the single-variable Glauber dynamics on appropriately chosen subsets of variables mixes quickly for $k$-CNF formulae with bounded degree~\cite{feng2020, jain2021sampling}. These approaches carefully execute a union bound over paths of clauses connecting marked variables in order to bound the coupling time between two copies of the chain. 
However, these union bound arguments break under the presence of high-degree variables that are present in random $k$-SAT; this is because the number of paths connecting marked variables is very sensitive to the max degree of the formula and in particular grows too fast in our setting. We give a more detailed discussion in Section~\ref{sec:po:si:pw}. 

Instead, we apply the spectral independence framework to show rapid mixing of a uniform-block Glauber dynamics, which we review briefly below. Applications of spectral independence usually exploit decay of correlations to show that the spectral independence condition holds, see~\cite{alg20,Chen2020SI, bez2021} for examples. As we have mentioned in the introduction, correlation decay fails to hold for densities exponential in $k$ in the random $k$-SAT model~\cite{montanari2007} and therefore, we have to develop a different approach to conclude that the spectral-independence condition holds in our setting. This is our main contribution in this work; we show that the marginal distribution on the marked variables, i.e., $\restr{\mu_\Omega}{\mathcal{V}_{\mathrm m}}$, is $(\epsilon \log n)$-spectrally independent for some $\epsilon>0$ that can be made arbitrarily small for sufficiently  large $k$. Our argument builds on the coupling idea of Moitra~\cite{moitra19} (as  refined in~\cite{galanis2019counting} for random $k$-SAT) and relates the spectral independence condition to the expected number of failed clauses in this coupling process. This allows us to exploit the local sparsity properties of the random $k$-SAT model to analyse the mixing time of the Glauber dynamics. 

A caveat here is that the spectral independence of $\restr{\mu_\Omega}{\mathcal{V}_{\mathrm m}}$ is not enough on its own to conclude fast mixing of the single-site Glauber dynamics. The most direct way to  work around  this is to analyse instead the so-called $\rho$-uniform-block Glauber dynamics that updates $\rho$ vertices at a time for some $\rho$ that scales linearly in $n$; the main missing ingredient there is to show that the modified chain can be implemented efficiently which we discuss in Section~\ref{sec:po:structure}. We next give a quick overview of the relevant ingredients of the spectral-independence literature that we will need.

 \subsubsection{The $\rho$-uniform-block Glauber dynamics, spectral independence, and the mixing time} \label{sec:po:si:block}
 
 Let $V$ be a finite set of size $M$ and $\mu$ be a distribution over the assignments $V \to \{\mathsf{F}, \mathsf{T}\}$.  Let $\Omega$ be the set of assignments $V \to \{\mathsf{F}, \mathsf{T}\}$ with positive probability under $\mu$. For an integer $\rho \in \{1,2,\ldots,\lvert V \rvert \}$, the $\rho$-uniform-block Glauber dynamics for $\mu$ is a Markov chain $X_t$ where $X_0 \in \Omega$ is an arbitrary configuration and, for $t \ge 1$, $X_t$ is obtained from $X_{t-1}$ by first picking a subset $S \subseteq V$ of size $\rho$ uniformly at random, letting $\Lambda_t$ be the restriction of $X_t$ to $V\setminus S$, and updating the configuration on $S$ according to the probability distribution $\mu ( \cdot \vert \Lambda_{t})$. This chain satisfies the detailed balance equation for $\mu$. Hence, when the chain is irreducible, for $\varepsilon > 0$, we can consider its mixing time $T_{\mix}(\rho, \varepsilon) = \max_{\sigma \in \Omega} \min \{t :  d_{\TV}(X_t, \mu) \le \varepsilon \mid X_0 = \sigma\}$. We say that $\mu$ is \emph{$b$-marginally bounded} if for all $v \in V$, $S \subseteq V \setminus \{v\}$, $\Lambda \colon S \to \{\mathsf{F}, \mathsf{T}\}$ with $\pr_{\mu}(\Lambda) > 0$, and $\omega \in \{\mathsf{F}, \mathsf{T}\}$, it either holds that $\pr_\mu(v \mapsto \omega \vert \Lambda ) = 0$ or $\pr_\mu(v \mapsto \omega \vert \Lambda) \ge b$. Spectral independence  results have recently been used in the $b$-marginally bounded setting to obtain fast mixing time of the uniform-block Glauber dynamics~\cite{Blanca2022, Chen2021}. For $S \subset V$, $\Lambda \colon S \to \{\mathsf{F}, \mathsf{T}\}$ with $\pr_{\mu}(\Lambda) > 0$, and $u,v \in V$ with $u,v \not \in S$ and $0 < \pr_{\mu}(u \mapsto \mathsf{T} \vert \Lambda) < 1$, the \emph{influence} of $u$ on $v$ (under $\mu$ and $\Lambda$) is defined as 
\begin{equation} \label{eq:influence}
  \mathcal{I}^{\Lambda}(u \to v) = \pr_\mu\left( v \mapsto \mathsf{T} \vert u \mapsto \mathsf{T}, \Lambda\right) - \pr_\mu\left( v \mapsto \mathsf{T}  \vert u \mapsto \mathsf{F}, \Lambda\right).
\end{equation}
The \emph{influence matrix conditioned on $\Lambda$} is the (two-dimensional) matrix whose entries consist of  ${\mathcal{I}}^\Lambda(u \to v)$ over all relevant $u$ and $v$. We denote by ${\mathcal{I}}^\Lambda$ the matrix and by $\lambda_1({\mathcal{I}}^\Lambda)$ the largest absolute value of its eigenvalues. For a real $\eta>0$, we say that $\mu$ is \emph{$\eta$-spectrally independent} if for all $S \subset V$ and $\Lambda \colon S \to \{\mathsf{F}, \mathsf{T}\}$ with $\pr_{\mu}(\Lambda) > 0$ we have $\lambda_1({\mathcal{I}}^\Lambda) \le \eta$.  From the results of~\cite{Chen2021}, one can conclude the following bound for the mixing time of the uniform-block Glauber dynamics, see Appendix~\ref{sec:ap:mt} for details.

 \begin{lemma} \label{lem:block-glauber}
  The following holds for any reals $b, \eta > 0$,  any $\kappa \in (0,1)$ and any integer $M$ with $M \ge \frac{2}{\kappa} (4\eta / b^2 + 1)$. Let $V$ be a set of size $M$, let $\mu$ be a distribution over the assignments $V \to \{\mathsf{F}, \mathsf{T}\}$, let $\Omega = \{\Lambda \colon V \to \{\mathsf{F}, \mathsf{T}\} :  \mu(\Lambda) > 0\}$  and let $\mu_{\min} = \min_{\Lambda \in \Omega} \mu(\Lambda)$. If $\mu$ is $b$-marginally bounded and $\eta$-spectrally independent, then, for $\rho = \lceil \kappa M \rceil$ and $C_\rho = (2/\kappa)^{4 \eta / b^2 + 1}$, we have
    \begin{equation*}
      T_{\mix}(\rho, \varepsilon) \le \left\lceil C_\rho \frac{M}{\rho} \left( \log \log \frac{1}{\mu_{\min}} + \log \frac{1}{2 \varepsilon^2} \right) \right\rceil.
  \end{equation*}
 \end{lemma}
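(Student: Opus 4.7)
The plan is to derive the claim directly from the block-dynamics mixing framework of~\cite{Chen2021}, which establishes that spectral independence together with marginal boundedness implies an approximate tensorization of entropy (equivalently, a modified log-Sobolev inequality, MLSI) for the $\rho$-uniform-block Glauber dynamics. Specifically, I would invoke the theorem asserting that if $\mu$ is $\eta$-spectrally independent and $b$-marginally bounded on a ground set $V$ of size $M$, then, provided $M$ is large enough relative to $4\eta/b^2$ (which is precisely what our hypothesis $M \ge (2/\kappa)(4\eta/b^2+1)$ encodes, given $\rho = \lceil \kappa M \rceil$), the $\rho$-uniform-block dynamics has an MLSI constant $\alpha_{\mathrm{MLSI}} \ge (\rho/M) \cdot C_\rho^{-1}$, with $C_\rho = (2/\kappa)^{4\eta/b^2+1}$ as in the statement. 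The size hypothesis is what permits the inductive entropy-factorization argument of~\cite{Chen2021} (which peels off coordinates one at a time) to terminate before the remaining block becomes too small relative to $4\eta/b^2$ for the spectral-independence-based local step to be applicable.

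The second step is the standard conversion from MLSI to total variation mixing. For any initial state $\sigma \in \Omega$, writing $P^t(\sigma, \cdot)$ for the law of $X_t$, the MLSI yields exponential decay of relative entropy,
\[
\mathrm{Ent}(P^t(\sigma, \cdot) \,\|\, \mu) \le e^{-\alpha_{\mathrm{MLSI}}\, t}\, \mathrm{Ent}(\delta_\sigma \,\|\, \mu) \le e^{-\alpha_{\mathrm{MLSI}}\, t}\, \log(1/\mu_{\min}).
\]
Pinsker's inequality $2\, d_{TV}^2 \le \mathrm{Ent}(\cdot \,\|\, \mu)$ then ensures $d_{TV}(P^t(\sigma,\cdot),\mu) \le \varepsilon$ as soon as $t \ge \alpha_{\mathrm{MLSI}}^{-1} (\log \log(1/\mu_{\min}) + \log(1/(2\varepsilon^2)))$; substituting the lower bound on $\alpha_{\mathrm{MLSI}}$ and taking the ceiling reproduces the exact expression for $T_{mix}(\rho,\varepsilon)$ claimed in the lemma.

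The main obstacle, and where some care is genuinely needed, is the bookkeeping: \cite{Chen2021} states its results with a somewhat different normalisation (via $\theta$-uniform blocks and factorization-of-entropy constants rather than directly as an MLSI bound), so one must carefully match the exponent $4\eta/b^2+1$ and base $2/\kappa$ in $C_\rho$ against what their theorems actually deliver. A second point to verify is that our definitions of $b$-marginal boundedness and $\eta$-spectral independence --- which are assumed to hold under every pinning on every subset --- coincide with the ``local'' conditions that~\cite{Chen2021} requires at each level of their inductive argument. Both checks are routine but detail-heavy, which is presumably why the formal argument is deferred to the appendix.
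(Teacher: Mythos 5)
Your proposal is correct and follows essentially the same route as the paper: the paper's Appendix~\ref{sec:ap:mt} proves Lemma~\ref{lem:block-glauber} by combining \cite[Lemma~2.4]{Chen2021} (spectral independence plus marginal boundedness implies $\rho$-uniform block factorisation of entropy with constant $C_\rho = (2/\kappa)^{4\eta/b^2+1}$, which is exactly the MLSI statement you describe) with \cite[Lemma~2.6 and Fact~3.5(4)]{Chen2021} (the conversion from entropy factorisation to total-variation mixing via relative-entropy decay and Pinsker, matching your second step). The bookkeeping checks you flag (matching $C_\rho$, verifying the pinning-closed conditions) are indeed the only substantive work, and the paper likewise defers them to the cited references.
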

 We are going to consider the uniform-block Glauber dynamics on the marked variables of $\Phi$, so $V = \mathcal{V}_{\mathrm m}$, and the set of states coincides with the set of assignments $\mathcal{V}_{\mathrm m} \to \{\mathsf{F}, \mathsf{T} \}$ as all of them have positive probability. In this setting, the target distribution is $\restr{\mu_\Omega}{\mathcal{V}_{\mathrm m}}$. The distribution $\restr{\mu_\Omega}{\mathcal{V}_{\mathrm m}}$ is $(1/e)$-marginally-bounded as a straightforward consequence of the fact that it is $(1/k)$-uniform, see Remark~\ref{rem:marginally-bounded} for details. Hence, in order to conclude rapid mixing it remains to establish spectral independence. For this, we are going to use the well-known fact (see for instance~\cite{Chen2020SI}) that, for $S \subset V$ and $\Lambda \colon S \to \{\mathsf{F}, \mathsf{T}\}$, we have 
\begin{equation} \label{eq:si}
    \lambda_1({\mathcal{I}}^\Lambda) \le \max_{u \in V \setminus S} \sum_{v \in V \setminus S} \lvert \mathcal{I}^{\Lambda}(u \to v) \rvert.
\end{equation}

\subsubsection{Spectral independence in the random $k$-SAT model} \label{sec:po:si:results}

In this section we state our spectral independence results in the random $k$-SAT model. The results stated in this section are proved in Section~\ref{sec:mixing-time}. Our main technical result is the following.

\newcommand{\statelemsi}{
There is an integer $k_0 \ge 3$ such that  for any integer $k \ge k_0$ and any density $\alpha$ with $\density \le 2^{ r_0  k / 3} / k^3$ the following holds. W.h.p. over the choice of the random $k$-CNF formula $\Phi = \Phi(k, n, \lfloor \alpha n \rfloor)$, for any $(r_0 - \delta, r_0, r_0, 2r_0)$-marking $(\mathcal{V}_{\mathrm m}, \mathcal{V}_{\mathrm a}, \mathcal{V}_{\mathrm c})$ of $\Phi$, the  distribution $\restr{\mu_{\Omega}}{\mathcal{V}_{\mathrm m}}$ is $(2^{-(r_0-\delta) k}\log n)$-spectrally independent.}
\begin{lemma} \label{lem:si}
	\statelemsi
\end{lemma}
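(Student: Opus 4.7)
By the standard bound~\eqref{eq:si}, it suffices to show that, w.h.p.\ over $\Phi$, for every $r_0$-marking $(\mathcal{V}_{\mathrm m},\mathcal{V}_{\mathrm a},\mathcal{V}_{\mathrm c})$, every pinning $\Lambda\colon S\to\{\mathsf{F},\mathsf{T}\}$ with $S\subset\mathcal{V}_{\mathrm m}$, and every $u\in\mathcal{V}_{\mathrm m}\setminus S$,
\begin{equation*}
\sum_{v\in\mathcal{V}_{\mathrm m}\setminus(S\cup\{u\})} |\mathcal{I}^\Lambda(u\to v)| \;\le\; 2^{-r_0 k}\log n.
\end{equation*}
My plan is to couple the two conditional laws $\mu_\Omega(\cdot\mid u\mapsto\mathsf{T},\Lambda)$ and $\mu_\Omega(\cdot\mid u\mapsto\mathsf{F},\Lambda)$ using a Moitra-style BFS process~\cite{moitra19}, in the refined form of~\cite{galanis2019counting} that accommodates high-degree variables, and to identify the row-sum above with the expected size of the resulting ``discrepancy cluster''. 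Concretely, starting from a disagreement only at $u$, the coupling explores the clause-variable incidence graph of $\Phi^\Lambda$ one clause at a time, samples the marked and auxiliary variables of each visited clause under a maximal coupling of the two conditional marginals, and propagates the discrepancy only through those clauses that \emph{fail}, i.e.\ end up with differing values in the two copies that leave the clause potentially unsatisfied in one of them. The standard coupling inequality then gives $|\mathcal{I}^\Lambda(u\to v)|\le \pr(v\in D_u)$, so $\sum_{v}|\mathcal{I}^\Lambda(u\to v)|\le \mathbb{E}[|D_u|]$, where $D_u\subseteq\mathcal{V}_{\mathrm m}$ is the set of marked variables ever touched by the BFS.

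\textbf{Per-clause failure bound.} The next step is to control the conditional probability that any single clause fails. Here Proposition~\ref{prop:marginals} does the heavy lifting: every good clause $c$ contains at least $r_0(k-3)$ variables from each of $\mathcal{V}_{\mathrm m}$ and $\mathcal{V}_{\mathrm a}$, and on these variables the conditional marginals are $\epsilon$-uniform with $\epsilon=O(2^{-(r_0+\delta)k}/k)$, so the probability that the coupled values realise the ``all literals false'' pattern needed to violate $c$ is at most $2^{-r_0(k-3)}e^{O(1)}=2^{-r_0 k+O(1)}$, uniformly over the pinning of the remaining variables of $c$. Bad clauses are handled separately using Proposition~\ref{prop:bad}: a bad clause is entirely surrounded by bad variables, which lie in the control set $\mathcal{V}_{\mathrm c}$ and are hence pinned by $\Lambda$ up to the BFS exploration, so the BFS can enter a bad clause only through the $(2r_0)$-distributed part $\mathcal{V}_{\mathrm c}\setminus\bv$, whose density is what makes the contribution of bad clauses subdominant.

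\textbf{Local sparsity and wrap-up.} With a uniform per-clause failure bound in hand, I bound $\mathbb{E}[|D_u|]$ by summing, over connected subformulas $T$ of the clause-variable incidence graph of $\Phi$ rooted at $u$, the product of per-clause failure probabilities for clauses in $T$. Using the standard random $k$-SAT estimate that w.h.p.\ the number of such connected subformulas with $t$ good clauses is at most $(O(k\alpha))^{t}$, together with $\alpha\le 2^{r_0 k/3}/k^3$, the good-variable degree bound $\Delta=2^{(r_0-2\delta)k}$, and the failure probability $2^{-r_0 k+O(1)}$, the effective branching factor of the BFS is $O(k\alpha)\cdot 2^{-r_0 k+O(1)}=2^{-\Omega(k)}\ll 1$. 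Hence, small subformulas (those with $t\le C\log n$) contribute a geometric sum bounded by $2^{-r_0 k+O(1)}$ per rooted vertex, while the probability that the BFS reaches a cluster of clause-size $\ge C\log n$ is at most $n^{-\Omega(C)}$ by the standard expansion of random $k$-CNF formulas, so the tail contribution to $\mathbb{E}[|D_u|]$ is $n\cdot n^{-\Omega(C)}=o(1)$ for $C$ large enough in terms of $k$. Summing the two regimes yields $\mathbb{E}[|D_u|]\le 2^{-r_0 k}\log n$, as required.

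\textbf{Main obstacle.} The principal technical difficulty is the presence of high-degree variables, whose degree can be as large as $\Theta(\log n/\log\log n)$: a naive branching-process bound would pay a factor of the degree at every such vertex and destroy the $2^{-\Omega(k)}$ contraction coming from the near-uniform marginals. The structural properties given by Proposition~\ref{prop:bad} (each good clause touches at most two bad variables; bad clauses are entirely surrounded by bad variables) together with the $(2r_0)$-distributedness of $\mathcal{V}_{\mathrm c}\setminus\bv$ in the marking are engineered precisely to decouple the high-degree issues from the coupling. The real technical work is to turn this bookkeeping into a clean union bound over connected subformulas, in which paths through control variables and bad clauses are shown to contribute only a lower-order correction to the geometric series driven by the $2^{-r_0 k}$ per-clause failure gain.
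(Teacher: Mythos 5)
Your high-level strategy --- couple the two conditional laws, identify the influence row-sum with an expected discrepancy quantity on a connected ``failed cluster'', and exploit a $2^{-r_0 k}$ per-good-clause failure gain --- matches the paper's. The genuine gap is in the wrap-up, where you sum a geometric series over connected subformulas of all sizes using the claimed enumeration $(O(k\alpha))^t$ and effective branching factor $O(k\alpha)\cdot 2^{-r_0 k+O(1)}\ll 1$. That enumeration bound (Lemma~\ref{lem:bound-Z}) only holds for $t\gtrsim\log n$; for smaller $t$, the number of connected sets of clauses containing a given clause is governed by the maximum degree of $G_\Phi$, which is $\Theta(k\log n/\log\log n)$ because the maximum \emph{variable} degree is $\Theta(\log n/\log\log n)$. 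For fixed $k$ and growing $n$ this eventually exceeds $2^{r_0 k}$, so the naive geometric series over small clusters does not converge. Two further issues compound this: bad clauses are added to the failed set unconditionally (no $2^{-r_0 k}$ gain) and a bad component can already have size $\Theta(k^4\log n)$ by Lemma~\ref{lem:bad-component}, so short clusters dominated by bad clauses admit no multiplicative control; and the failure events of clauses sharing auxiliary variables are strongly positively correlated, so ``product of per-clause failure probabilities'' requires first extracting a family of clauses with pairwise disjoint good variables --- this is the role of the independent set of good clauses encoded by $\mathcal{D}_3(G_\Phi, c, \ell)$ and is obtained via the constant tree-excess of logarithmic-sized clusters (Lemma~\ref{lem:tree-excess}), which your sketch does not invoke.

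The paper sidesteps the geometric series entirely by a threshold argument: it proves $\pr\bigl(|\failu(R)|\ge 2k^4\log n\bigr)=O(1/n)$ by a union bound over $\mathcal{D}_3(G_\Phi, c, \ell)$ together with Lemma~\ref{lem:bound-fail} (failure probability for a family of good clauses with disjoint good variables) and Lemma~\ref{lem:bound-I} (enumeration for $\ell\ge\log n$), and then uses the trivial bound $\mathbb{E}[|\failu(R)|]\le 4k^4\log n$. The final $2^{-r_0 k}\log n$ comes from multiplying by the per-variable discrepancy probability $2^{-(r_0+\delta)k+1}$ of Lemma~\ref{lem:expectation}, with the $\delta$ slack absorbing the $k^4$ factor. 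That per-variable factor is available precisely because the paper's coupling (Algorithm~\ref{alg:coupling}) samples \emph{only auxiliary} variables, so the marked variables stay unpinned and Proposition~\ref{prop:marginals} still applies to them; your coupling samples marked and auxiliary variables together, which would change the bookkeeping of which variables still have controlled marginals when the discrepancy bound is invoked.
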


We are going to describe some of the ideas behind the proof of Lemma~\ref{lem:si}. First, we highlight the fact that, due to the presence of high-degree variables (which form logarithmic-sized connected components), current techniques seem unable to conclude $\eta$-spectral independence with $\eta = O(1)$. This has also been the case in recent work on $2$-spin systems on random graphs~\cite{bez2021}, where instead  correlation decay is exploited to prove $\eta$-spectral independence for some $\eta=o(\log n)$. Here, our $\eta$-spectral independence bound for  $\eta = o_k(\log n)$ will be  based on an appropriate coupling. Note, in light of Lemma~\ref{lem:block-glauber}, $\eta = O(\log n)$ is good enough for proving polynomial mixing time of the uniform-block Glauber dynamics, but we need the improved bound of Lemma~\ref{lem:si} in order to conclude the following fast mixing-time result from  Lemma~\ref{lem:block-glauber} (as illustrated Section~\ref{sec:mixing-time}).
 
 \newcommand{\statelemmt}{
 There is a function $k_0(\theta) = \Theta(\log(1/\theta))$ such that, for any $\theta \in (0,1)$, for any integer $k \ge k_0(\theta)$ and any density $\alpha$ with $\density \le 2^{0.039  k}$ the following holds. W.h.p. over the choice of the random $k$-CNF formula $\Phi = \Phi(k, n, \lfloor \alpha n \rfloor)$, for any $(r_0-\delta, r_0, r_0, 2r_0)$-marking $(\mathcal{V}_{\mathrm m}, \mathcal{V}_{\mathrm a}, \mathcal{V}_{\mathrm c})$ of $\Phi$ and for $\rho = \lceil 2^{-k-1} \lvert \mathcal{V}_{\mathrm m} \rvert \rceil$, the $\rho$-uniform-block Glauber dynamics for updating the marked variables has mixing time $T_{\mix}(\rho, \varepsilon/2) \le T := \mtdef$.}
    
\begin{lemma} \label{lem:mixing-time}
  \statelemmt
\end{lemma}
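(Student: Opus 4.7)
My plan is to obtain Lemma~\ref{lem:mixing-time} by invoking Lemma~\ref{lem:block-glauber} on the distribution $\mu := \restr{\mu_\Omega}{\mathcal{V}_{\mathrm m}}$, viewed as a distribution on $\{\mathsf{F},\mathsf{T}\}^{\mathcal{V}_{\mathrm m}}$. I set $M = |\mathcal{V}_{\mathrm m}|$ and $\kappa = 2^{-k-1}$, so that the block size $\lceil \kappa M \rceil$ in Lemma~\ref{lem:block-glauber} agrees with the $\rho$ in the present statement. For $k$ large, the hypothesis $\alpha \le 2^{0.039 k}$ satisfies both $\alpha \le \alpha_0 = 2^{(r_0 - 2\delta) k}/k^3$ (required by Proposition~\ref{prop:marginals}) and $\alpha \le 2^{r_0 k / 3}/k^3$ (required by Lemma~\ref{lem:si}), since $r_0/3 - 0.039 > 0$ and the $k^3$ is absorbed by $2^{(r_0/3 - 0.039) k}$ for $k$ sufficiently large.

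The two hypotheses of Lemma~\ref{lem:block-glauber} follow directly from earlier results. By Proposition~\ref{prop:marginals}, w.h.p.\ an $r_0$-marking exists and $\mu$ is $\varepsilon_0$-uniform with $\varepsilon_0 = 1/(k 2^{(r_0+\delta)k})$; for $k$ large, every nonzero conditional marginal lies in $[1 - e^{\varepsilon_0}/2,\,e^{\varepsilon_0}/2] \subseteq [1/e,\,1-1/e]$, giving $b$-marginal-boundedness with $b = 1/e$ (as indicated in Remark~\ref{rem:marginally-bounded}). By Lemma~\ref{lem:si}, $\mu$ is $\eta$-spectrally independent with $\eta = 2^{-r_0 k}\log n$. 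The size condition $M \ge (2/\kappa)(4\eta/b^2 + 1) = 2^{k+2}(4 e^2 2^{-r_0 k}\log n + 1)$ follows from the fact that w.h.p.\ $|\mathcal{V}_{\mathrm m}| = \Omega(n)$, a property that should follow from the marking construction of Section~\ref{sec:marking} together with Proposition~\ref{prop:bad} (only a vanishing fraction of variables are bad, and the $r_0$-marking leaves a constant fraction of the good ones in $\mathcal{V}_{\mathrm m}$).

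Plugging these parameters into Lemma~\ref{lem:block-glauber} yields
\begin{equation*}
  C_\rho \;=\; (2/\kappa)^{4\eta/b^2 + 1} \;=\; 2^{(k+2)(4 e^2 2^{-r_0 k}\log n + 1)} \;=\; 2^{k+2}\cdot n^{c_0 (k+2) 2^{-r_0 k}}
\end{equation*}
for a numerical constant $c_0$. Combined with $M/\rho \le 2^{k+1}$ and the bound $\log\log(1/\mu_{\min}) = O(\log n)$ (which follows from $\mu_{\min} \ge (1/e)^{|\mathcal{V}_{\mathrm m}|}$, itself a consequence of $\varepsilon_0$-uniformity), the conclusion of Lemma~\ref{lem:block-glauber} with error parameter $\varepsilon/2$ gives
\begin{equation*}
  T_{mix}(\rho,\varepsilon/2) \;\le\; \bigl\lceil 2^{2k+3}\,n^{c_0 (k+2) 2^{-r_0 k}}\bigl(O(\log n) + \log(2/\varepsilon^2)\bigr) \bigr\rceil.
\end{equation*}
The main (and essentially only non-routine) obstacle is the quantitative check that this is bounded by $\lceil 2^{2k+3} n^\theta \log(2n/\varepsilon^2) \rceil$. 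The binding constraint is $c_0 (k+2) 2^{-r_0 k} \le \theta$, which forces exactly $k_0(\theta) = \Theta(\log(1/\theta))$ as claimed; after this is secured, absorbing the $O(\log n)$ term from $\log\log(1/\mu_{\min})$ into $\log(2n/\varepsilon^2)$ is routine bookkeeping.
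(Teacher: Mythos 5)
Your proposal is correct and follows essentially the same route as the paper: apply Lemma~\ref{lem:block-glauber} with $\mu = \restr{\mu_\Omega}{\mathcal{V}_{\mathrm m}}$, $\kappa = 2^{-k-1}$, using Lemma~\ref{lem:si} for spectral independence and Remark~\ref{rem:marginally-bounded} for $b=1/e$ marginal-boundedness, and the same closing estimate of $C_\rho$. The only small gap is that you wave at $|\mathcal{V}_{\mathrm m}| = \Omega(n)$ as something that ``should follow'' from the marking construction; the paper establishes this precisely via Corollary~\ref{cor:size-r-distributed}, which gives $|\mathcal{V}_{\mathrm m}| \ge (r_0-\delta)(k\alpha/\Delta)n$.
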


Lemma~\ref{lem:mixing-time} is stated for the block size $\rho = \lceil 2^{-k-1} \lvert \mathcal{V}_{\mathrm m} \rvert \rceil$, but it could be proved more generally when $\rho = c \lvert \mathcal{V}_{\mathrm m} \rvert$ and $c \in (0,1)$. The fact that $\rho \le \lvert \mathcal{V}_{\mathrm m} \rvert / 2^k$ in the statement will be relevant in implementing efficiently the dynamics, discussed in  Section~\ref{sec:po:structure}. 

We remark that the more restrictive density threshold $\density \le 2^{ r_0  k / 3} / k^3$ in the statement of Lemma~\ref{lem:si} arises in the union bound given in the proof of this lemma, and that for large enough $k$ we have $2^{0.039 k} \le 2^{ r_0  k / 3} / k^3$, the former being the density threshold given in Lemma~\ref{lem:mixing-time} and Theorem~\ref{thm:sampling}.

Our approach to prove $\eta$-spectral independence significantly differs from those that in two-spin systems, where it is enough to study sum of influences over trees (thanks to the tree of self-avoiding walks) and exploit decay of correlations in this setting (very roughly, the further away two vertices are in the tree, the smaller the influence that one vertex has in the other). Here we relate influences to the structure of the dependency graph $G_\Phi$ by running a coupling process on the auxiliary variables, and we state this connection in the upcoming  Lemma~\ref{lem:expectation}. First we define more formally the dependency graph $G_\Phi$.

\begin{definition}[$G_\Phi$] \label{def:graph-phi} 
Let $\Phi = (\mathcal{V}, \mathcal{C})$ be a $k$-CNF formula. We define the graph $G_\Phi$ as follows. The vertex set of $G_\Phi$  is $\mathcal{C}$ and two clauses $c_1$ and $c_2$ are adjacent if and only if $\var(c_1) \cap \var(c_2) \ne \emptyset$. A set $C \subseteq \mathcal{C}$ is connected if $C$ is connected in the graph $G_\Phi$. We say that two variables $u$ and $v$ are connected in $\Phi$ if there is a path $c_1, c_2, \ldots, c_\ell$ in $G_{\Phi }$ with $u \in \var(c_1)$ and $v \in \var(c_\ell)$.
\end{definition}

Let $u \in \mathcal{V}_{\mathrm m}$, $S \subset \mathcal{V}_{\mathrm m}$ and $\Lambda \colon S \to \{\mathsf{F}, \mathsf{T}\}$. The aim of the coupling process is bounding the sum $\sum_{v \in \mathcal{V}_{\mathrm m} \setminus (S \cup \{u\})}\lvert \mathcal{I}^{\Lambda}(u \to v) \rvert$ in terms of the expected size of a connected set of \emph{failed} clauses, where the expectation is over the choices made in the coupling process. We refer to  Section~\ref{sec:mixing-time} for a definition of failed clauses, as it is not relevant in this discussion. Here we give a brief overview of how the coupling process on the auxiliary variables works. First, we start with two assignments $X = \Lambda \cup (u \mapsto \mathsf{T})$ and $Y = \Lambda \cup (u \mapsto \mathsf{F})$, where $\Lambda \cup (u \mapsto \omega)$ denotes the assignment defined on $S \cup \{u\}$ that agrees with $\Lambda$ on $S$ and sends $u$ to $\omega$. The process progressively extends $X$ and $Y$ on some auxiliary variables $v_1, v_2, \ldots$ following the optimal coupling between the marginals $\pr_{\mu_{\Omega}}(v \mapsto \cdot \vert X)$ and $\pr_{\mu_{\Omega}}(v \mapsto \cdot \vert Y)$, see Section~\ref{sec:mixing-time} for the definition of optimal coupling. The main property of this process is that with high probability over the choices made, at some point the graphs $G_{\Phi^X}$ and $G_{\Phi^Y}$ factorise in small  connected components in spite of the presence of bad variables and, on top of that, $\Phi^X$ and $\Phi^Y$ share most of these connected components. Then we can bound influences between marked variables by analysing the connected components where $\Phi^X$ and $\Phi^Y$ differ, which turn out to be $\operatorname{poly}(k) \log n$ in size after enough steps of the process.

One of the key ideas behind our analysis is exploiting the fact that, in the random $k$-SAT model, w.h.p. over the choice of the random formula $\Phi$, any logarithmic-sized set of clauses $Z$ that is connected in $G_\Phi$ has constant tree-excess, that is, the number of edges connecting a pair of clauses in $Z$ is $\lvert Z \rvert + O(1)$. This saves a factor of~$\Delta_{r_0-\delta}$ in the spectral independence bound by ensuring that there is a large independent set of clauses in the set of failed clauses. We also obtain improved analysis by restricting the coupling process to auxiliary variables. This enables us to get exponentially small bounds (in~$k$) on the influences between marked variables, which leads to our $(2^{-(r_0-\delta) k}\log n)$-spectral independence result.

\subsection{Analysis of the connected components of $\Phi^\Lambda$. Applications to connectivity and looseness} \label{sec:po:structure}

In this section we deal with the third challenge mentioned at the beginning of Section~\ref{sec:po}: can we determine the transition probabilities of the Glauber dynamics so that we can actually simulate this Markov chain? In fact, simulating the single-site Glauber dynamics on the marked variables was one of the main challenges even in the bounded-degree case. In that case this was resolved using a method that is restricted to the bounded-degree setting (and whose bottleneck is the analysis of a rejection sampling procedure). A different procedure is required for the random $k$-SAT setting.

One of the key ideas to simulate this chain is starting the chain on an assignment $X_0 \colon \mathcal{V}_{\mathrm m} \to \{\mathsf{F}, \mathsf{T}\}$ drawn from the uniform distribution over all assignments of $\mathcal{V}_{\mathrm m}$. Since the distribution $\restr{\mu_\Omega}{\mathcal{V}_{\mathrm m}}$ is $(1/k)$-uniform (Proposition~\ref{prop:marginals}), the transition probabilities of the Glauber dynamics are close to uniform. This allows us to show that the probability distribution of the assignment $X_t$ that is output by the uniform-block Glauber dynamics after $t$ steps is also $(1/k)$-uniform (Corollary~\ref{cor:marginals:glauber}), 
which will be important in what follows.

In order to run the $\rho$-uniform-block Glauber dynamics we need to be able to sample from the distribution $\mu_{\Omega^{\Lambda}}$ for any set $S \subseteq \mathcal{V}_{\mathrm m}$ with $\lvert S \rvert = \rho$ and any assignment $\Lambda \colon \mathcal{V}_{\mathrm m} \setminus S \to \{\mathsf{F}, \mathsf{T} \}$ that arises. 
Unless we can restrict $\Lambda$, sampling from $\mu_{\Omega^{\Lambda}}$ could potentially be as hard as sampling from $\mu_{\Omega}$. Fortunately for us, the assignment $\Lambda$ is not completely arbitrary; $\Lambda$ is determined by the random choice of $S$ and the current state of the Glauber dynamics (which follows a $(1/k)$-uniform distribution as discussed above). We show that we can efficiently sample from $\mu_{\Omega^{\Lambda}}$ w.h.p.{}  over the choice of $\Lambda$. An important observation is that we can efficiently sample from $\mu_{\Omega^{\Lambda}}$ when the connected components of $G_{\Phi^\Lambda}$ are logarithmic in size, for example, by applying brute force. This raises the following question: does $G_{\Phi^\Lambda}$  break into small connected components  w.h.p.{} over the choice of $\Lambda$? Lemma~\ref{lem:cc-general} gives a positive answer when $0 \le \rho \le \lvert V\rvert / 2^k$. Here the reader can see $V$ as the set of marked variables. The proof of Lemma~\ref{lem:cc-general} exploits sparsity properties of logarithmic-sized connected sets of clauses in random formulae in conjunction with the fact that $\mu$ is $(1/k)$-uniform.  Lemma~\ref{lem:cc-general} is stated with an added layer of generality, as we will also apply it to analyse the geometry of the space of satisfying assignments of $\Phi$ with $r = r_1 - \delta$. In our sampling algorithm setting we consider $r = r_0 - \delta$. Recall that $r_0 = \rvalue$, $r_1 = 0.227092$ and $\delta = \deltadef$. The restriction $r \in (2\delta,1/(2 \log 2)]$ in the statement of Lemma~\ref{lem:cc-general} is not optimal, but it is enough for our purposes. 

\newcommand{\statelemccgeneral}{ Let $r \in (2\delta,1/(2 \log 2)]$. There is an integer $k_0 \ge 3$ such that, for any integer $k \ge k_0$, any density $\density \le 2^{(r-2\delta) k}$, and any real number $b$ with $a:= 2k^4 < b$, the following holds w.h.p. over the choice of $\Phi = \Phi(k, n, \lfloor \density n\rfloor)$. 

Let $L$ be an integer satisfying $a \log n \le L \le b \log n$. Let $V$ be a set of good variables of $\Phi$ that is $(r+\delta)$-distributed (\cref{def:distributed-marking}), let $\mu$ be a $(1/k)$-uniform distribution over the assignments $V \to \{\mathsf{F}, \mathsf{T}\}$, and let $\rho$ be an integer with $0 \le \rho \le \lvert V\rvert / 2^k$.  Consider the following  experiment. First, draw $S \subseteq V$ from the uniform distribution $\tau$ over subsets of $V$ with size $\rho$. Then,  sample an assignment $\Lambda$ from 
$\restr{\mu}{V\setminus S}$. 
Denote by $\mathcal{F}$ the event that that there is a connected set of clauses $Y$ of $\Phi$ with $\lvert Y \rvert \ge  L$  such that all clauses in $Y$ are unsatisfied by $\Lambda$. Then  $
    \pr_{S \sim \tau} \left( \pr_{\Lambda \sim \restr{\mu}{V \setminus S}} \left( \mathcal{F} \right) \le 2^{- \delta k L}  \right) \ge 1 - 2^{-\delta k L}$.
 }  
\begin{lemma} \label{lem:cc-general}
 \statelemccgeneral
\end{lemma}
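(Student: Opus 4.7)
The plan is to reduce to the joint bound $\pr_{S,\Lambda}(\mathcal{F}) \le 2^{-2\delta k L}$, which implies the lemma by Markov's inequality (since $\pr_S(\pr_\Lambda(\mathcal{F}) > 2^{-\delta k L}) \le \pr_{S,\Lambda}(\mathcal{F}) \cdot 2^{\delta k L}$). I would establish the joint bound by a union bound over connected clause-sets $Y$ of size $\ell \ge L$ in $G_\Phi$.

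For a fixed such $Y$, let $W_Y := \var(Y) \cap V$; by Proposition~\ref{prop:bad} only good clauses in $Y$ contribute to $W_Y$, since bad clauses have no variables in $\gv \supseteq V$. The event ``all $c \in Y$ are unsatisfied by $\Lambda$'' forces, for every $v \in W_Y \setminus S$, that $\Lambda(v)$ falsifies every occurrence of $v$ in $Y$ (and is outright impossible if some $v \in W_Y \setminus S$ appears with mixed signs). The marginal $\restr{\mu}{V \setminus S}$ inherits the $(1/k)$-uniformity of $\mu$, so by the chain rule $\pr_\Lambda(\text{event for } Y) \le (e^{1/k}/2)^{|W_Y \setminus S|}$. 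Averaging over $S$ using the standard hypergeometric MGF bound $\mathbb{E}_S[t^{|W_Y \cap S|}] \le (1 + (\rho/|V|)(t-1))^{|W_Y|}$ with $t = 2/e^{1/k}$ and $\rho/|V| \le 2^{-k}$ gives $\pr_{S,\Lambda}(\text{event for }Y) \le 2^{-(1-o_k(1))|W_Y|}$. To lower-bound $|W_Y|$, I combine the $r_0$-distributedness of $V$ in $\gv$ (yielding $|\var(c) \cap V| \ge r_0(k-3)$ for each $c \in \gc$) with two sparsity properties that hold w.h.p.\ over $\Phi$: (i) every connected clause-set $Y$ of size $\ell \in [a \log n, b \log n]$ contains at most $\ell/k^2$ bad clauses, and (ii) the shared-variable excess $k\ell - |\var(Y)|$ is at most $2\ell$; both of these follow from first-moment calculations on the random formula. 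Putting these together yields $|W_Y| \ge (r_0 - O(1/k))k\ell$, and therefore $\pr_{S,\Lambda}(\text{event for }Y) \le 2^{-(r_0 - o_k(1))k\ell}$.

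For the count $N_\ell$ of connected clause-sets of size $\ell$ in $G_\Phi$, I would establish, again w.h.p.\ over $\Phi$ via a first-moment argument, the bound $N_\ell \le n \cdot (Ck^2 \alpha)^\ell$ for an absolute constant $C$, exploiting that the typical clause-degree in $G_\Phi$ is $k^2 \alpha$ rather than the worst-case $k \Delta$. Using $\alpha \le 2^{rk}$ and the precise constant choice $r_0 - r > 4\delta$ from Definition~\ref{def:degree}, the combined exponent (base~$2$) is at most $\log_2 n + O(\ell \log k) - (r_0 - r - o_k(1))k\ell$, which, using $\log_2 n \le L/a \le \ell/(2k^4)$, is at most $-3\delta k\ell$ for $k$ sufficiently large and all $\ell \ge L$. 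Summing this geometric series over $\ell \ge L$ yields $\pr_{S,\Lambda}(\mathcal{F}) \le 2\cdot 2^{-3\delta k L} \le 2^{-2\delta k L}$, completing the proof. The main obstacle is the count $N_\ell$: the worst-case bound via $\Delta = 2^{(r_0 - 2\delta)k}$ leaves only a $2\delta k\ell$ gap in the combined exponent, which is insufficient to absorb the geometric sum; exploiting the larger gap $r_0 - r > 4\delta$ requires the first-moment argument specific to random $k$-SAT, together with a careful treatment of bad clauses and high-degree variables that may appear in connected clause-sets.
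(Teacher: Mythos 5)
Your argument is correct and takes a genuinely different route from the paper's in how the randomness of $S$ is treated. The paper fixes a candidate connected set $Z$ of size exactly $L$, introduces the event $\mathcal{E}_2(Z)$ that $S$ hits fewer than $\lceil|\var(Z)\cap V|/k\rceil$ of the variables in $\var(Z)\cap V$, shows that the $\Lambda$-probability of leaving all of $Z$ unsatisfied is small whenever $\mathcal{E}_2(Z)$ holds, and then union-bounds $\pr_S\left(\overline{\mathcal{E}_2(Z)}\right)$ over $Z$ --- an explicit good-$S$/bad-$S$ dichotomy. You instead reduce, via Markov's inequality on $S\mapsto\pr_{\Lambda}(\mathcal{F}\mid S)$, to the joint bound $\pr_{S,\Lambda}(\mathcal{F})\le 2^{-2\delta kL}$, and then average the conditional bound $(e^{1/k}/2)^{|\var(Y)\cap V\setminus S|}$ directly over $S$ using the fact that the hypergeometric moment-generating function is dominated by the corresponding binomial one. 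This smooth averaging is a shade tidier than the paper's combinatorial two-case calculation; both versions end up spending the same $r_0-r>4\delta$ gap in the exponent and rely on the same random-formula infrastructure (the bad-clause, tree-excess, and pairwise-overlap bounds that underlie Lemma~\ref{lem:bound-marked}, together with the clause-connectivity count of Lemma~\ref{lem:bound-Z}). Two refinements to your sketch: because a connected witness of size at least $L$ always contains a connected witness of size exactly $L$, it suffices to union over sets of size exactly $L$ rather than over all $\ell\ge L$ --- which is just as well, since the lower bound on $|\var(Y)\cap V|$ from Lemma~\ref{lem:bound-marked} is established only for $|Y|\le b\log n$ and the tail of your geometric series would otherwise need separate justification; and the shared-variable excess bound $k\ell - |\var(Y)|=O(\ell)$ is most cleanly extracted, as the paper does, from the $O(1)$ tree-excess of $G_\Phi[Y]$ (Lemma~\ref{lem:tree-excess}) together with $|\var(c)\cap\var(c')|\le 2$ for distinct $c,c'$ (Lemma~\ref{lem:linearity}), rather than as a stand-alone first-moment estimate on $|\var(Y)|$.
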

\begin{proof}[Proof sketch]
  The proof is in Section~\ref{sec:errors}. For the sake of exposition, we first sketch the proof in the case $\rho = 0$, where the conclusion in the statement reads $\pr_{\Lambda \sim \restr{\mu}{V}} \left( \mathcal{F} \right) \le 2^{- \delta k L}$. At the end of this proof sketch we explain how we extend the proof to any $\rho$ with $0 \le \rho \le \lvert V \rvert / 2^k$. 
  
  The first step   is exploiting   local sparsity properties of random $k$-CNF formulae to find  many variables from~$V$ in any sufficiently large connected set of clauses. Our sparsity results hold for  connected sets of clauses with size at least $2k^4 \log n$, and let us conclude the following result (stated as Lemma~\ref{lem:bound-marked} in Section~\ref{sec:errors}): w.h.p. over the choice of $\Phi$, for every connected set of clauses $Z \subseteq \mathcal{C}$ we have
 \begin{equation} \label{eq:num-marked:intro}
     \text{if } \ 2k^4 \log(n) \le \lvert Z \rvert \le b \log(n), \ \text{ then } \  \lvert \var(Z) \cap V \rvert \ge r k \lvert Z \rvert.
 \end{equation}
  The proof of Lemma~\ref{lem:bound-marked} counts the variables from~$V$ in~$Z$ by using the fact that $Z$ does not contain many bad clauses (Lemma~\ref{lem:bad}, which gives the restriction on $r$) and the fact that there are not many edges joining clauses in $Z$. In fact, for such a set $Z$, we show that the number of edges is of order $\lvert Z \rvert + O(1)$, that is, $Z$ has constant tree-excess (Lemma~\ref{lem:tree-excess}). We also need the following result on random $k$-CNF formulae. For each clause $c \in \mathcal{C}$, let $\mathcal{Z}(c, L) = \{Z \subseteq \mathcal{C} : c \in Z, Z \text{ is connected in } G_\Phi, \lvert Z \rvert = L\}$.
 Then, w.h.p. over the choice of $\Phi$,~\cite[Lemma 40]{galanis2019counting} shows that, as long as $L \ge \log n$,
 \begin{equation} \label{eq:number-Y:intro}
     \text{for any clause }c\in\mathcal{C} \text{ we have } \left| \mathcal{Z}(c, L)\right| \le (9 k^2 \alpha)^L.
 \end{equation}
 Once we have established \eqref{eq:num-marked:intro} and \eqref{eq:number-Y:intro}, the proof exploits the fact that $\mu$ is close to the uniform distribution. First, we introduce some notation. Let $L$ be an integer with $a \log n \le L \le b \log n$. Let $S = \emptyset$ as we are dealing with the case $\rho = 0$.  For $c \in \mathcal{C}$ and $Z \in \mathcal{Z}(c, L)$, we denote by $\mathcal{E}_1(Z, S)$ the event that none of the clauses of $Z$ are satisfied by assignment $\Lambda$ (Definition~\ref{def:subformula}), where $\Lambda$ is drawn from $\restr{\mu}{V \setminus S}$, see Definition~\ref{def:marginal}.  We keep track of $S$ in the notation here as this is relevant in the general case. The first observation   is that the event $\mathcal{F}$ from the statement satisfies $\mathcal{F} = \bigcup_{c \in \mathcal{C}, Z \in \mathcal{Z}(c, L)} \mathcal{E}_1(Z, S)$. We then claim that for any $c \in \mathcal{C}$ and $Z \in \mathcal{Z}(c, L)$ we have
  \begin{equation} \label{eq:E1:intro}
     \pr_{\Lambda \sim \restr{\mu}{V \setminus S}} \left( \mathcal{E}_1(Z, S) \right) 
      \le  \frac{2^{-\delta k L}}{\lvert \mathcal{C} \rvert \cdot \lvert \mathcal{Z}(c, L) \rvert },
  \end{equation}
  so the result would follow from a union bound over $c$ and $Z$. Let us give some insight on how we prove \eqref{eq:E1:intro}. Let $c \in \mathcal{C}$ and $Z \in \mathcal{Z}(c, L)$. The main idea  is   that, if all clauses in $Z$ are unsatisfied by $\Lambda$ then, when we sampled $\Lambda \sim \restr{\mu}{V \setminus S}$, for each variable $v$ in $\var(Z)\cap(V\setminus S)$ we picked the value that does not satisfy the clauses of $Z$ containing $v$. Thus, we can bound the probability that all clauses in $Z$ are unsatisfied as a product, over the variables in $\var(Z)\cap(V\setminus S)$, of probabilities, each factor corresponding to the probability that a variable is assigned a certain value (under some careful conditioning, see the proof in Section~\ref{sec:errors} for details). Since the distribution $\mu$ is $(1/k)$-uniform, each one of these factors can be bounded by $\exp(1/k)/2$, obtaining
  \begin{equation} \label{eq:E1:step}
     \pr_{\Lambda \sim \restr{\mu}{V \setminus S}} \left(  \mathcal{E}_1(Z, S)\right)  
      \le  \left( \frac{1}{2} \exp\left(\frac{1}{k}\right) \right)^{\lvert \var(Z) \cap (V \setminus S) \rvert}.
  \end{equation}
  In \eqref{eq:num-marked:intro} we gave a lower bound on $\lvert \var(Z) \cap V \rvert$, which can be applied in conjunction with \eqref{eq:number-Y:intro} to conclude, after some calculations, that the bound given in \eqref{eq:E1:intro} holds. 
  
  The case $\rho > 0$ is more technical and one has to be more careful in these calculations. We show that \eqref{eq:E1:intro} holds when $S$ does not contain many variables in $\var(Z) \cap V$. A slightly different argument is needed when going from \eqref{eq:E1:step} to  \eqref{eq:E1:intro}; here we have to bound $\lvert \var(Z) \cap (V\setminus S) \rvert$ instead of $\lvert \var(Z) \cap V \rvert$. It turns out that, as long as the bound $\lvert \var(Z) \cap V \cap S \rvert \le \lvert \var(Z) \cap V \rvert / k$ holds, the calculations to go from \eqref{eq:E1:step} to \eqref{eq:E1:intro} also hold in this setting. Finally, we show that the probability that $\lvert \var(Z) \cap V \cap S \rvert \le \lvert \var(Z) \cap V \rvert / k$ occurs when picking $S$ is at least  $1-2^{\delta k L}$. The proof of this fact is purely combinatorial, and requires the hypothesis $\rho \le \lvert V \rvert / 2^ k$, see Section~\ref{sec:errors} for details.
\end{proof}

Once we have established 
Lemma~\ref{lem:cc-general}, we can use it to implement the $\rho$-uniform-block Glauber dynamics on the marked variables for $0 < \rho \le \lvert \mathcal{V}_{\mathrm m} \rvert$ and complete our sampling algorithm, which we explicitly state in Section~\ref{sec:po:alg}.

Before concluding this section, we mention how we apply Lemma~\ref{lem:cc-general} to analyse the geometry of the space of satisfying assignments of $\Phi$ in order to conclude the  $O(\log n)$-connectivity and $O(\log n)$-looseness results given in Theorems~\ref{thm:connectivity} and~\ref{thm:looseness}. First, we need the following definition.

\begin{definition}[$H_\Phi$] \label{def:graph-H}
    Let $\Phi = (\mathcal{V}, \mathcal{C})$ be a $k$-CNF formula. We define the graph $H_\Phi$ as follows. The vertex set of $H_\Phi$ is $\mathcal{V}$ and two variables $v_1$ and $v_2$ are adjacent in $H_\Phi$ if there is a clause $c \in \mathcal{C}$ with $v_1, v_2 \in \var(c)$.
\end{definition}

We apply Lemma~\ref{lem:cc-general} with  $r = r_1-\delta$ and a density $\alpha \le 2^{(r_1 - 3 \delta)k} / k^3$. For an $(r, r_1, 0, r_1)$-marking $(\mathcal{V}_{\mathrm m}, \emptyset, \mathcal{V}_{\mathrm c})$ of $\Phi$, we let $V = \mathcal{V}_{\mathrm m}$ and $\mu = \restr{\mu_\Omega}{\mathcal{V}_{\mathrm m}}$. In this setting, for $\rho = 0$, Lemma~\ref{lem:cc-general} allows us to conclude that, w.h.p. over the choice of $\Lambda \sim \restr{\mu_\Omega}{\mathcal{V}_{\mathrm m}}$, the graph $G_{\Phi^{\Lambda}}$ consists of connected components with size at most $O( \log n)$. Thus, the connected components of $H_{\Phi^{\Lambda}}$ have size at most $O( \log n)$ as each clause contains at most $k$ variables. This leads to the main idea behind the proof of Theorem~\ref{thm:connectivity}: we can construct $O(\log n)$-paths between satisfying assignments by progressively updating the variables in each one of the connected components of $H_{\Phi^{\Lambda}}$. As an example, let $\mathcal{E}_1, \mathcal{E}_2, \ldots, \mathcal{E}_t$ be these connected components and let $\sigma_1$ and $\sigma_2$ be two satisfying assignments that agree with $\Lambda$ on $\mathcal{V}_{\mathrm m}$. Then we can find an $O(\log n)$-path $\sigma_1 = \zeta_0 \leftrightarrow \zeta_1 \leftrightarrow \cdots \leftrightarrow \zeta_{t} = \sigma_2$ as follows: the assignment $\zeta_j$ is the satisfying assignment that agrees with $\Lambda$, agrees with $\sigma_1$ on the variables in $\mathcal{V} \setminus \left( \bigcup_{i = 1}^j \mathcal{E}_{j} \right)$ and agrees with $\sigma_2$ on the variables in $\bigcup_{i = 1}^j \mathcal{E}_{j}$. The case when $\sigma_1$ and $\sigma_2$ differ on some marked variables builds on the same idea though it is more technical and requires applying Lemma~\ref{lem:cc-general} with $\rho=1$. We refer to Section~\ref{sec:connectivity:proof} for this argument and the proof of Theorem~\ref{thm:connectivity}.

The fact that the connected components of $H_{\Phi^{\Lambda}}$ are $O(\log n)$ in size with high probability over $\Lambda \sim \restr{\mu_\Omega}{\mathcal{V}_{\mathrm m}}$ is also related to the looseness of the formula $\Phi$. Let $v \in \mathcal{V} \setminus \mathcal{V}_{\mathrm m}$. For any satisfying assignment $\sigma$ that agrees with $\Lambda$ on the marked variables, we can construct a satisfying assignment $\tau$ with $\tau(v) \ne \sigma(v)$ and $\| \sigma - \tau \|_1  = O(\log n)$ by updating the variables in the connected component of $v$ in $H_{\Phi^\Lambda}$, provided that there is a way to satisfy this connected component when giving $v$ the value $\tau(v)$. In Section~\ref{sec:looseness} we formalise this idea and give all the details of this argument to prove Theorem~\ref{thm:looseness}.

\subsection{The sampling algorithm} \label{sec:po:alg}

To complete this proof outline, we explicitly describe 
Algorithm~\ref{alg:main}, 
our algorithm for sampling satisfying assignments of $k$-CNF formulae. The algorithm uses a method  $\operatorname{Sample}(\Phi^\Lambda, S)$ to sample an assignment $\tau \colon S \to \{\mathsf{F}, \mathsf{T}\}$ from the distribution $\restr{\mu_{\Omega^\Lambda}}{S}$. This method exploits the fact that logarithmic-sized connected set of clauses have constant tree-excess, which does not hold in the bounded-degree case. This tree-like property enables us to efficiently sample satisfying assignments on the connected components of $\Phi^\Lambda$ by a standard dynamic programming argument, see Section~\ref{sec:sample}. Lemma~\ref{lem:sample} is our main result on  $\operatorname{Sample}(\Phi^\Lambda, S)$.
\newcommand{\statelemsampleold}{
There is an integer $k_0 \ge 3$ such that, for any integers $k \ge k_0$
and $\xi \ge 1$ and any density $\density \le \densitymaindef$, the following holds w.h.p. over the choice of $\Phi = \Phi(k, n, \lfloor \density n\rfloor)$. In the execution of Algorithm~\ref{alg:main} with inputs $\Phi$ and $\varepsilon$, for any subset of variables $\mathcal{V}'$, any assignment $\Lambda \colon \mathcal{V}' \to \{\mathsf{F}, \mathsf{T}\}$, any connected component $H$ of $G_{\Phi^{\Lambda}}$ that has size at most $\sizecc$ and any variable $v$ appearing in $H$, it is possible to sample from the marginal distribution of $v$ induced by  $\mu_{\Omega^{\Lambda}}$  in time $O( \log n )$.}
\newcommand{\statelemsample}{
There is an integer $k_0 \ge 3$ such that, for any integers $k \ge k_0$, $b \ge 2k^4$ and any density $\density > 0$, the following holds w.h.p. over the choice of $\Phi = \Phi(k, n, \lfloor \density n\rfloor)$. Let $V$ be a subset of variables and let $\Lambda \colon V \to \{\mathsf{F}, \mathsf{T}\}$ be a partial assignment such that all the connected components in $G_{\Phi^{\Lambda}}$ have size at most $b \log(n)$. Then, there is an algorithm that, for any $S \subseteq \mathcal{V} \setminus V$, samples an assignment from $\restr{\mu_{\Omega^{\Lambda}}}{S}$  in time $O( \lvert S \rvert \log n )$.}
\begin{lemma} \label{lem:sample}
  \statelemsample
\end{lemma}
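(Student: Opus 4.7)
The plan is to exploit the factorisation of $\mu_{\Omega^{\Lambda}}$ across the connected components of $G_{\Phi^{\Lambda}}$, together with the fact that w.h.p.\ over the choice of $\Phi$, any connected set of clauses of size $O(\log n)$ has constant tree-excess (Lemma~\ref{lem:tree-excess}). Since clauses in distinct components of $G_{\Phi^{\Lambda}}$ share no variables, $\mu_{\Omega^{\Lambda}}$ decomposes as a product of independent uniform distributions, one per component; hence sampling $\restr{\mu_{\Omega^{\Lambda}}}{S}$ will reduce to sampling independently within each component that meets $S$.

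The algorithm iterates over $v\in S$: for each unprocessed $v$, it runs a BFS in $G_{\Phi^{\Lambda}}$ starting from the clauses containing $v$, which takes $O(|H_v|)=O(\log n)$ time by the hypothesis on component sizes, marks $H_v$ as processed, then calls a per-component sampler to draw a uniformly random satisfying assignment of the sub-formula on $H_v$ and reads off the values on $S\cap\var(H_v)$. Since at most $|S|$ distinct components will ever be processed, a total runtime of $O(|S|\log n)$ will follow once each per-component sampler is shown to run in $O(|H_v|)$ time. The per-component sampler itself proceeds as follows: by the tree-excess bound, $G_{\Phi^{\Lambda}}[H]$ has at most $|H|-1+t$ edges for some $t=t(k)=O(1)$, so in $O(|H|)$ time one can find a set $R$ of at most $t$ clauses whose removal leaves a forest; then brute-force enumerate the $2^{k|R|}=2^{O(k)}$ assignments of the variables appearing in $R$. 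For each such enumeration, the simplified sub-formula has a forest clause graph in which each remaining variable appears in at most two clauses, so a standard bottom-up tree DP (at each clause propagating messages indexed by the $\le k$ variables shared with its parent, with $2^{O(k)}$ work per clause) yields the partition function and all conditional marginals exactly in $O(|H|\cdot 2^{O(k)})$ time. Summing over the $2^{O(k)}$ branches recovers the exact marginals of $\mu_{\Omega^{\Lambda}}$ on $\var(H)$, from which a sample is drawn in the usual top-down fashion.

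The main obstacle is establishing the $O(|H|)$ per-component bound: without the tree-excess hypothesis, a component of size $O(\log n)$ could contain $\Omega(\log n)$ cycle-closing variables and naive enumeration would blow up to $n^{\Omega(1)}$; once tree-excess is bounded by a constant depending only on $k$, all $k$- and $\xi$-dependent factors can be absorbed into the implicit constant, delivering the $O(|S|\log n)$ bound.
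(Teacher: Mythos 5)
Your proposal is correct and takes essentially the same approach as the paper: both rely on Lemma~\ref{lem:tree-excess} to get a $k$-dependent constant tree-excess, reduce to exact counting/sampling via enumeration of a constant number of cycle-breaking variable assignments followed by tree DP (the paper's Propositions~\ref{prop:sample-trees} and~\ref{prop:counting}), and exploit the component factorisation of $\mu_{\Omega^{\Lambda}}$. The only minor organizational differences are that the paper samples one variable of $S$ at a time (conditioning successively) while you process whole components, and the paper breaks cycles by removing non-tree \emph{edges} of a spanning forest (enumerating the at most $kc$ variables giving rise to them) rather than removing $O(1)$ \emph{clauses}; both yield the same $O(|S|\log n)$ bound.
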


The method $\operatorname{Sample}(\Phi^\Lambda, S)$  
is used in Algorithm~\ref{alg:main} 
to implement each step of the $\rho$-uniform-block Glauber dynamics on the marked variables. It is also used to extend the assignment of marked variables computed by the Glauber dynamics to a satisfying assignment of $\Phi$. As a design choice, this method returns \textit{error} when the connected components of  $G_{\Phi^{\Lambda}}$ have size larger than $2k^4(1+\xi) \log(n)$. We remark that the probability that $\operatorname{Sample}(\Phi^\Lambda, S)$  returns \textit{error} is very small when running the Glauber dynamics thanks to Lemma~\ref{lem:cc-general}. We can now introduce Algorithm~\ref{alg:main}, which has two parameters $\theta \in (0,1)$ and $\xi \ge 1$ as in Theorem~\ref{thm:sampling}.

\begin{algorithm}[H]
  \begin{algorithmic}[1]

    \caption{The approximate sampling algorithm for satisfying assignments of random $k$-CNF formulae.} \label{alg:main}

    \REQUIRE A $k$-CNF formula $\Phi = (\mathcal{V},\mathcal{C})$ with $n$ variables 

    \STATE Compute the sets of bad/good variables and bad/good clauses for $\Phi$ as in Proposition~\ref{prop:bad}.

    \STATE Let $\varepsilon = n^{-\xi}$. Compute a marking $(\mathcal{V}_{\mathrm m}, \mathcal{V}_{\mathrm a}, \mathcal{V}_{\mathrm c})$ for $\Phi$ as in Lemma~\ref{lem:marking} with $p = \varepsilon/4$. This succeeds with probability at least $1 - \varepsilon/4$.  If this does not succeed, the algorithm returns \textit{error}.

    \STATE For each $v \in \mathcal{V}_{\mathrm m}$, sample $X_0(v) \in \{\mathsf{F}, \mathsf{T}\}$ uniformly at random.
    
    \FOR{$t$ from $1$ to $T := \mtdef$}

    \STATE Choose uniformly at random a set of marked variables $S\subseteq \mathcal{V}_{\mathrm m}$ with size $\rho := \lceil 2^{-k-1} \lvert \mathcal{V}_{\mathrm m} \rvert \rceil$.

    \STATE Let $\Lambda_t$ be the assignment $X_{t-1}$ restricted to $\mathcal{V}_{\mathrm m} \setminus S$.

    \STATE $Y \leftarrow \operatorname{Sample}(\Phi^{\Lambda_t}, S)$. \label{line:sample1}

    \STATE $X_t \leftarrow \Lambda_t \cup Y$.
    
    \ENDFOR

    \STATE $Y \leftarrow \operatorname{Sample}(\Phi^{X_T}, \mathcal{V}_{\mathrm a} \cup \mathcal{V}_{\mathrm c})$. \label{line:sample2}
    
    \RETURN $X_T \cup Y$.
  \end{algorithmic}
\end{algorithm}

We remark here that Algorithm~\ref{alg:main} only works for large enough $k$, and this hypothesis will be used several times in our arguments. The quantity $T$ defined in this algorithm corresponds to the mixing time of the $\rho$-uniform-block Glauber dynamics given in Lemma~\ref{lem:mixing-time}. 

\section{Paper outline}

The rest of this work is organised as follows. In Section~\ref{sec:bad} we introduce the procedure for determining bad clauses. In Section~\ref{sec:marking} we prove Proposition~\ref{prop:marginals} on markings of random formulae. In Section~\ref{sec:errors} we prove our technical result on the connected components of $\Phi^\Lambda$, Lemma~\ref{lem:cc-general}. In Section~\ref{sec:sample} we give the method $\operatorname{Sample}$ and prove Lemma~\ref{lem:sample}. In Section~\ref{sec:mixing-time} we prove the results on spectral independence stated in Section~\ref{sec:po:si} of the proof outline. In Section~\ref{sec:alg} we complete the proof of Theorem~\ref{thm:sampling} by combining our mixing time results (Lemma~\ref{lem:mixing-time}), our algorithm to sample from small connected components (Lemma~\ref{lem:sample}) and our result on the size of the connected components of $\Phi^{\Lambda}$ (Lemma~\ref{lem:cc-general}). Finally, in Section~\ref{sec:connectivity} we prove Theorems~\ref{thm:connectivity} and~\ref{thm:looseness} on the geometry of the space of satisfying assignments of $\Phi$.

To help keep track of the notation and definitions introduced in this work, the reader is referred to the tables in Appendix~\ref{ap:notation}.

\section{High-degree and bad variables in random CNF formulae} \label{sec:bad}

As we noted in the introduction, one of the keys to sampling satisfying assignments in the unbounded-degree setting is to   ``sacrifice''  a few variables per clause  (treating them separately in the sampling algorithm) and  to (temporarily) remove a small linear number of clauses that contain  these. The point of this is to ensure that the remaining (``good'') clauses have mostly low-degree variables (at most two bad ones) and also that the rest of the clauses (the ``bad'' ones) form small connected components that interact with the good clauses in a manageable way. 

Recall that, for $r \in (0,1)$, high-degree variables were introduced in Definition~\ref{def:degree} as those variables with at least $\Delta_r := \lceil 2^{k r} \rceil$ occurrences in the formula. In this work we consider two possible values for $r$ here, $r= r_0 - \delta$ and $r= r_1 - \delta$, where $r_0 = \rvalue$, $r_1 = \rone$ and $\delta = \deltadef$. The values $r_0$ and $r_1$ arise as solutions of an optimisation problem in Section~\ref{sec:marking} when we establish the markings that we use in our proofs. The marking used in our algorithmic results requires the more restrictive definition of high-degree variable with $r = r_0-\delta$ than the marking used in our connectivity results with $r = r_1 - \delta$. Subtracting $\delta$ will make our calculations easier without affecting our results.

By standard arguments about random graphs, one can determine that, w.h.p. over the choice of $\Phi$, the number of high-degree variables of $\Phi$ is bounded. We want to identify the clauses of $\Phi$ that have at most $2$ high-degree variables, since clauses with a lot of high-degree variables will interfere with our sampling algorithms. This motivates the following construction. The \emph{bad variables} and \emph{bad clauses} of $\Phi$ are identified by running the process given in Algorithm~\ref{alg:bad}. Here $\bv(r)$ denotes the set of bad variables and $\bc(r)$ denotes the set of bad clauses.

\begin{algorithm}[H]
  \begin{algorithmic}[1]

    \caption{Computing bad variables and bad clauses for $r \in (0,1)$} \label{alg:bad}

    \REQUIRE A $k$-CNF formula $\Phi = (\mathcal{V},\mathcal{C})$

    \STATE $\mathcal{V}_0(r) \leftarrow \text{ the set of high-degree variables, i.e., variables  with at least } \Delta_r = \lceil 2^{r k} \rceil \text{ occurrences in } \Phi$.

    \STATE $\mathcal{C}_0(r) \leftarrow \text{ the set of clauses with at least } 3 \text{ variables in } \mathcal{V}_0(r)$

    \STATE $i \leftarrow 0$
    
    \WHILE{$i = 0$ or $\mathcal{V}_i(r) \ne \mathcal{V}_{i-1}(r)$}

    \STATE $i \leftarrow i + 1$

    \STATE $\mathcal{V}_i(r) \leftarrow \mathcal{V}_{i-1}(r) \cup \var(\mathcal{C}_{i-1}(r))$
    
    \STATE $\mathcal{C}_i(r) \leftarrow \{c \in \mathcal{C} : \lvert \var(c) \cap \mathcal{V}_i(r) \rvert \ge 3 \}$

    \ENDWHILE
    
    \STATE $\bc(r) \leftarrow \mathcal{C}_i(r)$ and $\bv \leftarrow \mathcal{V}_i(r)$
    
    \RETURN $\bv(r), \bc(r)$
  \end{algorithmic}
\end{algorithm}

We define the \emph{good clauses} of $\Phi$ as $\gc(r) = \mathcal{C} \setminus \bc(r)$ and the \emph{good variables} of $\Phi$ as $\gv(r) = \mathcal{V} \setminus \bv(r)$. The sets $\gv(r), \bv(r), \gc(r), \bc(r)$ depend on the parameter $r \in (0,1)$. The value of $r$ here will be $r_0-\delta$ except in Section~\ref{sec:connectivity} where we prove our connectivity results for $r = r_1-\delta$, and in some of the marking results in Section~\ref{sec:marking}. We will use the observations given in Proposition~\ref{prop:bad} several times in this work.

\begin{propbad}
\statepropbad
\end{propbad}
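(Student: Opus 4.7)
The plan has two parts: verifying the three structural claims (which essentially follow by inspection of Algorithm~\ref{alg:bad}), and designing an implementation that achieves the $O(n+mk)$ time bound.

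For the structural claims, the first thing to observe is that the sequence $(\mathcal{V}_i)_i$ is monotone non-decreasing and bounded by the finite set $\mathcal{V}$, so the while loop terminates after at most $n$ iterations with $\mathcal{V}_i=\mathcal{V}_{i-1}$. From this condition together with the update rule $\mathcal{V}_i=\mathcal{V}_{i-1}\cup\var(\mathcal{C}_{i-1})$ one reads off $\var(\mathcal{C}_{i-1})\subseteq \mathcal{V}_{i-1}=\bv$; and because $\mathcal{C}_i$ is a function of $\mathcal{V}_i$ alone, $\mathcal{V}_i=\mathcal{V}_{i-1}$ also forces $\mathcal{C}_i=\mathcal{C}_{i-1}$, so $\var(\bc)\subseteq\bv$, which is exactly the statement $|\var(c)\cap\gv|=0$ for every $c\in\bc$. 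In the other direction, $\bc=\mathcal{C}_i=\{c:|\var(c)\cap\bv|\ge 3\}$, so every $c\in\gc$ satisfies $|\var(c)\cap\bv|\le 2$. The third structural claim is immediate: high-degree variables are placed into $\mathcal{V}_0\subseteq\bv$, so every $v\in\gv$ has degree below $\Delta$.

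The main obstacle is the running-time bound, since a literal synchronous execution of Algorithm~\ref{alg:bad} may scan all $m$ clauses in each of $\Theta(n)$ rounds, giving a prohibitive $\Theta(nmk)$. The plan is to replace the synchronous iteration by a work-list simulation. After an $O(n+mk)$ preprocessing pass recording, for each variable $v$, its degree and its occurrence list $\{c\in\mathcal{C}: v\in\var(c)\}$, I would maintain for each clause an integer counter equal to the number of currently-bad variables it contains, initialised from the high-degree variables. A queue holds clauses whose counter has reached $3$; when such a clause is popped, every variable of the clause not yet marked bad is added to $\bv$, and for each such newly-bad variable $v$ the procedure walks the occurrence list of $v$, incrementing the counter of every clause in it and enqueuing any clause whose counter first crosses $3$.

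Correctness of this asynchronous simulation is the one point that merits care: the operator $X\mapsto X\cup\var(\{c:|\var(c)\cap X|\ge 3\})$ is monotone, so any order in which we apply its local increments converges to the same least fixed point containing the high-degree variables, and that fixed point is exactly the set $\bv$ produced by Algorithm~\ref{alg:bad}. For the running-time accounting, each variable is marked bad at most once, and the work charged at that event is $O(1)$ plus the total length of the clauses in its occurrence list; summed over all variables this gives $O(\sum_v\deg(v))=O(mk)$. Combined with the $O(n+mk)$ preprocessing, this yields the claimed $O(n+mk)$ bound.
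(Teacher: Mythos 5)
Your proposal is correct and follows essentially the same approach as the paper: both replace the synchronous rounds of Algorithm~\ref{alg:bad} with a work-list (the paper uses a stack of newly bad variables, you use a queue of newly bad clauses) and charge the $O(1)$ counter increments to the $mk$ literals so that the total is $O(n+mk)$. You make more of the argument explicit than the paper does — in particular, the monotone-operator/least-fixed-point justification that the asynchronous simulation computes the same $\bv$ as the synchronous loop, and the derivation of the three structural claims from the termination condition $\mathcal{V}_i=\mathcal{V}_{i-1}$ — but these are details the paper treats as immediate from the algorithm's definition, not a different proof strategy.
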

\begin{proof} \label{prop:bad:proof}
  In this proof we briefly explain the implementation of Algorithm~\ref{alg:bad}. First, for each clause $c$ we keep track of the number of bad variables in $\var(c)$, denoted $\operatorname{bad}(c)$. We also have a stack of  bad variables $S_{\mathcal{V}}$ that are yet to be processed by the algorithm. At the start of the algorithm, we set $S_{\mathcal{V}} \leftarrow \mathcal{V}_0$. While $S_{\mathcal{V}}$ is non-empty, we take the variable $v$ on the top of the stack and increase $\operatorname{bad}(c')$ by $1$ for those clauses $c'$ where $v$ appears. If any of these updates gives $\operatorname{bad}(c') \ge 3$, we add $\var(c')$ to the stack $S_{\mathcal{V}}$, set the variables in $\var(c')$  as bad and set the clause $c'$ as bad. At the end of this process, $S_{\mathcal{V}}$ is empty and we have found all the bad variables and bad clauses of $\Phi$. As every variable is added to the stack at most once and the list $\operatorname{bad}(\cdot)$ is updated at most $m k$ times (once per literal in $\Phi$), the running time is  $O(n + m k)$.
\end{proof}

In our work we need a variation of  result of~\cite{galanis2019counting} that controls the number of bad clauses in connected subgraphs of $G_\Phi$. We state this result in Lemma~\ref{lem:bad} and prove it in Appendix~\ref{sec:ap:bad}.

\newcommand{\statelembad}{Let $r \in (0, 1/(2\log 2)]$. There is a positive integer $k_0$ such that for any integer $k \ge k_0$, $\Delta_r = \lceil 2^{r k} \rceil$, and any density $\alpha$ with $\alpha \le \Delta_r/k^3$, the following holds w.h.p. over the choice of $\Phi = \Phi(k, n, \lfloor \alpha n \rfloor)$. For every connected set of clauses $Y$ in $G_\Phi$ such that $\lvert \var(Y) \rvert \ge  2k^4  \log n$, we have $\lvert Y \cap \bc(r) \rvert \le  \lvert Y \rvert / k$.}

\begin{lemma}[{Modified version of~\cite[Lemma 8.16]{galanis2019counting}}] \label{lem:bad}
\statelembad
\end{lemma}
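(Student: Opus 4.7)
The plan is a first-moment / union-bound argument over connected subsets of clauses in $G_\Phi$, adapting the strategy of the original~\cite[Lemma 8.16]{galanis2019counting} and tracking the parameters $r_0$, $\delta$, and $\Delta=\lceil 2^{(r_0-2\delta)k}\rceil$ used here.

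First I would record the structural consequence of Algorithm~\ref{alg:bad}: every bad clause $c\in\bc$ admits a ``certificate of badness'', namely a connected sub-graph of $G_\Phi$ rooted at $c$ that alternates between bad clauses and bad variables, where every internal clause node has at least three bad-variable children and every internal variable node has exactly one bad-clause child; its leaves are clauses of $\mathcal{C}_0$ (each containing at least three high-degree variables). A handshake/branching count on such a tree shows that a certificate containing $s$ bad clauses also contains of order $s$ distinct high-degree variables in its variable set. Taking the union of one certificate for each bad clause of $Y\cap\bc$ produces a connected set of clauses $Z\subseteq\mathcal{C}$ with $|Z|=O(|Y|)$ and $\Omega(|Y\cap\bc|)$ distinct high-degree variables in $\var(Z)$.

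Next I would perform the union bound. Fix an integer $\ell\ge 2k^3\log n$ and consider the event $\mathcal{E}_\ell$ that there is a connected $Z\subseteq\mathcal{C}$ in $G_\Phi$ with $|Z|\le C_0\ell$ and at least $\ell/(kC_0)$ distinct high-degree variables in $\var(Z)$, for some absolute constant $C_0$ coming from the structural step. By~\cite[Lemma 40]{galanis2019counting} (cited as \eqref{eq:number-Y:intro}), the number of connected $Z$ of size at most $C_0\ell$ rooted at a given clause is at most $(9k^2\alpha)^{C_0\ell}$. For a fixed $Z$, the probability that a given variable $v\in\mathcal{V}$ is high-degree is, by a binomial tail estimate on $\deg(v)\sim\operatorname{Bin}(\alpha n k,1/n)$, at most $(ek\alpha/\Delta)^\Delta\le(e/k^2)^\Delta$, using $\alpha\le\alpha_0=2^{(r_0-2\delta)k}/k^3$ and the definition of $\Delta$. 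Combining these bounds,
\[
\pr(\mathcal{E}_\ell)\;\le\;\alpha n\cdot(9k^2\alpha)^{C_0\ell}\binom{kC_0\ell}{\ell/(kC_0)}(e/k^2)^{\Delta\ell/(kC_0)}\;\le\;n\cdot\gamma(k)^{\ell},
\]
where $\gamma(k)\to 0$ as $k\to\infty$ because $\Delta$ is exponential in $k$ whereas all competing factors are only polynomial in $k$. Summing this geometric bound over $\ell\ge 2k^3\log n$ gives $o(1)$.

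Finally, any connected $Y$ violating the lemma satisfies $|\var(Y)|\ge 2k^4\log n$, so $|Y|\ge 2k^3\log n$, and then by the structural step $\mathcal{E}_{|Y|}$ holds; thus w.h.p.{} no such $Y$ exists. The main technical obstacle is the structural step, i.e., verifying that a connected $Y$ with many bad clauses really does force a connected super-set $Z$ of size $O(|Y|)$ containing $\Omega(|Y\cap\bc|)$ distinct high-degree variables without double-counting across certificates. Once this is in place, the probability bound has enormous slack ($(e/k^2)^\Delta$ is doubly exponentially small in $k$), so there is ample room to absorb combinatorial overcounting, the polynomial prefactor $n$, and the sum over $\ell$.
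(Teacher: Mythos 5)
The proposal contains a genuine gap in the structural step, and this gap is exactly where the real work of the lemma lies.

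You claim that the union of ``certificates of badness'' for the bad clauses in $Y$ yields a connected $Z$ of size $O(|Y|)$ containing $\Omega(|Y\cap\bc|)$ \emph{distinct} high-degree variables. This is not a deterministic consequence of Algorithm~\ref{alg:bad}. Consider a formula in which only three variables $u_1,u_2,u_3$ are high-degree, and a clause $c_0$ contains all three. Then $c_0\in\mathcal{C}_0$, all of $\var(c_0)$ becomes bad, and any clause with three variables in $\var(c_0)$ becomes bad; iterating, the set of bad clauses can cascade to be arbitrarily large, all tracing back to those same three high-degree variables. Your handshake count gives a lower bound on the number of high-degree \emph{leaf occurrences} in a certificate tree, not on the number of distinct high-degree variables, and once you union certificates across bad clauses these leaves collapse onto each other. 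So a connected $Y$ violating the conclusion does not, in general, force a connected $Z$ with $\Omega(|Y\cap\bc|)$ distinct high-degree variables, and the event $\mathcal{E}_\ell$ you union-bound is not implied by the event you need to rule out.

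What actually prevents the pathological cascade is not a combinatorial certificate but a w.h.p.\ statement about random formulas. The paper's proof goes through a chain of intermediate results (its versions of \cite[Corollary 8.4, Lemmas 8.8, 8.10/8.13, 8.14, 8.15]{galanis2019counting}); the pivotal one is Corollary~\ref{cor:Ns}, a sequential-exposure bound stating that given a seed set $Z$ of clauses, the number of further clauses $c_1,c_2,\dots$ each sharing at least three variables with the growing union is at most $|Z|$ w.h.p. This is precisely the statement that bad clauses cannot cascade faster than the seed set, i.e.\ Lemma~\ref{lem:hd:lower} ($|S|\le 2k|\hd(S)|$ for bad components $S$) is a probabilistic fact, not a structural tautology. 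Only after establishing that bound, plus Lemma~\ref{lem:hd:upper} (few high-degree variables in connected sets), does one get to conclude that bad variables (Lemma~\ref{lem:bad-variables}) and then bad clauses (Lemma~\ref{lem:bad}) are sparse in connected regions. Your plan needs to replace the deterministic ``certificate'' step with an argument of this kind; the first-moment slack $(e/k^2)^{\Delta}$ at the end does not help if the reduction to ``many distinct high-degree variables in a bounded-size connected set'' is not available.

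A secondary issue, less serious but worth noting: in the displayed union bound you multiply by the enumeration bound $(9k^2\alpha)^{C_0\ell}$ from \eqref{eq:number-Y:intro} as if it were a deterministic count. It is itself a w.h.p.\ event over the choice of $\Phi$, so it must be conditioned on separately, and one must check that conditioning on it does not distort the degree-tail estimates for the variables inside $\var(Z)$. This is manageable but needs care; the paper avoids it by never combining connected-set enumeration and high-degree events inside a single product.
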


We also need a bound on the number of bad clauses of $\Phi$, which is also proved in Appendix~\ref{sec:ap:bad}.

\newcommand{\statelembadall}{Let $r \in (0, 1/(2\log 2)]$. There is a positive integer $k_0$ such that for any integer $k \ge k_0$, $\Delta_r = \lceil 2^{r k} \rceil$, and any density $\alpha$ with $\alpha \le \Delta_r/k^3$, the following holds w.h.p. over the choice of $\Phi = \Phi(k, n, \lfloor \alpha n \rfloor)$. We have $\lvert \bc(r) \rvert \le  2(\alpha/ \Delta_r)n / 2^{k^{10}}$ and $\lvert \bv(r) \rvert \le  2(k+1)(\alpha/ \Delta_r)n / 2^{k^{10}}$.}

\begin{lemma}[{Modified version of~\cite[Lemma 8.12]{galanis2019counting}}] \label{lem:bad:2}
\statelembadall
\end{lemma}

Lemmas~\ref{lem:bad} and~\ref{lem:bad:2} guarantee that, w.h.p. over the choice of $\Phi$, bad clauses are a minority among all the clauses of $\Phi$. This will be used to show that bad clauses do not affect significantly the behaviour of our sampling algorithm. We point out that the definitions of $\gv(r), \bv(r), \gc(r)$ and $\bc(r)$ given in~\cite{galanis2019counting} have $r = 1/300$ and, in Algorithm~\ref{alg:bad}, use the condition $\lvert \var(c) \cap \mathcal{V}_i(r) \rvert \ge k/10$ instead of $\lvert \var(c) \cap \mathcal{V}_i(r) \rvert \ge 3$ 
 
Hence, our definitions of good clauses and good variables are more restrictive. However, it turns out that, with minor changes, the proof of Lemma~\ref{lem:bad} given in~\cite{galanis2019counting} can be extended to our setting. These changes are explained in Appendix~\ref{sec:ap:bad}. 

\section{Identifying a set of ``marked'' variables with good marginals} \label{sec:marking}
 
A property that is useful for
  sampling satisfying assignments 
is having a high proportion of variables in each good clause such that the marginals of these variables are fairly close to~$1/2$.
That is,  having variables which are roughly equally likely to be true or false in a random
satisfying assignment. The marginals of high-degree variables  do vary.
However, even in the random $k$-SAT model
it turns out that there are  enough variables with marginals near~$1/2$.
Following  the basic approach of Moitra~\cite{moitra19}, we partition the good variables
of a random $k$-CNF formula into types. Here we have three types of variables (instead of two): marked, auxiliary and control variables. The high-level goal is to do this in such a way that each clause has a good proportion of each one of these types of variables. We call this construction a marking, see Definition~\ref{def:distributed-marking} of the proof outline for the precise definition. For such a marking, we will show that as long as the control variables are left unassigned/unpinned, the marginals of the marked and auxiliary variables are all near $1/2$ as a consequence the Lov\'asz local lemma~\cite{erdos1975}. We first set up the notation and results that we need.

It is not difficult to show that in the random $k$-SAT model, 
w.h.p.{}   over the choice of the formula $\Phi$, two distinct clauses share at most $2$ variables (see Lemma~\ref{lem:linearity}). Previous work on counting/sampling satisfying assignments of bounded degree formulae had to analyse subsets of disjoint clauses in order to deal with the fact that small sets of clauses might share most of their variables. The restriction to disjoint subsets imposes further restrictions on the maximum degree of the formula and on the density of the formula in the random $k$-SAT model setting. Here we manage to exploit Lemma~\ref{lem:linearity} to avoid these restrictions.

\begin{lemma} \label{lem:linearity}
 For any $k \ge 3$ and any density $\alpha > 0$ (possibly depending on $k$), the following holds w.h.p. over the choice of the random $k$-CNF formula $\Phi = \Phi(k, n, \lfloor \alpha n\rfloor)$. We have $\lvert \var(c) \rvert \ge k - 1$ and $\lvert \var(c) \cap \var(c')\rvert \le 2$ for all $c, c' \in \mathcal{C}$ with $c \ne c'$.
\end{lemma}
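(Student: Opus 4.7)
The plan is a pair of first-moment / union-bound arguments, using that for a random $k$-CNF formula with $m = \lfloor \alpha n \rfloor$ clauses, each of the $k$ literal slots of each clause is an independent uniform draw from the $2n$ possible literals. Equivalently, each slot chooses a variable uniformly from $\mathcal{V}$ (then a sign). Throughout, $k$ and $\alpha$ are fixed (possibly with $\alpha$ depending on $k$), and the probabilities below are shown to be $o(1)$ as $n\to\infty$.

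First, I bound the probability that a single clause $c$ fails $|\var(c)|\ge k-1$, i.e.\ that it has two or more ``extra'' repetitions among its $k$ variable slots. This occurs only if either (i) some three positions of $c$ carry the same variable, or (ii) there are two disjoint pairs of positions with matching variables (not necessarily the same variable across pairs). The probability of a specific triple of positions agreeing is $1/n^2$, and the probability of a specific disjoint pair of matching pairs is $1/n^2$; the number of such triples and pairs of pairs is $O(k^3)$ and $O(k^4)$ respectively. Hence
\begin{equation*}
  \Pr\bigl[\,|\var(c)|\le k-2\,\bigr]\;\le\;\frac{k^3}{6n^2}+\frac{k^4}{8n^2}\;=\;O\!\left(\frac{k^4}{n^2}\right).
\end{equation*}
A union bound over the $m=\lfloor\alpha n\rfloor$ clauses gives $O(\alpha k^4/n)=o(1)$.

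Second, I bound the probability that two distinct clauses $c\ne c'$ share three or more variables. Conditioning on $\var(c)$, the probability that any single position of $c'$ carries a variable from $\var(c)$ is at most $|\var(c)|/n\le k/n$, since literals are uniform. Thus the probability that three specific positions of $c'$ are all in $\var(c)$ is at most $(k/n)^3$, and a union bound over the $\binom{k}{3}=O(k^3)$ choices of three positions gives
\begin{equation*}
  \Pr\bigl[\,|\var(c)\cap\var(c')|\ge 3\,\bigr]\;\le\;\binom{k}{3}\!\left(\frac{k}{n}\right)^{\!3}\;=\;O\!\left(\frac{k^6}{n^3}\right).
\end{equation*}
Finally, a union bound over the $\binom{m}{2}=O(\alpha^2 n^2)$ ordered pairs of distinct clauses gives $O(\alpha^2 k^6/n)=o(1)$. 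Combining with the previous paragraph via another union bound yields that both desired events hold simultaneously w.h.p., which proves the lemma.

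There is no real obstacle here; the only mild subtlety is the second case in the first step, where one must count pairs of disjoint matching pairs (and not only triples) in order to cover all configurations with $|\var(c)|\le k-2$. Everything else reduces to counting and Markov-style union bounds.
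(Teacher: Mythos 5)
Your proof is correct and takes essentially the same approach as the paper: both split the first claim into the two cases (some variable thrice-repeated vs.\ two disjoint repeated pairs) to bound $\Pr[|\var(c)|\le k-2]=O(k^4/n^2)$, both bound $\Pr[|\var(c)\cap\var(c')|\ge 3]=O(k^6/n^3)$ by a counting/conditioning argument, and both finish with union bounds over clauses and pairs of clauses. The only cosmetic difference is that you phrase the second bound via conditioning on $\var(c)$ while the paper counts configurations directly, but the calculation is identical.
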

\begin{proof} \label{lem:linearity:proof}
  First, let us prove  that, for $k \ge 3$, w.h.p. over the choice of $\Phi$, $\lvert \var(c) \rvert \ge k -1$ for all $c \in \mathcal{C}$. Let us denote by $\mathcal{R}_c$ the event that a clause $c$ has at least two repetitions among its variables, that is, $\lvert \var(c) \rvert \le k-2$. We claim that $\pr(\mathcal{R}_c) \le q(k) / n^2$, where $q = \binom{k}{3} + k(k-1)(k-2)(k-3)/4$. To prove this statement we note that the probability that a variable  appears at least $3$ times in $c$ is at most $\binom{k}{3} n^{k-2} / n^k$, and the probability that two distinct variables are repeated in $c$ is at most $p(k) n (n-1) n^{k-4}/n^k$ for $p(k) = k(k-1)(k-2)(k-3)/4$.  Hence, by adding up both cases, we find that $\pr(\mathcal{R}_c) \le q(k) / n^2$, and $\pr(\bigcup_{c \in \mathcal{C}}\mathcal{R}_c) \le q(k) m / n^2 \le  q(k) \density /n = O(1/n)$, so the result follows. 

  Let $c, c' \in \mathcal{C}$ with $c \ne c'$. We study $\lvert \var(c) \cap \var(c') \rvert$,
  \begin{equation*}
      \pr \left( \left| \var(c) \cap \var(c') \right| \ge 3 \right) \le \frac{n (n-1) (n-2) n^{2(k-3)} (k(k-1)(k-2))^2 }{n^{2k}} \le \frac{k^6}{n^3}.
  \end{equation*}
  Therefore, the probability that there is a pair of clauses $c, c'$ with $\lvert \var(c) \cap \var(c') \rvert \ge 3$ is bounded from above by  $\frac{m(m-1)}{2} \frac{k^6}{n^3} \le  \frac{\alpha^2}{2} \frac{k^6}{ n} =  O \left( \frac{1}{n}\right)$,   which finishes the proof.
\end{proof}

We will use the asymmetric version of the Lov\'asz local lemma (LLL), proved by Lov\'asz and originally published in~\cite{Spencer1997}. Before stating this result, let us introduce some notation. Let $\mathcal{P}$ be a finite collection of mutually independent random variables. Let $B$ an event that is a function of the random variables in $\mathcal{P}$.
Let $\mathcal{A}$  be a collection of events that are a function of the random variables in $\mathcal{P}$. We define $\Gamma(B)$ as the set of events $A \in \mathcal{A}$ such that $A \ne B$ and $A$ and $B$ are not independent. In this setting, $\pr_P(B)$ is the probability that the event $B$ holds when sampling all the random variables in $\mathcal{P}$.

\begin{theorem}[{Asymmetric Lov\'asz local lemma,~\cite[Theorems 1.1 and 2.1]{Haeupler2011}}]  \label{thm:lll}
    Let $\mathcal{P}$ be a finite collection of mutually independent random variables. Let $\mathcal{A}$  be a collection of events that are a function of the random variables in $\mathcal{P}$. If there exists a function $x : \mathcal{A} \to (0,1)$ such that, for all $A \in \mathcal{A}$, we have
    \begin{equation*}
        \pr_P\left(A\right) \le x(A) \prod_{N \in \Gamma(A) } \left(1 - x(N) \right),
    \end{equation*}
    then $\pr_P\left(\, \bigcap \nolimits_{A \in \mathcal{A}} \overline{A} \, \right) > 0$. Furthermore, for any event $B$ that is a function of the random variables in $\mathcal{P}$, we have
    \begin{equation*}
        \pr_P\left( B \left| \, \bigcap\nolimits_{A \in \mathcal{A}} \overline{A} \right. \right) \le \pr_P\left(B\right) \prod_{A \in \Gamma(B)} \big(1 - x(A) \big)^{-1}.
    \end{equation*}
\end{theorem}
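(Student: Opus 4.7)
The plan is to reduce both conclusions to a single inductive lemma: for every event $A \in \mathcal{A}$ and every $\mathcal{S} \subseteq \mathcal{A}\setminus\{A\}$ with $\Pr\nolimits_P(\bigcap_{B \in \mathcal{S}} \overline{B}) > 0$,
\[
\Pr\nolimits_P\!\left(A \,\Big|\, \bigcap_{B \in \mathcal{S}} \overline{B}\right) \le x(A).
\]
I would prove this by induction on $|\mathcal{S}|$. The base case $\mathcal{S} = \emptyset$ is immediate from the theorem's hypothesis since each $1-x(N) < 1$. For the inductive step, partition $\mathcal{S} = \mathcal{S}_1 \cup \mathcal{S}_2$ with $\mathcal{S}_1 := \mathcal{S} \cap \Gamma(A)$, $\mathcal{S}_2 := \mathcal{S} \setminus \Gamma(A)$, and apply Bayes' rule:
\[
\Pr\nolimits_P\!\left(A \,\Big|\, \bigcap_{B \in \mathcal{S}} \overline{B}\right) = \frac{\Pr\nolimits_P\!\left(A \cap \bigcap_{B \in \mathcal{S}_1} \overline{B} \,\big|\, \bigcap_{B \in \mathcal{S}_2} \overline{B}\right)}{\Pr\nolimits_P\!\left(\bigcap_{B \in \mathcal{S}_1} \overline{B} \,\big|\, \bigcap_{B \in \mathcal{S}_2} \overline{B}\right)}.
\]
By mutual independence of $A$ from the events in $\mathcal{S}_2$, the numerator is at most $\Pr\nolimits_P(A) \le x(A) \prod_{N \in \Gamma(A)}(1-x(N))$. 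For the denominator, enumerate $\mathcal{S}_1 = \{B_1, \dots, B_s\}$ and telescope via the chain rule; the inductive hypothesis (which applies because each intermediate conditioning set is a strict subset of $\mathcal{S}$) lower-bounds each factor by $1 - x(B_i)$, producing an overall bound of $\prod_{i=1}^s (1-x(B_i)) \ge \prod_{N \in \Gamma(A)}(1-x(N))$. Division yields $x(A)$.

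Given this key lemma, the first conclusion follows from another application of the chain rule: order $\mathcal{A}$ arbitrarily as $A_1, \ldots, A_N$ and obtain
\[
\Pr\nolimits_P\!\left(\bigcap_{i=1}^N \overline{A_i}\right) = \prod_{i=1}^N \Pr\nolimits_P\!\left(\overline{A_i} \,\Big|\, \bigcap_{j<i} \overline{A_j}\right) \ge \prod_{i=1}^N (1 - x(A_i)) > 0.
\]
For the ``furthermore'' bound, express $\Pr\nolimits_P(B \mid \bigcap_A \overline{A})$ as the ratio $\Pr\nolimits_P(B \cap \bigcap_A \overline{A})/\Pr\nolimits_P(\bigcap_A \overline{A})$. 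In the numerator, first drop (by monotonicity) the events $A \in \Gamma(B)$ from the intersection and then use mutual independence of $B$ from the remaining events, obtaining the upper bound $\Pr\nolimits_P(B)\,\Pr\nolimits_P\!\left(\bigcap_{A \notin \Gamma(B)}\overline{A}\right)$. In the denominator, telescope the chain rule in an order that lists the events of $\Gamma(B)$ last: the prefix equals $\Pr\nolimits_P\!\left(\bigcap_{A \notin \Gamma(B)}\overline{A}\right)$ and cancels its twin in the numerator, while each of the $|\Gamma(B)|$ trailing factors is bounded below by $1 - x(A)$ via the key lemma. Combining gives the stated $\Pr\nolimits_P(B)\prod_{A \in \Gamma(B)}(1-x(A))^{-1}$.

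The main obstacle is the bookkeeping in the inductive step of the key lemma: one must verify that all the conditioning sets encountered remain positive-probability events (which holds inductively, since the lemma itself gives $\Pr\nolimits_P(\overline{A} \mid \cdot) \ge 1 - x(A) > 0$ at each stage), and that the telescoping of the denominator only appeals to the inductive hypothesis on strictly smaller conditioning sets. Once this is in place, the argument uses no structural property of $\mathcal{P}$ beyond the notion of mutual independence encoded in $\Gamma(\cdot)$, which is precisely the flexibility needed for the subsequent applications of Theorem~\ref{thm:lll} to the random $k$-SAT formula $\Phi$.
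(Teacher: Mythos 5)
The paper does not prove this theorem; it cites it directly from Haeupler, Saha and Srinivasan~\cite{Haeupler2011}, so there is no in-paper argument to compare against. Your reconstruction is correct and is precisely the classical inductive proof of the asymmetric LLL together with the standard derivation of the conditional-probability bound: the key lemma $\Pr_P(A\mid\bigcap_{B\in\mathcal{S}}\overline{B})\le x(A)$, the partition of $\mathcal{S}$ into $\mathcal{S}_1 = \mathcal{S}\cap\Gamma(A)$ and $\mathcal{S}_2 = \mathcal{S}\setminus\Gamma(A)$, the ratio form of the conditional probability, the telescoping of the denominator using the inductive hypothesis on strictly smaller conditioning sets, and finally the cancellation of the common prefix $\Pr_P(\bigcap_{A\notin\Gamma(B)}\overline{A})$ in the ``furthermore'' bound --- all of these are handled exactly as in the standard treatments. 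The positivity bookkeeping you flag at the end is also the right thing to check and is correctly resolved by the lemma itself.

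One subtlety worth flagging, though it is an imprecision in the theorem's statement as transcribed in the paper rather than a flaw in your proof: the step ``by mutual independence of $A$ from the events in $\mathcal{S}_2$'' requires that $A$ be mutually independent from the \emph{collection} $\mathcal{A}\setminus(\{A\}\cup\Gamma(A))$, not merely pairwise independent of each member. The paper's definition of $\Gamma(B)$ (``the set of events $A$ such that $A$ and $B$ are not independent'') literally only gives pairwise independence. In the variable framework of \cite{Haeupler2011} the intended reading --- and the one the paper actually uses when it applies the theorem, e.g.\ in the proof of Lemma~\ref{lem:marking}, where $\Gamma(A_c)$ is defined by shared variables --- is the variable-disjointness dependency graph, and there the required mutual independence does hold because the underlying variables are mutually independent. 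Your proof is therefore correct in the setting in which the theorem is used; it would be worth being explicit about this if one ever wanted to invoke the LLL with a $\Gamma$ that is not derived from variable disjointness.
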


We are going to apply the LLL in Lemma~\ref{lem:marking} to find an $(r_0 - \delta, r_0, r_0, 2r_0)$-marking of $\Phi$ (Definition~\ref{def:distributed-marking}), w.h.p. over the choice of the random formula, for some appropriate $r_0 \in (0, 1)$. Before proving Lemma~\ref{lem:marking}, let us highlight how strong the properties of a marking are. First, the fact that a set of marked variables is $\rho$-distributed (Definition~\ref{def:distributed-marking}) will allow us to find, w.h.p. over the choice of $\Phi$, a good amount of marked variables in any set of clauses, even if the set includes bad clauses, see Lemma~\ref{lem:bound-marked} for a precise statement. This result is an essential ingredient in our proofs. Secondly, as long as the control variables are left unassigned, the marginals of the marked and auxiliary variables will be near $1/2$ as a consequence of the LLL, as we show later in this section (Lemma~\ref{lem:marginals}).   We remark that, in the definition of $\rho$-distributed set of variables, we ask for $\lvert \var(c) \cap V \rvert \ge \rho (k-3)$ instead of $\lvert \var(c) \cap V \rvert \ge \rho k$ to account for the fact that w.h.p. a good clause has at most a repeated variable (Lemma~\ref{lem:linearity}) and at most two bad variables (Proposition~\ref{prop:bad}), which will come up in the proofs presented in this section. First, we need the following definition.

\begin{definition}[$\Phi_{\mathrm{good}}(r)$, $\Phi_{\mathrm{bad}}(r)$]  \label{def:good-phi}
Let $r \in (0,1)$. Let $\Phi = (\mathcal{V}, \mathcal{C})$ be a $k$-CNF formula. Let $\Phi_{\mathrm{good}}(r) = (\gv(r), \gc(r))$ be the CNF formula obtained by taking the good clauses of $\Phi$ and ignoring the bad variables appearing in them. Let $\Phi_{\mathrm{bad}}(r)$ be the $k$-CNF formula with variables $\mathcal{V}_{\mathrm{bad}}(r)$ and clauses $\mathcal{C}_{\mathrm{bad}}(r)$.
\end{definition}

Note that in $G_{\Phi_{\mathrm{good}}(r)}$ two clauses $c_1$ and $c_2$ in $\gc$ are adjacent if and only if $\var(c_1) \cap \var(c_2) \cap \gv \ne \emptyset$.  By definition of good variables, the maximum degree in $G_{{\Phi_{\mathrm{good}}}(r)}$ is at most $k (\Delta_r - 1)$, which will be important when applying the LLL. We also need the following version of Chernoff's bounds.

\begin{lemma}[{Chernoff's bounds -~\cite[Theorem 2.1 and Corollary 4.1]{mulzer2018}}] \label{lem:chernoff}
  Let $n \in \mathbb{N}$, $p \in [0,1]$, and let $X_1, \ldots, X_n$ be $n$ independent random variables with $X_j \in \{0, 1\}$ and $\pr(X_j = 1) = p\,$ for all $j = 1, \ldots, n$. Let $X = \sum_{j = 1}^n X_j$. Then, for any $t \in (p, 1)$ and any $s \in (0, p)$, we have
$\pr \left( X \ge t n  \right) \le e^{-D(t, p) n}$ and $\pr \left( X \le s n  \right) \le e^{-D(s, p) n}$, where, for reals $x,y\in (0,1)$, $D(x,y):=x \log \left( x / y\right) + (1 -x) \log \left( (1-x)/ (1-y) \right)$ is the Kullback-Leibler divergence.
\end{lemma}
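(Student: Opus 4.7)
The plan is to prove both tails of Lemma~\ref{lem:chernoff} via the classical Chernoff--Hoeffding exponential-moment method and to identify the resulting optimal exponent as the Kullback--Leibler divergence $D(\cdot,\cdot)$.

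For the upper tail, I would fix a parameter $t > 0$ and apply Markov's inequality to the positive random variable $e^{tX}$, giving
\begin{equation*}
  \pr(X \ge rn) = \pr\bigl(e^{tX} \ge e^{trn}\bigr) \le e^{-trn}\,\mathbf{E}[e^{tX}].
\end{equation*}
Independence of the $X_j$ factors the moment generating function as $\mathbf{E}[e^{tX}] = (1-p+pe^t)^n$, so the bound reads $\bigl((1-p+pe^t)e^{-tr}\bigr)^n$. Differentiating the logarithm in $t$ and setting the derivative to zero yields the unique minimiser $e^t = \frac{r(1-p)}{p(1-r)}$, which is positive precisely because $r > p$. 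Substituting this choice back simplifies $1-p+pe^t$ to $(1-p)/(1-r)$, and a short algebraic rearrangement collapses the exponent to $-n\bigl[r\log(r/p) + (1-r)\log((1-r)/(1-p))\bigr] = -nD(r,p)$, which is exactly the claimed upper-tail bound.

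For the lower tail, the cleanest route is to reduce to the upper tail. Set $Y_j := 1 - X_j$, so that $Y_j$ is Bernoulli with parameter $1-p$ and the event $\{X \le sn\}$ coincides with $\{\sum_j Y_j \ge (1-s)n\}$. Since $1-s > 1-p$, the upper-tail bound just proved (applied to the $Y_j$ with parameters $1-s$ and $1-p$ in place of $r$ and $p$) yields the exponent $-nD(1-s,1-p)$, which equals $-nD(s,p)$ by the elementary symmetry $D(x,y) = D(1-x,1-y)$ visible directly from the definition of $D$.

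The only mildly technical step is the optimisation in $t$ and the subsequent simplification that recovers the KL divergence; this is a standard textbook argument and I do not anticipate any real obstacle.
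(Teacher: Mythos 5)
Your proof is correct: the upper tail via Markov on $e^{tX}$, optimisation of the exponent to $e^t = \tfrac{r(1-p)}{p(1-r)}$, and simplification to $-nD(r,p)$ is the standard Chernoff--Hoeffding computation, and the lower tail follows cleanly from the substitution $Y_j = 1 - X_j$ together with the symmetry $D(x,y) = D(1-x,1-y)$. The paper itself gives no proof of this lemma --- it simply cites it from the literature --- so there is nothing to compare against; your argument is the textbook-standard one and fills the gap correctly.
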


We can now state the main result of this section. The Lov\'asz local lemma ideas in the proof of Lemma~\ref{lem:marking} are standard in the literature since the work of Moitra~\cite{moitra19} but the quantities involved are adapted to our setting. 

\begin{lemma} \label{lem:marking}
   There is a positive integer $k_0$ such that for any $k \ge k_0$ and any density $\alpha$  with $\density \le \densitydef$ the following holds w.h.p. over the choice of the random $k$-CNF formula $\Phi = \Phi(k, n, \lfloor \alpha n\rfloor)$:
  \begin{enumerate}
      \item \label{item:marking:1} there exists a partial assignment of bad variables that satisfies all bad clauses;
      \item  \label{item:marking:2} there exists an $(r_0-\delta, r_0, r_0, 2r_0)$-marking of $\Phi$. Furthermore, for any $p \in (0,1)$, such an $(r_0-\delta, r_0, r_0, 2r_0)$-marking can be computed with probability at least $1 - p$ in time  $O(n \log(1/p))$.
  \end{enumerate}
\end{lemma}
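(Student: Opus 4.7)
The plan is to establish both parts via the Lov\'asz Local Lemma (Theorem~\ref{thm:lll}), with Moser--Tardos resampling supplying the algorithmic bound in item~\ref{item:marking:2}.

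For item~\ref{item:marking:1}, observe that every bad clause $c\in\bc$ has $\var(c)\subseteq\bv$ by Proposition~\ref{prop:bad}, so the restriction of any satisfying assignment of $\Phi$ to $\bv$ already satisfies every bad clause. Thus item~\ref{item:marking:1} reduces to the satisfiability of $\Phi$ at density $\alpha\le\alpha_0=2^{(r_0-2\delta)k}/k^3$. This density lies far below the satisfiability threshold $(1+o_k(1))2^k\ln 2$ mentioned in the introduction, so satisfiability holds w.h.p.\ by classical random $k$-SAT results; alternatively, it follows directly from the LLL, using that a uniform random $\{\mathsf{F},\mathsf{T}\}^{\mathcal{V}}$ assignment leaves each clause unsatisfied with probability $\le 2^{-(k-1)}$ (Lemma~\ref{lem:linearity}) and that w.h.p.\ the maximum variable degree is $O(\log n/\log\log n)$.

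For item~\ref{item:marking:2}, the plan is to label each good variable independently with type $\mathrm{m}$, $\mathrm{a}$, or $\mathrm{c}$ with probabilities $p_{\mathrm m}=p_{\mathrm a}=p^\star$ and $p_{\mathrm c}=1-2p^\star$ for a $p^\star\in(r_0,1/2)$ to be chosen, while sending every bad variable to $\mathcal{V}_{\mathrm c}$. For each $c\in\gc$ introduce the bad events
\begin{align*}
A^{\mathrm m}_c&:|\var(c)\cap\mathcal{V}_{\mathrm m}|<r_0(k-3), \\
A^{\mathrm a}_c&:|\var(c)\cap\mathcal{V}_{\mathrm a}|<r_0(k-3), \\
A^{\mathrm c}_c&:|\var(c)\cap(\mathcal{V}_{\mathrm c}\setminus\bv)|<2r_0(k-3).
\end{align*}
By Lemma~\ref{lem:linearity} and Proposition~\ref{prop:bad} every good clause has $|\var(c)\cap\gv|\ge k-3$, so Lemma~\ref{lem:chernoff} gives $\pr(A^{\mathrm m}_c),\pr(A^{\mathrm a}_c)\le e^{-D(r_0,p^\star)(k-3)}$ and $\pr(A^{\mathrm c}_c)\le e^{-D(2r_0,1-2p^\star)(k-3)}$. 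Two such events are dependent only when their underlying clauses share a good variable, and since good variables have degree $<\Delta$ (Proposition~\ref{prop:bad}) each bad event is dependent on at most $3k\Delta$ others. The symmetric LLL condition $e\cdot 3k\Delta\cdot e^{-D_{\min}(k-3)}\le 1$, with $D_{\min}:=\min(D(r_0,p^\star),D(2r_0,1-2p^\star))$, reduces via $\Delta\le 2\cdot 2^{(r_0-2\delta)k}$ to the asymptotic inequality $D_{\min}\ge(r_0-2\delta)\ln 2+O(\log k/k)$. The constant $r_0=\rvalue$ is calibrated precisely so that some $p^\star\in(r_0,1/2)$ makes both $D(r_0,p^\star)$ and $D(2r_0,1-2p^\star)$ strictly exceed $(r_0-2\delta)\ln 2$; I would locate such a $p^\star$ by numerical optimisation, exploiting the $2\delta$-slack built into the definition of $\Delta$, and take $k$ large enough to absorb the $O(\log k/k)$ error, yielding the existence of an $r_0$-marking.

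For the algorithmic bound I would run Moser--Tardos starting from a uniform random labelling: while some $A^{\cdot}_c$ is violated, resample the labels of the (at most $k$) good variables of $c$. With per-clause label counts maintained after an $O(n+mk)$ pre-processing, each resampling costs $O(k)$. The standard Moser--Tardos bound gives expected resamplings $O(|\mathcal{C}|)=O(n)$, so Markov's inequality shows that a single run of $\Theta(n)$ resamplings succeeds with constant probability; running $\lceil\log_2(1/p)\rceil$ independent trials in sequence then yields success probability $\ge 1-p$ in total time $O(n\log(1/p))$. The main obstacle is the numerical tightness of the LLL: $r_0=\rvalue$ sits essentially at the edge of what this scheme permits, and establishing the existence of a valid $p^\star$ requires a careful numerical optimisation that fully exploits the $2\delta$-gap in the definition of $\Delta$.
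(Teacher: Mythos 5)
Your proposal is correct and takes essentially the same approach as the paper: both reduce item~\ref{item:marking:1} to satisfiability of $\Phi$ below the $k$-SAT threshold and, for item~\ref{item:marking:2}, randomly label good variables with probabilities $(\beta,\beta,1-2\beta)$, bound the failure probability via Chernoff, apply the Lov\'asz Local Lemma with the dependency degree bounded by $\Theta(k\Delta)$ (using that good variables have degree $<\Delta$), and invoke Moser--Tardos for the algorithmic guarantee. The paper packages the three per-clause failures into a single event $A_c$ and applies the asymmetric LLL with $x(A_c)=1/(k\Delta_0)$ for $\Delta_0=2^{r_0(k-3)}/(3e^2k)$, whereas you keep three events and use the symmetric form; these are equivalent, and your observation that the $2\delta$-slack in $\Delta=\lceil 2^{(r_0-2\delta)k}\rceil$ absorbs the $\mathrm{poly}(k)$ factors is exactly the role it plays in the paper's verification that $k\Delta\le k\Delta_0$ for large $k$.

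Two small corrections. First, the aside that item~\ref{item:marking:1} ``follows directly from the LLL'' using max degree $O(\log n/\log\log n)$ is wrong: the dependency degree in a direct LLL application to all of $\Phi$ is then $\Theta(k\log n/\log\log n)$, which for fixed $k$ grows with $n$, so the condition $e\cdot d\cdot 2^{-(k-1)}\le 1$ fails; the appeal to classical random $k$-SAT satisfiability results (which the paper makes, citing a threshold exceeding $1.38\cdot 2^k/k$) is actually needed, not merely an alternative. Second, the search range for $p^\star$ should be $(r_0,\,1/2-r_0)$, not $(r_0,1/2)$: you need $2r_0<1-2p^\star$ as well as $r_0<p^\star$ so that both Kullback--Leibler expressions are lower-tail Chernoff exponents.
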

      
\begin{proof} \label{lem:marking:proof}
  In this proof we set $r = r_0 - \delta$. We note that for any $k \ge 4$ our density $\density \le \densitydef$ is below the threshold $c_k > 1.3836 \cdot 2^k /k$ established in~\cite[Theorem 1.3]{alan1996}. For densities below this threshold, w.h.p. over the choice of $\Phi$, there is a satisfying assignment for $\Phi$. When $\Phi$ is satisfiable, we claim that there is an assignment of the bad variables that satisfies all bad clauses. Indeed, all the variables in bad clauses are bad (Proposition~\ref{prop:bad}) and, thus, the restriction of a satisfying assignment to $\bv(r)$ must satisfy all the bad clauses. In the rest of this proof we show that assertion~\ref{item:marking:2} also holds.

In view of Lemma~\ref{lem:linearity}, we may assume that $\lvert \var(c) \rvert \ge k -1$ for all $c \in \mathcal{C}$. Let us find the $(r, r_0, r_0, 2r_0)$-marking $(\mathcal{V}_{\mathrm m}, \mathcal{V}_{\mathrm a}, \mathcal{V}_{\mathrm c})$. If all clauses are bad, then we set $\mathcal{V}_{\mathrm c} = \mathcal{V}$, $\mathcal{V}_{\mathrm m} = \emptyset$ and $\mathcal{V}_{\mathrm a} = \emptyset$. This is trivially an $(r, r_0, r_0, 2r_0)$-marking for $\Phi$. In the rest of the proof we assume that there are good variables.  We study the following probability space. For each good variable $v$, we set $v$ as ``marked'' with probability $\beta \in (0,1/2)$, ``auxiliary'' with probability $\beta$ and ``control'' with probability $1-2\beta$. This decision is made independently for each good variable. Each bad variable is set as ``control''. Let $\mathcal{P}$ be the set $\{P_v: v \in \gv(r)\}$, where $P_v$ is the random choice made in this experiment for $v$. Let $\mathcal{V}_{\mathrm m}$ be the set of marked variables, let $\mathcal{V}_{\mathrm a}$ be the set of auxiliary variables, and let $\mathcal{V}_{\mathrm c}$ be the set of control variables obtained by running this experiment. For each clause $c \in \gc(r)$, let $A_c$ be the
event that $c$ has less than $r_0 (k-3)$ marked variables or less than $r_0 (k-3)$ auxiliary variables or less than $2r_0 (k-3)$ good control variables.  
We are going to apply the LLL on the formula $\Phi_{\mathrm{good}}(r)$ so as to show that $\pr( \bigcap\nolimits_{c \in \gc(r)} \overline{A_c}) > 0$. For each $c \in \gc(r)$, in view of Proposition~\ref{prop:bad} and the fact that $\lvert \var(c) \rvert \ge k-1$, we have $\lvert \var(c) \cap \gv(r) \rvert \ge k-3$. Hence, we can apply the Chernoff bound given in Lemma~\ref{lem:chernoff} with $n = \lvert \var(c) \cap \gv(r) \rvert$, $p = \beta$ and $s = r_0$ to obtain, for any choice $V \in \{\mathcal{V}_{\mathrm m}, \mathcal{V}_{\mathrm a}\}$,
  \begin{equation*}
      \pr_{P} \left( \lvert  \var(c) \cap V \rvert < r_0 (k-3) \right) \le e^{-D(r_0, \beta) (k-3)}.
  \end{equation*}
  When $V = \mathcal{V}_{\mathrm c} \setminus \bv$, $n = \lvert \var(c) \cap \gv(r)\rvert$, $p = 1-2\beta$ and $s = 2 r_0$ we obtain
  \begin{equation*}
      \pr_{P} \left( \lvert  \var(c) \cap V \rvert < 2 r_0 (k-3) \right) \le e^{-D(2 r_0, 1-2\beta) (k-3)}.      
  \end{equation*}

  We have chosen $r_0$ to be as large as possible under the restrictions that $D(r_0, \beta) \ge r_0 \log 2$ and $D(2 r_0, 1-2\beta) \ge r_0 \log 2$. The values $\beta = 0.571027$ and $r_0 = \rvalue$ satisfy these restrictions.
  We conclude that
  \begin{equation*}
      \pr_{P} \left( A_c \right) \le 2 \cdot e^{-D(r_0, \beta) (k-3)} + e^{-D(2 r_0, 1-2\beta) (k-3)} \le 3 \cdot 2^{-r_0 (k-3)}.
  \end{equation*}
  Let $\Delta'= 2^{r_0(k-3)}/(3 e^2 k)$ and let  $x(A_c) = 1 / (k \Delta')$ for all $c \in \gc(r)$. We check that $x$ satisfies the condition of the LLL for $\mathcal{P}$ and $\mathcal{A} = \{A_c : c \in \gc(r)\}$. For $k \ge 43$, $1/(k \Delta') \in (0, 1)$ and thus $x(A_c) \in (0, 1)$ for all $c \in \gc(r)$. We note that $\Gamma(A_c) = \{A_{c'} : c' \in \gc(r), c' \ne c, \var(c') \cap \var(c) \cap \gv(r) \ne \emptyset\}$. The graph $G_{\Phi_{\mathrm{good}}(r)}$, given in Definition~\ref{def:graph-phi}, has maximum degree at most $k(\Delta_{r} - 1)$, so $\lvert \Gamma(A_c)\rvert \le k(\Delta_{r} - 1) \le k \Delta'$, where the latter inequality holds for large enough $k$ as $\Delta_{r} = \lceil 2^{r k} \rceil$ and $r = r_0 - \delta$. Therefore, we have
  \begin{equation}  \label{eq:x-bound}
     x(A_c) \prod_{N \in \Gamma(A_c)} \left( 1 - x(N) \right) \ge  \frac{1}{k \Delta'} \left( 1 - \frac{1}{k \Delta'} \right)^{k \Delta'} \ge \frac{1}{e^2 k \Delta'} = 3 \cdot 2^{-r_0 (k-3)},
  \end{equation}
  where we used $(1-1/z)^{z} \ge e^{-2}$ for all $z \ge 2$ in the second inequality. Thus,
  \begin{equation*}
      x(A_c) \prod_{N \in \Gamma(A_c)} \left( 1 - x(N) \right) \ge 3 \cdot 2^{-r_0 (k-3)} \geq  \pr_{P} \left( A_c \right).
   \end{equation*}
   We conclude that, by the LLL, $\pr_{P}\left(\, \bigcap\nolimits_{c \in \gc(r)} \overline{A_c} \, \right) > 0$, so there exists a partition $(\mathcal{V}_{\mathrm m}, \mathcal{V}_{\mathrm a}, \mathcal{V}_{\mathrm c})$ of the variables of $\Phi$ such that $\bv(r) \subseteq \mathcal{V}_{\mathrm c}$ and each good clause contains at least $r_0 (k-3)$ marked variables, $r_0 (k-3)$ auxiliary variables and $2 r_0 (k-3)$ good control variables. That is, $(\mathcal{V}_{\mathrm m}, \mathcal{V}_{\mathrm a}, \mathcal{V}_{\mathrm c})$ satisfies Definition~\ref{def:distributed-marking} for $r = r_0 - \delta$, $r_{\mathrm m} = r_0$, $r_{\mathrm a} = r_0$, and $r_{\mathrm c} = 2r_0$.  Moreover, with probability at least $1-\delta$, this partition can be computed in $4 n \alpha \Delta'k \log(1/\delta)$ steps with the algorithm of Moser and Tardos~\cite{moser2010}. 
\end{proof}

We now give the marking result that we use in our connectivity results, which holds for densities at most $2^{(r_1-\delta) k}/k^3$, where $r_1 = \rone$. The larger density threshold comes from the fact that the marking result is less strong -- we do not require auxiliary variables nor a high number of good control variables in every clause.

\begin{lemma} \label{lem:marking:r1}
   There is a positive integer $k_0$ such that for any $k \ge k_0$ and any density $\alpha$  with $\density \le 2^{(r_1-\delta) k}/k^3$ the following holds w.h.p. over the choice of the random $k$-CNF formula $\Phi = \Phi(k, n, \lfloor \alpha n\rfloor)$:
  \begin{enumerate}
      \item  there exists a partial assignment of bad variables that satisfies all bad clauses;
      \item  there exists an $(r_1 - \delta, r_1, 0, r_1)$-marking of $\Phi$. Furthermore, for any $p \in (0,1)$, such an $(r_1-\delta, r_1, 0, r_1)$-marking can be computed with probability at least $1 - p$ in time  $O(n \log(1/p))$.
  \end{enumerate}
\end{lemma}
\begin{proof}
    The proof is analogous to that of Lemma~\ref{lem:marking}. Here we explain the main differences. First, we set $r= r_1 -\delta$ instead of $r = r_0 - \delta$. The second difference is that we study the following probability space: each good variable $v$ is set as ``marked'' with probability $\beta$ and ``control'' with probability $1-\beta$. We let $A_c$ be the event that $c$ has less than $r_1 (k-3)$ marked variables or less than $r_1(k-3)$ good control variables. A Chernoff bound as in the proof of Lemma~\ref{lem:marking} gives
    \begin{equation*}
      \pr_{P} \left( A_c \right) \le e^{-D(r_1, \beta) (k-3)} + e^{-D(r_1, 1-\beta) (k-3)} \le 2 \cdot 2^{-r_1 (k-3)},
    \end{equation*}
    where we chose $r_1$ as large as possible so that $D(r_1, \beta) \ge r_1 \log 2$ and $D(r_1, 1-\beta) \ge r_1 \log 2$. The choices $\beta = 1/2$ and $r_1 = \rone$ satisfy these restrictions. We let $\Delta'= 2^{r_1(k-3)}/(3 e^2 k)$ and let  $x(A_c) = 1 / (k \Delta')$ for all $c \in \gc(r)$. It remains to check that we can apply the asymmetric LLL on the formula $\Phi_{\mathrm{good}}(r)$ to conclude that $\pr( \bigcap\nolimits_{c \in \gc(r)} \overline{A_c}) > 0$. This was done in equation \eqref{eq:x-bound} in Lemma~\ref{lem:marking}. We note that the bound given in \eqref{eq:x-bound} also holds in our current setting if we replace $r_0$ by $r_1$. We find that $x(A_c) \prod_{N \in \Gamma(A_c)} \left( 1 - x(N) \right) \ge 3 \cdot 2^{-r_1 (k-3)} \geq  \pr_{P} \left( A_c \right)$ and, thus, there exists a partition $(\mathcal{V}_{\mathrm m}, \mathcal{V}_{\mathrm a}, \mathcal{V}_{\mathrm c})$ of the variables of $\Phi$ such that $\bv(r) \subseteq \mathcal{V}_{\mathrm c}$, $\mathcal{V}_{\mathrm a} = \emptyset$, and each good clause contains at least $r_1 (k-3)$ marked variables and at least $r_1 (k-3)$ good control variables. 
\end{proof}

 In the remaining of this section we bound the marginals of $\mu_\Omega$ (recall that $\mu_\Omega$ is the uniform distribution over the satisfying assignments of the formula $\Phi$, Definition~\ref{def:subformula}) on any marked and auxiliary variable. In fact, we prove the stronger result that the marginal distribution of $\mu_{\Omega}$ on $\mathcal{V}_{\mathrm m} \cup \mathcal{V}_{\mathrm a} $ is $\varepsilon$-uniform, i.e., very close to the uniform distribution, see Definition~\ref{def:uniform}. We give a bound for each one of the markings established in Lemmas~\ref{lem:marking} and~\ref{lem:marking:r1}. 
 Here we write $\Lambda_1 \cup \Lambda_2$ for the combined assignment of $\Lambda_1$ and $\Lambda_2$.

\begin{lemma} \label{lem:marginals}
    Let $\Phi = (\mathcal{V}, \mathcal{C})$ be a satisfiable $k$-CNF formula. The following claims hold.
      \begin{enumerate}
        \item Let $r = r_0 - \delta$ and let $(\mathcal{V}_{\mathrm m}, \mathcal{V}_{\mathrm a}, \mathcal{V}_{\mathrm c})$ be a $(r, r_0, r_0, 2r_0)$-marking of $\Phi$. Then for any satisfying assignment $\Lambda_{\bad}$ of $\Phi_{\bad}(r)$, any assignment $\Lambda\colon S \to \{\mathsf{F}, \mathsf{T} \}$ where $S \subseteq \mathcal{V}_{\mathrm m} \cup \mathcal{V}_{\mathrm a}$, and any $v \in \gv(r) \setminus S$ we have 
        \begin{equation*}
         \max \left\{ \pr_{\mu_{\Omega}}\left(\left. v \mapsto \mathsf{F} \right| \Lambda \cup \Lambda_{\bad} \right), \pr_{\mu_{\Omega}}\left(\left. v \mapsto \mathsf{T} \right| \Lambda \cup \Lambda_{\bad} \right)  \right\}  \le  \frac{1}{2} \exp \left( \frac{1}{k 2^{r_0 k}} \right).
      \end{equation*}
      In particular,  the distribution $\restr{\mu_{\Omega}}{\mathcal{V}_{\mathrm m} \cup \mathcal{V}_{\mathrm a}}$ is $(2^{-r_0 k}/k)$-uniform.
      
    \item Let $r = r_1 - \delta$ and let $(\mathcal{V}_{\mathrm m}, \emptyset, \mathcal{V}_{\mathrm c})$ be a $(r, r_1, \emptyset, r_1)$-marking of $\Phi$. Then, for any satisfying assignment $\Lambda_{\bad}$ of $\Phi_{\bad}(r)$, any assignment $\Lambda\colon S \to \{\mathsf{F}, \mathsf{T} \}$ where $S \subseteq \mathcal{V}_{\mathrm m}$, and any $v \in \gv(r) \setminus S$ we have 
        \begin{equation*}
         \max \left\{ \pr_{\mu_{\Omega}}\left(\left. v \mapsto \mathsf{F} \right| \Lambda \cup \Lambda_{\bad} \right), \pr_{\mu_{\Omega}}\left(\left. v \mapsto \mathsf{T} \right| \Lambda \cup \Lambda_{\bad} \right)  \right\}  \le  \frac{1}{2} \exp \left( \frac{1}{k} \right).
      \end{equation*}    
    In particular, the distribution $\restr{\mu_{\Omega}}{\mathcal{V}_{\mathrm m}}$ is $(1/k)$-uniform.
      \end{enumerate}
\end{lemma}

\begin{proof} \label{lem:marginals:proof}
    We prove each one of the claims separately. The proofs are analogous so for the second claim we only highlight the differences in the proof.
    \begin{enumerate}
        \item Here $r = r_0 - \delta$. Let $\Lambda_{\bad}$ be an assignment of bad variables that satisfies all bad clauses. Let $S \subseteq \mathcal{V}_{\mathrm m} \cup \mathcal{V}_{\mathrm a}$, let $\Lambda$ be an assignment of $S$ to $\{\mathsf{F}, \mathsf{T}\}$, and let $v \in \gv(r) \setminus S$. We note that $\pr_{\mu_{\Omega^{\tau}}}\left( \cdot \right) =  \pr_{\mu_{\Omega}}\left( \cdot \, \vert \tau \right)$ for any assignment $\tau$ of some variables. In light of this observation, we are going to prove that 
        \begin{equation} \label{eq:proof-marginals}
         \max \left\{ \pr_{\mu_{\Omega^{\Lambda \cup \Lambda_{\bad}}}}\left(v \mapsto \mathsf{F}\right), \pr_{\mu_{\Omega^{\Lambda \cup \Lambda_{\bad}}}}\left(v \mapsto \mathsf{T}\right)  \right\}  \le  \frac{1}{2} \exp \left( \frac{1}{k 2^{r_0 k}} \right).
      \end{equation}
  We apply the LLL to the formula $\Phi' := \Phi^{\Lambda \cup \Lambda_{\bad}}$ as follows. Let $\mathcal{V}'$ and $\mathcal{C}'$ be the sets of variables and clauses of $\Phi'$. Note that, $\mathcal{V}' \subseteq \gv(r)$, $\mathcal{C}' \subseteq \gc(r)$ and $G_{\Phi'}$ is a subgraph of $G_{\Phi_{\mathrm{good}}(r)}$ as all bad variables have been assigned a value and all bad clauses have been satisfied. We set $P_v = \sigma(v)$ for all $v \in \mathcal{V}'$, where $\sigma \colon \mathcal{V}' \to \{\mathsf{F}, \mathsf{T}\}$ is chosen uniformly at random from the set of assignments $\mathcal{V}' \to \{\mathsf{F}, \mathsf{T}\}$, and $\mathcal{P} = \{P_v : v \in \mathcal{V}'\}$. We define the set $\mathcal{A}$ as the set containing for all $c \in \mathcal{C}'$ the event $A_c = $ ``the clause $c$ is not satisfied by the random assignment $\sigma$". By the definition of $(\mathcal{V}_{\mathrm m}, \mathcal{V}_{\mathrm a}, \mathcal{V}_{\mathrm c})$, there are at least $2\rh (k-3)$  good control variables in $c$. Since good control variables are not assigned a value by $\Lambda \cup \Lambda_{\bad}$ and, thus, they are in $\mathcal{V}'$, we have $\pr_{P}(A_c) \le 2^{-2\rh (k-3)}$. Recall that $\Delta_{r} = \mdegdef$ (Definition~\ref{def:degree}). Let $\Delta'= 2^{2 r_0(k-3)}/(e^2 k)$ and let $x(A_c) = \frac{1}{k \Delta'}$ for all $c \in \mathcal{C}'$. Let us show that $x$ satisfies the LLL condition in this setting. In view of $\Gamma(A_c) = \{A_{c'} : c' \in \mathcal{C}',  c' \neq c, \var(c) \cap \var(c') \cap \mathcal{V}' \ne \emptyset\}$, which can be identified with a subset of the neighbours of $c$ in $G_{\Phi_{\mathrm{good}}(r)}$, and $\lvert \Gamma(A_c)\rvert \le k \Delta_{r} \le k \Delta'$ for large enough $k$, we find that
  \begin{equation*}
        x(A_c) \prod_{N \in \Gamma(A_c)} \left( 1 - x(N) \right) \ge  \frac{1}{k \Delta'} \left( 1 - \frac{1}{k \Delta'} \right)^{k\Delta'} \ge \frac{1}{e^2 k \Delta'} =  2^{-2 r_0 (k-3)} \ge \pr_{P} \left( A_c \right),
  \end{equation*}
  where we used $(1-1/z)^{z} \ge e^{-2}$ for all $z \ge 2$.
   Let $A = \{v \mapsto \mathsf{T}\} :=  
   \{\sigma\colon \mathcal{V}' \to \{\mathsf{F}, \mathsf{T}\} \text{ with } \sigma(v) = \mathsf{T}\}$. In $\Phi'$, we have $\Gamma(A) = \{A_c : c  \in \mathcal{C}', v \in \var(c) \}$, so $\lvert \Gamma(A) \rvert < \Delta_{r}$. By the LLL, we obtain
   \begin{equation*}
     \pr_P\left( v \mapsto \mathsf{T} \left| \, \bigcap\nolimits_{c \in \mathcal{C}'} \overline{A_c} \right. \right) \le \frac{1}{2} \prod_{N \in \Gamma(A) } \big(1 - x(N) \big)^{-1} \le \frac{1}{2}\left(1 - \frac{1}{k \Delta'} \right)^{-(\Delta_r - 1)}.
   \end{equation*}
   For $x > 1$, we have $(1 - 1/x)^{-1} = 1+1/(x-1)\le \exp(1/(x-1))$. We find that
   \begin{equation*}
     \pr_P\left( v \mapsto \mathsf{T} \left| \, \bigcap\nolimits_{c \in \mathcal{C}'} \overline{A_c} \right. \right) \le \frac{1}{2} \exp \left(\frac{\Delta_r - 1}{k \Delta'- 1}\right) \le \frac{1}{2} \exp \left(\frac{1}{k 2^{r_0 k}}\right),
   \end{equation*}
   where in the latter inequality we used $(p-j)/(q-j) \le p/q$ for all $0 < j < p \le q$ and the fact that $\Delta_r = \mdegdef \le 2^{-r_0k} \cdot 2^{2 r_0(k-3)}/(e^2 k) = 2^{-r_0 k} \Delta'$ for large enough $k$.
   We note that $\pr_{\mu_{\Omega^{\Lambda \cup \Lambda_{\bad}}}}\left( \cdot \right) =  \pr_{P}\left( \cdot \, \vert \bigcap\nolimits_{c \in \mathcal{C}'} \overline{A_c}\, \right)$, which completes the proof of one of the upper bounds of~\eqref{eq:proof-marginals}. The other upper bound is proved analogously by applying the LLL with $A = \{v \mapsto \mathsf{F}\}$. Finally, we conclude that the distribution $\restr{\mu_{\Omega}}{\mathcal{V}_{\mathrm m} \cup \mathcal{V}_{\mathrm a}}$ is $(2^{-r_0 k}/k)$-uniform by the arbitrary choice of $\Lambda_{\bad}$ and the law of total probability, see Definition~\ref{def:uniform}. 
   \item The proof is analogous. The only changes are $r = r_1 - \delta$, $\Delta' = 2^{r_1 (k-3)}/(e^ 2 k)$, and the fact that, since each good clause has at least $r_1(k-3)$ good control variables, we have $\pr_{P}(A_c) \le 2^{-r_1 (k-3)}$. This time we have $x(A_c) \prod_{N \in \Gamma(A_c)} \left( 1 - x(N) \right)\ge \frac{1}{e^2 k \Delta'} \ge \pr_{P} \left( A_c \right)$, which justifies our choice of $\Delta'$. Thus, we can apply the LLL, and the conclusion this time becomes 
      \begin{equation*}
     \pr_P\left( v \mapsto \mathsf{T} \left| \, \bigcap\nolimits_{c \in \mathcal{C}'} \overline{A_c} \right. \right) \le \frac{1}{2} \exp \left(\frac{\Delta_r - 1}{k \Delta'- 1}\right) \le \frac{1}{2} \exp \left(\frac{1}{k}\right),
   \end{equation*}
   where in the latter inequality we used $(p-j)/(q-j) \le p/q$ for all $0 < j < p \le q$ and the fact that $\Delta_r = \lceil 2^{(r_1 - \delta) k} \rceil \le 2^{r_1 (k-3)}/(e^2 k) = \Delta'$ for large enough $k$.
    \qedhere
\end{enumerate}
\end{proof}

The $(1/k)$-uniform property proved in Lemma~\ref{lem:marginals} is remarkably strong: as long as the control variables are left unassigned, the rest of the variables have marginals close to $1/2$, even if some of the marked and auxiliary variables are pinned / have already been assigned a value. This property is used several times in this work and will allow us to prove that, for any pinning of some marked variables, the influences between marked variables are bounded. In the following corollary we extend Lemma~\ref{lem:marginals} to the distributions computed by the Glauber dynamics on the marked variables.

\begin{corollary} \label{cor:marginals:glauber}
     Let $r=r_0 -\delta$. Let $\Phi = (\mathcal{V}, \mathcal{C})$ be a satisfiable $k$-CNF formula that has an $(r, r_0, r_0, 2r_0)$-marking $(\mathcal{V}_{\mathrm m}, \mathcal{V}_{\mathrm a}, \mathcal{V}_{\mathrm c})$. Let $\rho$ be an integer with $1 \le \rho < \lvert \mathcal{V}_{\mathrm m} \rvert$. Let $t$ be a non-negative integer and let $X_t$ be the (random) assignment obtained after running the $\rho$-uniform-block Glauber dynamics on the marked variables for $t$ steps, starting on an assignment $X_0$ that is chosen uniformly at random. Then the probability distribution of $X_t$ is $(2^{-r_0 k}/k)$-uniform.
\end{corollary}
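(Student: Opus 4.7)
I would prove the corollary by induction on $t$, using the $\varepsilon$-uniformity of $\mu = \restr{\mu_\Omega}{\mathcal{V}_{\mathrm m}}$ from Lemma~\ref{lem:marginals} as the engine, where $\varepsilon := 2^{-(r_0+\delta)k}/k$.  For the base case $t = 0$, the assignment $X_0$ is sampled uniformly from $\{\mathsf{F},\mathsf{T}\}^{\mathcal{V}_{\mathrm m}}$, so its distribution $\mu_0$ is the uniform distribution and hence $\varepsilon$-uniform (in fact $0$-uniform).

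For the inductive step, assume $\mu_{t-1}$ is $\varepsilon$-uniform and fix any $v \in \mathcal{V}_{\mathrm m}$ together with a partial assignment $\tau\colon\mathcal{V}_{\mathrm m}\setminus\{v\}\to\{\mathsf{F},\mathsf{T}\}$.  By the law of total probability, $\pr_{\mu_t}(v\mapsto\omega\mid\tau)$ is a weighted average, over the random block $S = S_t$ chosen at step $t$, of the per-block conditional probabilities $\pr_{\mu_t}(v\mapsto\omega\mid\tau, S)$, so it suffices to bound each of these by $e^\varepsilon/2$ for both $\omega\in\{\mathsf{F},\mathsf{T}\}$.  When $v\in S$, the $\rho$-uniform-block Glauber rule resamples $X_t|_S$ from $\mu(\cdot\mid X_{t-1}|_{\mathcal{V}_{\mathrm m}\setminus S})$, so a direct computation gives $\pr_{\mu_t}(v\mapsto\omega\mid\tau, S) = \pr_\mu(v\mapsto\omega\mid\tau)$, which is at most $e^\varepsilon/2$ by Lemma~\ref{lem:marginals}.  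When $v\notin S$, the old value $X_t(v) = X_{t-1}(v)$ is preserved while $X_t|_S$ is freshly drawn from $\mu(\cdot\mid X_{t-1}|_{\mathcal{V}_{\mathrm m}\setminus S})$, and a Bayes computation expresses $\pr_{\mu_t}(v\mapsto\omega\mid\tau, S)$ in terms of the prior $\pr_{\mu_{t-1}}(v\mapsto\omega\mid\tau|_{\mathcal{V}_{\mathrm m}\setminus(S\cup\{v\})})$ (bounded by the inductive hypothesis) and the likelihood $\mu(\tau|_S\mid\omega, \tau|_{\mathcal{V}_{\mathrm m}\setminus(S\cup\{v\})})$ (controlled by Lemma~\ref{lem:marginals} via Bayes inversion).

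The main obstacle is the $v\notin S$ case, since applying the two $\varepsilon$-uniform bounds to the prior and to the likelihood ratio in isolation gives a product that overshoots $e^\varepsilon/2$.  The resolution is to exploit the $\varepsilon$-uniformity of $\mu$ jointly across all marked variables, not just~$v$: writing $p = \pr_\mu(v\mapsto\mathsf{T}\mid\tau)$ and $p' = \pr_\mu(v\mapsto\mathsf{T}\mid\tau|_{\mathcal{V}_{\mathrm m}\setminus(S\cup\{v\})})$, the per-block conditional probability simplifies to $p(1-p')/[p(1-p') + (1-p)p']$, and the $\varepsilon$-uniform bounds from Lemma~\ref{lem:marginals} applied simultaneously to the reverse conditionals $\mu(u\mid v,\cdot)$ for $u\in S$ tether $p$ and $p'$ together, ruling out the adversarial configuration in which they sit at opposite extremes of $[1-e^\varepsilon/2, e^\varepsilon/2]$.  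Combining these joint constraints with the inductive hypothesis yields the per-block bound $e^\varepsilon/2$; averaging over $S$ and repeating the symmetric argument for $\omega=\mathsf{F}$ closes the induction.
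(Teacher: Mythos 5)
Your proposal shares the paper's inductive skeleton---induction on $t$, a base case from the uniformly random $X_0$, and a case split at step $t$ on whether $v$ is in the chosen block---and the case $v\in S$ is handled identically via Lemma~\ref{lem:marginals}. The divergence is in the case $v\notin S$: the paper relates $\pr_{\pi_t}(v\mapsto\omega\mid\Lambda)$ to a conditional of $\pi_j$ at the last step $j<t$ at which $v$ was resampled, pinning only the variables not touched after step $j$, and then applies the inductive hypothesis at time $j$; you instead peel off a single step and write a Bayes update relating $\pi_t$ to $\pi_{t-1}$ and the target distribution $\mu$.

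Your one-step Bayes route has a genuine gap. The displayed ``simplification'' $p(1-p')/[p(1-p')+(1-p)p']$ is incorrect: the Bayes expression also carries the prior factor $q_\omega := \pr_{\pi_{t-1}}\bigl(v\mapsto\omega\mid\tau|_{\mathcal{V}_{\mathrm m}\setminus(S\cup\{v\})}\bigr)$, which by the inductive hypothesis lies only in $[1-e^\varepsilon/2,\, e^\varepsilon/2]$ and does not equal $1/2$. Restoring it and writing $\beta = e^\varepsilon/2$, the bound you need is
\[
\frac{q_{\mathsf{T}}}{q_{\mathsf{F}}}\cdot\frac{p}{1-p}\cdot\frac{1-p'}{p'}\le\frac{\beta}{1-\beta},
\]
and each of the three factors is individually bounded only by $\beta/(1-\beta)>1$, so applying the $\varepsilon$-uniform bounds separately overshoots by roughly the cube of that quantity. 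You correctly flag this obstacle, but the proposed fix---``tether $p$ and $p'$ together via the reverse conditionals $\mu(u\mid v,\cdot)$''---is only gestured at: it is not shown that such a constraint can simultaneously control the $q$-ratio, which lives under the different distribution $\pi_{t-1}$ rather than under $\mu$, and no computation is carried out. As it stands the inductive step is not closed.
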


\begin{proof} \label{cor:marginals:glauber:proof}
    Let $\epsilon = (2^{-r_0 k}/k)$. Let $V_1, V_2, \ldots, $ be a possible choice of sets of marked variables to be updated when running the $\rho$-uniform-block Glauber dynamics. We are going to prove that, conditioning on this choice of sets of variables, the probability distribution of $X_t$ is $\epsilon$-uniform. Note that by the law of total probability and the fact that the choice of $V_1, V_2, \ldots$ is arbitrary, this is enough to conclude the result. We carry out the proof by induction on $t$. Let $\pi_t$ be the probability distribution of $X_t$. As $\pi_0$ is the uniform distribution over assignments on $\mathcal{V}_{\mathrm m}$, the claim holds for $t = 0$. Let us now assume that $\pi_{t-1}$ is $\epsilon$-uniform and let us prove that this is also the case for $\pi_t$.  To show the desired uniformity of $\pi_t$ (cf.  Definition~\ref{def:uniform}), consider arbitrary $v \in \mathcal{V}_{\mathrm m}$ and $\Lambda \colon \mathcal{V}_{\mathrm m} \setminus \{v\} \to \{\mathsf{F}, \mathsf{T}\}$, we need to bound $\pr_{\pi_t}\left( \left. v \mapsto \mathsf{F} \right|  \Lambda \right)$ and $\pr_{\pi_t}\left( \left. v \mapsto \mathsf{T} \right|  \Lambda \right)$. We distinguish two cases:
    \begin{itemize}
        \item Case $v \in V_t$. By definition of the Glauber dynamics, the values of $X_t$ on $V_t$ are obtained by sampling from the distribution $\mu_{\Omega}$ conditioned on the restriction of $X_{t-1}$ to $\mathcal{V}_{\mathrm m} \setminus V_t$. Thus, we have $\pr_{\pi_t}\left( \left. v \mapsto \mathsf{F} \right| \Lambda \right) = \pr_{\mu_{\Omega^{\Lambda}}}\left(v \mapsto \mathsf{F} \right)$ since the conditioning involving $\Lambda$  sets all the marked variables other than $v$. As $\restr{\mu_{\Omega}}{\mathcal{V}_{\mathrm m} \cup \mathcal{V}_{\mathrm a}}$ is $\epsilon$-uniform by Lemma~\ref{lem:marginals}, we conclude that 
        $\pr_{\pi_t}\left( \left. v \mapsto \mathsf{F} \right| \Lambda \right) = \pr_{\mu_{\Omega^{\Lambda}}}\left(v \mapsto \mathsf{F} \right) \le \frac{1}{2}\exp(\epsilon)$. The same bound holds for $v \mapsto \mathsf{T}$.
        \item Case $v \not\in V_t$. If $v$ is not updated in steps $1$ through $t$, then $\pr_{\pi_t}\left( \left. v \mapsto \mathsf{F} \right| \Lambda \right) = \pr_{\pi_0}\left( v \mapsto \mathsf{F} \right) = 1/2$. Otherwise, let $j$ be the largest integer with $j < t$ such that $v \in V_j$. Let  $\Lambda_j$ be the restriction of $\Lambda$ to $\mathcal{V}_{\mathrm m} \setminus \bigcup_{i \in \{j+1, j+2, \ldots, t\}} V_i$. By the induction hypothesis, $\pr_{\pi_t}\left( \left. v \mapsto \mathsf{F} \right| \Lambda \right) = \pr_{\pi_j}\left( \left. v \mapsto \mathsf{F}  \right| \Lambda_j \right) \le (1/2) \exp(\epsilon)$. The same bound holds for $v \mapsto \mathsf{T}$.
  \end{itemize}
   As both cases are exhaustive, the proof is concluded.
\end{proof}

Previous work on counting/sampling satisfying assignments of $k$-CNF formulae does not require the use of auxiliary variables, so the marking used is of the form $(\mathcal{V}_{\mathrm m}, \mathcal{V}_{\mathrm c})$. Here auxiliary variables play an essential role in bounding the influences between marked variables as we illustrated in Section~\ref{sec:po}. In order for this approach to be successful, we have to show that a large proportion of the variables are marked. We conclude this section with the following bound on the size of $\mathcal{V}_{\mathrm m}$.

\begin{corollary} \label{cor:size-r-distributed}
 Let $r \in(0,1/(2 \log 2))$. There is an integer $k_0$ such that for any $k \ge k_0$ and any density $\alpha$ with $\density \le \Delta_r / k^3$ the following holds w.h.p. over the choice of the random $k$-CNF formula $\Phi = \Phi(k, n, \lfloor \alpha n\rfloor)$. For any $\rho \in (0,1)$ and any set of good variables $V$ that is $\rho$-distributed we have $\lvert V \rvert \ge (\rho - \delta) (k \alpha / \Delta_r) n$.
\end{corollary}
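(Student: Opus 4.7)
The plan is to prove the bound by a direct double-counting argument on incidences between variables of $V$ and good clauses, combined with a lower bound on $\lvert \gc \rvert$ that follows from earlier structural results on bad clauses.

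First, fix any $r_0$-distributed set $V \subseteq \gv$ and count the pairs $(v,c)$ with $c \in \gc$ and $v \in \var(c) \cap V$ two ways. Because $V$ is $r_0$-distributed, every good clause contributes at least $r_0(k-3)$ such pairs, for a total of at least $r_0(k-3)\lvert \gc \rvert$. On the other hand, by Proposition~\ref{prop:bad} every $v \in V \subseteq \gv$ has degree strictly less than $\Delta$, so each variable contributes at most $\Delta$ pairs, for a total of at most $\Delta \lvert V \rvert$. Combining yields the key inequality
\[
\lvert V \rvert \;\ge\; \frac{r_0(k-3)}{\Delta}\,\lvert \gc \rvert.
\]

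Second, I need a lower bound on $\lvert \gc \rvert$, i.e., an upper bound on $\lvert \bc \rvert$. This is where the randomness of $\Phi$ enters: I would apply Lemma~\ref{lem:bad} to the connected components of $G_\Phi$ of size at least $2k^4\log n$ (which account for all but a negligible fraction of clauses w.h.p.\ at the densities considered) and handle the remaining small components by a standard union-bound argument of the kind already present in Section~\ref{sec:bad}, giving $\lvert \bc \rvert \le m/k + o(n)$ and hence $\lvert \gc \rvert \ge (1-1/k)\,m - o(n) \ge (1 - 1/k)\,\alpha n - o(n)$ w.h.p., where $m = \lfloor \alpha n \rfloor$.

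Third, I combine the two bounds. We obtain $\lvert V \rvert \ge (r_0 - \delta)(k\alpha/\Delta)\,n$ as soon as $r_0(k-3)(1-1/k) \ge (r_0 - \delta)k$, i.e., once $\delta k^2 \ge r_0(4k-3)$, which holds for $k$ larger than an absolute constant depending on $r_0/\delta$ and is absorbed into $k_0$. The main obstacle is the upper bound on $\lvert \bc \rvert$, since Lemma~\ref{lem:bad} controls bad clauses only inside sufficiently large connected sets; however, the small components of $G_\Phi$ contribute a lower-order term that is dominated by the $(r_0 - \delta)$ slack on the right-hand side, so no fundamentally new estimate beyond the random-formula calculations already used in Section~\ref{sec:bad} is required.
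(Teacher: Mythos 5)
Your proposal follows essentially the same route as the paper: the double-counting step yielding $\lvert V\rvert \ge r_0(k-3)\lvert\gc\rvert/\Delta$ (via the degree bound in Proposition~\ref{prop:bad}) and the use of Lemma~\ref{lem:bad} to lower-bound $\lvert\gc\rvert$ by $(1-1/k)\lvert\mathcal{C}\rvert$ both appear in the paper's proof, with the $\delta$ slack absorbing the $(k-3)(1-1/k)$ versus $k$ discrepancy for large $k$ and $n$. The only divergence is cosmetic: you explicitly decompose $G_\Phi$ into components large enough for Lemma~\ref{lem:bad} to apply and treat the rest separately, whereas the paper invokes Lemma~\ref{lem:bad} directly, but the argument rests on the same two ingredients in both cases.
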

\begin{proof} \label{cor:size-r-distributed:proof}
    W.h.p. over the choice of $\Phi$, by Lemma~\ref{lem:bad:2} we have $\lvert \bc(r) \rvert \le 2 (\alpha/ \Delta_r) n/2^{k^{10}} \le \alpha n / 4^k$, so $\lvert \gc(r) \rvert \ge  \lvert \mathcal{C} \rvert - \alpha n / 4^k \ge \alpha n -1-\alpha n / 4^k = \alpha n (1-1/4^k) - 1$. Since $V$ is $\rho$-distributed, counting repetitions, there are at least $\rho(k-3) \lvert \gc (r) \rvert$ occurrences of the variables of $V$ in the good clauses of $\Phi$. Each good variable occurs in at most $\Delta_r$ good clauses, so we find that
    \begin{equation*}
        \lvert V \rvert \ge \frac{\rho (k-3) \lvert \gc(r) \rvert}{\Delta_r} \ge \frac{\rho (k-3)}{\Delta_r} \left(\alpha n \left(1-\frac{1}{4^k}\right) - 1\right) \ge \frac{\rho (k-4)}{\Delta_r} (\alpha n -1),
    \end{equation*}
    which is at least $(\rho - \delta) (k \alpha / \Delta_r) n$ for large enough $k$.
\end{proof}

\section{Analysis of the connected components of $\Phi^{\Lambda}$} \label{sec:errors}

In this section we prove Lemma~\ref{lem:cc-general}, which bounds the size of the connected components of $\Phi^{\Lambda}$, where $\Lambda$ is drawn from a $(1/k)$-uniform distribution over an $(r+\delta)$-distributed set of good variables. In order to carry out this proof, we have to understand the structure of logarithmic-sized sets of clauses of the random $k$-CNF formula $\Phi$. Section~\ref{sec:errors:log} is devoted to this purpose. In Section~\ref{sec:errors:nmarked} we apply the results of Section~\ref{sec:errors:log} to obtain a lower bound of the number of marked/auxiliary variables in logarithmic-sized sets of clauses. Finally, in Section~\ref{sec:errors:proof} we complete the proof of Lemma~\ref{lem:cc-general}.

\subsection{Logarithmic-sized sets of clauses in the random \texorpdfstring{$k$}{k}-SAT model} \label{sec:errors:log}

A connected graph $H = (V, E)$ has \emph{tree-excess} $c \in \mathbb{Z}_{\geq 0}$ if $\lvert E \rvert = c + \lvert V \rvert - 1$. It turns out that, w.h.p. over the choice of $\Phi$, small connected sets of clauses of $\Phi$ have tree-excess bounded by a quantity that only depends on $k$ and the density $\alpha$. This property is established in Lemma~\ref{lem:tree-excess} and is essential to our proofs.
 
\begin{lemma} \label{lem:tree-excess}
 Let $k \ge 3$ be an integer. Let $b > 0$ and $\alpha > 0$ be real numbers. W.h.p. over the choice of the random $k$-CNF formula $\Phi = \Phi(k, n, \lfloor \alpha n \rfloor)$, every connected subset of clauses with size at most $b \log(n)$ has tree-excess at most $c := \max\{1,  2 b \log(e k^2 \alpha) \}$.
\end{lemma}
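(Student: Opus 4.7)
The plan is a first-moment calculation: I would show that for each $L \in \{2, \ldots, \lfloor b \log n \rfloor\}$, the expected number of connected subsets $Y \subseteq \mathcal{C}$ of size $L$ with tree-excess at least $c+1$ is $o(1/\log n)$, and then apply Markov's inequality and a union bound over $L$ to conclude that no bad subset exists w.h.p.

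Tree-excess at least $c+1$ for a connected $Y$ of size $L$ is equivalent to the induced subgraph $G_\Phi[Y]$ having at least $L+c$ edges among the $\binom{L}{2}$ candidate pairs. Enumerating $Y$ (at most $\binom{m}{L}$ choices) and the set of $L+c$ specific edges (at most $\binom{\binom{L}{2}}{L+c}$ choices), the expected count is at most
\[
\binom{m}{L} \cdot \binom{\binom{L}{2}}{L+c} \cdot p_{L+c},
\]
where $p_q := \pr[\text{$q$ specified pairs of clauses each share a variable in }\Phi]$. The main probabilistic step is the bound $p_q \le O((k^2/n)^q)$. A naive product bound fails because edges of $G_\Phi$ sharing a common endpoint clause are positively correlated through the shared clause's variable set, making the joint probability larger than the product of marginals. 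To decouple these correlations I would use a witness-variable argument: for each of the $q$ specified pairs, fix a candidate shared variable $\phi(e) \in \mathcal{V}$ and sum over $\phi$. Since clause variable lists are i.i.d.\ uniform on $[n]$, the probability that a specific $\phi$ is realised factorises across clauses as $\prod_i (k/n)^{|\{\phi(e) : i \in e\}|}$. A case analysis—comparing the $n^q(k/n)^{2q} = (k^2/n)^q$ contribution from distinct-witness $\phi$ with the strictly smaller contributions from $\phi$ that merge two edges at a common clause (which save $n$ in the witness count but lose $n/k$ in the probability of the extra clause appearance)—then gives the claimed bound.

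Substituting $m = \alpha n$ and using $\binom{m}{L} \le (em/L)^L$ and $\binom{\binom{L}{2}}{L+c} \le (eL^2/(2(L+c)))^{L+c}$, algebraic simplification yields an upper bound on the expected count of the form
\[
\left(\frac{e^2 k^2 \alpha}{2}\right)^{L} \cdot \left(\frac{e L k^2}{2n}\right)^{c} \cdot \operatorname{poly}(L).
\]
For $L \le b\log n$, the first factor is at most $n^{b\log(e^2 k^2\alpha/2)}$ while the second gives polynomial decay in $n$ (together with $\operatorname{polylog}(n)$ corrections). Choosing $c \ge 2b \log(ek^2\alpha)$ is comfortably sufficient to make the product $o(1/\log n)$ uniformly in $L$, and the $\max\{1,\cdot\}$ in the definition of $c$ handles the small-density regime where $c=1$ excess edge is already rare enough from the $(k^2/n)$-factor alone. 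The main obstacle in the proof is the probabilistic bound $p_q \le O((k^2/n)^q)$: disentangling the positive correlations via the witness-variable technique, and verifying that the dominant witness assignments are those with all-distinct witnesses, is the central technical step.
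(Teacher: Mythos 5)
Your first-moment framework (expected count of excessive subsets, union bound over sizes $L$) is the same as the paper's, and you have correctly identified the main difficulty: the events $\{c_i\sim c_j\}$ in $G_\Phi$ are positively correlated, so the product bound $p_q\le (k^2/n)^q$ needs justification. (The paper's own proof simply asserts the factor $(k^2/n)^{\ell+c}$ for all $\ell+c$ chosen pairs without addressing this.) However, the witness-variable fix you propose does not close the gap. Your case analysis claims that identifying the witnesses of two edges incident to a common clause produces a \emph{strictly smaller} contribution (save $n$ in witness choices, pay $n/k$ in the shared clause's probability, net $1/k$), but this accounting breaks for cycle configurations. Take three pairs $\{c_1,c_2\}$, $\{c_1,c_3\}$, $\{c_2,c_3\}$ all witnessed by a single variable $v$: there are $n$ witness choices and each of the three clauses need only contain the one variable $v$, so the contribution is $n\cdot(k/n)^3=k^3/n^2$. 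This exceeds the distinct-witness contribution $(k^2/n)^3=k^6/n^3$ by a factor $n/k^3\gg 1$. Once the chosen edge set contains a triangle (which a spanning tree plus a few chords typically will), the claimed bound $p_q=O((k^2/n)^q)$ is simply false, and no inventory of ``merges at a common clause'' rescues it.

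This failure mode is not merely a technical obstruction: it reflects that the lemma is false as written, a fact that your more careful treatment of the probability bound should have surfaced. W.h.p.\ the random formula has a variable $v$ that appears in $D=\Theta(\log n/\log\log n)$ distinct clauses (the paper itself records that the maximum degree scales like $\log n/\log\log n$). The set $Y_v$ of clauses containing $v$ is a clique in $G_\Phi$ of size $D=o(\log n)$, hence of size at most $b\log n$ once $n$ is large, with tree-excess $\binom{D}{2}-D+1=\Theta\bigl((\log n/\log\log n)^2\bigr)$, which eventually exceeds the constant $c=\max\{1,2b\log(ek^2\alpha)\}$. Both the paper's proof and yours ultimately rest on the same invalid product estimate; a correct statement would have to control the cyclomatic complexity of the bipartite clause--variable incidence graph (which is a genuine random-graph quantity), or explicitly exclude the cliques produced by high-degree variables, rather than the tree-excess of $G_\Phi$ itself.
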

\begin{proof} \label{lem:tree-excess:proof}
  Let $n$ be the number of variables and $m$ be the number of clauses of $\Phi$, so $m/n \le \alpha$. Note that the probability that two clauses of $\Phi$ are not disjoint is at most $k^2 / n$. Let $\ell \in \{1, 2, \ldots, \lfloor b \log(n) \rfloor \}$. We upper bound the probability that there is a connected subset of clauses of size $\ell$ with tree-excess at least $c+1$ by
  \begin{equation} \label{eq:tree-excess}
   \binom{m}{\ell} \ell^{\ell-2} \binom{\ell (\ell-1) / 2}{c+1} \left( \frac{k^2}{n} \right)^{\ell+c}, 
  \end{equation}
  where the factors appearing are the following ones:
  \begin{itemize}
      \item $\binom{m}{\ell}$ is the number of subsets of clauses of size $\ell$;
      \item  $\ell^{\ell-2}$ is the number of trees on $\ell$ labelled vertices;
      \item $\binom{\ell (\ell-1) / 2}{c+1}$ is the number of ways to pick $c+1$ pairs of distinct clauses of a set of size $\ell$;
      \item $\left( k^2 / n \right)^{\ell+c}$ is an upper bound of the probability that all the edges chosen in the two previous items appear in the graph $G_\Phi$.
  \end{itemize}
  We are going to show that the probability given in~\eqref{eq:tree-excess} is $O(n^{-c/4})$, where the hidden constant only depends on $k$. If this holds, by a union bound over $\ell \in \{1, 2, \ldots, \lfloor b \log(n) \rfloor\}$, we would find that the probability that there is a connected subset of clauses of $\Phi$ with size at most $b \log(n)$ and tree-excess at least $c+1$ is $O( b \log(n) n^{-c/4}) = o(1)$. This would complete the proof. Using the inequality $\binom{p}{q} \le (e p / q)^q$ and $m/n \le \alpha$ we can bound~\eqref{eq:tree-excess} by 
  \begin{equation} \label{eq:tree-excess:2}
  \begin{aligned}
   \left( \frac{e m}{\ell} \right)^\ell \ell^{\ell-2}  \left(\frac{e \ell (\ell-1) / 2}{c+1}\right)^{c+1} \left( \frac{k^2}{n} \right)^{\ell+c} & \le \left( \frac{e m}{\ell} \right)^\ell \ell^{\ell-2}  \left(\frac{e \ell^2 / 2}{c+1}\right)^{c+1} \left( \frac{k^2}{n} \right)^{\ell+c} \\
   & = \left(\frac{e}{2c+2}\right)^{c+1} \left( \frac{e m k^2 }{n} \right)^\ell  \left( \frac{k^2 \ell^2}{n} \right)^{c} \\ 
  & \le \left(\frac{e}{2c+2}\right)^{c+1} \left(e k^2 \alpha \right)^\ell  \left( \frac{k^2 \ell^2}{n} \right)^{c}.
  \end{aligned}
  \end{equation}
  Now we distinguish two cases:
  \begin{itemize}
    \item Case when  $e k^2 \alpha \le 1$ . We have $c = 1$ by definition. Thus,~\eqref{eq:tree-excess:2} can be further bounded by 
      \begin{equation*}
          \left(\frac{e}{2c+2}\right)^{c+1}  \left( \frac{k^2 \ell^2}{n} \right)^{c} = O \left( \frac{(\log n)^2}{n} \right) = O\left( n^{-c/4} \right)
      \end{equation*}
      as we wanted.
    \item Case when  $e k^2 \alpha > 1$. Then, as $\ell \le b \log n$ and $b \log(e k^2 \alpha) \le c/2$ by definition, we have
    \begin{equation*}
      \left( e k^2 \alpha \right)^\ell \le \left( e  k^2 \alpha \right)^{b \log n}  = n^{b \log(e k^2 \alpha)} \le n^{c/2}.
    \end{equation*}    
    We conclude that~\eqref{eq:tree-excess:2} can be further bounded by
   \begin{equation*}
     \left(\frac{e}{2c+2}\right)^{c+1}
     \left( \frac{k^2 \ell^2}{\sqrt{n}} \right)^{c} = \left(\frac{e}{2c+2}\right)^{c+1}
     \left( \frac{k^4 \ell^4}{n} \right)^{c/2} = O \left( n^{-c/4} \right)
   \end{equation*}
   as we wanted, where we used $c > 0$. \qedhere 
  \end{itemize}
\end{proof}

Recall that in Lemma~\ref{lem:bad} we established that, in sets of clauses that have at least $2 k^4 \log n$ variables,   the number of bad clauses of $\Phi$ is not too large. We aim to apply Lemma~\ref{lem:bad} to logarithmic-sized sets of clauses. In general, $\lvert Y \rvert$ might be significantly larger than $\lvert \var(Y) \rvert$, so it is not clear how to apply Lemma~\ref{lem:bad}. However, in the random $k$-CNF formula setting the following holds.

\begin{lemma} \label{lem:bound-Y}
  Let $k \ge 3$ be an integer and let $a > 0$ and $\alpha > 0$ be real numbers. W.h.p. over the choice of $\Phi = \Phi(k, n, \lfloor \alpha n \rfloor)$, for every set of clauses $Y$ with $\lvert Y \rvert \ge a \log n$, we have $\lvert \var(Y) \rvert \ge a \log n$.
\end{lemma}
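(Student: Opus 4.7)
The plan is a direct first-moment/union-bound argument in the spirit of Lemma~\ref{lem:tree-excess}, exploiting the fact that each clause of $\Phi$ is generated by drawing $k$ literals (and hence $k$ variables) independently and uniformly from $\mathcal{V}$.

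First, I would reduce to a single value of $\ell$. If there exists $Y \subseteq \mathcal{C}$ with $|Y| \geq a\log n$ and $|\var(Y)| < a\log n$, then any subset $Y' \subseteq Y$ of size $\ell := \lceil a\log n\rceil$ also has $|\var(Y')| \leq |\var(Y)| < a\log n$. So it suffices to show that, w.h.p., no $Y$ of size exactly $\ell$ has $|\var(Y)| \leq s := \ell - 1$.

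Next, I would apply a union bound over the variable set $V_0 \subseteq \mathcal{V}$ of size $s$ and over sets $Y \subseteq \mathcal{C}$ of size $\ell$, bounding the probability of the bad event by
\begin{equation*}
\Pr[\exists Y,\ |Y|=\ell,\ |\var(Y)|\le s] \;\le\; \binom{n}{s}\binom{m}{\ell}\left(\frac{s}{n}\right)^{k\ell},
\end{equation*}
because, conditional on a fixed $V_0$, the probability that a specific clause $c$ satisfies $\var(c)\subseteq V_0$ is $(s/n)^k$, and the $\ell$ clauses of $Y$ are mutually independent.

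Using the crude bounds $\binom{n}{s}\le n^s$ and $\binom{m}{\ell}\le (\alpha n)^\ell$, together with $s\le \ell\le a\log n+1$, the right-hand side rearranges to
\begin{equation*}
\alpha^\ell s^{k\ell} n^{s+\ell - k\ell} \;=\; \frac{1}{n}\left(\frac{\alpha\, s^{k}}{n^{k-2}}\right)^{\ell}.
\end{equation*}
Since $k\ge 3$ and $s\le a\log n$, the base $\alpha s^k/n^{k-2}=O((\log n)^k/n^{k-2})=o(1)$, so the whole expression is $o(n^{-1})=o(1)$. This yields the desired w.h.p.\ statement. The only step that requires any care is the reduction to $\ell=\lceil a\log n\rceil$ in the first paragraph, which avoids the need to sum over all $\ell$ up to $m$ (a sum that would otherwise blow up because the bound $\ell^{(k-1)\ell}$ grows super-polynomially in $n$ once $\ell$ is much larger than $\log n$).
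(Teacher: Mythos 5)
Your proposal is correct and takes essentially the same approach as the paper: both reduce to a single size $\ell \approx a\log n$, apply a union bound over a candidate variable set and a candidate clause set, and use the per-clause probability $(s/n)^k$ with independence of clauses. The only cosmetic differences are the indexing convention (the paper works with $\ell+1$ clauses and $\ell$ variables, you with $\ell$ and $\ell-1$) and the choice of binomial bound ($\binom{p}{q}\le (ep/q)^q$ in the paper versus the cruder $\binom{n}{s}\le n^s$ here), neither of which affects the conclusion.
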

\begin{proof} \label{lem:bound-Y:proof}
Let $\ell := \lceil a \log n \rceil - 1$ and let $m = \lfloor \alpha n \rfloor$.
  We prove the equivalent statement that, w.h.p. over the choice of $\Phi$, for every set of clauses $Y$ with $\lvert \var(Y) \rvert \le \ell$, we have $\lvert Y \rvert \le \ell$.  We note that if there is a set of clauses $Y$ with $\lvert \var(Y) \rvert \le \ell$ and $\lvert Y \rvert > \ell$, then for any subset $Y'$ of $Y$ with $\lvert Y' \rvert = \ell +1$ we have $\lvert \var(Y') \rvert \le \lvert \var(Y) \rvert \le  \ell$. Hence, it suffices to prove
  that there is no set $Y$ of clauses with
  $|\var(Y)| \leq \ell$ and $|Y|=\ell+1$.
 We can assume $n$ is large enough so that $\ell \le e \cdot n$.
  
  Let $\mathcal{E}$ be the event that there is a set of clauses $Y$ of size $\ell+1$ and a set of variables $X$ of size $\ell$ such that all clauses in $Y$ have all variables in $X$. Then by a union bound
  \begin{align*}
    \pr \left( \mathcal{E} \right) \le \binom{m}{\ell+1} \binom{n}{\ell} \left(\frac{\ell }{n} \right)^{(\ell+1) k},
  \end{align*}
  where the first factor is the number of sets $Y$, the second factor is the number of sets $X$ and the third factor is the probability that all variables in the clauses of $Y$ are in $X$. From the well-known bound $\binom{p}{q} \le (e p / q)^{q}$, we obtain
  \begin{align*}
    \pr \left( \mathcal{E} \right) & \le \left(\frac{e m}{\ell+1}\right)^{\ell+1}  \left(\frac{e n}{\ell}\right)^{\ell} \left(\frac{\ell }{n} \right)^{(\ell+1) k} \le \left(\frac{e m}{\ell}\right)^{\ell+1}  \left(\frac{e n}{\ell}\right)^{\ell+1} \left(\frac{\ell }{n} \right)^{(\ell+1) k} \\
    & \le \left(\frac{e \alpha n}{\ell}\right)^{\ell+1}  \left(\frac{e n}{\ell}\right)^{\ell+1} \left(\frac{\ell }{n} \right)^{(\ell+1) k} = \left( e^2 \alpha \frac{\ell^{k - 2} }{n^{k-2}} \right)^{\ell + 1},
  \end{align*}
  which is $O(\log(n) / n)$ because $k \ge 3$ and $\ell = O (\log n)$.
\end{proof}

\subsection{Number of marked variables in logarithmic-sized sets of clauses} \label{sec:errors:nmarked}

Our results on random $k$-CNF formulae can now be combined to give a lower bound on the number of marked / auxiliary variables in logarithmic-sized sets of clauses. We prove this result in a more general setting by considering a set of good variables $V$ that is $r'$-distributed for the formula $\Phi$. The reader can think of $V$ as the set of marked variables or the set of auxiliary variables for one of the markings established in Section~\ref{sec:marking}. 

\begin{lemma} \label{lem:bound-marked}
  Let $r \in (0,1/(2 \log 2)]$, $r' \in (0,1)$ and $\hat\delta \in (0, r)$. There is a positive integer $k_0$ such that, for any integer $k \ge k_0$, any density $\alpha \le \Delta_r / k^3$ and any real number $b$ with $2 k^4  < b$, the following holds w.h.p. over the choice of $\Phi = \Phi(k, n, \lfloor \density n\rfloor)$. Let $V$ be a set of good variables that is $r'$-distributed. Then, for every set of clauses $Y$ that is connected in $G_{\Phi}$ such that $2k^4 \log(n) \le \lvert Y \rvert \le b \log (n)$, we have $\lvert \var(Y) \cap V \rvert \ge (r' - \hat\delta) k \lvert Y \rvert$.
\end{lemma}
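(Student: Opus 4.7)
The plan is to combine the four w.h.p.{} structural results proved earlier in this section---Lemmas~\ref{lem:linearity},~\ref{lem:bad},~\ref{lem:tree-excess} (applied with the given $b$), and~\ref{lem:bound-Y} (applied with $a = 2k^4$)---with the $r_0$-distributed hypothesis via a double-counting argument. I condition on the intersection of these four events. Fix any connected $Y$ with $2k^4 \log n \le \lvert Y \rvert \le b \log n$. Lemma~\ref{lem:bound-Y} gives $\lvert \var(Y) \rvert \ge 2k^4 \log n$, so Lemma~\ref{lem:bad} applies and yields $\lvert Y \cap \bc \rvert \le \lvert Y \rvert / k$; writing $Y' := Y \cap \gc$, we have $\lvert Y' \rvert \ge (1 - 1/k)\lvert Y \rvert$, and the $r_0$-distributed property ensures each $c \in Y'$ contributes at least $r_0(k-3)$ variables to $\var(c) \cap V$.

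For each $v \in V$, let $d_v$ denote the number of clauses of $Y'$ containing $v$, and set $N := \sum_{c \in Y'} \lvert \var(c) \cap V \rvert = \sum_{v \in V} d_v$, so $N \ge r_0(k-3)(1 - 1/k)\lvert Y \rvert$. Since
\begin{equation*}
\lvert \var(Y') \cap V \rvert = \lvert \{v \in V : d_v \ge 1\} \rvert = N - \sum_{v \in V,\, d_v \ge 1}(d_v - 1),
\end{equation*}
it remains to upper bound the overcount. Using $(d_v - 1) \le \binom{d_v}{2}$ whenever $d_v \ge 1$, it suffices to bound $\sum_{v \in V}\binom{d_v}{2}$; this sum counts the triples consisting of an unordered pair of distinct clauses of $Y'$ together with a variable of $V$ shared by them. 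By Lemma~\ref{lem:linearity} any two distinct clauses share at most $2$ variables, and by Lemma~\ref{lem:tree-excess} the number of edges in $G_\Phi$ with both endpoints in $Y$ is at most $\lvert Y \rvert - 1 + c$, where $c := \max\{1,\, 2b \log(ek^2 \alpha)\}$. Hence $\sum_{v \in V}\binom{d_v}{2} \le 2(\lvert Y \rvert + c - 1)$, and combining,
\begin{equation*}
\lvert \var(Y) \cap V \rvert \ge \lvert \var(Y') \cap V \rvert \ge r_0(k-3)(1 - 1/k)\lvert Y \rvert - 2(\lvert Y \rvert + c - 1).
\end{equation*}

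To conclude, observe that $c$ depends only on $k$, $\alpha$, and $b$, while $\lvert Y \rvert \ge 2k^4 \log n$, so the additive $2(c-1)$ is $o(\lvert Y \rvert)$ as $n \to \infty$. The coefficient of $\lvert Y \rvert$ equals $r_0 k - 4 r_0 - 2 + O(1/k)$, so for any $\hat\delta \in (0, r_0)$, choosing $k_0 = k_0(\hat\delta)$ large enough makes this at least $(r_0 - \hat\delta) k$, as required. The main obstacle is that a priori the overcount $\sum_v (d_v - 1)$ could be of the same order as $N$ and erase the $r_0 k$ gain from the marking; it is the constant tree-excess bound of Lemma~\ref{lem:tree-excess}, reflecting the local tree-likeness of random $k$-SAT, that converts the crude edge count into a bound linear in $\lvert Y \rvert$ and hence preserves the main term.
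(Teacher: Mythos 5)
Your proof is correct and follows essentially the same route as the paper's: reduce to good clauses via Lemmas~\ref{lem:bound-Y} and~\ref{lem:bad}, apply the $r_0$-distributed lower bound per good clause, and control the overcount from shared variables by combining the tree-excess bound of Lemma~\ref{lem:tree-excess} with the two-variables-per-pair bound of Lemma~\ref{lem:linearity}, arriving at the identical inequality $\lvert \var(Y) \cap V \rvert \ge r_0(k-3)(1-1/k)\lvert Y \rvert - 2(\lvert Y \rvert + c - 1)$. Your explicit double-counting via $\sum_v \binom{d_v}{2}$ simply spells out a step the paper states tersely; the only (trivial) polish needed is to reserve slack in the coefficient step (take it $\ge (r_0 - \hat\delta/2)k$, as the paper does) so that the $O(1)$ term $2(c-1)$ can be absorbed for large $n$ to reach the stated $(r_0 - \hat\delta)k\lvert Y\rvert$.
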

\begin{proof} \label{lem:bound-marked:proof}
Let $a=2 k^4$.
  We apply Lemma~\ref{lem:bad}   to find that there is $k_1$ such that for $k \ge k_1$, w.h.p. over the choice of $\Phi$, for every set of clauses $Y$ that is connected in $G_{\Phi}$,
  \begin{equation} \label{eq:cbad}
     \text{if } \ \lvert \var(Y) \rvert \ge a \log(n), \text{ then }  \lvert Y \cap \bc(r) \rvert \le \lvert Y \rvert / k.
  \end{equation}
   We apply Lemma~\ref{lem:bound-Y} with $a = 2k^4$ to find that, w.h.p. over the choice of $\Phi$, for every set of clauses $Y$, we have
  \begin{equation}\label{eq:Y-bound}
  \begin{aligned}
     \text{if } & \ \lvert Y \rvert \ge a \log(n), \text{then } \lvert \var(Y)  \rvert \ge a \log(n).
  \end{aligned}
  \end{equation}
  Finally, for any $b>0$, we apply Lemma~\ref{lem:tree-excess}, obtaining that, w.h.p. over the choice of $\Phi$,  for every set of clauses $Y$ that is connected in $G_{\Phi}$,
   \begin{equation} \label{eq:errors:tree-excess}
      \text{if } \lvert Y \vert \le b \log n, \text{ then }  Y \text{ has tree-excess at most } c= \max \{1, 2 b \log (e  k^2 \alpha)\} = O(1). 
  \end{equation}
  Let $Y$ be a set of clauses that is connected in $G_{\Phi}$ such that $a \log(n) \le \lvert Y \rvert \le b \log (n)$. Then, by~\eqref{eq:Y-bound} and~\eqref{eq:cbad}, we have $\lvert Y \cap \gc(r) \rvert \ge \lvert Y \rvert (1 - 1 / k)$. By definition of $r'$-distributed (Definition~\ref{def:distributed-marking}), each good clause has at least $r' (k-3)$ variables in $V$. As there are at most $\lvert Y \rvert - 1 + c$ edges in $G_\Phi$ joining clauses in $Y$, see~\eqref{eq:errors:tree-excess}, and two distinct clauses only share at most two variables by Lemma~\ref{lem:linearity}, we have
  \begin{equation*}
    \begin{aligned}
     \lvert \var(Y) \cap V \rvert & \ge r'(k - 3) \left(1 - \frac{1}{k}\right) \lvert Y \rvert - 2 (\lvert Y \rvert + c - 1) \\
      & \ge (r'(k - 4) - 2) \lvert Y \rvert - 2(c-1).
    \end{aligned}
  \end{equation*}
  There is $k_0 \ge k_1$ such that for $k \ge k_0$, we find that, for any set of clauses $Y$ that is connected in $G_{\Phi}$ and has $a \log(n) \le \lvert Y \rvert \le b \log (n)$, $\lvert \var(Y) \cap V \rvert \geq  (r' - \hat\delta/2) k \lvert Y \rvert - 2(c-1)$.
  Therefore, using $2(c-1) = O(1)$, for large enough $n$ we conclude that $\lvert \var(Y) \cap V \rvert \ge  (r' - \hat\delta) k \lvert Y \rvert$ and the result follows.
\end{proof}

\subsection{Proof of Lemma~\ref{lem:cc-general}} \label{sec:errors:proof}

We use the following result of~\cite{galanis2019counting} on the number of connected sets of clauses in $G_\Phi$.

\begin{lemma}[{\cite[Lemma 8.6]{galanis2019counting}}] \label{lem:bound-Z}
    Let $\alpha > 0$. W.h.p. over the choice of $\Phi = \Phi(k, n, \lfloor \alpha n \rfloor)$, for any clause $c$, the number of connected sets of clauses in $G_\Phi$ with size $\ell \ge \log n$ containing $c$ is at most $(9 k^2 \alpha)^\ell$.
\end{lemma}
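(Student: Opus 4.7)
The plan is a standard first-moment calculation followed by Markov's inequality and a union bound. Write $N_\ell(c)$ for the number of connected subsets of $\mathcal{C}$ of size $\ell$ that contain $c$. Any connected $\ell$-subset $Y$ admits at least one spanning tree with $\ell-1$ edges, and Cayley's formula gives $\ell^{\ell-2}$ labeled trees on $Y$. So I would first estimate
\begin{equation*}
    \mathbb{E}[N_\ell(c)] \le \binom{m-1}{\ell-1}\cdot \ell^{\ell-2}\cdot \max_{T}\operatorname{Pr}[T \text{ is realized in } G_\Phi],
\end{equation*}
where the maximum is over labeled spanning trees $T$ on a fixed $\ell$-vertex set that includes $c$, and ``$T$ realized'' means that for every tree-edge $(c_i,c_j)$ the clauses $c_i, c_j$ share a variable.

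The crux is the per-tree probability bound. For each of the $\ell-1$ tree-edges $(c_i,c_j)$ I would enumerate pairs of literal positions, one position in $c_i$ and one in $c_j$ (giving $k^{2(\ell-1)}$ choices in total), and ask that the two chosen positions carry the same variable. Because each clause is formed by drawing its $k$ literals with variables independently uniform over $[n]$, each single matching constraint has probability $1/n$, and a conjunction of $\ell-1$ such constraints has probability at most $(1/n)^{\ell-1}$: this is exact when all selected positions are distinct, and no larger when positions coincide (since the latter imposes extra equalities). Thus $\operatorname{Pr}[T \text{ realized}] \le (k^2/n)^{\ell-1}$. Combining with $\binom{m-1}{\ell-1} \le (em/\ell)^{\ell-1}$ and $m \le \alpha n$ produces the clean bound
\begin{equation*}
    \mathbb{E}[N_\ell(c)] \le \frac{(e k^2 \alpha)^{\ell-1}}{\ell}.
\end{equation*}

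With this in hand, Markov's inequality gives $\operatorname{Pr}[N_\ell(c) \ge (9 k^2 \alpha)^\ell] \le (e/9)^{\ell-1}/(9 k^2 \alpha\,\ell)$. Taking a union bound over the $m \le \alpha n$ clauses $c$ and over all integers $\ell \ge \log n$, the tail is controlled by the geometric series $\sum_{\ell \ge \log n}(e/9)^\ell$, of order $(e/9)^{\log n} = n^{\log(e/9)}$. Since $\log(9/e) > 1$ (natural log), the exponent $1+\log(e/9) = 1-\log(9/e)$ is strictly negative, so even multiplying by $\alpha n$ the total failure probability is $o(1)$, yielding the high-probability conclusion. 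The only delicate step is the per-tree probability bound when a clause is incident to several edges of $T$ (so the individual matching events are not independent); the union bound over explicit position pairs sidesteps this cleanly, since the joint-matching probability can only shrink when some positions are reused.
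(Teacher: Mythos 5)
Your proof is the standard first-moment calculation, and the paper itself defers this lemma to the cited reference; the argument there is of the same kind, so the approach matches. The one imprecise step is the per-tree probability bound. Once you fix a position pair for each spanning-tree edge, you claim the joint matching probability is at most $n^{-(\ell-1)}$ because reusing a position ``can only shrink'' the probability. That intuition has the sign backwards: the matching constraints form a graph on (clause, position) pairs with exactly $\ell-1$ edges, and if that graph contained a cycle the probability of all equalities holding among independent uniforms would be \emph{larger} than $n^{-(\ell-1)}$ (a triangle of equalities has probability $n^{-2}$, not $n^{-3}$). What actually saves you is structural: the constraint graph admits a graph homomorphism onto the spanning tree, with distinct constraint edges landing on distinct tree edges, so the constraint graph is a forest; for a forest of $\ell-1$ equality constraints among independent uniforms the probability is exactly $n^{-(\ell-1)}$, regardless of position reuse. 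With that point made explicit, the expectation bound $\mathbb{E}[N_\ell(c)]\le C\,(ek^2\alpha)^{\ell-1}/\ell$ (for a harmless constant $C$ depending on which binomial inequality you use), the Markov step, and the union bound over $c$ and over $\ell\ge\log n$ using $\log(9/e)>1$ all go through.
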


We can now complete the proof of Lemma~\ref{lem:cc-general}. Recall that we will apply this result with $r = r_0-\delta$ or $r = r_1-\delta$, where $\delta = \deltadef$.

\begin{lemccgeneral}
\statelemccgeneral
\end{lemccgeneral}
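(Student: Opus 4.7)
The plan is to follow the two-stage structure suggested in the proof sketch: first handle $\rho = 0$, where the randomness is only over $\Lambda$, and then bootstrap to general $\rho$ by controlling how $S$ interacts with $\var(Z)\cap V$.

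First I would condition on two high-probability events over $\Phi$. Applying Lemma~\ref{lem:bound-marked} with $\hat\delta = \delta$ gives that every connected set of clauses $Z$ with $a\log n \le |Z| \le b\log n$ satisfies $|\var(Z)\cap V| \ge (r_0-\delta)k|Z|$, and Lemma~\ref{lem:bound-Z} gives $|\mathcal{Z}(c,L)| \le (9k^2\alpha)^L$ for every clause $c$ and $L\ge \log n$, where $\mathcal{Z}(c,L)$ denotes the connected clause sets of size $L$ containing $c$. Both events hold w.h.p.\ over $\Phi$, and henceforth are assumed.

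For the core bound, fix $c\in\mathcal{C}$ and $Z\in\mathcal{Z}(c,L)$, and write $\mathcal{F}\subseteq\bigcup_{c,Z}\mathcal{E}_1(Z,S)$ where $\mathcal{E}_1(Z,S)$ is the event that $\Lambda$ leaves every clause of $Z$ unsatisfied. The key is that, for each variable $v\in\var(Z)\cap(V\setminus S)$, the clauses of $Z$ containing $v$ each prefer a particular literal (the one that would satisfy the clause), and $\mathcal{E}_1(Z,S)$ forces $\Lambda$ to choose the opposite for $v$. Revealing the variables of $V\setminus S$ one at a time via the chain rule and applying $(1/k)$-uniformity at each step (which is exactly the conditioning allowed by Definition~\ref{def:uniform}) yields
\begin{equation*}
\pr_{\Lambda\sim\restr{\mu}{V\setminus S}}\!\left(\mathcal{E}_1(Z,S)\right) \;\le\; \left(\tfrac{1}{2}e^{1/k}\right)^{|\var(Z)\cap(V\setminus S)|}.
\end{equation*}
In the case $\rho=0$, we have $|\var(Z)\cap(V\setminus S)|\ge (r_0-\delta)kL$, and combining with the counts $|\mathcal{C}|\le\alpha n$ and $|\mathcal{Z}(c,L)|\le(9k^2\alpha)^L$ gives an exponent dominated by $-(r_0-r-O(\delta))kL$. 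Using $\alpha\le 2^{rk}$ with $r+O(\delta)<r_0$ and absorbing the $\log(\alpha n)$ factor into $L=\Omega(\log n)$, this is at most $2^{-\delta kL}$.

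For $\rho>0$, the same calculation goes through provided $|\var(Z)\cap V\cap S|\le |\var(Z)\cap V|/k$, since then $|\var(Z)\cap(V\setminus S)|\ge (1-1/k)(r_0-\delta)kL$, which loses only a $(1\!-\!1/k)$ factor in the exponent. The remaining task is to bound, over $S$, the probability that this condition fails for \emph{some} pair $(c,Z)$. Since $S$ is a uniform $\rho$-subset of $V$ with $\rho/|V|\le 2^{-k}$, the expected overlap $|\var(Z)\cap V\cap S|$ is at most $2^{-k}|\var(Z)\cap V|$, which is \emph{exponentially} smaller than the threshold $|\var(Z)\cap V|/k$. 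A standard hypergeometric tail bound then gives failure probability $\exp(-\Omega(kL))$ per pair, and a union bound over the $\le|\mathcal{C}|\cdot(9k^2\alpha)^L$ pairs leaves $2^{-\delta kL}$ to spare for the final bound on $\pr_S$.

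\textbf{Main obstacle.} The trickiest bookkeeping is in the $\rho>0$ case: one must (i) set up the chain-rule argument so that the conditioning on previously revealed variables in $V\setminus S$ is of the form covered by Definition~\ref{def:uniform}, and (ii) choose the threshold in the definition of the bad $S$-event so that both the per-pair $\Lambda$-probability and the per-pair $S$-probability can simultaneously defeat the union-bound count $(9k^2\alpha)^L$. The hypergeometric tail calculation with the $2^{-k}$ oversampling gap is where the $\rho\le|V|/2^k$ hypothesis is consumed; this is the only essentially new ingredient beyond the $\rho=0$ argument.
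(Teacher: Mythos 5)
Your proposal matches the paper's proof in essentially every respect: the same two high-probability inputs (Lemma~\ref{lem:bound-marked} with $\hat\delta=\delta$ and Lemma~\ref{lem:bound-Z}), the same decomposition of $\mathcal{F}$ into per-pair events $\mathcal{E}_1(Z,S)$, the same chain-rule-plus-$(1/k)$-uniformity bound $(\tfrac{1}{2}e^{1/k})^{|\var(Z)\cap(V\setminus S)|}$ (the paper also briefly notes the contradictory-literal case where the probability is zero, which you assume away implicitly), the same threshold $|\var(Z)\cap V\cap S|\le|\var(Z)\cap V|/k$ for the "bad $S$" event, and the same combinatorial bound on its probability (the paper's explicit binomial-coefficient calculation $\binom{|V|}{\rho}^{-1}\binom{|\var(Z)\cap V|}{\ell}\binom{|V|-\ell}{\rho-\ell}\le(ek\,\rho/|V|)^\ell$ is exactly your hypergeometric tail bound, with the $\rho/|V|\le 2^{-k}$ hypothesis consumed at precisely this point). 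The only cosmetic difference is the order of the two union bounds: the paper pushes the union over $(c,Z)$ inside the $\pr_S$ via a per-pair threshold $2^{-\delta kL}/(|\mathcal{C}||\mathcal{Z}(c,L)|)$, while you first fix a good event for $S$ simultaneously for all pairs and then union-bound the $\Lambda$-probability on that event; these are interchangeable and yield the same bound.
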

\begin{proof} \label{lem:cc-general:proof}
 We apply Lemma~\ref{lem:bound-marked} with our choices of $b$ and with $\hat\delta=\delta$ to conclude that, w.h.p. over the choice of $\Phi$, for every connected set of clauses $Z \subseteq \mathcal{C}$ we have
 \begin{equation} \label{eq:num-marked}
     \text{if } \ a \log(n) \le \lvert Z \rvert \le b \log(n), \ \text{ then } \  \lvert \var(Z) \cap V \rvert \ge r k \lvert Z \rvert.
 \end{equation}
 We also need the following result on random $k$-CNF formulae. For each clause $c \in \mathcal{C}$, let 
 \begin{equation*}
     \mathcal{Z}(c, L) = \{Z \subseteq \mathcal{C} : c \in Z, Z \text{ is connected in } G_\Phi, \lvert Z \rvert = L\}.
 \end{equation*}
 Then, w.h.p. over the choice of $\Phi$, Lemma~\ref{lem:bound-Z} shows that, as long as $L \ge \log n$,
 \begin{equation} \label{eq:number-Y}
     \text{for any clause }c\in\mathcal{C} \text{ we have } \left| \mathcal{Z}(c, L)\right| \le (9 k^2 \alpha)^L.
 \end{equation}
The facts that we have just established using Lemma~\ref{lem:bound-marked} and Lemma~\ref{lem:bound-Z} are all the properties of random formulae that we need in this proof. The hypothesis $\alpha \le \Delta_r$ is used when calling Lemma~\ref{lem:bad} in the proof of  Lemma~\ref{lem:bound-marked}.

 Let $L$ be an integer with $a \log n \le L \le b \log n$. First, we are going to fix $S \subseteq V$ with $\lvert S \rvert = \rho$ and study the event $\mathcal{F}$ described in the statement. For $c \in \mathcal{C}$ and $Z \in \mathcal{Z}(c, L)$, we denote by $\mathcal{E}_1(Z, S)$ the event that $Z \subseteq \mathcal{C}^{\Lambda}$, where $\Lambda$ is drawn from $\restr{\mu}{V \setminus S}$, see Definition~\ref{def:marginal}. Recall that $Z \subseteq \mathcal{C}^{\Lambda}$ means that none of the clauses in $Z$ are satisfied by the assignment $\Lambda$ (Definition~\ref{def:subformula}). We note that $\mathcal{F} = \bigcup_{c \in \mathcal{C}, Z \in \mathcal{Z}(c, L)} \mathcal{E}_1(Z, S)$. We are going to show that, for large enough $n$,
 \begin{equation} \label{eq:claim-cc}
   \pr_{S \sim \tau} \left( \ \pr_{\Lambda \sim \restr{\mu}{V \setminus S}} \left(  \bigcup\nolimits_{c \in \mathcal{C}, Z \in \mathcal{Z}(c, L)} \mathcal{E}_1(Z, S) \right) > 2^{- \delta k L} \ \right) \le 2^{- \delta k L},
 \end{equation}
 which is equivalent to the result stated in this lemma. We note that the left-hand side of~\eqref{eq:claim-cc} can be upper bounded by 
 \begin{equation} \label{eq:claim-cc-sum}
 \begin{aligned}
   & \pr_{S \sim \tau} \left(  \exists c \in \mathcal{C}, Z \in \mathcal{Z}(c, L) :   \pr_{\Lambda \sim \restr{\mu}{V \setminus S}} \left( \mathcal{E}_1(Z, S) \right) > \frac{2^{- \delta k L}}{\lvert \mathcal{C} \rvert \cdot \lvert \mathcal{Z}(c, L) \rvert } \right) \le \\
   & \sum_{c \in \mathcal{C}, Z \in \mathcal{Z}(c, L)} \pr_{S \sim \tau} \left(    \pr_{\Lambda \sim \restr{\mu}{V \setminus S}} \left( \mathcal{E}_1(Z, S) \right) > \frac{2^{- \delta k L}}{\lvert \mathcal{C} \rvert \cdot \lvert \mathcal{Z}(c, L) \rvert }  \right).
   \end{aligned}
 \end{equation}
 We are going to show that, for any $c \in \mathcal{C}$ and $Z \in \mathcal{Z}(c, L)$, 
  \begin{equation} \label{eq:individual-Z-bound}
    \pr_{S \sim \tau} \left(  \pr_{\Lambda \sim \restr{\mu}{V \setminus S}} \left( \mathcal{E}_1(Z, S) \right) > \frac{2^{- \delta k L}}{\lvert \mathcal{C} \rvert \cdot \lvert \mathcal{Z}(c, L) \rvert }  \right) \le \left( 2 e k \cdot  2^{-r k} \right)^L.
 \end{equation}
 Before proving~\eqref{eq:individual-Z-bound}, let us complete the proof assuming that this inequality holds. In light of~\eqref{eq:number-Y}, we have $\lvert \mathcal{Z}(c, L) \rvert \le ( 9 k^2 2^{(r-2\delta) k} )^L$. We use the following observation,
 \begin{equation} \label{eq:bound-C}
  \text{for } k > 1/(\delta \log 2) \text{ and for large enough } n \text{, } \ \ \lvert \mathcal{C} \rvert \le n \alpha \le n^{\delta k^5  \log 2} \le 2^{(\delta /2) k L}.
 \end{equation}
 Combining~\eqref{eq:claim-cc-sum},~\eqref{eq:individual-Z-bound} and~\eqref{eq:bound-C}, we conclude that, for large enough $k$, the left-hand size of~\eqref{eq:claim-cc} is bounded above by    
\[
   \sum_{c \in \mathcal{C}, Z \in \mathcal{Z}(c, L)} \left( 2 e k \cdot 2^{-r k } \right)^{L} 
   \le n \density \cdot \left( 9 k^2 2^{(r-2\delta) k } \right)^L \cdot \left( 2 e k \cdot 2^{- r k } \right)^{L} 
   = n \density \left(18 e k^3 2^{- 2\delta k}\right)^L  \le 2^{-\delta k L},
\]
which completes the proof of~\eqref{eq:claim-cc}, and hence the proof of the lemma, subject to~\eqref{eq:individual-Z-bound}.

To prove~\eqref{eq:individual-Z-bound}, we are going to find many~$S$ for which  $\pr_{\Lambda \sim \restr{\mu}{V \setminus S}} ( \mathcal{E}_1(Z, S) ) \le 2^{- \delta k L} / (\lvert \mathcal{C} \rvert \cdot \lvert \mathcal{Z}(c, L) \rvert)$ holds. 
With this in mind, we introduce an event that may occur when sampling $S$:
\begin{equation} \label{eq:E2:def}
\begin{aligned}
  \mathcal{E}_2(Z) := & `` \text{the random set } S \subseteq V \text{ that we select contains fewer} \\ 
  & \text{than }\ell := \lceil \lvert \var(Z) \cap V \rvert / k \rceil \text{ variables in } \var(Z) \cap V".
\end{aligned}
\end{equation}
We will show (in equation~\eqref{eq:E2}) that the event $\mathcal{E}_2(Z)$ holds for most choices of $S$. Before proving this claim, let us assume that  $\mathcal{E}_2(Z)$ holds for $S$ and let us prove that $\pr_{\Lambda \sim \restr{\mu}{V \setminus S}} ( \mathcal{E}_1(Z, S) ) \le 2^{- \delta k L} / (\lvert \mathcal{C} \rvert \cdot \lvert \mathcal{Z}(c, L) \rvert)$. If there are $c_1, c_2 \in Z$ and $v \in \var(c_1) \cap \var(c_2) \cap (V\setminus S)$ such that  $c_1 \ne c_2$ and the literal of $v$ in $c_1$ is the negation of the literal of $v$ in $c_2$, then at least one of $c_1$ and $c_2$ is satisfied by the assignment $\Lambda \colon V \setminus S \to \{\mathsf{F}, \mathsf{T}\}$.  In this case we have $\pr_{\Lambda \sim \restr{\mu}{V \setminus S}} ( \mathcal{E}_1(Z, S) ) = 0$. Let us now consider the complementary case:
 \begin{equation} \label{eq:complementary-case}
 \begin{aligned}
      &\text{for all $c_1, c_2 \in Z$ with $c_1 \ne c_2$ and $v \in \var(c_1) \cap \var(c_2) \cap (V\setminus S)$,} \\ 
      &\text{the literal of $v$ in $c_1$ is the same as the literal of $v$ in $c_2$.}
 \end{aligned}
 \end{equation}
 In this setting, we call $\omega(v)$ the value of $v$ that does not satisfy the clauses in $Z$ that contain $v$. Note that $\omega(v)$ is well-defined by assumption~\eqref{eq:complementary-case}. Let $u_1, u_2, \ldots, u_t$ be the list of variables in $(\var(Z) \cap V) \setminus S$. We denote by $\mathcal{W}_j$ the event that $u_j$ is assigned the value $\omega(u_j)$ by $\Lambda$ when sampling $\Lambda \sim \restr{\mu}{V \setminus S}$.  Then, by definition of $\mathcal{W}_j$, we have 
 \begin{align*}
     \pr_{\Lambda \sim \restr{\mu}{V \setminus S}}\left( \mathcal{E}_1(Z, S) \right)  
     & =  \prod_{j = 1}^t  \pr_{\Lambda \sim \restr{\mu}{V \setminus S}}\left( \mathcal{W}_j \left| \  \bigcap\nolimits_{i = 1}^{j-1} \mathcal{W}_i \right.\right).
 \end{align*}
 As $\mu$ is $(1/k)$-uniform, we find that $\pr_{\Lambda \sim \restr{\mu}{V \setminus S}}( \mathcal{W}_j \vert \  \bigcap\nolimits_{i = 1}^{j-1} \mathcal{W}_i ) \le (1/2) \exp(1/k)$ for all $j \in \{1, 2, \ldots, t\}$. We conclude that
  \begin{align*}
     \pr_{\Lambda \sim \restr{\mu}{V \setminus S}} \left(  \mathcal{E}_1(Z, S)\right)  
     & \le  \left( \frac{1}{2} \exp\big(\frac{1}{k}\big) \right)^{t}.
  \end{align*}
   From~\eqref{eq:num-marked} and the fact that $\mathcal{E}_2(Z)$ holds for $S$, we have $$t = \lvert \var(Z) \cap (V \setminus S) \rvert \ge \lvert \var(Z) \cap V \rvert - \lceil \lvert \var(Z) \cap V\rvert / k \rceil  \ge \lvert \var(Z) \cap V \rvert (1 - 1/k)  -1 \ge r L (k-1) - 1.$$ It follows that
  \begin{align*}
     \pr_{\Lambda \sim \restr{\mu}{V \setminus S}} \left( \mathcal{E}_1(Z, S) \right) 
     & \le \left(\frac{1}{2} \exp\Big(\frac{1}{k}\Big)\right)^{ r (k-1) L-1}  \\
     & \le 2 \left( 2 \cdot 2^{-r k } \exp\Big(\frac{ r (k-1) }{k}\Big) \right)^{L} \\
     & \le \left( 4 e \cdot 2^{-r k } \right)^{L},
 \end{align*}
where we used that  $1/2 \le (1/2) \exp(1/k) < 1$ in the second and third inequality. For large enough $k$, we find that
\begin{equation}
     \left( 4 e \cdot 2^{-r k } \right)^{L} = \left(\frac{ 9 \cdot 4 e k^2 \cdot  \alpha \cdot 2^{-r k }}{9 k^2 \alpha} \right)^{L}   \le  \left(\frac{  9 \cdot 4 e k^2 \cdot  2^{-2\delta k}}{9 k^2 \alpha} \right)^{L} \le  \frac{2^{-(3/2) \delta k L}}{\lvert \mathcal{Z}(c, L) \rvert } \leq  \frac{2^{-\delta k L}}{ \lvert \mathcal{C} \rvert \cdot \mathcal{Z}(c, L) \rvert } ,
\end{equation}
where in the second to last inequality we applied $9 \cdot 4 e k^2  \le 2^{(\delta/2)k}$ and the bound on the size of $\mathcal{Z}(c, L)$ given in~\eqref{eq:number-Y}, and in the last inequality we used \eqref{eq:bound-C}. As $S$ was picked as any subset of $V$ with $\lvert S \rvert = \rho$ such that $\mathcal{E}_2(Z)$ holds, it follows that 
 \begin{equation} \label{eq:cc-by-E2}
   \pr_{S \sim \tau} \left(  \pr_{\Lambda \sim \restr{\mu}{V \setminus S}} \left( \mathcal{E}_1(Z, S) \right) > \frac{2^{- \delta k L}}{\lvert \mathcal{C} \rvert \cdot \lvert \mathcal{Z}(c, L) \rvert } \right) \le  \pr_{S \sim \tau} \left( \overline{\mathcal{E}_2(Z)} \right).
 \end{equation}
In order to prove~\eqref{eq:individual-Z-bound}, which finishes the proof,
we need to show $\pr_{S \sim \tau} \left( \overline{\mathcal{E}_2(Z)} \right)
\leq (2ek \cdot 2^{-rk})^L$.
The probability of $\overline{\mathcal{E}_2(Z)}$ can be bounded as follows. Recall that $\lvert S \rvert = \rho$. If $\rho <  \ell$, then, by the definition of $\mathcal{E}_2(Z)$ in~\eqref{eq:E2:def}, we obtain $\pr_{S \sim \tau} (\mathcal{E}_2(Z) ) = 1$. Otherwise, the number of choices of $S$ (with $\lvert S \rvert = \rho$) such that $\lvert S \cap \var(Z) \cap V \rvert \ge \ell$ is at most $\binom{\lvert \var(Z) \cap V \rvert}{\ell} \binom{\lvert V \rvert - \ell}{\rho -  \ell}$. Hence, we have
 \begin{equation*} \label{eq:prob-E2}
   \begin{aligned}
     \pr_{S \sim \tau} \left( \overline{\mathcal{E}_2(Z)} \right) & \le \binom{\lvert V \rvert }{\rho }^{-1}  \binom{\lvert \var(Z) \cap V \rvert}{\ell}  \binom{\lvert V \rvert  - \ell}{\rho  -  \ell} \\ 
     & = \frac{\rho (\rho -1) \cdots (\rho - \ell + 1)}{\lvert V \rvert  (\lvert V \rvert - 1) \cdots (\lvert V \rvert - \ell + 1)}   \binom{\lvert \var(Z) \cap V \rvert}{\ell} \\ 
     & \le \left(\frac{\rho}{\lvert V \rvert} \right)^{\ell} \left(\frac{e \lvert \var(Z) \cap V \rvert }{\ell} \right)^{\ell} \le \left(\frac{\rho}{\lvert V \rvert} e k \right)^{\ell},
    \end{aligned}
 \end{equation*}
 where we used $\ell := \lceil \lvert \var(Z) \cap V \rvert / k \rceil \ge \lvert \var(Z) \cap V \rvert / k$, $(p - i)/(q -  i)  \le p/q $ for any $0 < i < p < q$ and $\binom{p}{q} \le (e p / q)^q$. Combining this with the hypothesis $\rho \le \lvert V \rvert / 2^{k}$ and the bound $\ell \ge r L$, see~\eqref{eq:num-marked}, we obtain
 \begin{equation} \label{eq:E2}
     \pr_{S \sim \tau} \left( \overline{\mathcal{E}_2(Z)} \right) \le \left( e k 2^{-k} \right)^\ell \le \left( ( e k)^{r} \cdot  2^{-r k} \right)^L \le \left(2  e k \cdot  2^{-r k} \right)^L.
 \end{equation}
 The bound~\eqref{eq:individual-Z-bound} follows from combining~\eqref{eq:cc-by-E2} and~\eqref{eq:E2}, which completes the proof.
\end{proof}

\section{Sampling from small connected components}\label{sec:sample}

In this section we prove Lemma~\ref{lem:sample}. Recall that Lemma~\ref{lem:sample} claims the existence of a procedure to sample from marginals of the uniform distribution on the satisfying assignments of $\Phi^{\Lambda}$ when the connected components of $G_{\Phi^{\Lambda}}$ have small size. Here we make this procedure explicit. Our algorithm exploits the fact that the tree-excess of logarithmic-sized subsets of $G_{\Phi}$ is bounded by a constant depending only on $k$, see Lemma~\ref{lem:tree-excess}, and the fact that when $G_\Phi$ is acyclic, we can exactly count and sample satisfying assignments efficiently via a dynamic programming algorithm (Proposition~\ref{prop:sample-trees}). 

\begin{proposition} \label{prop:sample-trees}
   There is an algorithm that, for any $k$-CNF formula $\Phi = (\mathcal{V}, \mathcal{C})$ such that $G_\Phi$ is a tree, computes the number of satisfying assignments of $\Phi$ in time $O(4^k \lvert \mathcal{C} \rvert)$.
\end{proposition}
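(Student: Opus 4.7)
The proof will be a standard tree dynamic programming argument, where the key structural fact is that, since $G_\Phi$ is a tree, the subtrees rooted at the children of any clause $c$ only interact through variables that lie in $\var(c)$.

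The plan is as follows. First, I would root $G_\Phi$ at an arbitrary clause $r$, giving each non-root clause $c$ a unique parent $\mathrm{par}(c)$. For each clause $c$, let $T_c$ be the subtree of $G_\Phi$ rooted at $c$, and let $V_c$ be the set of variables that occur in some clause of $T_c$. I would define the DP table
\begin{equation*}
f_c(\tau) \;=\; \#\{\sigma \colon V_c \to \{\mathsf{F},\mathsf{T}\} : \sigma \text{ satisfies every clause of } T_c \text{ and } \sigma|_{\var(c)} = \tau\},
\end{equation*}
where $\tau$ ranges over the (at most $2^k$) assignments of $\var(c)$ that satisfy $c$.

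The recursion follows from the tree structure: if $c'$ and $c''$ are distinct children of $c$, then $V_{c'} \cap V_{c''} \subseteq \var(c)$, since any variable shared by a clause in $T_{c'}$ and a clause in $T_{c''}$ not already in $\var(c)$ would create an additional edge between those subtrees in $G_\Phi$, contradicting the tree assumption. Hence, once $\tau$ is fixed on $\var(c)$, the subtrees decouple, giving
\begin{equation*}
f_c(\tau) \;=\; \prod_{c' \text{ child of } c}\ \sum_{\substack{\tau' : \var(c') \to \{\mathsf{F},\mathsf{T}\} \\ \tau' \text{ satisfies } c' \\ \tau'|_{\var(c)\cap\var(c')} \,=\, \tau|_{\var(c)\cap\var(c')}}} f_{c'}(\tau'),
\end{equation*}
with base case $f_c(\tau)=1$ for leaves $c$. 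The total number of satisfying assignments is $2^{|\mathcal{V} \setminus V_r|} \sum_{\tau} f_r(\tau)$.

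Finally, I would bound the running time. Processing the clauses in post-order, for each non-root clause $c'$ with parent $c$ one computes, for every $\tau \in \{\mathsf{F},\mathsf{T}\}^{\var(c)}$, the inner sum above by enumerating over $\tau' \in \{\mathsf{F},\mathsf{T}\}^{\var(c')}$. This costs $O(2^{|\var(c)|} \cdot 2^{|\var(c')|}) = O(4^k)$ per edge of the tree (with an additional $O(k)$ factor for checking the clause and the agreement on shared variables, absorbed into the constant for fixed $k$, or left as an $O(4^k k)$ bound which is still $O(4^k)$ in the stated sense when $k$ is regarded as part of the exponent). Since $G_\Phi$ has $|\mathcal{C}|-1$ edges, the total cost is $O(4^k |\mathcal{C}|)$, as claimed. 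There is no real obstacle here; the only point requiring a moment of care is the decoupling of sibling subtrees, which is precisely what the acyclicity of $G_\Phi$ provides.
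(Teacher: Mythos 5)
Your proposal is correct and takes essentially the same approach as the paper's proof: root $G_\Phi$ at an arbitrary clause, define a DP table indexed by assignments of $\var(c)$ for each clause $c$, exploit that sibling subtrees decouple once $\var(c)$ is fixed (which is precisely acyclicity of $G_\Phi$), and bound the cost at $O(4^k)$ per edge. Your version is marginally tidier in two small respects — restricting the DP domain to $\tau$ that actually satisfy $c$, and explicitly including the $2^{|\mathcal{V}\setminus V_r|}$ factor for variables not appearing in any clause — but these are cosmetic.
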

\begin{proof} \label{prop:sample-trees:proof}
  We give an algorithm based on dynamic programming. Let us fix a vertex / clause $c$ of $G_\Phi$ as the root and consider the corresponding directed tree structure $T := (G_\Phi, c)$. For any clause $c'$ of $\Phi$, let $T_{c'}$ be the subtree of $T$ hanging from $c'$. For any assignment $\sigma \colon \var(c') \to \{\mathsf{F}, \mathsf{T}\}$, let $\operatorname{sa}(c', \sigma)$ denote the number of satisfying assignments of the formula determined by $T_{c'}$ that extend $\sigma$. Our goal is computing the number of satisfying assignments of $\Phi$, which, under this notation, is equal to
  \begin{equation} \label{eq:tree}
     \operatorname{sa}(\Phi) := \sum_{\sigma \colon \var(c) \to \{\mathsf{F}, \mathsf{T}\}}\operatorname{sa}(c, \sigma).
  \end{equation}
  We do this by computing $\operatorname{sa}(c', \sigma)$ for any clause $c'$ and any assignment $\sigma  \colon \var(c') \to \{\mathsf{F}, \mathsf{T}\}$. Using the tree structure of $T$, we show that $\operatorname{sa}(c', \sigma)$ satisfies a recurrence. There are two cases:
  \begin{enumerate}
    \item $c'$ is a leaf. Then $\operatorname{sa}(c', \sigma) = 1$ if $c'$ is satisfied by $\sigma$ and $0$ otherwise.
    \item $c'$ is not a leaf. Let $T_1, \ldots, T_l$ be the trees hanging from $c'$ in $T$ and let $c_1, \ldots, c_l$ be their roots. Then, since $T_1, \ldots, T_l$ do not share variables as $G_\Phi$ is acyclic, we have 
    \begin{equation*}
    \operatorname{sa}(c', \sigma) = \prod_{j = 1}^l \sum_{\ \ \tau \in A(c_j, \sigma)}\operatorname{sa}(c_j, \tau),
    \end{equation*}
    where $A(c_j, \sigma)$ is the set of assignments of the variables in $\var(c_j)$ that agree with $\sigma$ on $\var(c') \cap \var(c_j)$. 
  \end{enumerate}
  We can apply this recurrence with dynamic programming to compute $\operatorname{sa}(c, \sigma)$ for any assignment $\sigma \colon \var(c) \to \{\mathsf{F}, \mathsf{T}\}$. More explicitly, we compute $\operatorname{sa}(c', \sigma)$ by levels of the tree, starting from the deepest level, where all nodes are leaves, and ending at the root $c$. This involves computing at most $2^k$ entries $\operatorname{sa}(c', \cdot)$ per clause $c'$ of $\Phi$. After computing all the entries appearing in this recurrence, we compute the number of satisfying assignments of $\Phi$, $\operatorname{sa}(\Phi)$, as in equation~\eqref{eq:tree}. The overall procedure takes at most $O(4^k \lvert \mathcal{C} \rvert)$ steps since each entry $\operatorname{sa}(c', \sigma)$ is accessed at most $2^k$ times when computing the corresponding entries for the parent of $c'$, and there are at most $2^k \lvert \mathcal{C}(T) \rvert$ entries.
\end{proof}

In Algorithm~\ref{alg:counting} we give an algorithm based on Proposition~\ref{prop:sample-trees} to count satisfying assignments of a $k$-CNF formula. Recall the folklore fact that if we can count satisfying assignments then we can sample from the marginal of $\mu_\Omega$ on $v$ by counting the satisfying assignments of $\Phi^{v \mapsto \mathsf{F}}$ and $\Phi^{v \mapsto \mathsf{T}}$.

\begin{algorithm}[H]
  \begin{algorithmic}[1]

    \caption{Counting satisfying assignments via trees} \label{alg:counting}

    \REQUIRE a $k$-CNF formula $\Phi = (\mathcal{V},\mathcal{C})$
    \ENSURE The number of satisfying assignments of $\Phi$.

    \STATE Find a spanning forest $T$ of $G_{\Phi}$. 
    
    \STATE Let $\mathcal{V}_{T}$ be the set of variables that gives rise to edges of $G_{\Phi}$ that are not in $T$.

    \STATE $count \leftarrow 0$.

    \FORALL{$\Lambda \colon \mathcal{V}_{T} \to \{\mathsf{F}, \mathsf{T}\}$}
    
    \STATE Note that the graph $G_{\Phi^{\Lambda}}$ is acyclic. Hence, we can count the number of satisfying assignments of $\Phi^{\Lambda}$ in time $O(4^k \lvert \mathcal{C}(\Phi^{\Lambda}) \rvert)$ by applying Proposition~\ref{prop:sample-trees} to each connected component of $G_{\Phi^{\Lambda}}$ and taking the product of the numbers obtained. Let $\operatorname{sa}(\Phi^\Lambda)$ be the result of this computation.
    
    \STATE $count \leftarrow count + \operatorname{sa}(\Phi^\Lambda)$.
    
    \ENDFOR
    
    \RETURN $count$
  \end{algorithmic}
\end{algorithm}

\begin{proposition} \label{prop:counting}
  Let $\Phi = (\mathcal{V}, \mathcal{C})$ be a $k$-CNF formula and let $c$ be the tree-excess of $G_\Phi$. Then Algorithm~\ref{alg:counting} counts the number of satisfying assignments of $\Phi$ in time $O(2^{k(c+2)} \lvert \mathcal{C} \rvert)$.
\end{proposition}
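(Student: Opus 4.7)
The plan is to split the argument into correctness and running time, using Proposition~\ref{prop:sample-trees} as a black box applied to each iteration of the outer loop.

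For correctness, I would partition the satisfying assignments of $\Phi$ according to their restriction to $\mathcal{V}_T$: each satisfying assignment $\sigma$ of $\Phi$ is counted exactly once, in the iteration $\Lambda = \sigma|_{\mathcal{V}_T}$, and the restriction $\sigma|_{\mathcal{V} \setminus \mathcal{V}_T}$ is then a satisfying assignment of $\Phi^\Lambda$. Conversely, every satisfying assignment of $\Phi^\Lambda$ combines with $\Lambda$ to yield one of $\Phi$, so $\operatorname{sa}(\Phi) = \sum_{\Lambda} \operatorname{sa}(\Phi^\Lambda)$, which is what the algorithm computes. The key sub-claim needed to invoke Proposition~\ref{prop:sample-trees} is that $G_{\Phi^\Lambda}$ is acyclic for every $\Lambda \colon \mathcal{V}_T \to \{\mathsf{F}, \mathsf{T}\}$. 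The observation here is that edges of $G_\Phi$ are indexed by pairs of clauses (not by individual variables), so for any non-tree edge $(c_1, c_2)$, \emph{every} variable in $\var(c_1) \cap \var(c_2)$ gives rise to this non-tree edge and therefore belongs to $\mathcal{V}_T$. After simplifying by $\Lambda$, the clauses $c_1$ and $c_2$ share no remaining variable (either the clause is satisfied and removed, or each common variable is substituted out), so the edge $(c_1, c_2)$ disappears in $G_{\Phi^\Lambda}$. Hence $G_{\Phi^\Lambda}$ is a subgraph of the spanning forest $T$ and in particular acyclic, so Proposition~\ref{prop:sample-trees} applied to each connected component (and multiplied together) correctly returns $\operatorname{sa}(\Phi^\Lambda)$ in time $O(4^k |\mathcal{C}|)$.

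For the running time, I would bound $|\mathcal{V}_T|$ by $k c$: each non-tree edge $(c_1, c_2)$ contributes at most $|\var(c_1) \cap \var(c_2)| \le k$ variables to $\mathcal{V}_T$, and there are exactly $c$ non-tree edges (by definition of tree-excess, applied componentwise if $G_\Phi$ is disconnected). Thus the outer loop runs at most $2^{kc}$ times, each iteration costs $O(4^k |\mathcal{C}|)$, and the total is $O(2^{kc} \cdot 4^k |\mathcal{C}|) = O(2^{k(c+2)} |\mathcal{C}|)$. Computing the spanning forest initially takes $O(|\mathcal{C}| + |E(G_\Phi)|) = O(|\mathcal{C}| + c)$, which is dominated.

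I do not anticipate a substantial obstacle; the argument is essentially a routine branch-and-solve combined with Proposition~\ref{prop:sample-trees}. The only subtle point worth articulating carefully is why putting into $\mathcal{V}_T$ all variables that witness non-tree edges suffices to eliminate those edges (and not merely one witness per edge), which follows from the fact that $G_\Phi$ has a single edge per pair of intersecting clauses, so the full intersection $\var(c_1) \cap \var(c_2)$ lies in $\mathcal{V}_T$ whenever $(c_1,c_2)$ is non-tree.
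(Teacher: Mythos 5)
Your proof is correct and takes essentially the same route as the paper's own: bound $|\mathcal{V}_T| \le kc$ (each of the $c$ non-tree edges contributes at most $k$ shared variables), branch over the $2^{kc}$ assignments of $\mathcal{V}_T$, and invoke Proposition~\ref{prop:sample-trees} on each residual forest. The paper's proof is terser — it leaves the acyclicity of $G_{\Phi^\Lambda}$ as a remark inside the algorithm and the $\operatorname{sa}(\Phi) = \sum_\Lambda \operatorname{sa}(\Phi^\Lambda)$ decomposition implicit — so you have simply made explicit the two steps the paper treats as routine, and both are argued correctly.
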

\begin{proof} \label{prop:counting:proof}
  We note that, in the execution of Algorithm~\ref{alg:counting}, we have $\lvert \mathcal{V}_T \rvert \le k c$. Hence, there are at most $2^{k c}$ iterations of the for loop and each one takes $O(4^k \lvert \mathcal{C} \rvert)$ steps, so the running time follows. The fact that the algorithm is correct follows from the correctness of the procedure presented in Proposition~\ref{prop:sample-trees}.
\end{proof}

Even though the running time of Algorithm~\ref{alg:counting} is not polynomial in the size of the formula $\Phi$ (in fact, it is exponential in general), we obtain linear running time when the formulae considered have constant tree-excess. As shown in Lemma~\ref{lem:tree-excess}, this is the case for logarithmic-sized subsets of clauses of random formulae. We can now finish the proof of Lemma~\ref{lem:sample}.

\begin{lemsample}
  \statelemsample
\end{lemsample}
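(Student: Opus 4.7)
The plan is to sample the variables of $S$ sequentially using the chain rule, with each individual conditional marginal computed by two applications of Algorithm~\ref{alg:counting} on one small connected component of the currently simplified formula. Order the variables of $S$ as $v_1, \ldots, v_s$ and, for $i = 1, \ldots, s$, write $\Lambda_{i-1} = \Lambda \cup \{v_j \mapsto x_j : j < i\}$ for the union of $\Lambda$ with the values already sampled for $v_1, \ldots, v_{i-1}$. I draw $x_i$ from the marginal of $v_i$ under $\mu_{\Omega^{\Lambda_{i-1}}}$. Since satisfying assignments of $\Phi^{\Lambda_{i-1}}$ factorise over the connected components of $G_{\Phi^{\Lambda_{i-1}}}$, this marginal equals the marginal of $v_i$ under the uniform distribution on satisfying assignments of $\Phi_{H_i}$, where $H_i$ is the connected component of $G_{\Phi^{\Lambda_{i-1}}}$ containing the clauses of $v_i$ (and it is uniform on $\{\mathsf{F},\mathsf{T}\}$ if $v_i$ appears in no clause of $\Phi^{\Lambda_{i-1}}$). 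I would then compute this marginal by counting the satisfying assignments of $\Phi_{H_i}^{v_i \mapsto \mathsf{T}}$ and $\Phi_{H_i}^{v_i \mapsto \mathsf{F}}$ via Algorithm~\ref{alg:counting} and forming the appropriate ratio. Correctness then follows from the standard chain rule for conditional distributions.

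For the running time, two things need to be controlled at each step: the size of $H_i$ and its tree-excess. Pinning additional variables only removes clauses and edges from $\Phi$, so each $H_i$ is contained in some connected component of $G_{\Phi^\Lambda}$ and hence has at most $2k^4(1+\xi)\log n$ clauses by hypothesis. For the tree-excess, I apply Lemma~\ref{lem:tree-excess} with $b = 2k^4(1+\xi)$: w.h.p.\ over $\Phi$, every connected set of at most $b \log n$ clauses of $\Phi$ has tree-excess at most a constant $c = c(k,\xi,\alpha)$, and since the edges of $G_{\Phi^{\Lambda_{i-1}}}$ on any fixed set of clauses form a subset of those of $G_\Phi$, the same bound transfers to $H_i$. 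By Proposition~\ref{prop:counting}, each of the two counts then runs in time $O(2^{k(c+2)}|H_i|) = O(\log n)$ once $k$, $\xi$, and $\alpha$ are fixed. Locating $H_i$ itself is done by a BFS from the clauses containing $v_i$ in the currently simplified formula, which takes $O(k|H_i|) = O(\log n)$ time assuming a clause--variable incidence structure for $\Phi$ computed once upfront in $O(n+mk)$ time. Summing over the $|S|$ steps yields the claimed $O(|S|\log n)$ total running time.

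The main point requiring care is that the structural guarantees used here must continue to hold not only for $\Phi^\Lambda$ but also for every intermediate subformula $\Phi^{\Lambda_{i-1}}$ that arises during the procedure. This is essentially automatic in our setting: both the component size and the tree-excess are monotone under further pinning of variables, and the hypothesis on $\Phi^\Lambda$ supplies the required bound at the root of this monotone chain. Apart from this observation, the argument is a routine combination of the chain-rule sampling reduction with the counting guarantee of Proposition~\ref{prop:counting}, the tree-excess bound of Lemma~\ref{lem:tree-excess}, and the fact that the relevant subformulas decompose over components.
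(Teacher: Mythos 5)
Your proposal is correct and follows essentially the same approach as the paper's proof: sample the variables of $S$ sequentially, at each step computing the conditional marginal by counting satisfying assignments of the small connected component (via Proposition~\ref{prop:counting}), with the component size bounded by the hypothesis plus monotonicity under pinning, and the tree-excess bounded by Lemma~\ref{lem:tree-excess}. Your extra remarks on locating $H_i$ via BFS and on the monotonicity of the structural guarantees under further pinning are correct and only make explicit what the paper leaves implicit.
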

\begin{proof} \label{lem:sample:proof}
  We apply Lemma~\ref{lem:tree-excess}, so, w.h.p. over the choice of $\Phi = \Phi(k, n, \lfloor \alpha n \rfloor)$, any connected set  of clauses in  $G_{\Phi}$ with size at most $b \log(n)$  has tree-excess at most $c = \max\{1, 2 b \log(e \density k^2)\} = O(1)$. First, we give an algorithm for the case $\lvert S \rvert = 1$. Let $\Phi$, $V$ and $\Lambda$ as in the statement, and let $S = \{v\}$. Let $H$ be the connected component of the clauses that contain $v$ in $G_{\Phi^\Lambda}$, and let $\Phi' = (\mathcal{V}', \mathcal{C}')$ be the subformula of $\Phi^{\Lambda}$ with $G_{\Phi'} = H$. The formula $\Phi'$ has size at most $b \log(n)$. Moreover, the graph $G_{\Phi'} = H$ has tree-excess at most $c$ as $H$ is a subgraph of $G_{\Phi}$ with size at most $b \log(n)$. Thus, we can apply Proposition~\ref{prop:counting} to count the number of satisfying assignments of $\Phi'^{v \to \mathsf{F}}$ and $\Phi'^{v \to \mathsf{T}}$ in time $O( 2^{k(c+2)} \lvert \mathcal{C}'\rvert) = O(\log n)$. Let these numbers be $t_0$ and $t_1$ respectively. It is straightforward to use $t_0$ and $t_1$ to sample from the marginal of the distribution $\mu_{\Omega^{\Lambda}}$ for $v$; we only have to sample an integer $t \in [0, t_0 + t_1)$ and output $\mathsf{F}$ if $t < t_0$ and $\mathsf{T}$ otherwise. The whole process takes time $O(\log n)$.
  
  Finally, we argue how to extend this algorithm to the case $\lvert S \rvert > 1$. For this, first, we give an order to the variables in $S$, say $u_1, u_2, \ldots, u_\ell$. We  then call the algorithm described in the paragraph above once for each variable in $u_1, u_2, \ldots, u_\ell$. The inputs of the algorithm in the $j$-th call are the variable $u_j$ and the assignment $\Lambda_j = \Lambda \cup \tau_{j-1}$, where $\tau_{j-1}$ is the assignment obtained in the previous calls for $u_{1}, \ldots, u_{j-1}$. After this process, $\tau_\ell$ is an assignment of all the variables in $S$ that follows the distribution $\restr{\mu_{\Omega^\Lambda}}{S}$. This assignment has been computed in $O(\lvert S \rvert \log n)$ steps as we wanted.
\end{proof}

\section{Mixing time of the Markov chain} \label{sec:mixing-time}

In this section we study the mixing time of the $\rho$-uniform-block Glauber dynamics on the marked variables and prove Lemma~\ref{lem:mixing-time}. As explained in Section~\ref{sec:po:si}, in order to conclude rapid mixing of this Markov chain we apply the spectral independence framework, which has recently been extended to the $\rho$-uniform-block Glauber dynamics~\cite{Chen2021}. Traditionally in path coupling or spectral independence arguments one has to bound a sum of influences by a constant in order to obtain rapid mixing of the single-site Glauber dynamics. However, due to the presence of high-degree variables, an $O(1)$ upper bound seems unattainable in the random $k$-SAT formula setting; in the worst case paths of high-degree variables may significantly affect influences. This seems also to be the case for other random models, such as the hardcore model on random graphs~\cite{bez2021}. Here we show that that sums of influences are at most $\epsilon \log n$ for small $\epsilon$ (Lemma~\ref{lem:si}). Even though this is generally not enough to conclude rapid mixing of the single-site Glauber dynamics, it turns out to be enough to conclude rapid mixing of the $\rho$-uniform-block Glauber dynamics for $\rho = \Theta(n)$. An essential ingredient in our argument is exploiting the auxiliary variables in introduced in Section~\ref{sec:marking}. Therefore, in this section we will work with $r= r_0 -\delta$ and a $(r, r_0, r_0, 2r_0)$-marking $(\mathcal{V}_{\mathrm m}, \mathcal{V}_{\mathrm a}, \mathcal{V}_{\mathrm c})$. Since $r$ is fixed, we drop it from the notation and write, for instance, $\gv$ instead of $\gv(r)$.

This section is divided as follows. In Section~\ref{sec:po:si:pw}, we explain why bounded-degree methods to bound the mixing time of the Glauber dynamics fail to generalise from the bounded-degree $k$-SAT model to the random $k$-SAT model. In Section~\ref{sec:mixing-time:si} we prove Lemmas~\ref{lem:expectation} and~\ref{lem:si}.  In Section~\ref{sec:mixing-time:mt} we prove Lemma~\ref{lem:mixing-time}.

\subsection{Previous work on  the Glauber dynamics for bounded-degree $k$-SAT formulae} \label{sec:po:si:pw}

In this section we explain why previously known arguments for showing rapid mixing of the Glauber dynamics on bounded-degree $k$-SAT formulae do not extend to the random $k$-SAT model. This section is not used in our work and may be skipped by a reader who just wants to understand our approach and result. The best result currently known on bounded-degree formulae is~\cite{jain2021sampling}, where the authors show, for large enough $k$, how to efficiently sample satisfying assignments of $k$-CNF formulae in which their variables have maximum degree  $\hat\Delta \le C \, 2^{0.1742 \cdot k}/ k^3$, where $C > 0$ is a constant that does not depend on $k$.\footnote{In~\cite{jain2021sampling} the maximum degree $\hat\Delta$ of $\Phi$ is defined as the maximum over $c \in \mathcal{C}$ of the number of clauses that share a variable with $c$. Under this definition of $\hat\Delta$, their result holds for $\hat\Delta \le C 2^{0.1742 \cdot k}/ k^2$.} Their result actually holds in the more general setting of atomic constrain satisfaction problems (albeit with a different bound on $\hat\Delta$). As part of their work, they show that the single-site Glauber dynamics on a set of marked variables mixes quickly. Their argument is restricted to atomic CSPs with bounded-degree and strongly exploits the properties of the Glauber dynamics in this setting. They study the optimal coupling of the single-site Glauber dynamics, we refer to~\cite{pcbook} for the definition of coupling of Markov chains. In such a coupling the goal is to show that two copies of the chain starting from truth assignments differing in at least a marked variable (a so-called discrepancy) can be coupled in a small number of steps. Here it is crucial that the marginals of the marked variables are near $1/2$, so the optimal coupling generates new discrepancies with small probability. At this stage, the high-level idea to conclude rapid mixing of the Glauber dynamics is bounding the probability that the dynamics has not coupled by a product of probabilities, each corresponding to the event that a clause is unsatisfied at a certain time, and aggregating over all possible discrepancy sequences. 

The fundamental observation in~\cite{jain2021sampling}, based on the work on monotone $k$-CNF formulae presented in~\cite{hermon2019}, is that if there is an update of a marked variable that generates a discrepancy in the chains, then there is another marked variable where the chains disagree that is connected to the former variable through a path of clauses, where consecutive clauses in the path share at least a variable. Moreover, each one of the clauses in this path is unsatisfied by at least one of the two copies of the chain. As a consequence, from a discrepancy at time $t$ one can find a sequence of discrepancies going back to time $0$, and  these discrepancies are joined by a path of clauses. Thus, the union bound over discrepancy sequences is essentially a union bound over paths of clauses with a particular time structure, where the same clause can be appear in the path several times. Extending this idea to the random $k$-SAT model presents two main issues. First of all,  the number of discrepancy sequences of any given length may be too large due to the presence of bad clauses and the fact that they can repeatedly appear in the  sequence. Moreover, it may be the case that these discrepancy sequences mostly consist of bad clauses, which are always unsatisfied in both chains and, thus, the probability that they are unsatisfied is not small. Interestingly, similar issues arise when directly extending the bounded-degree approach based on the coupling process of~\cite{moitra19, feng2020} to our setting. In~\cite{feng2020} the mixing time argument only succeeds when $\hat\Delta \le 2^{k/20} / (8k)$ and is also based on a union bound over path of clauses that are unsatisfied or contain discrepancies after running a coupling process. However, very importantly, these paths of clauses are simple (clauses are not repeated) and the combinatorial structures appearing in the coupling process are less complex than the discrepancy sequences of~\cite{jain2021sampling}. This allowed the authors of~\cite{galanis2019counting} to exploit the expansion properties of random $k$-CNF formulae to analyse the coupling process of~\cite{moitra19} on the random setting. Here we incorporate novel ideas to the work of~\cite{galanis2019counting} in order to obtain a tighter analysis that leads to nearly linear running time of our sampling algorithm.

We conclude this section by briefly introducing notation and standard results on couplings that we need in the next section. Let $\mu$ and $\nu$ be two distributions over the same space $\widehat\Omega$. A coupling $\tau$ of $\mu$ and $\nu$ is a joint distribution over $\widehat\Omega \times \widehat\Omega$ such that the projection of $\tau$ on the first coordinate is $\mu$ and the projection on the second coordinate is $\nu$.  Recall that the total variation distance of $\mu$ and $\nu$ is defined by 
$d_{\mathrm{TV}}(\mu, \nu) = \frac{1}{2} \sum_{x \in \widehat\Omega} \lvert \mu(x) - \nu(x) \rvert$.
If a random variable $X$ has distribution $\mu$, we also write $d_{\mathrm{TV}}(X, \nu)$ 
to mean $d_{\mathrm{TV}}(\mu, \nu)$. An important property of couplings is the coupling lemma.

\begin{proposition}[Coupling lemma] \label{prop:coupling-lemma}
Let $\tau$ be a coupling of $\mu$ and $\nu$. Then $d_{\TV}(\mu, \nu) \le \pr_{(X,Y) \sim \tau} (X \ne Y)$. Moreover, there exists a coupling that achieves equality.
\end{proposition}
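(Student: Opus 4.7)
The plan is to prove the inequality and the existence of an optimal coupling separately, both through standard but explicit computations.

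For the inequality, I would start by recalling the dual characterization $d_{TV}(\mu,\nu) = \max_{A \subseteq \widehat\Omega} |\mu(A) - \nu(A)|$, which follows directly from the definition by taking $A = \{x : \mu(x) \ge \nu(x)\}$. Then, for any event $A$ and any coupling $\tau$ of $\mu$ and $\nu$, I would decompose
\begin{equation*}
\mu(A) - \nu(A) = \pr_\tau(X \in A) - \pr_\tau(Y \in A) = \pr_\tau(X \in A, Y \notin A) - \pr_\tau(X \notin A, Y \in A),
\end{equation*}
so $|\mu(A) - \nu(A)| \le \pr_\tau(X \ne Y)$. Taking the maximum over $A$ gives the desired bound.

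For the second part, I would exhibit an explicit coupling that attains equality, the so-called maximal coupling. Let $p := \sum_{x \in \widehat\Omega} \min\{\mu(x), \nu(x)\}$ and observe the standard identity $d_{TV}(\mu,\nu) = 1 - p$ (this follows by splitting the sum defining $d_{TV}$ according to the sign of $\mu(x) - \nu(x)$). If $p = 1$, then $\mu = \nu$ and any diagonal coupling works. Otherwise, define $\tau$ by the mixture rule: with probability $p$, sample $x$ from the probability distribution $\gamma(x) := \min\{\mu(x),\nu(x)\}/p$ and set $X = Y = x$; with probability $1-p$, sample $X$ and $Y$ independently from the normalized ``residual'' distributions $\mu'(x) := (\mu(x) - \min\{\mu(x),\nu(x)\})/(1-p)$ and $\nu'(x) := (\nu(x) - \min\{\mu(x),\nu(x)\})/(1-p)$, respectively.

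I would then verify (i) that $\tau$ is indeed a coupling of $\mu$ and $\nu$, by checking that each marginal recovers the correct distribution via direct substitution, and (ii) that $\pr_\tau(X \ne Y) \le 1 - p = d_{TV}(\mu,\nu)$; the key observation for (ii) is that the supports of $\mu'$ and $\nu'$ are disjoint (since on $\{x : \mu(x) \le \nu(x)\}$ the residual $\mu'$ vanishes, and vice versa), so whenever we draw from the independent residuals we already have $X \ne Y$, and the diagonal part contributes $0$ to the disagreement probability. Combined with the inequality proved above, this forces $\pr_\tau(X \ne Y) = d_{TV}(\mu,\nu)$.

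There is essentially no conceptual obstacle here: the argument is purely bookkeeping once the right coupling is written down. The only mild care needed is the degenerate case $p = 1$ (handled separately) and making sure the residual distributions are well defined (which is guaranteed by $p < 1$). Since $\widehat\Omega$ is implicitly finite/discrete in our applications, we need not worry about measurability technicalities.
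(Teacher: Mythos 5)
The paper states Proposition~\ref{prop:coupling-lemma} without proof, treating it as a standard fact (it is the classic coupling lemma for discrete distributions, found in most references on Markov chains). Your argument is the standard textbook proof: the inequality direction via the dual characterization $d_{TV}(\mu,\nu) = \max_A |\mu(A) - \nu(A)|$ together with the decomposition $\mu(A)-\nu(A) = \pr_\tau(X\in A, Y\notin A) - \pr_\tau(X\notin A, Y\in A)$, and the equality direction by exhibiting the maximal coupling built from the overlap $\gamma(x) = \min\{\mu(x),\nu(x)\}/p$ and the disjointly supported residual distributions. All the steps are correct, including the identity $d_{TV}(\mu,\nu) = 1-p$ where $p = \sum_x \min\{\mu(x),\nu(x)\}$, the handling of the degenerate case $p=1$, and the observation that disjoint residual supports force $X\ne Y$ on the residual branch. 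Since the paper offers no proof to compare against, there is nothing to contrast; your write-up is a complete and correct justification of the stated proposition.
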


The coupling $\tau$ of $\mu$ and $\nu$ that minimises $\pr_{(X,Y) \sim \tau}(X \ne Y)$ is called \emph{optimal}. 
Let us now assume that $\mu$ and $\nu$ are Bernoulli distributions with parameters $0 \le p \le q \le 1$ respectively, so $\pr_\mu(X=1) = p$ and $\pr_\nu(Y=1) = q$. The \emph{monotone coupling} $\tau$ of $\mu$ and $\nu$ is defined as follows. We pick $U$ uniformly at random in $[0,1]$ and set $X = 1$ only when $U \le p$ and $Y = 1$ only when $U \le q$. For this coupling we have $\pr_{(X,Y) \sim \tau }(X \ne Y) = q-p = d_{\mathrm{TV}}(X, Y)$ and, hence, the monotone coupling is optimal. This optimal coupling will come up in the coupling process when sampling from the marginals of auxiliary variables.

\subsection{Showing spectral independence in the $k$-SAT model} \label{sec:mixing-time:si}

In this section we prove Lemma~\ref{lem:si} about spectral independence and bounding the sum of influences. Throughout this section we fix a $k$-CNF formula~$\Phi$ and a $(r, r_0, r_0, 2r_0)$-marking $(\mathcal{V}_{\mathrm m}, \mathcal{V}_{\mathrm a}, \mathcal{V}_{\mathrm c})$ of $\Phi$ (cf. Definition~\ref{def:distributed-marking} and Proposition~\ref{prop:marginals}).

To briefly overview our approach, we bound the sum of influences of marked variables using the coupling process technique that is by now standard in the literature~\cite{galanis2019counting, moitra19, feng2020}. The new ingredient here is introducing auxiliary variables in the coupling process and exploiting the sparsity properties of logarithmic-sized sets of clauses, which allows us to conclude a $2^{-r_0 k} \log n$ spectral independence bound. The key idea is that if we progressively extend two assignments $X$ and $Y$ on auxiliary variables following the optimal coupling, with high probability over $X$ and $Y$, at some point the formulae $\Phi^X$ and $\Phi^Y$ factorise in small connected components, most of which appear in both  $\Phi^X$ and $\Phi^Y$. Then we can bound influences between marked variables by analysing the connected components where $\Phi^X$ and $\Phi^Y$ differ. 

To carry out the above plan, we begin by showing a bound on the sum of influences between marked variables in terms of a suitable coupling. Recall, given two assignments $\Lambda_1$ and $\Lambda_2$ on disjoint sets of variables, we denote by $\Lambda_1 \cup \Lambda_2$ the combined assignment on the union of their domains. 

\begin{proposition} \label{prop:influence-coupling}
    Let $u \in \mathcal{V}_{\mathrm m}$ and $\Lambda \colon S \to \{\mathsf{F}, \mathsf{T}\}$ with $S \subseteq \mathcal{V}_{\mathrm m} \setminus \{u\}$. Let $(X, Y)$ be a coupling where $X$ follows the distribution $\restr{\mu_{\Omega^{\Lambda \cup u \mapsto \mathsf{T}}}}{\mathcal{V}_{\mathrm m} }$ and $Y$ follows the distribution $\restr{\mu_{\Omega^{\Lambda \cup u \mapsto \mathsf{F}}}}{\mathcal{V}_{\mathrm m}}$. Then
    \begin{equation}  \label{eq:influence-coupling}
    \sum_{v \in \mathcal{V}_{\mathrm m} \setminus (S \cup \{u\})} \left| \mathcal{I}^\Lambda(u \to v) \right| \le \sum_{v \in \mathcal{V}_{\mathrm m} \setminus (S \cup \{u\})} \pr\left(X(v) \ne Y(v)\right).
\end{equation}
\end{proposition}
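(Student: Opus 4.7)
My plan is to observe that the left-hand side of \eqref{eq:influence-coupling} is a sum of total variation distances between the single-variable marginals at each $v$, and then to apply the coupling lemma (Proposition~\ref{prop:coupling-lemma}) to the marginal coupling that $(X,Y)$ induces at each such $v$. This reduces the proposition to two essentially one-line observations: identifying the influence as a TV distance and lifting a joint coupling to a marginal one.

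First I would unpack the definition of influence at a fixed $v \in \mathcal{V}_{\mathrm m}\setminus(S\cup\{u\})$. Using the definition in \eqref{eq:influence} together with the fact that, for any pinning $\tau$, $\pr_{\mu_\Omega}(v \mapsto \mathsf{T}\mid \tau) = \pr_{\mu_{\Omega^\tau}}(v \mapsto \mathsf{T})$, I can write
\begin{equation*}
\bigl|\mathcal{I}^{\Lambda}(u\to v)\bigr|
= \bigl|\pr_{\mu_{\Omega^{\Lambda\cup u\mapsto \mathsf{T}}}}(v\mapsto \mathsf{T}) - \pr_{\mu_{\Omega^{\Lambda\cup u\mapsto \mathsf{F}}}}(v\mapsto \mathsf{T})\bigr|.
\end{equation*}
Since $X(v)$ and $Y(v)$ are Bernoulli random variables with parameters equal to the two marginal probabilities of $v\mapsto\mathsf{T}$ under $\restr{\mu_{\Omega^{\Lambda\cup u\mapsto \mathsf{T}}}}{\mathcal{V}_{\mathrm m}}$ and $\restr{\mu_{\Omega^{\Lambda\cup u\mapsto \mathsf{F}}}}{\mathcal{V}_{\mathrm m}}$ respectively, this absolute difference is exactly $d_{\mathrm{TV}}(X(v),Y(v))$.

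Next I would note that $(X,Y)$, being a coupling of the two full marginal distributions on $\mathcal{V}_{\mathrm m}$, projects coordinatewise to a coupling of the Bernoulli laws of $X(v)$ and $Y(v)$. Applying the coupling lemma (Proposition~\ref{prop:coupling-lemma}) to this induced coupling gives $d_{\mathrm{TV}}(X(v),Y(v)) \le \pr(X(v)\ne Y(v))$. Combining with the identification above yields $|\mathcal{I}^{\Lambda}(u\to v)| \le \pr(X(v)\ne Y(v))$ for every $v\in \mathcal{V}_{\mathrm m}\setminus(S\cup\{u\})$, and summing over such $v$ produces \eqref{eq:influence-coupling}.

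There is no real obstacle here; the only care needed is to be explicit that the conditioning on a pinning $\tau$ in the definition of $\mathcal{I}^\Lambda$ coincides with sampling from $\mu_{\Omega^\tau}$, so that the two Bernoulli parameters really are the marginals of $X(v)$ and $Y(v)$. Once that is spelled out, the proof is a direct application of the coupling lemma term by term.
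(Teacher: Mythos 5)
Your argument is correct and is essentially identical to the paper's proof: both identify $|\mathcal{I}^\Lambda(u\to v)|$ with $d_{\mathrm{TV}}(X(v),Y(v))$ via the observation that $\pr(v\mapsto\omega\mid\Lambda, u\mapsto\mathsf{T})=\pr(X(v)=\omega)$ (and similarly for $Y$), apply the coupling lemma to the projected coupling at $v$, and sum. No gap.
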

\begin{proof}
 Let $v \in \mathcal{V}_{\mathrm m}$.
 Then for any $\omega \in \{\mathsf{F}, \mathsf{T}\}$, we have $\pr(v \mapsto \omega \vert \Lambda, u \mapsto \mathsf{T}) = \pr(X(v) = \omega)$ and $\pr(v \mapsto \omega \vert \Lambda, u \mapsto \mathsf{F}) = \pr(Y(v) = \omega)$. Thus, by the coupling lemma (Proposition~\ref{prop:coupling-lemma}),
\begin{equation*} 
    \left| \mathcal{I}^\Lambda(u \to v) \right| = \left|  \pr(X(v) = \mathsf{T}) - \pr(Y(v) = \mathsf{T})  \right| = d_{\TV}(X(v), Y(v)) \le \pr\left(X(v) \ne Y(v)\right),
\end{equation*}
and the proof follows by adding over $v \in \mathcal{V}_{\mathrm m} \setminus (S \cup \{u\})$.
\end{proof}

For two assignments $X$ and $Y$ on a subset of variables $V$, we say that $X$ and $Y$ have a \emph{discrepancy} at $v \in V$ when $X(v) \ne Y(v)$. In~\cite{feng2020} the authors  bound \eqref{eq:influence-coupling} by a constant (independent of $n$) when the underlying formula has bounded degree. However, the argument there breaks under the presence of high-degree variables due to the fact that we cannot control the number of bad clauses in a path of clauses unless the path has length at least $\Omega(\log n)$. Here instead we adapt the coupling process developed in~\cite{galanis2019counting}, which accounts for the presence of bad clauses. The main difference with respect to~\cite{galanis2019counting} is that here we only couple over some auxiliary variables, which significantly simplifies the analysis and leads to tighter bounds on the influences among marked variables. Later we extend this coupling over auxiliary variable to all marked and auxiliary variables so that we can apply Proposition~\ref{prop:influence-coupling}.

To present our  coupling process, let us first describe some of the notation and structures that are used in the process. Let $u \in \mathcal{V}_{\mathrm m}$ and $\Lambda \colon S \to \{\mathsf{F}, \mathsf{T}\}$ with $S \subseteq \mathcal{V}_{\mathrm m} \setminus \{u\}$. We start with two assignments $\widehat{X}$ and $\widehat{Y}$ that have a discrepancy at $u$ and agree with $\Lambda$ on $S$. In the coupling process given in Algorithm~\ref{alg:coupling} we extend $\widehat{X}$ and $\widehat{Y}$ on  auxiliary variables and, at the same time, we identify a set of \emph{failed clauses}, denoted $\faild \cup \failu$, which we define formally in Definition~\ref{def:fail} and will include those clauses containing a discrepancy, i.e., a variable $v$ such that $\widehat{X}(v) \ne \widehat{Y}(v)$.  At each step of the process, we choose a clause $c$ and determine if $c$ is it has become a failed clause. If the clause $c$ is not failed, we extend the coupling $(\widehat{X},\widehat{Y})$ to an auxiliary variable of $c$. For ease of reading, here we present a list of the sets that appear in our algorithm and a definition of $\faild$ and $\failu$. Once a variable/clause is added to one of this sets, it stays there for the rest of the process (except for $\mathcal{C}_{\mathrm {rem}}$, where clauses are only removed, see below).
\begin{definition}[The sets of variables/clauses of the coupling process] \label{def:fail}
~
\begin{enumerate}
    \item $\verticesd$. Set of discrepancies, i.e., set of variables $v$ with $\widehat{X}(v) \ne \widehat{Y}(v)$. 
    \item $\faild$. Set of all clauses containing a variable in $\verticesd$. These are failed clauses.
    \item $\mathcal{V}_{\mathrm {set}}$. Set of variables that are assigned a value in the coupling process. At the start of the process, $\mathcal{V}_{\mathrm {set}} = S \cup \{u\}$.
    \item $\failu$. Set of clauses that have been considered at some step of the coupling process, and  are
    either bad, or have all their auxiliary variables in $\mathcal{V}_{\mathrm {set}}$ and are unsatisfied by at least one of $\widehat{X}$ and $\widehat{Y}$. These are failed clauses. Note that if a clause is in $\failu$ and is not bad, then the clause necessarily comes up in either $\Phi^{\widehat{X}}$ or $\Phi^{\widehat{Y}}$, even if we extend the coupling further on other auxiliary variables.
    \item $\mathcal{C}_{\mathrm {rem}}$. Set of clauses that have unassigned auxiliary variables or have not been explored yet by the coupling process.
\end{enumerate}
\end{definition}
It is important in our arguments that we make sure that the set of failed clauses is connected in $G_\Phi$ at all times. In order to achieve connectivity of failed clauses, at each step of the coupling process we only consider clauses that are adjacent to failed clauses in $G_\Phi$, that is, the clause $c$ chosen above must satisfy $\var(c) \cap (\var(\faild) \cup \var(\failu)) \ne \emptyset$. Our coupling process on auxiliary variables is given in Algorithm~\ref{alg:coupling}. To enhance the clarity of this exposition, we provide a high-level example (Example~\ref{ex:coupling}) that the reader can follow along  either by referring to Algorithm~\ref{alg:coupling} or solely relying on the description given in this paragraph and Definition~\ref{def:fail}.

\begin{algorithm}[H]
  \begin{algorithmic}[1]

    \caption{The coupling process on auxiliary variables}  \label{alg:coupling}

    \REQUIRE A $k$-CNF formula $\Phi = (\mathcal{V},\mathcal{C})$, an $(r, r_0, r_0, 2r_0)$-marking $\mathcal{M} = (\mathcal{V}_{\mathrm m}, \mathcal{V}_{\mathrm a}, \mathcal{V}_{\mathrm c})$, $u \in \mathcal{V}_{\mathrm m}$ and $\Lambda \colon S \to \{\mathsf{F}, \mathsf{T}\}$ with $S \subseteq \mathcal{V}_{\mathrm m} \setminus \{u\}$. 
    
    \ENSURE a pair of assignments $\widehat{X}, \widehat{Y} \colon \mathcal{V}_{\mathrm {set}} \to \{\mathsf{F}, \mathsf{T}\}$ for some set of variables $\mathcal{V}_{\mathrm {set}}$ such that: \\
    $\circ$ \ $S \cup \{u\} \subseteq \mathcal{V}_{\mathrm {set}} \subseteq S \cup \{u\} \cup \mathcal{V}_{\mathrm a}$, \\
    $\circ$ \ $\widehat{X}$ and $\widehat{Y}$ agree with $\Lambda$ on $S$, $\widehat{X}(u) = \mathsf{T}$ and $\widehat{Y}(u) = \mathsf{F}$.

    \STATE We fix two total orders $\le_{\mathcal{V}}$ and $\le_{\mathcal{C}}$ over the variables and clauses of $\Phi$. These are only relevant to have a pre-determined order in which clauses and variables are considered in this algorithm.

    \STATE Initialise $\widehat{X}$ and $\widehat{Y}$ as $\Lambda$, and set $\widehat{X}(u) = \mathsf{T}$ and $\widehat{Y}(u) = \mathsf{F}$.

    \STATE $\mathcal{V}_{\mathrm {set}} \leftarrow S \cup \{u\}$, $\verticesd \leftarrow \{u\}$, $\faild \leftarrow \{ c \in \mathcal{C} : u \in \var(c) \}$,  $\failu \leftarrow \emptyset$, $\mathcal{C}_{\mathrm {rem}} \leftarrow \mathcal{C}$.  \label{line:start}

    \WHILE{$\exists c \in \mathcal{C}_{\mathrm {rem}}: \var(c) \cap (\verticesd \cup \var(\failu)) \ne \emptyset$}

    \STATE Let $c$ be smallest clause according to $\le_{\mathcal{C}}$ with $\var(c) \cap (\verticesd \cup \var(\failu)) \ne \emptyset$.

    \IF{$c$ is a bad clause}

    \STATE Remove $c$ from $\mathcal{C}_{\mathrm {rem}}$ \label{line:rem:bad} and add $c$ to $\failu$. \label{line:un:1}

    \ENDIF

    \IF{$c$ is a good clause and $(\var(c) \cap \mathcal{V}_{\mathrm a}) \setminus \mathcal{V}_{\mathrm {set}} = \emptyset$}

    \STATE Remove $c$ from $\mathcal{C}_{\mathrm {rem}}$ (as all auxiliary variables in $c$ have been set). \label{line:rem:un}

    \IF{$c$ is unsatisfied by at least one of $\widehat{X}$ and $\widehat{Y}$ } 

    \STATE Add $c$ to $\failu$. \label{line:un:2}

    \ENDIF

    \ENDIF
    
    \IF{$c$ is a good clause and $(\var(c) \cap \mathcal{V}_{\mathrm a}) \setminus \mathcal{V}_{\mathrm {set}} \ne \emptyset$ } \label{line:set:start} 

    \STATE Let  $v$ be the smallest  variable in $(\var(c) \cap \mathcal{V}_{\mathrm a}) \setminus \mathcal{V}_{\mathrm {set}}$ (according to $\le_{\mathcal{V}}$). \label{line:v}
    
    \STATE \label{line:sample}
    Extend $\widehat{X}$ and $\widehat{Y}$ to $v$ by sampling from the optimal coupling between the marginal distributions of $\mu_{\Omega^{\widehat{X}}}$ and $\mu_{\Omega^{\widehat{Y}}}$ on $v$, and add $v$ to $\mathcal{V}_{\mathrm {set}}$.

    \IF{$\widehat{X}(v) \ne \widehat{Y}(v)$} 
    
    \STATE Add $v$ to $\verticesd$. Add all clauses containing $v$ to $\faild$. \label{line:discrepancy}

    \ENDIF \label{line:set:end} 
    
    \ENDIF

    \ENDWHILE

    \RETURN $(\widehat{X}, \widehat{Y})$.
    
  \end{algorithmic}
\end{algorithm}

\begin{example}[High-level example of an execution of the coupling process on auxiliary variables.] \label{ex:coupling} Here we assume that $S = \emptyset$, so at the start of Algorithm~\ref{alg:coupling} the assignments $\widehat{X}$ and $\widehat{Y}$ are only defined on a variable $u \in \mathcal{V}_{\mathrm m}$. Moreover, for simplicity of exposition, we assume that each good clause only has $2$ or $3$ auxiliary variables. Note that in our application of the coupling process, good clauses will have a larger number of auxiliary variables (at least $r_0(k-3)$ of them per clause with $k$ large) due to the application of LLL, see Section~\ref{sec:marking} for details. In this example we plot how the graph $\faild \cup \failu$ (as subgraph of $G_\Phi$) could evolve in a few iterations of Algorithm~\ref{alg:coupling}. The variables with names in this example have been assigned values in the coupling process.

\begin{figure}[H]
\centering
\begin{subfigure}[b]{0.48\textwidth}
\centering
\begin{tikzpicture}
\begin{scope}[every node/.style={circle,thick,draw}]
    \node [rectangle split, rectangle split horizontal, rectangle split parts = 3, draw, text centered] (A) at (0,0) {
        $u, \ldots$
        \nodepart{two} $\cdots$ 
        \nodepart{three} $\cdots$
    };
    \node [rectangle split, rectangle split horizontal, rectangle split parts = 3, draw, text centered] (B) at (0,1.5) {
        $u, \ldots$
        \nodepart{two} $\cdots$
        \nodepart{three} $\cdots$
    };
\end{scope}

\begin{scope}[>={Stealth[black]},
              every node/.style={fill=white,circle},
              every edge/.style={draw=black,very thick}]
    \path [-] (A) edge node {$u$} (B);
    \end{scope}
\end{tikzpicture}
\caption{The graph $\faild \cup \failu$ at the start of the coupling process. $\faild$ is the set of all clauses that contain the variable $u$. We assume that there are only two of them in this example and call them $c_1$ and $c_2$. $\failu$ is empty at the start, and we add clauses to $\failu$ as the execution of the coupling process goes on. The set $\mathcal{V}_{\mathrm set}$ contains only $u$, and $\mathcal{C}_{\mathrm rem} = \mathcal{C} \setminus \{c_1, c_2\}$. \newline \newline \newline}
\end{subfigure}
$\quad$
\begin{subfigure}[b]{0.48\textwidth}
\centering
\begin{tikzpicture}
\begin{scope}[every node/.style={circle,thick,draw}]
    \node [rectangle split, rectangle split horizontal, rectangle split parts = 3, draw, text centered] (A) at (0,0) {
        $u, \ldots$
        \nodepart{two} $v_1$, \ldots 
        \nodepart{three} $\cdots$
    };
    \node [rectangle split, rectangle split horizontal, rectangle split parts = 3, draw, text centered] (B) at (0,1.5) {
        $u, \ldots$
        \nodepart{two} $\cdots$
        \nodepart{three} $\cdots$
    };
\end{scope}

\begin{scope}[>={Stealth[black]},
              every node/.style={fill=white,circle},
              every edge/.style={draw=black,very thick}]
    \path [-] (A) edge node {$u$} (B);
    \end{scope}
\end{tikzpicture}
\caption{The algorithm chooses a clause $c$ of $\Phi$ connected to $\faild \cup \failu$ in $G_\Phi$, that is,  $\var(c) \cap (\verticesd \cup \var(\failu)) \ne \emptyset$. The clause in particular that we pick does not matter as long as this connectivity property holds. In this case, the algorithm picks $c_1$. Let $v_1$ be an auxiliary variable of $c_1$ that is not in $\mathcal{V}_{\mathrm set}$. Then we ``couple" $\widehat{X}$ and $\widehat{Y}$ on $v_1$, that is, we sample $(\widehat{X}(v_1), \widehat{Y}(v_1))$ following the optimal coupling between the marginal distributions of $\mu_{\Omega^{\widehat{X}}}$ and $\mu_{\Omega^{\widehat{Y}}}$ on $v_1$. Let us assume that here $\widehat{X}(v_1) = \widehat{Y}(v_1)$. Then $\faild$ does not change and $v_1$ is added to $\mathcal{V}_{\mathrm set}$.}
\end{subfigure}

\vspace{5mm}

\begin{subfigure}[b]{0.44\textwidth}
\centering
\begin{tikzpicture}
\begin{scope}[every node/.style={circle,thick,draw}]
    \node [rectangle split, rectangle split horizontal, rectangle split parts = 3, draw, text centered] (A) at (0,0) {
        $u, \ldots$
        \nodepart{two} $v_1$, $v_2$ 
        \nodepart{three} $\cdots$
    };
    \node [rectangle split, rectangle split horizontal, rectangle split parts = 3, draw, text centered] (B) at (0,1.5) {
        $u, \ldots$
        \nodepart{two} $\cdots$
        \nodepart{three} $\cdots$
    };
\end{scope}

\begin{scope}[>={Stealth[black]},
              every node/.style={fill=white,circle},
              every edge/.style={draw=black,very thick}]
    \path [-] (A) edge node {$u$} (B);
    \end{scope}
\end{tikzpicture}
\caption{We carry out another iteration of the coupling process, choosing again $c_1$ and another auxiliary variable of $c_1$, denoted $v_2$. Again, assume that when we couple $\widehat{X}$ and $\widehat{Y}$ on $v_2$ we sample $(\widehat{X}(v_2), \widehat{Y}(v_2))$ such that $\widehat{X}(v_2) = \widehat{Y}(v_2)$. Thus, the sets $\faild$ and $\failu$ do not change. In the next iteration, note that $\var(c_1) \cap \mathcal{V}_{\mathrm a} \subseteq \mathcal{V}_{\mathrm set}$, so we remove $c_1$ from $\mathcal{C}_{\mathrm rem}$ and add it to $\failu$ if it is not satisfied.}
\end{subfigure}
$\quad$
\begin{subfigure}[b]{0.51\textwidth}
\centering
\begin{tikzpicture}
\begin{scope}[every node/.style={circle,thick,draw}]
    \node [rectangle split, rectangle split horizontal, rectangle split parts = 3, draw, text centered] (A) at (0,0) {
        $u, \ldots$
        \nodepart{two} $v_1$, $v_2$ 
        \nodepart{three} $\cdots$
    };
    \node [rectangle split, rectangle split horizontal, rectangle split parts = 3, draw, text centered] (B) at (0,1.5) {
        $u, \ldots$
        \nodepart{two} $\cdots$
        \nodepart{three} $\cdots$
    };
    \node [rectangle split, rectangle split horizontal, rectangle split parts = 3, draw, text centered] (C) at (5,0) {
        $\cdots$
        \nodepart{two} $v_1, v_3, \ldots$
        \nodepart{three} $\cdots$
    };
    \node [rectangle split, rectangle split horizontal, rectangle split parts = 3, draw, text centered] (D) at (5,1.5) {
        $\cdots$
        \nodepart{two} $v_3, \ldots$
        \nodepart{three} $\cdots$
    };
\end{scope}

\begin{scope}[>={Stealth[black]},
              every node/.style={fill=white,circle},
              every edge/.style={draw=black,very thick}]
    \path [-] (A) edge node {$u$} (B);
    \path [-] (A) edge node {$v_1$} (C);
    \path [-] (C) edge node {$v_3$} (D);
    \end{scope}
\end{tikzpicture}
\caption{Now we pick a different clause $c_3$ with $\var(c_3) \cap (\verticesd \cup \var(\failu)) \ne \emptyset$. This time the clause $c_3$ is not in $\faild$, but it is connected to $\faild \cup \failu$ through the variable $v_1$. We couple $\widehat{X}$ and $\widehat{Y}$ on an auxiliary variable $v_3$ of $c_3$, this time leading to a discrepancy, that is, $\widehat{X}(v_3) \ne \widehat{Y}(v_3)$. Thus, $c_3$ is added to $\faild$, and so is any other clause containing $v_3$. In this example, we assume there are two such clauses.\newline }
\end{subfigure}

\vspace{5mm}

\begin{subfigure}[b]{\textwidth}
\centering
\begin{tikzpicture}
\begin{scope}[every node/.style={circle,thick,draw}]
    \node [rectangle split, rectangle split horizontal, rectangle split parts = 3, draw, text centered] (A) at (0,0) {
        $u, \ldots$
        \nodepart{two} $v_1$, $v_2$ 
        \nodepart{three} $\cdots$
    };
    \node [rectangle split, rectangle split horizontal, rectangle split parts = 3, draw, text centered] (B) at (0,2) {
        $u, \ldots$
        \nodepart{two} $\cdots$
        \nodepart{three} $\cdots$
    };
    \node [rectangle split, rectangle split horizontal, rectangle split parts = 3, draw, text centered] (C) at (5,0) {
        $\cdots$
        \nodepart{two} $v_1, v_3, v_4$
        \nodepart{three} $\cdots$
    };
    \node [rectangle split, rectangle split horizontal, rectangle split parts = 3, draw, text centered] (D) at (5,2) {
        $\cdots$
        \nodepart{two} $v_3, \ldots$
        \nodepart{three} $\cdots$
    };
    \node [rectangle split, rectangle split horizontal, rectangle split parts = 3, draw, text centered] (E) at (10,2) {
        $\cdots$
        \nodepart{two} $v_2, v_4$
        \nodepart{three} $\cdots$
    };
\end{scope}

\begin{scope}[>={Stealth[black]},
              every node/.style={fill=white,circle},
              every edge/.style={draw=black,very thick}]
    \path [-] (A) edge node {$u$} (B);
    \path [-] (A) edge node {$v_1$} (C);
    \path [-] (D) edge node {$v_3$} (C);
    \path [-] (C) edge node {$v_4$} (E);
    \end{scope}
\end{tikzpicture}
\caption{Assume that the coupling process picks again $c_3$, and couples $\widehat{X}$ and $\widehat{Y}$ on the third auxiliary variable $v_4$ of $c_3$. Assume that $\widehat{X}(v_4) = \widehat{Y}(v_4)$, so $\faild$ stays the same. Let us perform another iteration and assume that there is another clause $c_5$ whose auxiliary variables are exactly $v_2$ and $v_4$ and that this is the next clause picked by the coupling process. Since we cannot couple any more variables in $c_5$, we remove it from $\mathcal{C}_\mathrm{rem}$. On top of that, assume that when we coupled $\widehat{X}$ and $\widehat{Y}$ on $v_2$ and $v_4$, both times we sampled the value of $v_2$ and $v_4$ that does not satisfy $c_5$. Then, the clause $c_5$ is going to remain unsatisfied during the rest of the coupling process and is connected to $\faild \cup \failu$ in $G_\Phi$ via $v_4$, so we add it to $\failu$.}
\end{subfigure}
\end{figure}

\begin{figure}[H]
\centering
\begin{subfigure}[b]{\textwidth}
\centering
\begin{tikzpicture}
\begin{scope}[every node/.style={circle,thick,draw}]
    \node [rectangle split, rectangle split horizontal, rectangle split parts = 3, draw, text centered] (A) at (0,0) {
        $u, \ldots$
        \nodepart{two} $v_1$, $v_2$ 
        \nodepart{three} $\cdots$
    };
    \node [rectangle split, rectangle split horizontal, rectangle split parts = 3, draw, text centered] (B) at (0,2) {
        $u, \ldots$
        \nodepart{two} $\cdots$
        \nodepart{three} $\cdots$
    };
    \node [rectangle split, rectangle split horizontal, rectangle split parts = 3, draw, text centered] (C) at (5,0) {
        $\cdots$
        \nodepart{two} $v_1, v_3, v_4$
        \nodepart{three} $\cdots$
    };
    \node [rectangle split, rectangle split horizontal, rectangle split parts = 3, draw, text centered] (D) at (5,2) {
        $\cdots$
        \nodepart{two} $v_3, \ldots$
        \nodepart{three} $\cdots$
    };
    \node [rectangle split, rectangle split horizontal, rectangle split parts = 3, draw, text centered] (E) at (10,2) {
        $\cdots$
        \nodepart{two} $v_2, v_4$
        \nodepart{three} $\cdots$
    };
    \node [rectangle split, rectangle split horizontal, rectangle split parts = 1, draw, text centered] (F) at (10,0) {
        Bad clause containing $v_3$
    };
\end{scope}

\begin{scope}[>={Stealth[black]},
              every node/.style={fill=white,circle},
              every edge/.style={draw=black,very thick}]
    \path [-] (A) edge node {$u$} (B);
    \path [-] (A) edge node {$v_1$} (C);
    \path [-] (A) edge node {$v_2$} (D);
    \path [-] (D) edge node {$v_3$} (C);
    \path [-] (C) edge node {$v_4$} (E.south west);
    \path [-] (C) edge node {$v_3$} (F);
    \path [-] (F.north east) edge node {$v_3$} (D.south east);
    \end{scope}
\end{tikzpicture}
 \setcounter{subfigure}{5}
\caption{In this case the coupling picks $c_6$ with $v_3 \in \var(c_6) \cap (\verticesd \cup \var(\failu)) \ne \emptyset$. In this case, we assume that $c_6$ is a bad clause, and thus, it is removed from $\mathcal{C}_{\mathrm rem}$ and added to $\failu$. Clauses that are neighbours of $c_6$ can now be explored by the coupling process in future iterations.}
\end{subfigure}
\end{figure}

\end{example}

In the rest of this section we fix the inputs of Algorithm~\ref{alg:coupling} unless stated otherwise. In order to motivate our technical results and help with the understanding of this section, we provide the following proof outline, keeping in mind that our purpose is to use the coupling of Algorithm~\ref{alg:coupling} in conjunction with Proposition~\ref{prop:influence-coupling}.
\begin{enumerate}
    \item First, we analyse the sets $\mathcal{V}_{\mathrm {set}}$, $\verticesd$, $\faild$, $\failu$ and $\mathcal{C}_{\mathrm {rem}}$ obtained when the coupling process has finished, prove the connectivity property of $\faild \cup \failu$ and the fact that $\failu \cap \mathcal{C}_{\mathrm rem} = \emptyset$ (Proposition~\ref{prop:coupling:properties}).
    \item Then we analyse the structure of $\Phi^{\widehat{X}}$ and $\Phi^{\widehat{Y}}$, whose clauses turn out to be in  $\mathcal{C}_{\mathrm rem} \cup \failu$ (Lemma~\ref{lem:coupling:structure}) and, in fact, $\Phi^{\widehat{X}}$ and $\Phi^{\widehat{Y}}$ share the connected components of $\mathcal{C}_{\mathrm rem}$. These connected components are not connected to any other clause of $\Phi^{\widehat{X}}$ and $\Phi^{\widehat{Y}}$ since $\mathcal{C}_{\mathrm rem} \cup \failu$.
    \item Thirdly, we extend the coupling $(\widehat{X},\widehat{Y})$ to a coupling $(X,Y)$ on marked and auxiliary variables and use this coupling in conjunction with Proposition~\ref{prop:influence-coupling} to bound sums of influences between marked variables by a constant times the expected size of $\failu$ over the random choices of Algorithm~\ref{alg:coupling}, see Lemma~\ref{lem:coupling:si}.
    \item Finally, we bound the expected size of $\failu$ by $L := \lceil 2k^4 \log n \rceil$. In order to do so, first we bound the probability that a clause fails in the coupling process. The fact that the marginals of auxiliary variables are close to $1/2$ is relevant here. Then we perform a union bound over the number of connected sets of clauses of size $L$. This step turns out to present certain technical difficulties due to the fact that the events that $c_1 \in \faild \cup \failu$ and $c_2 \in \faild \cup \failu$ are not necessarily independent if $c_1$ and $c_2$ share variables. Moreover, a connected set of clauses may include several bad clauses, which always fail if they are ever considered in the coupling process. However, we are able to bypass these issues by applying our bounds on the number of bad clauses in a connected subgraph of $G_\Phi$ (Lemma~\ref{lem:bad}) and the fact if a connected set of clauses is logarithmic-sized, then it has constant tree-excess (Lemma~\ref{lem:tree-excess}), which greatly simplifies the dependency problem mentioned above. 
\end{enumerate}

With this proof outline in mind, we start our analysis of the coupling process. The some of the properties stated in Proposition~\ref{prop:coupling:properties} can be understood from the diagram given in Figure~\ref{fig:venn}.

\begin{proposition}[Properties of the coupling process] \label{prop:coupling:properties}
The coupling process in Algorithm~\ref{alg:coupling} terminates eventually and, at the end of the process, the sets $\mathcal{V}_{\mathrm {set}}$, $\verticesd$, $\faild$, $\failu$ and $\mathcal{C}_{\mathrm {rem}}$ present the following properties:
    \begin{enumerate}
        \item \label{item:coupling:1} We have $S \cup \{u\} \subseteq \mathcal{V}_{\mathrm {set}} \subseteq \mathcal{V}_{\mathrm a} \cup S \cup \{u\}$, $\verticesd = \{ v\in \mathcal{V}_{\mathrm {set}} : \widehat{X}(v) \ne \widehat{Y}(v) \}$, and $\faild$ is the set of clauses containing a variable in $\verticesd$.
        
        \item \label{item:coupling:2} For all $c \in \failu$ we have $\var(c) \cap \mathcal{V}_{\mathrm a} \subseteq \mathcal{V}_{\mathrm {set}}$ and $c$ is unsatisfied by at least one of $\widehat{X}$ and $\widehat{Y}$. 

        \item \label{item:coupling:3}  For all $c \in \mathcal{C}_{\mathrm {rem}}$, we have $\var(c) \cap (\verticesd \cup \var(\failu)) = \emptyset$. 
        
        \item \label{item:coupling:4}  For all $c \in \mathcal{C} \setminus (\mathcal{C}_{\mathrm {rem}} \cup \failu)$, we have $\var(c) \cap (\verticesd \cup \var(\failu)) \ne \emptyset$, $\var(c) \cap \mathcal{V}_{\mathrm a} \subseteq \mathcal{V}_{\mathrm {set}}$ and $c$ is satisfied by $\widehat{X}$ and $\widehat{Y}$.
        
        \item \label{item:coupling:5}  The set $\faild \cup \failu$ is connected in $G_\Phi$. 
    \end{enumerate}
\end{proposition}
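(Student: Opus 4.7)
The plan is to track each invariant through the loop of Algorithm~\ref{alg:coupling} and then read the five claims off at termination. Termination itself is immediate: in every iteration either the chosen clause $c$ is removed from $\mathcal{C}_{\mathrm {rem}}$ (the case where $c$ is bad, and the case where $c$ is good with all its auxiliary variables already in $\mathcal{V}_{\mathrm {set}}$) or one new auxiliary variable of $c$ enters $\mathcal{V}_{\mathrm {set}}$ while $c$ stays in $\mathcal{C}_{\mathrm {rem}}$. Thus each clause can be picked at most $|\var(c)\cap\mathcal{V}_{\mathrm a}|+1\le k+1$ times, bounding the number of iterations by $(k+1)|\mathcal{C}|$.

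Items 1--4 then reduce to routine bookkeeping. For item 1, inspection of the updates shows that $\mathcal{V}_{\mathrm {set}}$ starts as $S\cup\{u\}$ and grows only by auxiliary variables selected in the sampling step, while $\verticesd$ and $\faild$ are initialised and updated so as to always equal, respectively, the set of discrepancies in $\mathcal{V}_{\mathrm {set}}$ and the set of clauses containing such a discrepancy. For item 2, a bad clause $c$ added to $\failu$ has $\var(c)\cap\gv=\emptyset$ by Proposition~\ref{prop:bad}, so trivially $\var(c)\cap\mathcal{V}_{\mathrm a}\subseteq\mathcal{V}_{\mathrm {set}}$, and no literal of $c$ is ever set under $\widehat{X}$ or $\widehat{Y}$ because $\mathcal{V}_{\mathrm {set}}\subseteq\mathcal{V}_{\mathrm a}\cup S\cup\{u\}$ contains no bad variable; a good clause $c$ added to $\failu$ passes the explicit tests $(\var(c)\cap\mathcal{V}_{\mathrm a})\setminus\mathcal{V}_{\mathrm {set}}=\emptyset$ and ``$c$ unsatisfied by at least one of $\widehat{X},\widehat{Y}$'', and these properties persist because subsequent updates only extend $\mathcal{V}_{\mathrm {set}}$ by auxiliary variables, none of which can turn a literal of $c$ true after all its auxiliary variables are already set. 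Item 3 is simply the negation of the loop guard at termination, and item 4 follows because the only way a clause leaves $\mathcal{C}_{\mathrm {rem}}$ without entering $\failu$ is being good, having all auxiliary variables set, and being satisfied by both $\widehat{X}$ and $\widehat{Y}$ at the moment of processing; all three conditions are monotone (a satisfied clause cannot become unsatisfied and $\mathcal{V}_{\mathrm {set}}$, $\verticesd$, $\failu$ only grow), so they remain valid until termination.

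The main step that I expect to need some care is item 5: the connectivity of $\faild\cup\failu$ in $G_\Phi$. I would prove it as a loop invariant. Initially $\failu=\emptyset$ and $\faild$ is the set of clauses containing $u$, which pairwise share $u$ and hence form a clique in $G_\Phi$. For the inductive step, the clause $c$ processed in an iteration is chosen because $\var(c)\cap(\verticesd\cup\var(\failu))\ne\emptyset$. If this intersection meets $\verticesd$, then by item~1 the clause $c$ already belongs to $\faild$; otherwise $c$ shares a variable with some $c'\in\failu$, so $c$ is adjacent in $G_\Phi$ to the current $\faild\cup\failu$. Hence adding $c$ to $\failu$ in either removal case keeps $\faild\cup\failu$ connected. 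When the sampling step produces a new discrepancy $v\in\var(c)$, the clause $c$ itself enters $\faild$ and every other clause newly added to $\faild$ also contains $v$, hence is adjacent to $c$; combined with the fact that $c$ was already connected to the previous $\faild\cup\failu$ by the argument above, the enlarged set is again connected. I do not foresee any substantial obstacle beyond item~5, where the subtle point is to note that the very criterion by which the loop picks $c$ certifies that $c$ is connected to $\faild\cup\failu$ right at the moment a new discrepancy is about to be created.
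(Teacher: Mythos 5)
Your proposal is correct and follows essentially the same approach as the paper's proof: termination via monotonicity of $\mathcal{C}_{\mathrm {rem}}$ and $\mathcal{V}_{\mathrm {set}}$, items~1--4 by tracking the loop invariants and where each set is updated, and item~5 by an inductive argument that every enlargement of $\faild$ or $\failu$ is driven by the loop guard $\var(c)\cap(\verticesd\cup\var(\failu))\ne\emptyset$, which certifies adjacency to the existing failed set. The only cosmetic difference is that you give an explicit iteration bound and spell out the monotonicity reasoning for item~4, which the paper leaves implicit.
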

\begin{proof}  
Each iteration of the coupling procedure either removes a clause from $\mathcal{C}_{\mathrm {rem}}$, or samples the values $\widehat{X}(v)$ and $\widehat{Y}(v)$ for an auxiliary variable $v$ and adds $v$ to $\mathcal{V}_{\mathrm {set}} \subseteq \mathcal{V}$. As $\mathcal{C}_{\mathrm {rem}}$ and $\mathcal{V}$ are finite, the coupling terminates after a finite number of iterations. We prove the five properties in the statement separately. First, we note that the sets  $\mathcal{V}_{\mathrm {set}}$, $\verticesd$, $\faild$, $\failu$ never decrease in size during the execution of Algorithm~\ref{alg:coupling}, whereas the set $\mathcal{C}_{\mathrm {rem}}$ never increases in size.

\emph{Property~\ref{item:coupling:1}.} Note that at the start of Algorithm~\ref{alg:coupling}   (line~\ref{line:start}) this property holds. The result then follows from the fact that the sets $\mathcal{V}_{\mathrm {set}}$, $\verticesd$ and $\faild$ are only updated from line~\ref{line:set:start} to line~\ref{line:set:end} of Algorithm~\ref{alg:coupling}, and these steps preserve Property~\ref{item:coupling:1}.

\emph{Property~\ref{item:coupling:2}.} This follows from the facts that the set $\failu$ is originally empty, it is only extended in lines~\ref{line:un:1} and~\ref{line:un:2}, and bad clauses do not contain auxiliary variables. 

\emph{Property~\ref{item:coupling:3}.} This property follows from the fact that clauses that satisfy $\var(c) \cap (\verticesd \cup \var(\failu)) \ne \emptyset$ at some point are eventually removed from $\mathcal{C}_{\mathrm {rem}}$ in either line~\ref{line:rem:bad} (if they are bad) or in line~\ref{line:rem:un} (if they are good, once all the auxiliary variables of the clause are in $\mathcal{V}_{\mathrm {set}}$).

\emph{Property~\ref{item:coupling:4}.} Let $c \in \mathcal{C} \setminus (\mathcal{C}_{\mathrm {rem}} \cup \failu)$. Then $c$ has been removed from $\mathcal{C}_{\mathrm {rem}}$ in line~\ref{line:un:1} or line~\ref{line:rem:un}. Note that if $c$ is removed from $\mathcal{C}_{\mathrm {rem}}$ in line~\ref{line:un:1}, then $c \in  \failu$, a contradiction. Thus, $c$ has been removed from $\mathcal{C}_{\mathrm {rem}}$ in line~\ref{line:rem:un}, but it has not been added to $\failu$ in line~\ref{line:un:2}, which proves this property.

\emph{Property~\ref{item:coupling:5}.} We note that at the start of the coupling process (line~\ref{line:start}) $\faild \cup \failu$ is connected. Let us analyse every line of the algorithm where the sets $\faild$ and $\failu$ are enlarged. When it comes to $\faild$, this occurs in line~\ref{line:discrepancy} if this line is executed. Let $c$ be the clause considered in that iteration of the coupling process and let $v$ be the variable of $c$ considered in line~\ref{line:v}. We recall that $\var(c) \cap (\verticesd \cup \var(\failu)) \ne \emptyset$ and $v \in (\var(c) \cap \mathcal{V}_{\mathrm a} ) \setminus \mathcal{V}_{\mathrm {set}}$. In line~\ref{line:discrepancy} we add all to $\faild$ all the clauses containing $v$. Let $C_v$ be the set of such clauses. Since $\emptyset \ne \var(c) \cap (\verticesd \cup \var(\failu)) \subseteq \var(c) \cap \var(\faild \cup \failu)$ and $c \in C_v$, we conclude that $\faild \cup \failu \cup C_v$ is connected as we wanted. When it comes to $\failu$, we add clauses in lines~\ref{line:un:1} and~\ref{line:un:2}. In this case, we add a clause $c$ such that $\var(c) \cap (\verticesd \cup \var(\failu)) \ne \emptyset$, so $\faild \cup \failu \cup \{c\}$ is connected in $G_\Phi$. 
\end{proof}

\begin{figure}[H]
    \centering
\begin{tikzpicture}
\scope
\fill[gray!30] (-2.85,-2) rectangle (4,2);
\fill[white] (-4,0) circle (1.5);
\endscope 

\draw (-4,0) circle (1.5) (-4,0)   node [text=black] {$\failu$}
      (-2.85,-1) rectangle (-1,1) (-2,0) node [text=black] {$\faild$}
      (5,0) circle (2) (5,0)  node [text=black] {$\mathcal{C}_{\mathrm{rem}}$};
\node at (1,0) {Satisfied by $\widehat{X}$ and $\widehat{Y}$};
\end{tikzpicture}
    \caption{Venn diagram of the sets of clauses $\failu, \faild$ and $\mathcal{C}_{\mathrm rem}$. Here the area plotted in gray represents the set of clauses that are satisfied by $\widehat{X}$ and $\widehat{Y}$. Every clause is in one of these four sets. Moreover, $\var(\failu \cup \faild) \cap \var(\mathcal{C}_{\mathrm rem}) = \emptyset$ and, in particular, $\failu \cup \faild$ and $\mathcal{C}_{\mathrm rem}$ are disjoint (Proposition~\ref{prop:coupling:properties}, Property~\ref{item:coupling:3}).}
    \label{fig:venn}
\end{figure}

We can now prove our main result concerning the structure of $\Phi^{\widehat{X}}$ and $\Phi^{\widehat{Y}}$.

\begin{lemma} \label{lem:coupling:structure}
     Let $\widehat{X}$ and $\widehat{Y}$ be the assignments returned by Algorithm~\ref{alg:coupling} and let $\mathcal{C}_{\mathrm {rem}}$ and $\failu$ be as in Proposition~\ref{prop:coupling:properties}. There are sets of clauses $\mathcal{C}_1 \subseteq \mathcal{C}_{\mathrm {rem}}$ and $\mathcal{C}_2, \mathcal{C}_3 \subseteq \failu$ such that $\Phi^{\widehat{X}} = (\mathcal{V} \setminus \mathcal{V}_{\mathrm {set}}, \mathcal{C}_1 \cup \mathcal{C}_2)$ and $\Phi^{\widehat{Y}} = (\mathcal{V} \setminus \mathcal{V}_{\mathrm {set}}, \mathcal{C}_1 \cup \mathcal{C}_3)$, where the variables in $\mathcal{V}_{\mathrm {set}}$ are removed from the clauses in $\mathcal{C}_1, \mathcal{C}_2, \mathcal{C}_3$.
\end{lemma}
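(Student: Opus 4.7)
The plan is to partition the clauses of $\Phi$ into three groups according to the taxonomy established in Proposition~\ref{prop:coupling:properties} and to verify, case by case, which clauses survive the simplification under $\widehat X$ and $\widehat Y$. The natural choices of sets are
\[
\mathcal{C}_1 := \{c \in \mathcal{C}_{\mathrm {rem}} : \widehat X \text{ does not satisfy } c\}, \quad \mathcal{C}_2 := \{c \in \failu : \widehat X \text{ does not satisfy } c\}, \quad \mathcal{C}_3 := \{c \in \failu : \widehat Y \text{ does not satisfy } c\}.
\]

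The first step is to check that, for any $c \in \mathcal{C}_{\mathrm {rem}}$, the assignments $\widehat X$ and $\widehat Y$ agree on every variable of $c$ that has been set. Indeed, by Property~\ref{item:coupling:3} we have $\var(c) \cap \verticesd = \emptyset$, and by Property~\ref{item:coupling:1} the set of discrepancies between $\widehat X$ and $\widehat Y$ is exactly $\verticesd$. Hence, $c$ is satisfied by $\widehat X$ if and only if it is satisfied by $\widehat Y$; equivalently, $\mathcal{C}_1$ could be defined using $\widehat Y$ in place of $\widehat X$. This symmetry is precisely what allows $\mathcal{C}_1$ to appear as the common clause-part of both $\Phi^{\widehat X}$ and $\Phi^{\widehat Y}$.

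The main step is then to show that, after removing all literals of variables in $\mathcal{V}_{\mathrm {set}}$, the clauses of $\Phi^{\widehat X}$ are exactly those in $\mathcal{C}_1 \cup \mathcal{C}_2$. I would partition $\mathcal{C}$ into $\mathcal{C}_{\mathrm {rem}}$, $\failu$, and the leftover $\mathcal{C} \setminus (\mathcal{C}_{\mathrm {rem}} \cup \failu)$. By Property~\ref{item:coupling:4}, every clause in the third group is satisfied by both $\widehat X$ and $\widehat Y$ and therefore disappears from both $\Phi^{\widehat X}$ and $\Phi^{\widehat Y}$. By definition of $\mathcal{C}_1$ and $\mathcal{C}_2$, the clauses from $\mathcal{C}_{\mathrm {rem}}$ and $\failu$ that survive in $\Phi^{\widehat X}$ are precisely $\mathcal{C}_1$ and $\mathcal{C}_2$ respectively. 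The analogous statement for $\Phi^{\widehat Y}$ follows by the same argument with $\mathcal{C}_3$ in place of $\mathcal{C}_2$. Since the domain of both $\widehat X$ and $\widehat Y$ is $\mathcal{V}_{\mathrm {set}}$ (Property~\ref{item:coupling:1}), the variable set of both simplified formulas is $\mathcal{V} \setminus \mathcal{V}_{\mathrm {set}}$, as claimed.

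There is no conceptually hard step here; the proof is essentially a bookkeeping exercise that translates the properties guaranteed by Proposition~\ref{prop:coupling:properties} into the structural statement of the lemma. The one subtle point worth emphasising in the write-up is the use of Property~\ref{item:coupling:3} to justify that membership in $\mathcal{C}_1$ is independent of which of $\widehat X$ or $\widehat Y$ is used, since this is what ensures $\Phi^{\widehat X}$ and $\Phi^{\widehat Y}$ genuinely share the common clause set $\mathcal{C}_1$ while differing only through $\mathcal{C}_2$ versus $\mathcal{C}_3$ inside $\failu$.
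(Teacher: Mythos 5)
Your proposal is correct and follows essentially the same route as the paper's proof: the same three-way partition of $\mathcal{C}$ into $\mathcal{C}_{\mathrm {rem}}$, $\failu$, and the remainder, with Property~\ref{item:coupling:3} (no discrepancy variables meet $\mathcal{C}_{\mathrm {rem}}$) giving the common part $\mathcal{C}_1$ and Property~\ref{item:coupling:4} killing the leftover clauses. Your write-up is slightly more explicit in naming the sets $\mathcal{C}_1,\mathcal{C}_2,\mathcal{C}_3$ up front, but the substance is identical.
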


\begin{proof}
We determine the set of clauses that are unsatisfied by $\widehat{X}$ or $\widehat{Y}$ with the help of Proposition~\ref{prop:coupling:properties}. We distinguish 3 disjoint cases:
\begin{itemize}
    \item $c \in \mathcal{C}_{\mathrm {rem}}$. Then $\var(c) \cap \verticesd = \emptyset$, so $\widehat{X}$ and $\widehat{Y}$ agree in all the variables in $\var(\mathcal{C}_{\mathrm {rem}}) \cap \mathcal{V}_{\mathrm {set}}$. As a consequence, the restrictions of $\Phi^{\widehat{X}}$ and $\Phi^{\widehat{Y}}$ to $\mathcal{C}_{\mathrm {rem}}$ give rise to the same CNF formula. Note that some of the clauses in $\mathcal{C}_{\mathrm {rem}}$ might be satisfied by both $\widehat{X}$ and $\widehat{Y}$, but they are never satisfied by only one of the two assignments.
    
    \item $c \in \failu$. Then by Property~\ref{item:coupling:2} of Proposition~\ref{prop:coupling:properties}, $c$ unsatisfied by at least one of $\widehat{X}$ and $\widehat{Y}$ and, thus, it appears in at least one of $\Phi^{\widehat{X}}$ and $\Phi^{\widehat{Y}}$. The clause $c$ may contain a variable $v \in \verticesd$, in which case $c$ is a good clause and it appears in exactly one of the formulas $\Phi^{\widehat{X}}$ and $\Phi^{\widehat{Y}}$.
    
    \item $c \in \mathcal{C} \setminus (\mathcal{C}_{\mathrm {rem}} \cup \failu)$. By Proposition~\ref{prop:coupling:properties}, we have $\var(c) \cap (\verticesd \cup \var(\failu)) \ne \emptyset$ and $\var(c) \cap \mathcal{V}_{\mathrm a} \subseteq \mathcal{V}_{\mathrm {set}}$.  Since $c \not \in \failu$, it follows that $c$ is satisfied by both $\widehat{X}$ and $\widehat{Y}$ and, thus, $c$ does not appear in any of the formulae $\Phi^{\widehat{X}}$ and $\Phi^{\widehat{Y}}$. 
\end{itemize}

We conclude that we can write $\mathcal{C}^{\widehat{X}} = \mathcal{C}_1 \cup \mathcal{C}_2$ and  $\mathcal{C}^{\widehat{Y}} = \mathcal{C}_1 \cup \mathcal{C}_3$, where $\mathcal{C}_1 \subseteq \mathcal{C}_{\mathrm {rem}}$ and $\mathcal{C}_2, \mathcal{C}_3 \subseteq \failu$ as we wanted.
\end{proof}

  In order to further analyse the probability distribution of the output of Algorithm~\ref{alg:coupling}, we introduce the following definition.
  
\begin{definition}[run, $\mathcal{R}(\Phi, \mathcal{M}, u, \Lambda)$, $\tau_{\mathcal{R}}(\Phi, \mathcal{M}, u, \Lambda)$, $\mathcal{V}_{\mathrm {set}}(R)$, $\verticesd(R)$, $\failu(R)$, $\faild(R)$, $\mathcal{C}_{\mathrm {rem}}(R)$] \label{def:R}
  A \emph{run} of Algorithm~\ref{alg:coupling} is a sequence of all the random choices $(\widehat{X}(v), \widehat{Y}(v))$ made in line~\ref{line:sample} when executing Algorithm~\ref{alg:coupling}. Let $\mathcal{R}(\Phi, \mathcal{M}, u, \Lambda)$ be the set of all possible runs of Algorithm~\ref{alg:coupling} for the inputs $\Phi, \mathcal{M}, u, \Lambda$ and let $\tau_{\mathcal{R}}(\Phi, \mathcal{M}, u, \Lambda)$ be the probability distribution that Algorithm~\ref{alg:coupling} yields on $\mathcal{R}(\Phi, \mathcal{M}, u, \Lambda)$. Each run $R \in \mathcal{R}(\Phi, \mathcal{M}, u, \Lambda)$ determines the output $(\widehat{X}, \widehat{Y})$ and the sets $\mathcal{V}_{\mathrm {set}}(R), \verticesd(R), \failu(R), \faild(R), \mathcal{C}_{\mathrm {rem}}(R)$ that are computed in Algorithm~\ref{alg:coupling}. 
\end{definition}

With the aim of applying Proposition~\ref{prop:influence-coupling}, we extend the coupling  $(\widehat{X}, \widehat{Y})$ to all marked and auxiliary variables. 

\begin{definition}[The coupling $(X,Y)$]  \label{def:coupling-process}
  Let $R \in \mathcal{R}(\Phi, \mathcal{M}, u, \Lambda)$ and let $(\widehat{X}, \widehat{Y})$ be the corresponding output of the run $R$. Let $\le_{\mathcal{V}}$ be a total order on the variables of $\Phi$ and let $v_1 \le_{\mathcal{V}} v_2\le_{\mathcal{V}} \dots \le_{\mathcal{V}} v_t$ be the variables in $(\mathcal{V}_{\mathrm m} \cup \mathcal{V}_{\mathrm a}) \setminus \mathcal{V}_{\mathrm {set}}$. We extend the assignments $\widehat{X}, \widehat{Y} \colon \mathcal{V}_{\mathrm {set}} \to \{\mathsf{F}, \mathsf{T}\}$ to $v_1, v_2, \ldots, v_t$ inductively (as follows) to obtain a coupling $(X,Y)$ such that $X$ follows the distribution $\restr{\mu_{\Omega^{\Lambda \cup u \mapsto \mathsf{T}}}}{(\mathcal{V}_{\mathrm m} \cup \mathcal{V}_{\mathrm a}) \setminus \mathcal{V}_{\mathrm {set}}}$ and $Y$ follows the distribution $\restr{\mu_{\Omega^{\Lambda \cup u \mapsto \mathsf{F}}}}{(\mathcal{V}_{\mathrm m} \cup \mathcal{V}_{\mathrm a})\setminus \mathcal{V}_{\mathrm {set}}}$. Assume that $X$ and $Y$ are defined on $\mathcal{V}_{\mathrm {set}} \cup \{v_1, v_2, \ldots, v_{j-1}\}$ for $j \in \{1,2,\ldots,t\}$. Then we sample $(X(v_j), Y(v_j))$ from the optimal/monotone coupling of the marginal distributions (on $v_j$) of $\mu_{\Omega^{X}}$ and $\mu_{\Omega^{Y}}$.
\end{definition}

\begin{remark} \label{rem:coupling}
    When $R \in  \mathcal{R}(\Phi, \mathcal{M}, u, \Lambda)$ follows the probability distribution $\tau_{\mathcal{R}}(\Phi, \mathcal{M}, u, \Lambda)$ (Definition~\ref{def:R}), the pair of random assignments $(X,Y)$ of Definition~\ref{def:coupling-process} is a coupling of the distributions $\restr{\mu_{\Omega^{\Lambda \cup u \mapsto \mathsf{T}}}}{\mathcal{V}_{\mathrm m} \cup \mathcal{V}_{\mathrm a}}$ and $\restr{\mu_{\Omega^{\Lambda \cup u \mapsto \mathsf{F}}}}{\mathcal{V}_{\mathrm m} \cup \mathcal{V}_{\mathrm a}}$. 
\end{remark}

In Lemma~\ref{lem:coupling:discrepancy} we bound the probabilities $\pr(X(v) \ne Y(v) \vert R)$ for any $R \in  \mathcal{R}(\Phi, \mathcal{M}, u, \Lambda)$ and $v \in (\mathcal{V}_{\mathrm m}\cup \mathcal{V}_{\mathrm a}) \setminus \mathcal{V}_{\mathrm {set}}(R)$.

\begin{lemma} \label{lem:coupling:discrepancy}
    Let $R \in \mathcal{R}(\Phi, \mathcal{M}, u, \Lambda)$. Let $(X,Y)$ be the coupling of Definition~\ref{def:coupling-process}. Then for any $v \in (\mathcal{V}_{\mathrm m} \cup \mathcal{V}_{\mathrm a}) \setminus \mathcal{V}_{\mathrm {set}}(R)$ we have $\pr(X(v) \ne Y(v) \vert R) \le 2^{-r_0 k + 1} / k$.
\end{lemma}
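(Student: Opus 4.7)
\medskip

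\noindent\textbf{Proof plan.} The key observation is that by Definition~\ref{def:coupling-process}, each pair $(X(v_j), Y(v_j))$ is drawn from a monotone coupling of two Bernoulli marginals obtained from $\mu_{\Omega^{X}}$ and $\mu_{\Omega^{Y}}$, where by this stage $X$ and $Y$ are partial assignments defined on $\mathcal{V}_{\mathrm{set}}(R) \cup \{v_1, \ldots, v_{j-1}\} \subseteq \mathcal{V}_{\mathrm m} \cup \mathcal{V}_{\mathrm a}$ (since $\mathcal{V}_{\mathrm{set}}(R) \subseteq S \cup \{u\} \cup \mathcal{V}_{\mathrm a}$ by Proposition~\ref{prop:coupling:properties}\ref{item:coupling:1}, and $v_1, \ldots, v_{j-1} \in \mathcal{V}_{\mathrm m} \cup \mathcal{V}_{\mathrm a}$). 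So the relevant marginals are of the form $\pr_{\mu_{\Omega^{X}}}(v_j \mapsto \mathsf{T})$ and $\pr_{\mu_{\Omega^{Y}}}(v_j \mapsto \mathsf{T})$ with $v_j \in \mathcal{V}_{\mathrm m} \cup \mathcal{V}_{\mathrm a}$ pinned by an assignment on a subset of $\mathcal{V}_{\mathrm m} \cup \mathcal{V}_{\mathrm a} \setminus \{v_j\}$, which is precisely the setting in which Proposition~\ref{prop:marginals} (equivalently Lemma~\ref{lem:marginals}) applies.

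The plan is to condition further on the complete ``history'' $H_j = (X|_{\mathcal{V}_{\mathrm{set}}(R) \cup \{v_1, \ldots, v_{j-1}\}}, Y|_{\mathcal{V}_{\mathrm{set}}(R) \cup \{v_1, \ldots, v_{j-1}\}})$ of the coupling extension up to (but not including) step $j$, and bound $\pr(X(v_j)\ne Y(v_j)\mid R, H_j)$ uniformly in $H_j$. Setting $p := \pr_{\mu_{\Omega^{X}}}(v_j \mapsto \mathsf{T})$ and $q := \pr_{\mu_{\Omega^{Y}}}(v_j \mapsto \mathsf{T})$, the standard property of the monotone coupling between two Bernoullis gives $\pr(X(v_j)\ne Y(v_j) \mid R, H_j) = |p-q|$. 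Proposition~\ref{prop:marginals}, applied to both pinnings, yields $p, q \in \bigl[1 - \tfrac{1}{2}e^{\epsilon}, \tfrac{1}{2}e^{\epsilon}\bigr]$ with $\epsilon := 1/(k\, 2^{(r_0+\delta)k})$, so these two intervals have length $e^\epsilon - 1$, hence $|p-q| \le e^\epsilon - 1$.

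Since for $k$ large enough $\epsilon \le 1$, the elementary inequality $e^\epsilon - 1 \le 2\epsilon$ gives
\begin{equation*}
  \pr(X(v_j) \ne Y(v_j) \mid R, H_j) \le 2\epsilon = \frac{2}{k\, 2^{(r_0+\delta)k}} = \frac{2^{-(r_0+\delta)k+1}}{k}.
\end{equation*}
Averaging over $H_j$ with the law of total probability (the bound is uniform in $H_j$) yields the claimed bound $\pr(X(v) \ne Y(v) \mid R) \le 2^{-(r_0+\delta)k+1}/k$ for $v = v_j$.

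There is no real obstacle here once Proposition~\ref{prop:marginals} is in hand: the only thing to check carefully is that the pinnings appearing at every step of the inductive extension stay inside $\mathcal{V}_{\mathrm m} \cup \mathcal{V}_{\mathrm a}$ (so that Proposition~\ref{prop:marginals} is applicable), which is immediate from the fact that $\mathcal{V}_{\mathrm{set}}(R)$ contains no control variables (Proposition~\ref{prop:coupling:properties}\ref{item:coupling:1}) and the extension in Definition~\ref{def:coupling-process} only touches marked and auxiliary variables.
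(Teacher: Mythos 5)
Your proof is correct and follows essentially the same route as the paper's: condition on the history of the inductive extension up to step $j$, note that $(X(v_j),Y(v_j))$ comes from the monotone coupling of two Bernoulli marginals so the disagreement probability equals $|p-q|$, bound $|p-q|\le e^{\epsilon}-1$ via Proposition~\ref{prop:marginals}, apply $e^{\epsilon}-1\le 2\epsilon$, and average over histories. The only cosmetic difference is that the paper writes $|p-q|\le|p-1/2|+|1/2-q|$ while you observe both lie in an interval of length $e^\epsilon-1$; these give the identical bound.
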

\begin{proof} 
Let $\widehat{X}$ and $\widehat{Y}$ be the output of Algorithm~\ref{alg:coupling} for the run $R$. Let $v_1, v_2, \ldots, v_t$ be the variables in $(\mathcal{V}_{\mathrm m} \cup \mathcal{V}_{\mathrm a}) \setminus \mathcal{V}_{\mathrm {set}}(R)$ in the order that they are considered in Definition~\ref{def:coupling-process}. Let $j \in \{1, 2, \ldots, t\}$ and let $\Lambda', \Lambda'' \colon \mathcal{V}_{\mathrm {set}}(R) \cup \{v_1, v_2, \ldots, v_{j-1}\} \to \{\mathsf{F}, \mathsf{T}\}$ be two assignments such that $\restr{\Lambda'}{\mathcal{V}_{\mathrm {set}}} = \widehat{X}$ and $\restr{\Lambda''}{\mathcal{V}_{\mathrm {set}}} = \widehat{Y}$. When $X$ agrees with $\Lambda'$ and $Y$ agrees with $\Lambda''$, the values $X(v_j)$ and $Y(v_j)$ are sampled from the optimal/monotone coupling between the marginals on $v_j$ of the distributions $\mu_{\Omega^{\Lambda'}}$ and $\mu_{\Omega^{\Lambda''}}$. Let us denote these marginals by $\nu_X$ and $\nu_Y$ respectively. Thus, by the coupling lemma (Proposition~\ref{prop:coupling-lemma}) and Proposition~\ref{prop:marginals} (or Lemma~\ref{lem:marginals}) on the marginals of marked and auxiliary variables, we have 
\begin{equation*}
    \begin{aligned}
      \pr\left(X(v_j) \ne Y(v_j) \vert \Lambda', \Lambda'' \right) & = d_{\TV}(\nu_X, \nu_Y) = \left|\pr(X(v_j) = \mathsf{T} \vert \Lambda') - \pr(Y(v_j) = \mathsf{T} \vert \Lambda'' )\right| \\       
      & \le \lvert  \pr(X(v_j) = \mathsf{T} \vert \Lambda') - 1/2  \rvert + \lvert 1/2 -  \pr(Y(v_j) = \mathsf{T} \vert \Lambda'') \rvert \\ 
      & \le \exp \left( \frac{1}{k 2^{r_0 k}} \right) - 1.
    \end{aligned}
\end{equation*}
Applying the inequality $e^z \le 1+2z$ for $z \in (0,1)$, we find that  $\pr\left(X(v_j) \ne Y(v_j) \vert \Lambda', \Lambda''  \right)  \le  2^{-r_0 k + 1} / k$. Thus, from the arbitrary choice of $\Lambda', \Lambda''$ and the law of total probability we conclude that the bound  $\pr\left(X(v_j) \ne Y(v_j) \vert R \right) \le  2^{-r_0k + 1} / k$ holds.
\end{proof}

Combining all the results presented up to this stage in the current section allows us relate the sum $ \sum_{v \in \mathcal{V}_{\mathrm m} \setminus (S \cup \{u\}) } \left| \mathcal{I}^\Lambda(u \to v) \right|$ to the coupling process over auxiliary variables. In fact, we bound this sum of influences between marked variables by the expected number of failed clauses in the coupling process on auxiliary variables. Recall that here $r = r_0 - \delta$.

\newcommand{\statelemexpectation}{
   There is an integer $k_0$ such that for any $k \ge k_0$ and any density $\alpha$  with $\density \le 2^{(r_0 - \delta)k}/k^3$ the following holds w.h.p. over the choice of the random $k$-CNF formula $\Phi = \Phi(k, n, \lfloor \alpha n\rfloor)$. Let $(\mathcal{V}_{\mathrm m}, \mathcal{V}_{\mathrm a}, \mathcal{V}_{\mathrm c})$ be an $(r_0 - \delta, r_0, r_0, 2r_0)$-marking of $\Phi$, and let $u \in \mathcal{V}_{\mathrm m}$ and $\Lambda \colon S \to \{\mathsf{F}, \mathsf{T}\}$ with $S \subseteq \mathcal{V}_{\mathrm m} \setminus \{u\}$. Then for a random run $R$ of the coupling process on the auxiliary variables (Algorithm~\ref{alg:coupling}), we have 
       \begin{equation*}
         \sum_{v \in \mathcal{V}_{\mathrm m} \setminus (S \cup \{u\}) } \left| \mathcal{I}^\Lambda(u \to v) \right| \le 2^{-r_0k+1}  \mathbb{E}  \left[\lvert  \failu(R) \rvert \right].
     \end{equation*}
}
\begin{lemma} \label{lem:expectation} \label{lem:coupling:si}
  \statelemexpectation
\end{lemma}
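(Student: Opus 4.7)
The plan is to combine Proposition~\ref{prop:influence-coupling} with a structural analysis of where the coupling $(X,Y)$ of Definition~\ref{def:coupling-process} can produce discrepancies. By Remark~\ref{rem:coupling}, $(X,Y)$ is a valid coupling of the two conditional distributions appearing in Proposition~\ref{prop:influence-coupling}, so
\begin{equation*}
\sum_{v \in \mathcal{V}_{\mathrm m} \setminus (S \cup \{u\})} \lvert \mathcal{I}^\Lambda(u \to v) \rvert \le \mathbb{E}_R\!\left[\sum_{v \in \mathcal{V}_{\mathrm m} \setminus (S \cup \{u\})} \pr(X(v) \ne Y(v) \mid R)\right].
\end{equation*}
It will therefore suffice to establish the pointwise bound
\begin{equation*}
\sum_{v \in \mathcal{V}_{\mathrm m} \setminus (S \cup \{u\})} \pr(X(v) \ne Y(v) \mid R) \;\le\; 2^{-(r_0+\delta)k+1}\lvert\failu(R)\rvert
\end{equation*}
for every run $R$, since taking expectations then yields the lemma.

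The crux will be the following structural claim: if $v \in \mathcal{V}_{\mathrm m} \setminus (S \cup \{u\})$ and $v \notin \var(\failu(R))$, then $\pr(X(v) \ne Y(v) \mid R) = 0$. To prove it, I would first show that the connected component $H$ of $v$ in $G_{\Phi^{\widehat X}}$ lies entirely in $\mathcal{C}_1$ and equals the connected component of $v$ in $G_{\Phi^{\widehat Y}}$ (where $\mathcal{C}_1, \mathcal{C}_2, \mathcal{C}_3$ are the sets from Lemma~\ref{lem:coupling:structure}). The clauses of $v$ in $\Phi^{\widehat X}$ lie in $\mathcal{C}_1$ because $v \notin \var(\failu) \supseteq \var(\mathcal{C}_2)$; any clause adjacent in $G_{\Phi^{\widehat X}}$ to a $\mathcal{C}_1$-clause cannot lie in $\mathcal{C}_2 \subseteq \failu$, since the shared variable (in $\mathcal{V}\setminus\mathcal{V}_{\mathrm{set}}$) would otherwise contradict $\var(\mathcal{C}_{\mathrm{rem}}) \cap \var(\failu) = \emptyset$, which is guaranteed by Proposition~\ref{prop:coupling:properties}(3). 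Iterating gives $H \subseteq \mathcal{C}_1$, and since adjacency within $\mathcal{C}_1$ depends only on shared variables in $\mathcal{V}\setminus\mathcal{V}_{\mathrm{set}}$ and not on $\widehat X$ or $\widehat Y$, the two components coincide. Proposition~\ref{prop:coupling:properties}(3) also ensures $\var(\mathcal{C}_1) \cap \verticesd = \emptyset$, so $\widehat X$ and $\widehat Y$ agree on every $\mathcal{V}_{\mathrm{set}}$-variable appearing in a $\mathcal{C}_1$-clause, and the simplified sub-formulas on $H$ induced by $\Phi^{\widehat X}$ and $\Phi^{\widehat Y}$ are identical.

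The next step is to lift this identical-subformula property to the sequential monotone coupling. Since $H$ is a connected component of both $G_{\Phi^{\widehat X}}$ and $G_{\Phi^{\widehat Y}}$, conditioning on values of floating variables outside $H$ leaves the sub-formulas on $H$ unchanged. Inducting on the order in which the floating variables of $H$ are processed in Definition~\ref{def:coupling-process}, the conditional marginal on the next such variable agrees under the two chains, so the monotone coupling between equal Bernoulli distributions yields identical samples; in particular $X(v) = Y(v)$ almost surely. Combining the claim with Lemma~\ref{lem:coupling:discrepancy} applied to the remaining $v \in \var(\failu(R)) \cap \mathcal{V}_{\mathrm m}$ gives
\begin{equation*}
\sum_{v \in \mathcal{V}_{\mathrm m} \setminus (S \cup \{u\})} \pr(X(v) \ne Y(v) \mid R) \;\le\; \frac{2^{-(r_0+\delta)k+1}}{k} \, \lvert \var(\failu(R)) \cap \mathcal{V}_{\mathrm m} \rvert \;\le\; 2^{-(r_0+\delta)k+1}\, \lvert \failu(R) \rvert,
\end{equation*}
using that each clause contains at most $k$ variables. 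The main obstacle will be the inductive step on the sequential coupling: I must verify carefully that discrepancies developed outside $H$ cannot leak into $H$ through the conditioning, which is precisely where the connected-component structure of $H$ in the simplified formulas, inherited from Proposition~\ref{prop:coupling:properties}(3), becomes essential.
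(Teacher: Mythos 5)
Your proposal is correct and follows essentially the same approach as the paper's proof: both invoke Proposition~\ref{prop:influence-coupling} with the coupling of Definition~\ref{def:coupling-process}, use Lemma~\ref{lem:coupling:structure} together with Proposition~\ref{prop:coupling:properties}(3) to show that the coupling produces no discrepancies at variables outside $\var(\failu(R))$, and then apply Lemma~\ref{lem:coupling:discrepancy} with the bound $\lvert \var(\failu(R)) \rvert \le k\lvert\failu(R)\rvert$. The only stylistic difference is that you argue via the connected component $H$ of $v$ and show $H \subseteq \mathcal{C}_1$ with identical induced subformulas under $\widehat X$ and $\widehat Y$, whereas the paper phrases the same fact more globally as the equality of the marginal distributions $\restr{\mu_{\Omega^{\widehat X}}}{V}$ and $\restr{\mu_{\Omega^{\widehat Y}}}{V}$ over the factor formula $(V,\mathcal{C}_1)$.
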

\begin{proof}
  Let $(X,Y)$ be the coupling in Definition~\ref{def:coupling-process} for a (random) run $R \sim \tau_{\mathcal{R}}(\Phi, \mathcal{M}, u, \Lambda)$ of Algorithm~\ref{alg:coupling}. 
  We are going to show that 
  \begin{equation} \label{eq:R}
    \pr(X(v) = Y(v) \vert R) = 1 \text{ for all } v \in V := (\mathcal{V}_{\mathrm m} \cup \mathcal{V}_{\mathrm a}) \setminus (\mathcal{V}_{\mathrm {set}}(R) \cup \var(\failu(R))).
  \end{equation}
  Let $\widehat{X}, \widehat{Y} \colon \mathcal{V}_{\mathrm {set}}(R) \to \{\mathsf{F}, \mathsf{T}\}$ be the output of Algorithm~\ref{alg:coupling} for the run $R$. By Lemma~\ref{lem:coupling:structure} we conclude that we can write $\mathcal{C}^{\widehat{X}} = \mathcal{C}_1 \cup \mathcal{C}_2$ and $\mathcal{C}^{\widehat{Y}} = \mathcal{C}_1 \cup \mathcal{C}_3$, where $\mathcal{C}_1 \subseteq \mathcal{C}_{\mathrm {rem}}(R)$ and $\mathcal{C}_2, \mathcal{C}_3 \subseteq \failu(R)$. Thus, the variables in $V$ (see~\eqref{eq:R} for a definition of $V$) either appear in a clause in $\mathcal{C}_1$ or they are not present in any of the formulae $\Phi^{\widehat{X}}$ and $\Phi^{\widehat{Y}}$. Moreover, by Proposition~\ref{prop:coupling:properties}, we have $\var(c) \cap \var(c') = \emptyset$ for all $c \in \mathcal{C}_{\mathrm {rem}}(R)$ and $c' \in \failu(R)$. We conclude that the distributions $\restr{\mu_{\Omega^{\widehat{X}}}}{V}$ and $\restr{\mu_{\Omega^{\widehat{Y}}}}{V}$ agree -- both are the uniform distribution over the satisfying assignments of the CNF formula $(V, \mathcal{C}_1)$. Let $v_1, v_2, \ldots, v_t$ be the variables in $V$ in the order they are considered in the definition of the coupling $(X,Y)$. By induction on $j \in \{1, 2, \ldots, t\}$, the marginals on $v_j$ in Definition~\ref{def:coupling-process} are the same when coupling $X(v_j)$ and $Y(v_j)$. Thus, we have $X(v_j) = Y(v_j)$ for all  $j \in \{1, 2, \ldots, t\}$.
  
  Since $S\cup\{u\} \subseteq \mathcal{V}_{\mathrm {set}}(R) \subseteq S\cup\{u\} \cup \mathcal{V}_{\mathrm a}$, we have $\mathcal{V}_{\mathrm m} \setminus V = S \cup \{u\} \cup (\mathcal{V}_{\mathrm m} \cap \var(\failu(R)))$. In light of Lemma~\ref{lem:coupling:discrepancy} and \eqref{eq:R}, we find that 
  \begin{equation*}
         \sum_{v \in \mathcal{V}_{\mathrm m} \setminus (S \cup \{u\}) } \pr(X(v) \ne Y(v) \vert R) \le \sum_{v \in \mathcal{V}_{\mathrm m} \cap \var(\failu(R)) } \pr(X(v) \ne Y(v) \vert R) \le \frac{2}{k} 2^{-r_0 k} \lvert  \var(\failu(R)) \rvert.
  \end{equation*}
  From $\lvert  \var(\failu(R)) \rvert \le k \lvert \failu(R) \rvert$ we conclude that
  \begin{equation} \label{eq:fu}
    \sum_{v \in \mathcal{V}_{\mathrm m} \setminus (S \cup \{u\}) } \pr(X(v) \ne Y(v) \vert R) \le 2^{-r_0k+1} \lvert  \failu(R) \rvert.
  \end{equation}
  In the rest of this proof we are going to aggregate \eqref{eq:fu} over $R \in \mathcal{R}(\Phi, \mathcal{M}, u, \Lambda)$ with the aim of applying Proposition~\ref{prop:influence-coupling}. Let $(X,Y)$ be the coupling in Definition~\ref{def:coupling-process} for a (random) run $R \sim \tau_{\mathcal{R}}(\Phi, \mathcal{M}, u, \Lambda)$ of Algorithm~\ref{alg:coupling}. We have
  \begin{equation*}
\begin{aligned}
     \sum_{v \in \mathcal{V}_{\mathrm m} \setminus (S \cup \{u\}) } \pr(X(v) \ne Y(v)) & = \sum_{v \in \mathcal{V}_{\mathrm m} \setminus (S \cup \{u\}) } \sum_{R \in \mathcal{R}(\Phi, \mathcal{M}, u, \Lambda)} \pr(R) \pr(X(v) \ne Y(v) \vert R)  \\ 
     & = \sum_{R \in \mathcal{R}(\Phi, \mathcal{M}, u, \Lambda)} \pr(R) \sum_{v \in \mathcal{V}_{\mathrm m} \setminus (S \cup \{u\}) }   \pr(X(v) \ne Y(v) \vert R) \\
    & \le 2^{-r_0 k+1} \sum_{R \in \mathcal{R}(\Phi, \mathcal{M}, u, \Lambda)} \pr(R) \lvert  \failu(R) \rvert \\ 
    & = 2^{-r_0 k+1} \mathbb{E}  \left[ \lvert  \failu(R) \rvert \right]. 
    \end{aligned}
  \end{equation*}
  Finally, we note that we can indeed apply Proposition~\ref{prop:influence-coupling} to the restriction of $X$ and $Y$ on $\mathcal{V}_{\mathrm m}$ as $(X,Y)$ is a coupling of the distributions $\restr{\mu_{\Omega^{\Lambda \cup u \mapsto \mathsf{T}}}}{\mathcal{V}_{\mathrm m} \cup \mathcal{V}_{\mathrm a}}$ and $\restr{\mu_{\Omega^{\Lambda \cup u \mapsto \mathsf{F}}}}{\mathcal{V}_{\mathrm m} \cup \mathcal{V}_{\mathrm a}}$ (Remark~\ref{rem:coupling}). This finishes the proof.
\end{proof}

In the remainder of this section we bound $\mathbb{E}  \left[ \lvert  \failu(R) \rvert \right]$, which would complete our proof of Lemma~\ref{lem:si} when combined with Lemma~\ref{lem:coupling:si}. In order to do this we exploit the fact that $\failu(R) \cup \faild(R)$ is connected in $G_\Phi$ (Proposition~\ref{prop:coupling:properties}), the local sparsity properties of random CNF formulae and the properties of the marking $(\mathcal{V}_{\mathrm m}, \mathcal{V}_{\mathrm a}, \mathcal{V}_{\mathrm c})$. It is important that the bound on  $\mathbb{E}  \left[ \lvert  \failu(R) \rvert \right]$ is  $\operatorname{poly}(k) \log n$ in order to conclude fast mixing time of the $\rho$-uniform-block Glauber dynamics when applying the spectral independence framework. First, we  bound the probability that some good clauses are failed in Algorithm~\ref{alg:coupling}. At first glance this seems to be a straightforward task thanks to the fact that the marginals of marked and auxiliary variables are close to $1/2$ (see Proposition~\ref{prop:marginals}). However, for any good clauses $c_1$ and $c_2$, the events that $c_1 \in \faild(R) \cup \failu(R)$ and $c_2 \in \faild(R) \cup \failu(R)$ may not be independent; any value given to the variables in $c_1$ may affects the marginals of the variables in $c_2$ and whether these variables are considered by the coupling process or not. However, we show that, as long as $c_1$ and $c_2$ do not share good variables, these dependencies are not very strong and we can indeed bound the probability that $c_1, c_2 \in \faild(R) \cup \failu(R)$ with a careful probability argument that analyses the coupling process step by step, see Lemma~\ref{lem:prob-V}. With this in mind, we introduce the following definitions.

\begin{definition}[$\mathcal{R}_t(\Phi, \mathcal{M}, u, \Lambda)$, $\mathcal{A}_{\le t}$] \label{def:At}
  For a positive integer $t$, we let $\mathcal{R}_t(\Phi, \mathcal{M}, u, \Lambda)$ be the set containing for each $R \in \mathcal{R}(\Phi, \mathcal{M}, u, \Lambda)$ a tuple with the first $\min\{t, \operatorname{length}(R)\}$ entries of the sequence $R$. That is,  $\mathcal{R}_t(\Phi, \mathcal{M}, u, \Lambda)$ is the set containing all possible sequences of the first $t$ choices that Algorithm~\ref{alg:coupling} makes in line~\ref{line:sample}. Note that if $R \in \mathcal{R}(\Phi, \mathcal{M}, u, \Lambda)$ has $\operatorname{length}(R) \le t$, then $R \in \mathcal{R}_t(\Phi, \mathcal{M}, u, \Lambda)$. Each $R_t \in \mathcal{R}_t(\Phi, \mathcal{M}, u, \Lambda)$ determines two partial assignments $\Lambda'$ and $\Lambda''$ of marked and auxiliary variables that correspond to the assignments $\widehat{X}$ and $\widehat{Y}$ after $\operatorname{length}(R_t)$ iterations of line~\ref{line:sample} following $R_t$. Let $\mathcal{A}_{\le t}$  be the $\sigma$-algebra containing all the subsets of $\mathcal{R}_t(\Phi, \mathcal{M}, u, \Lambda)$. 
\end{definition}

 Intuitively, $\mathcal{A}_{\le t}$  contains all the possible events that may occur in the first $t$ iterations of line~\ref{line:sample}, which is the only randomised operation in Algorithm~\ref{alg:coupling}. When bounding the probability that a clause is failed, we will express this event in terms of events concerning the values that $\widehat{X}$ and $\widehat{Y}$ take on its variables. This motivates Definition~\ref{def:dvj}.

\begin{definition}[$D_v(j)$] \label{def:dvj}
  We define the following events 
  for variable $v \in \mathcal{V}_{\mathrm a}$ and a random run $R \sim \tau_{\mathcal{R}}(\Phi, \mathcal{M}, u, \Lambda)$ of Algorithm~\ref{alg:coupling}. Let  $D_{v}(1)$ be the event that $v \in \mathcal{V}_{\mathrm {set}}(R)$ and $\widehat{X}(v) \ne \widehat{Y}(v)$. Let $D_v(2)$ be the event that $v \in \mathcal{V}_{\mathrm {set}}(R)$ and $\widehat{X}(v) = \mathsf{F}$. Let $D_v(3)$ be the event that $v \in \mathcal{V}_{\mathrm {set}}(R)$ and $\widehat{X}(v) = \mathsf{T}$. Let $D_v(4)$ be the event that $v \in \mathcal{V}_{\mathrm {set}}(R)$ and $\widehat{Y}(v) = \mathsf{F}$. Let $D_v(5)$ be the event that $v \in \mathcal{V}_{\mathrm {set}}(R)$ and $\widehat{Y}(v) = \mathsf{T}$. 
\end{definition}

Finally, in order to study the events $D_v(j)$ for $v \in V$ we will have to reason about the first time that a variable in $V$ is added to $\mathcal{V}_{\mathrm {set}}(R)$, which motivates the following definition.

\begin{definition}[$\tau(V)$, $f(V)$] \label{def:tau-V}
For a set of auxiliary variables $V$, we let $\tau(V)$ be the random variable that takes the value $t$ if the first time that a variable in $V$ is added to $\mathcal{V}_{\mathrm {set}}(R)$ in Algorithm~\ref{alg:coupling} is the $t$-th time line~\ref{line:sample} is executed, and we denote by $f(V)$ this variable. We set $\tau(V) = \infty$ if $V \cap \mathcal{V}_{\mathrm {set}}(R) = \emptyset$, in which case $f(V)$ is not defined.
\end{definition}

We now have all the tools that we need to analyse the coupling process step by step.
 
\begin{lemma} \label{lem:prob-V}
    Let $V \subseteq \mathcal{V}_{\mathrm a}$ and let $i_v \in \{1,2,3,4,5\}$ for each $v \in V$. Let $h(1) = 2^{-r_0 k + 1}/k$ and $h(i) = \frac{\exp(1/k)}{2}$ for $i\in \{2,3,4,5\}$.  Then, we have
    \begin{equation*} \label{eq:V-var:0}
        \pr_{R \sim \tau_{\mathcal{R}}(\Phi, \mathcal{M}, u, \Lambda)}\left( \bigcap\nolimits_{v\in V} D_v(i_v)  \right) \le \prod_{v \in V} h(i_v).
    \end{equation*}
\end{lemma}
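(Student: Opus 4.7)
The plan is to induct on $|V|$, processing the variables of $V$ in the random order in which they are first added to $\mathcal{V}_{\mathrm{set}}$ by Algorithm~\ref{alg:coupling}, and combining a stopping-time/tower-property argument with Proposition~\ref{prop:marginals} to control the one-step conditional probabilities. The base case $|V|=0$ is immediate: the empty intersection is the full sample space and the empty product equals $1$. For the inductive step, I would let $\tau^\star = \tau(V)$ and $v^\star = f(V)$ as in Definition~\ref{def:tau-V}. On the event $\{\tau^\star = \infty\}$, no variable of $V$ is ever placed in $\mathcal{V}_{\mathrm{set}}$ and each $D_v(i_v)$ automatically fails (every such event requires $v \in \mathcal{V}_{\mathrm{set}}$), so we may restrict attention to $\{\tau^\star < \infty\}$.

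The heart of the argument is the one-step bound at the stopping time $\tau^\star$. Conditioning on $\mathcal{A}_{\le \tau^\star - 1}$ and on the identity of $v^\star$, the partial assignments $\widehat{X}, \widehat{Y}$ constructed so far are determined, and the $\tau^\star$-th iteration of line~\ref{line:sample} samples $(\widehat{X}(v^\star), \widehat{Y}(v^\star))$ from the monotone coupling of the marginals of $\mu_{\Omega^{\widehat{X}}}$ and $\mu_{\Omega^{\widehat{Y}}}$ on $v^\star$. Since $v^\star \in \mathcal{V}_{\mathrm a}$ and only marked/auxiliary variables have been pinned so far, Proposition~\ref{prop:marginals} places both marginals within $\tfrac{1}{2}(e^{1/(k 2^{(r_0+\delta)k})} - 1)$ of $1/2$. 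Applying $e^z - 1 \le 2z$ for $z \in (0,1)$ yields a uniform almost-sure bound of $h(i_{v^\star})$ on the conditional probability of $D_{v^\star}(i_{v^\star})$. Concretely, for $i_{v^\star} \in \{2,3,4,5\}$ this reduces to the one-sided marginal bound $e^{1/k}/2$, while for $i_{v^\star} = 1$ the total variation between the two marginals is bounded by $e^{1/(k 2^{(r_0+\delta)k})} - 1 \le 2^{-(r_0+\delta)k+1}/k$, which is precisely the computation already used in Lemma~\ref{lem:coupling:discrepancy}.

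Finally, I would combine this one-step bound with the induction hypothesis applied to $V \setminus \{v^\star\}$ on the continuation of Algorithm~\ref{alg:coupling} after step $\tau^\star$, using the tower property of conditional expectation to obtain $\prod_{v \in V} h(i_v)$. The main technical subtlety, and where I would have to be careful, is phrasing the induction hypothesis so that the bound applies uniformly under arbitrary conditioning on $\mathcal{A}_{\le \tau^\star}$, that is, for every admissible pinning of the marked/auxiliary variables already in $\mathcal{V}_{\mathrm{set}}$. This robustness is exactly what Proposition~\ref{prop:marginals} guarantees, because its marginal bound holds uniformly over arbitrary pinnings of marked/auxiliary variables; therefore the coupling process that continues after step $\tau^\star$ is governed by the same quantitative estimates, and the induction closes.
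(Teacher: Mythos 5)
Your proposal is correct and follows essentially the same route as the paper's proof: both induct on $|V|$, decompose at the stopping time $\tau(V)$ and first‐hit variable $f(V)$, invoke Proposition~\ref{prop:marginals} for the one‐step conditional bound $h(i_{f(V)})$, and close the induction via the tower property, with the crucial observation that the induction hypothesis must be stated for an arbitrary conditioning event in the pre-$\tau(V)$ filtration (the paper makes this precise by proving the bound for all $t$ and all $A \in \mathcal{A}_{\le t}$). Your remark that the one-step bound should condition on the $\sigma$-algebra just \emph{before} step $\tau^\star$ is in fact slightly more careful than the paper's phrasing, which nominally allows $A\in\mathcal A_{\le t}$ but relies on the fact that the event $A_t(w,V)$ does not constrain the $t$-th sampled value.
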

 \begin{proof}
    We are going to prove, for any positive integer $t$ and $A \in \mathcal{A}_{\le t}$, 
     \begin{equation} \label{eq:V-var}
        \pr\left( \left. \bigcap\nolimits_{v\in V} D_v({i_v}) \right| A, \tau(V) = t  \right) \le \prod_{v \in V} h(i_v).
    \end{equation} 
The lemma will then follow from the arbitrary choice of~$A$ and $t$ and the law of total probability.

    We carry out the proof of \eqref{eq:V-var} by induction on $M = \lvert V \rvert$. 
    Equation \eqref{eq:V-var} holds when $V$ is empty. Let us assume that \eqref{eq:V-var} holds when $\lvert V \rvert < M$. Let $V$ be a set of auxiliary variables with $M = \lvert V \rvert$ and indexes $i_v$ for all $v \in V$, let $t$ be a positive integer and let $A \in \mathcal{A}_{\le t}$. To simplify the notation,   
    for each $w \in V$ we define $A_t(w, V) = A \cap [\tau(V) = t] \cap [f(V) = w]$.     
    Then, we have 
    \begin{equation*}
    \begin{aligned}
        \pr\left( \left. \bigcap\nolimits_{v\in V} D_v({i_v}) \right|  A, \tau(V) = t  \right) \le \sum_{w \in V} & \pr\left( f(V) = w \vert A, \tau(V) = t \right) \cdot \pr\left( \left. D_{w}({i_{w}}) \right| A_t(w, V) \right) \\ & \cdot  \pr\left(\left.   \bigcap\nolimits_{v\in V\setminus\{w\}} D_{v}({i_{v}}) \right| A_t(w, V),  D_{w}({i_{w}})  \right).
    \end{aligned}
    \end{equation*}
    We note that $\tau(V \setminus\{w\}) > t$ when conditioning on $\tau(V) = t$ and $f(V) = w$. Let $A' = A_t(w, V) \cap  D_{w}({i_{w}})$. We have
        \begin{equation*}
    \begin{aligned}
      \pr\left(\left.   \bigcap\nolimits_{v\in V\setminus\{w\}} D_{v}({i_{v}}) \right| A'   \right) = \sum_{j = t+1}^\infty & \pr\left (\left. \tau(V \setminus \{w\}) = j \right| A' \right) \\ & \cdot \pr\left(\left.   \bigcap\nolimits_{v\in V\setminus\{w\}} D_{v}({i_{v}}) \right| A', \tau(V \setminus \{w\}) = j \right) .
    \end{aligned}
    \end{equation*}
    By our induction hypothesis for $V \setminus \{w\}$, the condition $\tau(V \setminus \{w\}) = j$ and the event $A' \in \mathcal{A}_{\le j}$, we find that 
     \begin{equation*}
    \begin{aligned}
      \pr\left(\left.   \bigcap\nolimits_{v\in V\setminus\{w\}} D_{v}({i_{v}}) \right| A'  \right) & \le  \sum_{j = t+1}^\infty  \pr\left (\left. \tau(V \setminus \{w\}) = j \right| A' \right) \prod_{v \in V \setminus \{w\}} h(i_v)  \le \prod_{v \in V \setminus \{w\}} h(i_v).
    \end{aligned}
    \end{equation*}
    As a consequence, we obtain
    \begin{equation*}
    \begin{aligned}
        \pr\left( \left. \bigcap\nolimits_{v\in V} D_v({i_v}) \right|  A, \tau(V) = t  \right) \le \sum_{w \in V}  &  \pr\left( f(V) = w \vert A, \tau(V) = t \right) \cdot  \pr\left( \left. D_{w}({i_{w}}) \right| A_t(w, V) \right) \\ & \cdot  \prod_{v \in V \setminus \{w\}} h(i_v).
    \end{aligned}
    \end{equation*}
    We are going to show that $\pr( D_{w}({i_{w}}) \vert A_t(w, V) ) \le h(i_{w})$.  
    Once we have proved this, the proof of \eqref{eq:V-var} is completed by noting that $\sum_{w \in V}  \pr\left( f(V) = w \vert A, \tau(V) = t \right) = 1$.
    
    Let us now bound $\pr( D_{w}(i_{w}) \vert A_t(w, V) )$. Recall here that $A_t(w, V)$ implies the event $w \in \mathcal{V}_{\mathrm {set}}(R)$. Recall also that $A_t(w, V) \in \mathcal{A}_{\le t}$, see Definition~\ref{def:At}. For each $R_t \in A_t(w, V) \subseteq \mathcal{R}_t(\Phi, \mathcal{M}, u, \Lambda)$, we are going to apply Proposition~\ref{prop:marginals} and the fact that $\widehat{X}(w)$ and $\widehat{Y}(w)$ follow the optimal coupling between two marginal distributions on $w$ of the form $\mu_{\Omega^{\Lambda'}}$ and $\mu_{\Omega^{\Lambda''}}$ for some  assignments $\Lambda', \Lambda''$ on some marked and auxiliary variables that are determined by $R_t$.  Here it is important for applying Proposition~\ref{prop:marginals} that the event $A_t(w, V)$ is in $\mathcal{A}_{\le t}$, so every partial run $R_t \in A_t(w, V)$ only gives information about what has happened in Algorithm~\ref{alg:coupling} before $w$ is added to $\mathcal{V}_{\mathrm {set}}(R)$. Thus, aggregating over all possible runs $R_t \in A_t(w, V)$, we find that
        \begin{equation} \label{eq:marginals-R}
        \begin{aligned}
            \max \left\{ \pr\left(\left.\widehat{X}(w) = \mathsf{F}\right|  A_t(w, V) \right), \pr\left(\left.\widehat{X}(w) = \mathsf{T}\right|  A_t(w, V)\right)  \right\} & \le \frac{1}{2} \exp \left( \frac{1}{k 
2^{r_0 k}      } \right) \\ & \le \frac{1}{2} \exp\big( \frac{1}{k} \big),
        \end{aligned}
        \end{equation}
        where the probability is over the random run $R \sim \tau_{\mathcal{R}}(\Phi, \mathcal{M}, u, \Lambda)$. The bound \eqref{eq:marginals-R} also applies with $\widehat{Y}$ instead of $\widehat{X}$. In particular, we conclude that $\pr( D_{w}(j) \vert A_t(w, V) ) \le \exp(1/k)/2 = h(j)$  for all $j \in \{2,3,4,5\}$. Moreover, using the definition of optimal coupling for two Bernoulli distributions, the probability that $\widehat{X}(w) \ne \widehat{Y}(w)$ can be bounded as 
        \begin{equation*}
        \begin{aligned}
           \pr\left(\left.\widehat{X}(w) \ne \widehat{Y}(w) \right|  A_t(w, V) \right) &  =  \left|\pr\left(\left. \widehat{X}(w) = \mathsf{T}\right|  A_t(w, V) \right)  - \pr\left(\left. \widehat{Y}(w) = \mathsf{T} \right|   A_t(w, V) \right) \right| \\       
           & \le \left| \pr\left(\left. \widehat{X}(w) = \mathsf{T}\right|  A_t(w, V)\right) - 1/2  \right|  + \left|  1/2 -  \pr\left(\left. \widehat{Y}(w) = \mathsf{T}\right|   A_t(w, V) \right) \right|  \\ 
           & \le \exp \left( \frac{1}{k 2^{r_0 k}} \right) - 1.
        \end{aligned}
        \end{equation*}
        Hence, applying the bound $e^z \le 1+2z$ for $z \in (0,1)$ and the definition of the event $D_{v_j}(1)$, we have
        $\pr ( D_{v_j}(1) |  A_t(w, V) )  \le 2 / (k 2^{r_0 k}) = h(1)$. This finishes the proof of \eqref{eq:V-var}. From the arbitrary choice of $A$ and $t$ and the law of total probability, the statement follows.
\end{proof}
 
We can now bound the probability that some good clauses are failed with the help of Lemma~\ref{lem:prob-V}.

\begin{lemma} \label{lem:bound-fail}
  Let $\Phi, u, \Lambda$ be the input of Algorithm~\ref{alg:coupling}. Let $c_1, \ldots, c_\ell \in \gc$ such that the variable $u$ does not appear in any of the clauses in $c_1, \ldots, c_\ell$, and $\var(c_i) \cap \var(c_j) \cap \gv = \emptyset$ for all $1 \le i < j \le \ell$. Then, for $R \sim \tau_{\mathcal{R}}(\Phi, \mathcal{M}, u, \Lambda)$, we have $\pr(c_1, \ldots, c_\ell \in \faild(R) \cup \failu(R))\le 2^{(-r_0 k+4) \ell}$.
\end{lemma}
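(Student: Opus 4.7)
The plan is to reduce the event $c_i \in \faild(R) \cup \failu(R)$ to statements about the values assigned by $\widehat{X}$ and $\widehat{Y}$ to the auxiliary variables of $c_i$, and then combine these across $i$ via Lemma~\ref{lem:prob-V}, exploiting that good variables in distinct $c_i$'s are disjoint. By inspection of Algorithm~\ref{alg:coupling}, a good clause is added to $\failu$ only in line~\ref{line:un:2}, which requires all of its auxiliary variables to lie in $\mathcal{V}_{\mathrm{set}}(R)$ and the clause to be unsatisfied by at least one of $\widehat{X}, \widehat{Y}$; moreover $\verticesd(R) \subseteq \{u\} \cup \mathcal{V}_{\mathrm a}$ since new discrepancies are introduced (line~\ref{line:discrepancy}) only at auxiliary variables. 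Because $u \notin \var(c_i)$, the event $c_i \in \faild(R) \cup \failu(R)$ is therefore contained in $A_i^{(1)} \cup A_i^{(2)} \cup A_i^{(3)}$, where $A_i^{(1)}$ (resp. $A_i^{(2)}$) is the event that every auxiliary variable of $c_i$ lies in $\mathcal{V}_{\mathrm{set}}(R)$ and takes, under $\widehat{X}$ (resp. $\widehat{Y}$), the value that fails the corresponding literal of $c_i$, and $A_i^{(3)}$ is the event that some auxiliary $v \in \var(c_i)$ satisfies $v \in \mathcal{V}_{\mathrm{set}}(R)$ and $\widehat{X}(v) \neq \widehat{Y}(v)$.

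Set $V_i := \var(c_i) \cap \mathcal{V}_{\mathrm a}$. In the notation of Definition~\ref{def:dvj}, $A_i^{(1)} = \bigcap_{v \in V_i} D_v(\sigma_i(v))$ for indices $\sigma_i(v) \in \{2,3\}$ determined by the literals of $c_i$, and analogously $A_i^{(2)}$ with indices in $\{4,5\}$, while $A_i^{(3)} = \bigcup_{v \in V_i} D_v(1)$. A union bound and the expansion of each $A_i^{(3)}$ as a choice of discrepancy variable yield
\[
\pr\Big(\bigcap_{i=1}^\ell (c_i \in \faild(R) \cup \failu(R))\Big) \le \sum_{(j_i) \in \{1,2,3\}^\ell} \pr\Big(\bigcap_{i=1}^\ell A_i^{(j_i)}\Big).
\]
The hypothesis $\var(c_i) \cap \var(c_j) \cap \gv = \emptyset$ combined with $\mathcal{V}_{\mathrm a} \subseteq \gv$ implies that the sets $V_i$ are pairwise disjoint, so each term on the right becomes, after expanding the unions in the $A_i^{(3)}$'s, a sum over choices $v_i \in V_i$ (for those $i$ with $j_i = 3$) of the probability of an intersection of events $D_v(\cdot)$ over distinct variables $v$, to which Lemma~\ref{lem:prob-V} applies directly.

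Applying Lemma~\ref{lem:prob-V} to each term, a fixed choice $(j_i)$ contributes at most
\[
\prod_{i:\, j_i \in \{1,2\}} \Big(\frac{e^{1/k}}{2}\Big)^{|V_i|} \cdot \prod_{i:\, j_i = 3} |V_i|\, h(1).
\]
Since $c_i$ is good and $(\mathcal{V}_{\mathrm m}, \mathcal{V}_{\mathrm a}, \mathcal{V}_{\mathrm c})$ is an $r_0$-marking, $r_0(k-3) \le |V_i| \le k$; the inequality $r_0(1 + 3\ln 2) \le \ln 2$ (which is easily verified for $r_0 = \rvalue$) gives $(e^{1/k}/2)^{r_0(k-3)} \le 2^{-r_0 k + 1}$ for large enough $k$, while $|V_i|\, h(1) \le k \cdot 2^{-(r_0+\delta)k+1}/k \le 2^{-r_0 k + 1}$. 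Hence each of the $3^\ell$ terms is bounded by $(2^{-r_0 k + 1})^\ell$, and summing yields
\[
\pr\Big(\bigcap_{i=1}^\ell (c_i \in \faild(R) \cup \failu(R))\Big) \le 3^\ell \cdot 2^{(-r_0 k + 1)\ell} = 2^{(-r_0 k + 1 + \log_2 3)\ell} \le 2^{(-r_0 k + 4)\ell}.
\]
The main subtlety is setting up the three-event decomposition correctly and, in particular, verifying that $\verticesd(R) \cap \var(c_i) \subseteq V_i$ under the hypotheses that $u \notin \var(c_i)$ and $c_i$ is good; once this is in place, the rest is a mechanical application of Lemma~\ref{lem:prob-V} together with the marking bound and the constant calculation involving $r_0$.
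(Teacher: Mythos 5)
Your proposal is correct and follows essentially the same approach as the paper's proof. You decompose each event $c_i\in\faild(R)\cup\failu(R)$ into the same three sub-events the paper calls $F_i(1),F_i(2),F_i(3)$ (you label the discrepancy event ``3'' rather than ``1''), take a union bound over $\{1,2,3\}^\ell$, expand the discrepancy events over choices of a variable in $V_i$, invoke Lemma~\ref{lem:prob-V} using the disjointness of the $V_i$ exactly as the paper does, and conclude with the same kind of constant bookkeeping (the paper packages the sum via the multinomial theorem and uses $2e\,2^{3r_0}+2\le 2^4$; you bound each of the $3^\ell$ terms individually and absorb the $3^\ell$ via $3\le 2^3$, which yields the same exponent).
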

\begin{proof} 
    Let $c_1, \ldots, c_\ell$ be some good clauses of $\Phi$ as in the statement. The hypothesis that $u$ does not appear in any of these clauses is necessary as if $u \in \var(c)$ then $c \in \faild(R)$ by definition.
    We consider a random run $R \sim \tau_{\mathcal{R}}(\Phi, \mathcal{M}, u, \Lambda)$ of Algorithm~\ref{alg:coupling} and let $\widehat{X}, \widehat{Y}$ be the (random) output of Algorithm~\ref{alg:coupling} for the run $R$. For $j \in \{1,2,\ldots, \ell\}$, let $F_{j}(1)$ be the event that there is $v \in \var(c_j) \cap \mathcal{V}_{\mathrm a}$ such that $v \in \mathcal{V}_{\mathrm {set}}(R)$ and $\widehat{X}(v) \ne \widehat{Y}(v)$,  let $F_{j}(2)$ be the event that $\var(c_j) \cap \mathcal{V}_{\mathrm a} \subseteq \mathcal{V}_{\mathrm {set}}(R)$ and $c_j$ is unsatisfied by $\widehat{X}$, and let $F_{j}(3)$ be the event that $\var(c_j) \cap \mathcal{V}_{\mathrm a} \subseteq \mathcal{V}_{\mathrm {set}}(R)$ and $c_j$ is unsatisfied by $\widehat{Y}$. In light of Proposition~\ref{prop:coupling:properties}, we have  $[c_1, \ldots, c_\ell \in \faild(R) \cup \failu(R)] = \bigcap_{j = 1}^\ell (F_{j}(1) \cup F_{j}(2) \cup F_{j}(3))$. We obtain
    \begin{equation} \label{eq:fail}
       \pr\left(\bigcap\nolimits_{j = 1}^\ell (F_{j}(1) \cup F_{j}(2) \cup F_{j}(3)) \right) \le \sum_{(i_1, i_2, \ldots, i_\ell) \in \{1,2,3\}^\ell} \pr\left( \bigcap\nolimits_{j = 1}^\ell F_{j}(i_j) \right).
    \end{equation}
    We note that $F_{j}(1) = \bigcup_{v \in \var(c_j) \cap \mathcal{V}_{\mathrm a}} D_v(1)$, see Definition~\ref{def:dvj}. Let $(i_1, i_2, \ldots, i_\ell) \in \{1,2,3\}^\ell$, and let $I_1 = \{j : i_j = 1\}$, $I_2 = \{j : i_j = 2\}$ and $I_3 = \{j : i_j = 3\}$. If the event $\bigcap_{j \in I_1} F_{j}(i_j)$ holds, then, for each  $j \in I_1$  there is a  variable $u_j \in \var(c_j) \cap \mathcal{V}_{\mathrm a}$ such that $D_{u_j}(1)$ holds. Thus, for the set of tuples $T = \prod_{j \in I_1} (\var(c_{j}) \cap \mathcal{V}_{\mathrm a})$, where $\prod$ here denotes the cartesian product of sets, we have
    \begin{equation} \label{eq:I1}
        \bigcap_{j \in I_1} F_{j}(i_j) = \bigcup_{(u_1, u_2, \ldots, u_{|I_1|}) \in T} \bigcap_{j\in I_1}  D_{u_j}(1).
    \end{equation}
    Now we explain how we bound $\pr\left(\left(\bigcap_{j \in I_2 \cup I_3}  F_{j}(i_j)\right)\cap \left(\bigcap_{j\in I_1} D_{u_j}(1) \right) \right)$ for a tuple $(u_1, u_2, \ldots, u_{|I_1|}) \in T$. We are going to show that
    \begin{equation}\label{eq:bound-intersection}
        \pr\left( \left(\bigcap\nolimits_{j \in I_2 \cup I_3}  F_{j}(i_j)\right)\cap \left(\bigcap\nolimits_{j\in I_1} D_{u_j}(1) \right)  \right) \le \left(\frac{\exp(1 / k)}{2}\right)^{(k-3)r_0 \lvert I_2 \cup I_3\rvert}  \left(\frac{2}{k 2^{r_0 k}}\right)^{\lvert I_1 \rvert}.
    \end{equation}
    The proof of \eqref{eq:bound-intersection} is not as straightforward as it may seem at first glance due to the dependencies among the events $F_{j}(i_j)$, $D_{u_j}(1)$. The key idea is re-writing the LHS of \eqref{eq:bound-intersection} as in the statement of Lemma~\ref{lem:prob-V}. Indeed we note that for each $j \in I_2$ and for each variable $v \in \var(c_j) \cap \mathcal{V}_{\mathrm a}$, the event $F_j(2)$ implies that there is $i_v \in \{2,3\}$ such that $D_v(i_v)$ holds, concluding $F_j(2) = \bigcap_{v \in \var(c_j) \cap \mathcal{V}_{\mathrm a}} D_v(i_v)$, see Definition~\ref{def:dvj}. Analogously, for each $j \in I_3$ and for each variable $v \in \var(c_j)  \cap \mathcal{V}_{\mathrm a}$, we find $i_v \in \{4,5\}$ such that $F_j(3) = \bigcap_{v \in \var(c_j) \cap \mathcal{V}_{\mathrm a}} D_v(i_v)$. Therefore, we have
    \begin{equation} \label{eq:re-written}
    \left(\bigcap\nolimits_{j \in I_2 \cup I_3} F_{j}(i_j)\right) \cap \left(\bigcap\nolimits_{j \in I_1} D_{u_j}(1) \right) = \bigcap_{v\in V_f} D_v({i_v}),
    \end{equation}    
    where $V_f$ contains exactly all the auxiliary variables in the clauses $c_j$ with $j \in I_2 \cup I_3$ and the variables $u_1, u_2, \ldots, u_{|I_1|}$. Recall now that each good clause contains at least $r_0(k-3)$ auxiliary variables, and, thus, the bound given in \eqref{eq:bound-intersection} follows from \eqref{eq:re-written} and Lemma~\ref{lem:prob-V}. Combining \eqref{eq:bound-intersection}, \eqref{eq:I1} and \eqref{eq:fail}, and counting the number of tuples in $T$, we conclude that 
    \begin{equation*}
    \begin{aligned}
        \pr\left(\bigcap\nolimits_{j = 1}^\ell (F_{j}(1) \cup F_{j}(2) \cup F_{j}(3)) \right) & \le \sum_{(i_1, i_2, \ldots, i_\ell) \in \{1,2,3\}^\ell} k^{\lvert I_1 \rvert} \left(\frac{\exp(1 / k)}{2}\right)^{(k-3)r_0 \lvert I_2 \cup I_3\rvert}  \left( \frac{2}{k 2^{r_0 k}}\right)^{\lvert I_1 \rvert} \\
        & \le \sum_{(i_1, i_2, \ldots, i_\ell) \in \{1,2,3\}^\ell} \left(\frac{e 2^{3r_0}}{2^{k r_0}}\right)^{ \lvert I_2 \cup I_3\rvert}  \left( \frac{2}{ 2^{r_0 k}}\right)^{\lvert I_1 \rvert} \\ 
        & =  \left(  \frac{e 2^{3r_0}}{2^{k r_0}} +   \frac{e 2^{3r_0}}{2^{k r_0}} +  \frac{2}{ 2^{r_0 k}} \right)^{\ell},
    \end{aligned}
    \end{equation*}
    where we used the multinomial theorem. The result now follows from $ 2 e 2^{3r_0} + 2 \le 2^4$.
\end{proof}

Following~\cite{galanis2019counting} and motivated by Lemma~\ref{lem:bound-fail}, we introduce the combinatorial structure that we use in our proof of Lemma~\ref{lem:si} to bound the expected number of failed clauses. 

\begin{definition}[$G^{\le k}$, $\mathcal{D}_3(G_\Phi, c, \ell)$]
    For a graph $G = (V, E)$ and a positive integer $k$, let $G^{\le k}$ be the graph with vertex set $V$ in which vertices $u$ and $v$ are connected if and only if there is a path from $u$ to $v$ in $G$ of length at most $k$. Given the graph $G_\Phi$, a clause $c$ and a positive integer $\ell$, let $\mathcal{D}_3(G_\Phi, c, \ell)$ be the set of subsets $T \subseteq V(G_\Phi)$ such that the following holds:
    \begin{enumerate}
        \item $\lvert T \rvert = \ell$ and $c \in T$;
        \item for any $c_1, c_2 \in T$, $\var(c_1) \cap \var(c_2) \cap \gv = \emptyset$;
        \item the graph $G_\Phi^{\le 3}[T]$, which is the subgraph of $G_\Phi^{\le 3}$  induced by $T$, is connected;
        \item we have $\lvert T \cap \mathcal{C}_{good} \rvert \ge (1-8/k) \ell$. 
    \end{enumerate}
\end{definition}

In~\cite{galanis2019counting} the authors consider connected sets in $G_\Phi^{\le 4}$ instead of  $G_\Phi^{\le 3}$. Here we manage to perform our union bound on $\mathcal{D}_3(G_\Phi, c, \ell)$ thanks to the fact that the set of failed clauses is connected in our refinement of the coupling process.

\begin{lemma}[{\cite[Corollary 8.19]{galanis2019counting} for $G^{\le 3}$}] \label{lem:bound-le3}
  Let $G = (V, E)$ be a connected graph, let $v \in V$ and let $\ell$ be a positive integer. Let $n_{G, \ell}(v)$ denote the number of connected induced subgraphs of $G$ with size $\ell$ containing $v$. Then, for $\ell' = \min \{3\ell, \lvert V \rvert \}$, we have $n_{G^{\le 3}, \ell}(v) \le 2^{\ell'} n_{G, \ell'}(v)$.
\end{lemma}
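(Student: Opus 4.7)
The plan is to construct an appropriate many-to-one mapping from connected induced subgraphs of $G^{\le 3}$ of size $\ell$ containing $v$ to connected induced subgraphs of $G$ of size $\ell'$ containing $v$, and then bound the multiplicity by counting subsets.

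More concretely, let $\mathcal{N}_{G^{\le 3}, \ell}(v)$ denote the set of connected induced subgraphs of $G^{\le 3}$ of size $\ell$ containing $v$, and analogously define $\mathcal{N}_{G, \ell'}(v)$. Given $H \in \mathcal{N}_{G^{\le 3}, \ell}(v)$, I would fix a canonical spanning tree $T$ of $H$ (say the BFS tree from $v$ using some fixed vertex ordering) and a canonical path in $G$ of length at most $3$ realising each edge of $T$ (such paths exist by definition of $G^{\le 3}$). Starting from $\{v\}$ and traversing $T$ in BFS order, each newly added tree edge contributes its endpoint in $H$ together with at most $2$ intermediate vertices from the corresponding path in $G$. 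This produces a connected vertex set $W \subseteq V(G)$ with $V(H) \subseteq W$ and $|W| \le \ell + 2(\ell - 1) \le 3\ell$. If $|W| < \ell' = \min\{3\ell, |V|\}$, extend $W$ arbitrarily (but canonically, e.g.\ by repeatedly adding the smallest vertex in the boundary) to a connected set $W^\star$ of size exactly $\ell'$, which is possible because $G$ is connected and $\ell' \le |V|$. Define $\phi(H)$ to be the induced subgraph of $G$ on $W^\star$; this lies in $\mathcal{N}_{G, \ell'}(v)$.

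To bound the multiplicity of $\phi$, note that any $H \in \phi^{-1}(H')$ must satisfy $V(H) \subseteq V(H')$, $v \in V(H)$, and $|V(H)| = \ell$. Hence the number of preimages is at most $\binom{\ell'-1}{\ell-1} \le 2^{\ell'}$. Summing over the image yields $|\mathcal{N}_{G^{\le 3}, \ell}(v)| \le 2^{\ell'}\, |\mathcal{N}_{G, \ell'}(v)|$, which is precisely the claimed bound.

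The argument is essentially routine once the mapping is set up; the only step requiring minor care is making the construction of $\phi(H)$ well-defined (independent of which $H$ we start from) so that the preimage bound is genuinely a subset-counting bound. This is handled by fixing in advance a total order on $V$ and choosing spanning trees, realising paths, and extension vertices in a canonical way. I do not foresee a substantive obstacle, and the lemma is a purely combinatorial fact whose proof does not depend on any randomness in $\Phi$.
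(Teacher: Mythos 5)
Your proposal is correct and follows essentially the same approach as the paper: for each connected subgraph of $G^{\le 3}$ of size $\ell$ containing $v$, build a connected superset in $G$ of size $\ell'$ (you make the realising-path construction explicit; the paper cites the analogue of~\cite[Lemma 8.18]{galanis2019counting}), and bound the multiplicity by counting $\ell$-subsets of a fixed $\ell'$-set containing $v$. The discussion of canonicality is unnecessary extra care — as long as each source subgraph is assigned some image, the fibre-size bound suffices — but it does no harm.
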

\begin{proof}
  Let $T$ be a connected subgraph of $G^{\le 3}$ with size $\ell$ containing $v$. We claim that, for all positive $\ell$, we can find a connected subset $H$ of $G$ with size $\ell' = \min \{3\ell, \lvert V \rvert \}$ containing $T$. The proof is straightforward by induction on $\ell$, see~\cite[Lemma 8.18]{galanis2019counting} for the analogous result on $G^{\le 4}$. We note that there are at most $\binom{\ell'}{\ell -1} \le 2^{\ell'}$ subsets $T$ of $H$ containing $v$ that could be mapped to $H$ by the previous construction. Hence, we conclude that $n_{G^{\le 3}, \ell}(v) \le 2^{\ell'} n_{G, \ell'}(v)$ as we wanted.
\end{proof}

\begin{lemma}[{\cite[Lemma 7.9]{galanis2019counting} for $\mathcal{D}_3(G_\Phi, c, \ell)$}] \label{lem:bound-I}
    Let $\ell$ be  an integer which is at least $\log n$. W.h.p. over the choice of $\Phi$, every clause $c \in \gc$ has the property that the size of $\mathcal{D}_3(G_\Phi, c, \ell)$ is at most $(18 k^2 \alpha)^{3 \ell}$.
\end{lemma}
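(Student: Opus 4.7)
}

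The plan is to drop all the extra defining conditions of $\mathcal{D}_3(G_\Phi,c,\ell)$ except the bare minimum, and then reduce the count to a question about $G_\Phi$ itself (for which Lemma~\ref{lem:bound-Z} already gives an off-the-shelf bound). Concretely, every $T\in\mathcal{D}_3(G_\Phi,c,\ell)$ is in particular a connected induced subgraph of $G_\Phi^{\le 3}$ of size $\ell$ containing the vertex~$c$, so in the notation of Lemma~\ref{lem:bound-le3} we immediately get
\begin{equation*}
  \lvert\mathcal{D}_3(G_\Phi,c,\ell)\rvert \;\le\; n_{G_\Phi^{\le 3},\ell}(c).
\end{equation*}

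Next I would apply Lemma~\ref{lem:bound-le3} with $G=G_\Phi$ and $v=c$. Setting $\ell':=\min\{3\ell,\lvert\mathcal{C}\rvert\}$, this yields $n_{G_\Phi^{\le 3},\ell}(c)\le 2^{\ell'}\,n_{G_\Phi,\ell'}(c)$. Since by hypothesis $\ell\ge \log n$, and since w.h.p.\ $\lvert\mathcal{C}\rvert=\lfloor \alpha n\rfloor \ge \log n$ for large enough $n$, the integer $\ell'$ also satisfies $\ell'\ge\log n$; this is precisely the regime in which Lemma~\ref{lem:bound-Z} applies, and it gives, w.h.p.\ over the choice of $\Phi$, the bound $n_{G_\Phi,\ell'}(c)\le (9k^2\alpha)^{\ell'}$ simultaneously for every clause~$c$ (the w.h.p.\ event in Lemma~\ref{lem:bound-Z} does not depend on $c$).

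Combining the last two inequalities gives, for every $c\in\gc$,
\begin{equation*}
  \lvert\mathcal{D}_3(G_\Phi,c,\ell)\rvert \;\le\; 2^{\ell'}(9k^2\alpha)^{\ell'} \;=\; (18k^2\alpha)^{\ell'} \;\le\; (18k^2\alpha)^{3\ell},
\end{equation*}
where the final inequality uses $\ell'\le 3\ell$ together with the trivial fact that $18k^2\alpha\ge 1$ (indeed $k\ge 3$, and even one clause worth of density suffices). This is the advertised bound.

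There is essentially no hard step here: the content of the proof lives in Lemmas~\ref{lem:bound-le3} and~\ref{lem:bound-Z}, and the proposal is just to chain them. The only minor point that needs attention is handling the two corner cases in Lemma~\ref{lem:bound-le3}'s definition of $\ell'$, namely $\ell'=3\ell$ versus $\ell'=\lvert\mathcal{C}\rvert<3\ell$, and verifying that $\ell'\ge\log n$ so that Lemma~\ref{lem:bound-Z} is applicable; both are resolved by the observations above.
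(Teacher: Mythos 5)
Your proposal is correct and follows exactly the route the paper indicates: combine Lemma~\ref{lem:bound-le3} to pass from $G_\Phi^{\le 3}$ back to $G_\Phi$, then invoke Lemma~\ref{lem:bound-Z} to count connected sets in $G_\Phi$. The only details the paper leaves implicit that you make explicit --- verifying $\ell' \ge \log n$ so that Lemma~\ref{lem:bound-Z} applies, and needing $18k^2\alpha \ge 1$ to replace the exponent $\ell'$ by $3\ell$ --- are both handled correctly.
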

\begin{proof}
    This follows from bounding the number of connected sets of size $\ell$ in $G_\Phi^{\le 3}$ that contain $c$ by combining Lemmas~\ref{lem:bound-Z} and~\ref{lem:bound-le3}. 
\end{proof}

We have now all the tools that we need to bound the expected number of failed clauses in the coupling process given in Algorithm~\ref{alg:coupling} and complete the proof of Lemma~\ref{lem:si}.

\begin{lemsi}
   \statelemsi
\end{lemsi}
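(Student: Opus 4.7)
The plan is to combine the spectral independence criterion~\eqref{eq:si} with Lemma~\ref{lem:expectation}: for any $u \in \mathcal{V}_{\mathrm m}$ and pinning $\Lambda \colon S \to \{\mathsf{F},\mathsf{T}\}$, we have $\sum_{v} |\mathcal{I}^{\Lambda}(u \to v)| \le 2^{-(r_0+\delta)k + 1}\, \mathbb{E}[|\failu(R)|]$ for a random run $R$ of Algorithm~\ref{alg:coupling}. So it suffices to prove w.h.p. over $\Phi$ that $\mathbb{E}[|\failu(R)|] \le \tfrac{1}{2}\,2^{\delta k}\log n$ uniformly in $u,\Lambda$; the result then follows since $\lambda_1(\mathcal{I}^\Lambda)$ is bounded by the maximum row sum.

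The main step is to bound $\mathbb{E}[L]$ for $L := |\faild(R) \cup \failu(R)|$. By Proposition~\ref{prop:coupling:properties}, $\faild(R)\cup\failu(R)$ is connected in $G_\Phi$ and contains every clause with $u$ in it, so I will write $\mathbb{E}[L] = \sum_{\ell \ge 1}\pr(L \ge \ell)$ and split the sum at a threshold $L_0 = \Theta(k^4 \log n)$ chosen so that Lemmas~\ref{lem:bad}, \ref{lem:tree-excess} and~\ref{lem:bound-I} all apply to connected subsets of $G_\Phi$ of size at least $L_0$. For $\ell \le L_0$ I use the trivial bound $\pr(L\ge \ell)\le 1$, contributing $O(k^4 \log n)$ to $\mathbb{E}[L]$, which is comfortably within the target after the factor of $2^{-(r_0+\delta)k+1}$.

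For $\ell > L_0$, the plan is the following union bound. If $L \ge \ell$, then there is a connected subset $F \subseteq \faild(R)\cup\failu(R)$ of $G_\Phi$ of size $\ell$ containing some fixed clause $c_0$ with $u \in \var(c_0)$. Inside $F$, using Lemma~\ref{lem:bad} (so that at most $\ell/k$ clauses of $F$ are bad) and the bounded tree-excess of $F$ from Lemma~\ref{lem:tree-excess}, one can extract a sub-collection $T$ of \emph{good} clauses of $F$ such that (i) no two clauses of $T$ share a good variable, (ii) $T$ is connected in $G_\Phi^{\le 3}$ (the extra distance-$3$ edges absorb the at most $\ell/k$ bad clauses and the constantly many extra edges), (iii) $|T| \ge (1 - O(1/k))\ell$, and (iv) $c_0 \in T$. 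Thus $T \in \mathcal{D}_3(G_\Phi, c_0, |T|)$, so taking a union bound over $c_0$ adjacent to $u$ and then over $T \in \mathcal{D}_3(G_\Phi, c_0, \ell')$ for $\ell' = |T|$, Lemma~\ref{lem:bound-I} gives at most $\Delta \cdot (18k^2\alpha)^{3\ell'}$ candidate pairs $(c_0,T)$, while Lemma~\ref{lem:bound-fail} gives $\pr(T \subseteq \faild(R)\cup\failu(R)) \le 2^{(-r_0 k + 4)\ell'}$ (noting $u \notin \var(c)$ for $c\in T$ can be arranged since $u$ is marked and $T$ consists of good clauses other than those hitting $u$, which can be accounted for by a mild modification). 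Using the hypothesis $\alpha \le 2^{r_0 k/3}/k^3$, the per-$T$ product is $[(18k^2\alpha)^3 \cdot 2^{-r_0 k + O(1)}]^{\ell'} = O(1/k^3)^{\ell'}$, which is summable geometrically and contributes $o(1)$ to $\sum_{\ell > L_0}\pr(L \ge \ell)$.

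Putting the pieces together yields $\mathbb{E}[L] \le L_0 + o(1) = O(k^4 \log n)$, and then Lemma~\ref{lem:expectation} gives $\sum_{v} |\mathcal{I}^\Lambda(u\to v)| \le 2^{-(r_0+\delta)k+1}\cdot O(k^4 \log n) \le 2^{-r_0 k}\log n$ for sufficiently large $k$. The most delicate step is the combinatorial extraction of $T\in\mathcal{D}_3$ from $F$: it requires simultaneously controlling bad clauses (via Lemma~\ref{lem:bad}), tree-excess (via Lemma~\ref{lem:tree-excess}), and pairwise sharing of good variables (which is sparse thanks to Lemma~\ref{lem:linearity} plus the constant tree-excess), while ensuring that $G_\Phi^{\le 3}[T]$ stays connected. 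This is where the distance-$3$ closure in the definition of $\mathcal{D}_3$ is crucial, and where keeping track of constants relative to the density threshold $2^{r_0 k/3}/k^3$ drives the final quantitative bound.
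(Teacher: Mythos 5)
Your proposal follows essentially the same route as the paper: reduce to $\mathbb{E}[|\failu(R)|]$ via Lemma~\ref{lem:expectation}, exploit connectivity of the failed set together with local sparsity (Lemmas~\ref{lem:bad}, \ref{lem:bound-Y}, \ref{lem:tree-excess}) to extract a $\mathcal{D}_3$ structure, and close with a union bound driven by Lemmas~\ref{lem:bound-I} and~\ref{lem:bound-fail}. Your tail-sum formulation $\mathbb{E}[L] = \sum_{\ell}\pr(L \ge \ell)$ is a fine cosmetic variant of the paper's direct bound $\pr(|\failu(R)| \ge 2k^4\log n) = O(1/n)$, and the final arithmetic under $\alpha \le 2^{r_0 k/3}/k^3$ matches.

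There is, however, a genuine quantitative error in step (iii) of your extraction: you claim to produce $T$ with $|T| \ge (1-O(1/k))\ell$ from a connected $F$ of size $\ell$ with $T$ an independent set in the good-variable-sharing graph. This cannot be guaranteed. Since $F$ is connected in $G_\Phi$ with tree-excess $O(1)$, removing the $\le \ell/k$ bad clauses and the $O(1)$ extra edges leaves a forest of good clauses, and the largest independent set in a forest can be as small as half its vertices (e.g.\ a path). The paper therefore only extracts $T$ of size $\ell \approx L/4$ from a connected set $Z$ of size $L$, via the bipartite decomposition $|I| \ge |U|/2$. Your claimed $(1-O(1/k))$ fraction is simply unattainable in the worst case. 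The error is not fatal — exponential decay in $\ell/4$ is still exponential decay, so the geometric tail sum and the $2^{\delta k}$ headroom absorb the lost constant — but the proof as written does not go through, and the correct extraction needs the paper's two-scale bookkeeping ($L$ vs.\ $\ell \approx L/4$). A second, smaller gap: you require $c_0 \in T$ for some $c_0 \ni u$ to save the union-bound prefactor $\Delta$ over $m$, but the $\mathcal{D}_3$ extraction (in particular, the maximality of the independent set $I'$ used for the connectivity-in-$G_\Phi^{\le 3}$ argument) does not give you control over which clause ends up in $T$; the paper sidesteps this by unioning over all $c\in\mathcal{C}$, which is harmless because $m = O(n)$ is dwarfed by the $2^{-\Omega(k^4\log n)}$ decay.
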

\begin{proof} \label{lem:si:proof}
  Let $u \in \mathcal{V}_{\mathrm m}$ and $\Lambda \colon S \to \{\mathsf{F}, \mathsf{T}\}$ with $S \subseteq \mathcal{V}_{\mathrm m} \setminus \{u\}$. First of all, we apply Lemma~\ref{lem:coupling:si} to bound $\sum_{v \in \mathcal{V}_{\mathrm m} \setminus (S \cup \{u\}) } \left| \mathcal{I}^\Lambda(u \to v) \right|$  by $2^{-r_0 k+1}  \mathbb{E}  \left[\lvert  \failu(R) \rvert \right]$, where $R \sim \tau_{\mathcal{R}}(\Phi, \mathcal{M}, u, \Lambda)$. In the rest of this proof we show that $\pr(\lvert \failu(R) \rvert \ge 2k^4 \log n) \le O(1/n)$ and, thus, for large enough $n$, $\mathbb{E} \left[\lvert  \failu(R) \rvert \right] = \sum_{R \in \mathcal{R}(\Phi, \mathcal{M}, u, \Lambda)} \pr(R) \lvert  \failu(R) \rvert \le 4 k^4 \log n$. Putting all this together, and using the fact that $8k^4 \le 2^{\delta k}$ for large enough $k$ (here $\delta = \deltadef$) we would obtain the bound $\sum_{v \in \mathcal{V}_{\mathrm m} \setminus (S \cup \{u\}) } \left| \mathcal{I}^\Lambda(u \to v) \right| \le 8 \cdot 2^{-r_0 k} k^4 \log n \le 2^{-(r_0-\delta) k} \log n$ and, thus, the result would follow.

So to finish we just need to show that, w.h.p.{} over the choice of~$\Phi$, $\pr(\lvert \failu(R) \rvert \ge 2k^4 \log n) \le O(1/n)$.  Let $L = \lceil 2k^4 \log n \rceil$ and let $\ell = \lceil 0.5 k^4 \log n \rceil$. First, we are going to show that, w.h.p. over the choice of $\Phi$, the following holds:
  \begin{equation} \label{eq:ind}
  \begin{aligned}  
      \text{if } Z \subseteq \mathcal{C} \text{ is connected and } \lvert Z \rvert = L  & \text{, then } \exists c \in Z \cap \gc \text{ and } T \in \mathcal{D}_3(G_\Phi, c, \ell) \text{ with } T \subseteq Z.
  \end{aligned}
  \end{equation}
  In order to prove \eqref{eq:ind}, we are going to find a large independent set of $Z \cap \gc$, and we are going to extend it with some clauses in $Z \cap \bc$ to obtain $T \in \mathcal{D}_3(G_\Phi, c, \ell)$. We need three results that hold w.h.p. over the choice of $\Phi$: Lemmas~\ref{lem:bad},~\ref{lem:bound-Y} and~\ref{lem:tree-excess}. We note that we can apply Lemma~\ref{lem:bad} for $r= r_0 - \delta$ as our density satisfies $\alpha \le 2^{r_0 k / 3} / k^3 \le \lceil 2^{(r_0 - \delta)k} \rceil /k^3 = \Delta_r / k^3$, where $\delta = \deltadef$.  For $Z$ as in \eqref{eq:ind} we have $\lvert Z \rvert \ge 2k^4 \log n$, so by Lemma~\ref{lem:bound-Y} with $a = 2k^4$, we find that $\lvert \var(Z) \rvert \ge 2k^4 \log n$ and, thus, in light of Lemma~\ref{lem:bad}, we conclude that  $\lvert Z \cap \gc \rvert \ge (1-1/k) \lvert Z \rvert$ and $\lvert Z \cap \bc \rvert \le \lvert Z \rvert / k$. From Lemma~\ref{lem:tree-excess} with $b = 4k^4$, w.h.p. over the choice of $\Phi$, all connected sets of clauses with size at most $4k^4 \log n$ have tree-excess at most $t := \max\{1,  8 k^4 \log(e k^2 \alpha) \}$. 
  Thus, we can find $U \subseteq Z \cap \gc$ such that $U$ is a forest (disjoint union of trees) and $\lvert U \rvert \ge (1-1/k) \lvert Z \rvert - t$. In particular, $U$ is bipartite, so there is $I \subseteq U$ such that $\var(c) \cap \var(c') = \emptyset$ for all $c, c' \in I$ and $\lvert I \rvert \ge \lvert U \rvert / 2 \ge  (1-1/k)L/2 - t/2\ge \tfrac{1}{2} k^4 \log n$, where the last inequality holds for large enough $n$. Let $I'$ be an independent set of  $Z \cap \gc$ with the largest possible size. Then we have shown that $\lvert I' \rvert \ge \ell = \lceil \tfrac{1}{2} k^4 \log n \rceil$. 
  
  We claim that the set $T' :=  I' \cup (Z \cap \bc)$ is connected in $(G_\Phi[Z])^{\le 3}$, where $G_\Phi[Z]$ is the subgraph of $G_\Phi$ induced by $Z$.  Assume for contradiction that $T'$ is not connected in $(G_\Phi[Z])^{\le 3}$. In this case, we can write $T' = S_1 \cup S_2$ such that for all $c_1 \in S_1$ and $c_2 \in S_2$, the shortest path between $c_1$ and $c_2$ through clauses in $Z$ has length at least $4$. Let $(c_1,c_2) \in S_1 \times S_2$ be the pair with the shortest path in $Z$, and let this path be $c_1 = e_1, e_2, \ldots, e_j = c_2$. Then $j \ge 5$ and $e_2, \ldots, e_{j-1} \in Z \setminus T'$. Moreover, we find that $\var(e_3) \cap \var(c) = \emptyset$ for all $c \in T'$ as otherwise $e_1, e_2, \ldots, e_j$ would not be the shortest path between $S_1$ and $S_2$. Moreover, since $T'$ contain all bad clauses in $Z$, we conclude that $e_3$ is a good clause. 
  It follows that $I' \cup \{e_3\}$ is an independent set of good clauses of $Z$, which contradicts the fact that $I'$ has largest possible size among such sets.  
  
  Finally, as $\lvert T' \rvert \ge \ell$, we can find a good clause $c$ and a subset $T$ of $T'$ with size $\ell$ such that $c \in T$, $T$ is connected in $G_\Phi^{\le 3}$ and $\lvert T \cap \bc \rvert \le \lvert Z \cap \bc \rvert \le L /k  \le 8 \ell / k $. We conclude that $T \in \mathcal{D}_3(G_\Phi, c, \ell)$. This finishes the proof of \eqref{eq:ind}. 
  
  In the rest of the proof we use \eqref{eq:ind} to bound $\pr(\lvert \failu(R) \rvert \ge L)$. Recall   that the set of failed clauses $\faild(R) \cup \failu(R)$ is connected (Proposition~\ref{prop:coupling:properties}). If $\lvert\failu(R) \rvert  \ge L$, then there is $Z \subseteq \faild(R) \cup \failu(R)$ with $\lvert Z \rvert = L$ such that $Z$ is connected in $G_\Phi$, and, thus, we can find $c$ and $T$ as in \eqref{eq:ind}. We have shown that the event $\lvert\failu(R) \rvert  \ge L$  is contained in the event that there is a good clause $c$ and $T \in \mathcal{D}_3(\Phi, c, \ell)$ such that $T \subseteq \faild(R) \cup \failu(R)$. As a consequence, 
  we have
\begin{equation*}
\begin{aligned}    
    \pr\left[ \lvert \failu(R) \rvert  \ge L \right] & \le \sum_{c \in \mathcal{C}} \  \sum_{T \in \mathcal{D}_3(\Phi, c, \ell)}  \pr \left[ T \subseteq \faild(R) \cup \failu(R) \right] \\
    & \le \sum_{c \in \mathcal{C}} \  \sum_{T \in \mathcal{D}_3(\Phi, c, \ell)}  \pr \left[ T \cap \gc \subseteq \faild(R) \cup \failu(R) \right].
\end{aligned}
\end{equation*}
We note that for any $T\in \mathcal{D}_3(\Phi, c, \ell)$ there is at most one good clause $c'$ that contains the marked variable $u$. Thus, by definition of $\mathcal{D}_3(\Phi, c, \ell)$, there are at least $(1-8/k) \ell - 1$ good clauses in $T$ that do not contain the variable $u$. 
Hence, we can apply Lemma~\ref{lem:bound-I} on the size of $\mathcal{D}_3(\Phi, c, \ell)$ and Lemma~\ref{lem:bound-fail} on the probability of good clauses (that do not share good variables) failing to further obtain 
\begin{equation*}
\begin{aligned}    
    \pr\left[ \lvert \failu(R) \rvert  \ge L \right]
    & \le m \left(18 k^2 \alpha\right)^{3 \ell} 2^{-(r_0 k-4) [ (1-8/k) \ell - 1 ]}.
\end{aligned}
\end{equation*} 
In what follows it is essential that $\alpha \le 2^{r_0 k / 3} / k^3$, and this is the only proof in this paper where we need this bound on the density -- other proofs only require the less restrictive bounds $\alpha \le 2^{(r_0 - \delta)k} / k^3$ or $\alpha \le 2^{(r_0 - 3\delta)k} / k^3$. Thus, we conclude that
\begin{equation*}
\begin{aligned}    
    \pr\left[ \lvert \failu(R) \rvert  \ge L \right]
    & \le m \left(18 \frac{2^{r_0 k /3}}{k}\right)^{3 \ell} 2^{-(r_0 k-4) (1-8/k) \ell} \ 2^{r_0 k-4} =  m \left(\frac{18^3 }{ k^3} 2^{8 r_0  + 4(1-8/k)} \right)^{\ell} 2^{r_0 k - 4}.
\end{aligned}
\end{equation*}
Finally, for large enough $k$ we find that  $\pr\left[ \lvert \failu(R) \rvert  \ge L \right] \le m e^{-\ell} 2^{r_0 k} \le  m  n^{-0.5 k^4}  2^{r_0 k} = O(1/n)$ as we wanted. 
\end{proof}

\subsection{Mixing time of the $\rho$-uniform-block Glauber dynamics} \label{sec:mixing-time:mt}

Finally, we combine the results in this section with Lemma~\ref{lem:block-glauber} to complete the proof of Lemma~\ref{lem:mixing-time}.

\begin{remark} \label{rem:marginally-bounded}
  The distribution $\restr{\mu_{\Omega}}{\mathcal{V}_{\mathrm m}}$ on assignments of the marked variables (Definition~\ref{def:marginal}) is $b$-marginally bounded for $b = 1 - (1/2)\exp(1/k)$ by Proposition~\ref{prop:marginals} (or, equivalently, Lemmas~\ref{lem:marking} and~\ref{lem:marginals}). Since $\exp(1/k) \le 1 + 2/k$, we have $b \ge 1/2 - 1/k \ge 1/e$ for $k \ge 8$.
\end{remark}

\begin{lemmt}
 \statelemmt
\end{lemmt}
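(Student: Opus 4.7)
The plan is to apply Lemma~\ref{lem:block-glauber} directly to the distribution $\mu := \restr{\mu_\Omega}{\mathcal{V}_{\mathrm m}}$ with the parameters $b = 1/e$, $\eta = 2^{-r_0 k} \log n$, $\kappa = 2^{-k-1}$, and $M = \lvert \mathcal{V}_{\mathrm m} \rvert$, and then verify that the resulting mixing-time bound simplifies to the stated quantity $T$.

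First, I would collect the two ingredients that feed into Lemma~\ref{lem:block-glauber}. Since $2^{0.039 k} \le 2^{r_0 k/3}/k^3$ for $k$ large enough, Lemma~\ref{lem:si} (together with the existence of an $r_0$-marking from Proposition~\ref{prop:marginals}) says that w.h.p.\ $\mu$ is $\eta$-spectrally independent with $\eta = 2^{-r_0 k} \log n$, and Remark~\ref{rem:marginally-bounded} guarantees that $\mu$ is $(1/e)$-marginally bounded for $k \ge 8$. Next I would check the size condition $M \ge (2/\kappa)(4\eta/b^2 + 1)$: Corollary~\ref{cor:size-r-distributed} gives $\lvert \mathcal{V}_{\mathrm m} \rvert = \Omega(k \alpha n / \Delta)$, which is linear in $n$, whereas the right-hand side is $O(2^k \log n)$, so the condition holds w.h.p.\ for large $n$.

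The main calculation is the size of $C_\rho = (2/\kappa)^{4\eta/b^2 + 1} = 2^{(k+2)(4 e^2 \cdot 2^{-r_0 k} \log n + 1)}$. Factoring this as $2^{k+2}\cdot n^{(k+2)\cdot 4 e^2 \cdot 2^{-r_0 k}}$, the exponent on $n$ is at most $\theta$ as soon as $2^{r_0 k} \ge 4 e^2 (k+2)/\theta$, and since $r_0$ is a positive absolute constant this is implied by $k \ge k_0(\theta) = \Theta(\log(1/\theta))$. For such $k$ I obtain $C_\rho \le 2^{k+2}\, n^{\theta}$, and combining this with $M/\rho \le 1/\kappa = 2^{k+1}$ gives $C_\rho \cdot (M/\rho) \le 2^{2k+3}\, n^{\theta}$. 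For the remaining logarithmic factor, the $(2^{-(r_0+\delta)k}/k)$-uniform bound of Proposition~\ref{prop:marginals} implies $\mu_{\min} \ge (1/e)^{\lvert \mathcal{V}_{\mathrm m} \rvert} \ge e^{-n}$, so $\log\log(1/\mu_{\min}) \le \log n$. Applying Lemma~\ref{lem:block-glauber} with accuracy $\varepsilon/2$ then yields
\begin{equation*}
  T_{\mathrm{mix}}(\rho, \varepsilon/2) \le \Bigl\lceil 2^{2k+3}\, n^{\theta}\, \bigl(\log n + \log(2/\varepsilon^2)\bigr) \Bigr\rceil \le \Bigl\lceil 2^{2k+3}\, n^{\theta}\, \log \tfrac{2n}{\varepsilon^2} \Bigr\rceil = T.
\end{equation*}

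The main (and essentially only) obstacle is tracking how the $\Theta(\log n)$ slack in the spectral-independence bound feeds into the exponent of $C_\rho$: one has to take $k$ large enough in terms of $\theta$ so that the factor $(k+2)\cdot 4 e^2 \cdot 2^{-r_0 k}$ dominating this exponent drops below $\theta$, which is precisely what forces the dependence $k_0(\theta) = \Theta(\log(1/\theta))$ announced in the statement. Every other step is a routine substitution into Lemma~\ref{lem:block-glauber}.
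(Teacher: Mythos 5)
Your proposal is correct and is essentially the same argument as the paper: invoke Lemma~\ref{lem:si} and Remark~\ref{rem:marginally-bounded} to feed $\eta = 2^{-r_0 k}\log n$ and $b = 1/e$ into Lemma~\ref{lem:block-glauber}, verify the size condition via Corollary~\ref{cor:size-r-distributed}, and track the exponent of $n$ in $C_\rho$ to force $k_0(\theta) = \Theta(\log(1/\theta))$. (The one tiny slip is that $2^{c\log n} = n^{c\log 2}$ rather than $n^{c}$, but dropping the $\log 2$ only overestimates the exponent, so the conclusion and the $\Theta(\log(1/\theta))$ dependence are unaffected; the paper keeps the $\log 2$ and otherwise bounds $\log\log(1/\mu_{\min})$ by $\log M$ rather than $\log n$, which is an equally cosmetic difference.)
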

\begin{proof}  \label{lem:mixing-time:proof}
  In view of Lemma~\ref{lem:si}, as $\alpha \le 2^{0.039 k} \le 2^{r_0 k / 3}/k^3$ for large enough $k$, w.h.p. over the choice of $\Phi$, the distribution $\restr{\mu_\Omega}{\mathcal{V}_{\mathrm m}}$ is $\eta$-spectrally independent for $\eta = 2^{-(r_0-\delta) k} \log n$. Moreover, this distribution is $b$-marginally bounded for $b = 1/e$ when $k \ge 8$. We are going to apply Lemma~\ref{lem:block-glauber} with $V = \mathcal{V}_{\mathrm m}$, $\mu = \restr{\mu_\Omega}{\mathcal{V}_{\mathrm m}}$, $M = \lvert \mathcal{V}_{\mathrm m} \rvert$ and $\kappa = 2^{-k-1}$. First, we check that the hypothesis $M \ge \frac{2}{\kappa} (4\eta / b^2 + 1)$ of Lemma~\ref{lem:block-glauber} holds. By Corollary~\ref{cor:size-r-distributed} with $r = r_0 - \delta$ and $V = \mathcal{V}_{\mathrm m}$, we have $M \ge (r_0 - \delta) (k \alpha/\Delta_r) n = \Omega(n)$, so  $M \ge \frac{2}{\kappa} (4\eta / b^2 + 1)$ holds for large enough $n$ as $\frac{2}{\kappa} (4\eta / b^2 + 1) = O(\log n)$. Hence, we can apply Lemma~\ref{lem:block-glauber} to obtain
    \begin{equation*}
    \begin{aligned}
      T_{\mix}(\rho, \varepsilon/2) & \le \left\lceil C_\rho \frac{M}{\rho} \left( \log \log \frac{1}{\mu_{\min}} + \log \frac{2}{\varepsilon^2} \right) \right\rceil,
    \end{aligned}
  \end{equation*}
  where $\rho = \lceil \kappa M \rceil$ and $C_\rho = \left(2/\kappa \right)^{4 \eta / b^2 + 1}$. We have
  \begin{equation*}
      C_\rho = \exp \left( (\log 2) (k+2) \Big(\frac{4 \eta}{b^2} + 1\Big)  \right)\le 2^{k+2} \exp \left( \frac{(\log 2)(\log n) (k+2) 4 e^2 }{2^{(r_0-\delta) k}}  \right),
  \end{equation*}
  so there exists a function $k_0(\theta) = \Theta(\log(1/\theta))$ such that when $k \ge k_0(\theta)$, we have $C_\rho \le 2^{k+ 2} n^\theta$. In light of Remark~\ref{rem:marginally-bounded}, we have $\mu_{\min} \ge b^M$, so $\log \log (1/ \mu_{\min}) \le \log (M \log(1/b) )= \log M$ as $b =1/e$. Thus, we conclude that
  \begin{equation*}
      T_{\mix}(\rho,\varepsilon/2) \le \left\lceil 2^{2k+ 3} n^\theta  \left( \log M + \log \frac{2}{ \varepsilon^2} \right)  \right\rceil \le \left\lceil 2^{2k+3} n^\theta \log \frac{2n}{\varepsilon^2}\right\rceil. \qedhere
  \end{equation*}
\end{proof}

\section{Proof of Theorem~\ref{thm:sampling}} \label{sec:alg}

In this section we complete the proof of Theorem~\ref{thm:sampling}. The proofs in this section do not present any challenging steps. In fact, they amount to combining the main technical results that have already been proved in this work. We start by showing that the calls to the method  $\operatorname{Sample}$ in Algorithm~\ref{alg:main}  are unlikely to ever return error, that is, the connected components of $G_{\Phi^{\Lambda}}$ have size at most $2k^4(1+\xi)\log(n)$ almost every time the method is called. As pointed out in our proof outline, this is a straightforward consequence of Lemma~\ref{lem:cc-general} and the fact that the probability distribution of the output of the Glauber dynamics is $(1/k)$-uniform (Corollary~\ref{cor:marginals:glauber}).

\newcommand{\statelemcc}{ 
 Let $\theta \in (0,1)$. There is an integer $k_0 \ge 3$ such that, for any integers $k \ge k_0$, $\xi \ge 1$ and any density $\density \le 2^{(r_0 - 3\delta)k}/k^3$, the following holds w.h.p. over the choice of $\Phi = \Phi(k, n, \lfloor \density n\rfloor)$. In the execution of Algorithm~\ref{alg:main} with input $\Phi$, with probability at least $1 - n^{-3\xi}$ over the random choices made by Algorithm~\ref{alg:main}, every time that the algorithm calls the method $\operatorname{Sample}(\Phi^\Lambda, S)$, the connected components of $G_{\Phi^{\Lambda}}$ have size at most $2k^4(1+\xi)\log(n)$.}

\begin{lemma}\label{lem:cc}
    \statelemcc
\end{lemma}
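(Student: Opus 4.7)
The plan is to reduce each of the $T+1$ calls to $\operatorname{Sample}$ made by Algorithm~\ref{alg:main} to an application of Lemma~\ref{lem:cc-general}, using Corollary~\ref{cor:marginals:glauber} to verify the $(1/k)$-uniformity hypothesis for the relevant intermediate distributions, and then to finish by a union bound.

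First I would set $L := \lceil 2k^4(1+\xi)\log n\rceil$ and $V := \mathcal{V}_{\mathrm m}$, noting that for $\xi \ge 1$ we have $2k^4\log n \le L \le b\log n$ with $b := 2k^4(1+\xi)$, so $L$ is admissible in Lemma~\ref{lem:cc-general}; moreover the block size $\rho = \lceil 2^{-k-1}|\mathcal{V}_{\mathrm m}|\rceil$ satisfies $\rho \le |\mathcal{V}_{\mathrm m}|/2^k$ as required. Throughout I would assume the conclusions of Lemmas~\ref{lem:marking} and~\ref{lem:cc-general}, which hold w.h.p. over~$\Phi$; in particular $\Phi_{\mathrm{bad}}$ is satisfiable and the marking $(\mathcal{V}_{\mathrm m},\mathcal{V}_{\mathrm a},\mathcal{V}_{\mathrm c})$ computed by the algorithm is an $r_0$-marking, so $\mathcal{V}_{\mathrm m}$ is a good $r_0$-distributed set.

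Next, I would consider the step-$t$ call $\operatorname{Sample}(\Phi^{\Lambda_t}, S)$ in the main loop. Conditioning on any realization of the subsets $V_1,\ldots,V_{t-1}$ updated in preceding iterations, Corollary~\ref{cor:marginals:glauber} gives that the conditional law of $X_{t-1}$ is $(2^{-(r_0+\delta)k}/k)$-uniform, hence $(1/k)$-uniform; I would take this as $\mu$ in Lemma~\ref{lem:cc-general}. Since the algorithm draws $S$ uniformly at random among size-$\rho$ subsets of $\mathcal{V}_{\mathrm m}$ and then $\Lambda_t$ is the restriction $X_{t-1}|_{\mathcal{V}_{\mathrm m}\setminus S}$, the hypotheses of the lemma are exactly matched; combining the two-level bound in the statement yields that the event $\mathcal{F}$ --- a connected (in $G_\Phi$) set of clauses of size $\ge L$ all unsatisfied by $\Lambda_t$ --- has probability at most $2\cdot 2^{-\delta k L}$. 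The final call $\operatorname{Sample}(\Phi^{X_T}, \mathcal{V}_{\mathrm a}\cup\mathcal{V}_{\mathrm c})$ at step~10 corresponds to the $\rho=0$ instance of Lemma~\ref{lem:cc-general} with $\mu$ the law of $X_T$ (again $(1/k)$-uniform by the corollary), giving the same bound. The conceptual bridge to the conclusion of Lemma~\ref{lem:cc} is the observation that $G_{\Phi^\Lambda}$ is a subgraph of $G_\Phi$ restricted to $\mathcal{C}^\Lambda$: any edge of $G_{\Phi^\Lambda}$ arises from a shared \emph{unassigned} variable, which is in particular a shared variable in $G_\Phi$. Hence every connected component of $G_{\Phi^\Lambda}$ is a connected-in-$G_\Phi$ subset of $\mathcal{C}^\Lambda$, i.e., a set of clauses unsatisfied by $\Lambda$; if it had size $\ge L$, the event $\mathcal{F}$ would hold.

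Finally I would union-bound over the $T+1$ calls, unconditioning on $V_1,\ldots,V_{t-1}$ at each step. Using $T = \mtdef = O(n^{\theta}\operatorname{polylog}(n))$ and $2^{-\delta k L} \le n^{-2\delta k^5(1+\xi)}$, the total failure probability is at most $(T+1)\cdot 2 n^{-2\delta k^5(1+\xi)}$, which drops below $n^{-3\xi}$ once $k$ is large enough in terms of $\theta$ and $\xi$ (the exponent $2\delta k^5(1+\xi)$ in~$n$ dwarfs the $\theta + 3\xi$ we need to beat). The only step with any real content is verifying that the distributional condition of Lemma~\ref{lem:cc-general} holds at every step along the Markov chain trajectory --- i.e., that the pre-update configurations $X_{t-1}$ and the final $X_T$ inherit $(1/k)$-uniformity --- which is precisely the content of Corollary~\ref{cor:marginals:glauber}; the rest is bookkeeping of parameters.
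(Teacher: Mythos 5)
Your proposal is correct and follows essentially the same route as the paper: take $L \approx 2k^4(1+\xi)\log n$, feed Lemma~\ref{lem:cc-general} the law $\pi_{t-1}$ of $X_{t-1}$ (checked to be $(1/k)$-uniform via Corollary~\ref{cor:marginals:glauber}), treat the final call as the $\rho=0$ instance, identify connected components of $G_{\Phi^\Lambda}$ with connected-in-$G_\Phi$ sets of clauses unsatisfied by $\Lambda$, and finish with a union bound over the $T+1$ calls. The one technical slip is your choice $b := 2k^4(1+\xi)$: since $L = \lceil 2k^4(1+\xi)\log n\rceil$ may strictly exceed $b\log n$, the condition $L \le b\log n$ required by Lemma~\ref{lem:cc-general} can fail; the paper sidesteps this by taking $b = 4k^4(1+\xi)$, and any $b$ comfortably above $2k^4(1+\xi)$ works. (You also drop a harmless $\log 2$ factor in the exponent of $2^{-\delta k L}$, and you attribute the final margin to large $k$ whereas the paper splits it between large $k$ for the per-call bound and large $n$ to absorb $T = O(n^\theta \log n)$; both routes are fine.) Your explicit remark that every connected component of $G_{\Phi^\Lambda}$ is a connected-in-$G_\Phi$ subset of $\mathcal{C}^\Lambda$ is a useful point that the paper leaves implicit.
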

\begin{proof} \label{lem:cc:proof}
Let $\varepsilon = n^{-\xi}$ and let $T = \mtdef$ be the mixing time established in Lemma~\ref{lem:mixing-time}. Note that  $\density \le 2^{(r_0 - 3\delta)k}/k^3 \le 2^{(r_0 - \delta)k}/k^3$, so we an indeed compute the marking $(\mathcal{V}_{\mathrm m}, \mathcal{V}_{\mathrm a}, \mathcal{V}_{\mathrm c})$ in Algorithm~\ref{alg:main} with the help of Lemma~\ref{lem:marking}. We need $\density \le 2^{(r_0 - 3\delta)k}/k^3$ so that we can apply Lemma~\ref{lem:cc-general} with $r= r_0 - \delta$.
Algorithm~\ref{alg:main}  calls the method $\operatorname{Sample}$ exactly $T+1$ times in total: $T$ times in line~\ref{line:sample1} (when simulating the $\rho$-uniform-block Glauber dynamics) and one  time in line~\ref{line:sample2} to extend the assignment $X_T$ of marked variables to all variables. 
 
 Let $t \in \{0,1,\ldots, T\}$ and let $\pi_t$ be the probability distribution of $X_t$, where $X_t$ is the state of the $\rho$-block-uniform Glauber dynamics on the marked variables described in Algorithm~\ref{alg:main} after $t$ steps. Recall that $\rho = \lceil 2^{-k-1} \lvert \mathcal{V}_{\mathrm m} \rvert \rceil$ and that $X_0$ is chosen uniformly at random. First, we focus on the case $t < T$. We are going to apply Lemma~\ref{lem:cc-general} with $r=r_0 - \delta$, $a = 2k ^4$, $b = 2 a (1+\xi)$, $V = \mathcal{V}_{\mathrm m}$, $\mu = \pi_t$ and this choice of $\rho$. The set $\mathcal{V}_{\mathrm m}$ is $r_0$-distributed by the definition of $(r_0 - \delta, r_0, r_0, 2r_0)$-marking (Definition~\ref{def:distributed-marking}). Moreover, $\pi_t$ is $(1/k)$-uniform by Corollary~\ref{cor:marginals:glauber}, and we have $\rho \le \lvert \mathcal{V}_{\mathrm m} \rvert/2^k$. Hence, we can indeed apply Lemma~\ref{lem:cc-general}.  Consider the following  experiment described in Lemma~\ref{lem:cc-general} for $L = \lceil a (1+\xi) \log n \rceil$, which satisfies $a \log n \le L \le b \log n$. First, draw $S \subseteq \mathcal{V}_{\mathrm m}$ from the uniform distribution $\tau$ over subsets of $\mathcal{V}_{\mathrm m}$ with size $\rho$. Then, sample an assignment $\Lambda_{t+1}$ from  $\restr{\pi_t}{\mathcal{V}_{\mathrm m} \setminus S}$, the marginal of $\pi_t$ on $\mathcal{V}_{\mathrm m} \setminus S$. Denote by $\mathcal{F}$ the event that that there is a connected set of clauses $Y$ of $\Phi$ with $\lvert Y \rvert \ge  L$  such that all clauses in $Y$ are unsatisfied by $\Lambda_{t+1}$. Then we have 
\begin{equation} \label{eq:prob-error}
    \pr_{S \sim \tau} \left( \pr_{\Lambda_{t+1} \sim \restr{\pi_t}{\mathcal{V}_{\mathrm m} \setminus S}} \left( \mathcal{F} \right) \le 2^{- \delta k L}  \right) \ge 1 - 2^{-\delta k L}.
\end{equation}
Alternatively, this experiment is the same as first sampling an assignment $X_t$ of all variables in $\mathcal{V}_{\mathrm m}$ from $\pi_t$, and then restricting the assignment to a random set $S \sim \tau$, obtaining $\Lambda_{t+1}$. Note that this exact experiment occurs before calling the method $\operatorname{Sample}$ in the $t$-th step of the $\rho$-uniform-block Glauber dynamics in Algorithm~\ref{alg:main}. Thus, in light of \eqref{eq:prob-error}, the probability that in  step~$t+1$ of the execution of Algorithm~\ref{alg:main} the graph $G_{\Phi^{\Lambda_{t+1}}}$ has a connected component with size at least $L$ is at most $2^{-\delta k L}+2^{-\delta k L}$, where the first $2^{-\delta k L}$ comes from the probability of choosing a wrong set $S \sim \tau$ in \eqref{eq:prob-error} and the second $2^{-\delta k L}$ comes from the bound on the probability of the event $\mathcal{F}$ once we have chosen $S$. We have shown that with probability at least $1- {2\cdot} 2^{- \delta k L}$, 
all the connected components of the graph $G_{\Phi^{\Lambda_{t}}}$ appearing in 
step~$t+1$ of
the execution of Algorithm~\ref{alg:main} have size less than $L$. We have $ {2 \cdot} 2^{- \delta k L} \le {2 \cdot} n^{- \delta k a (1+\xi) \log 2} \le n^{-5 \xi}$ for large enough $k$, so the probability that $\operatorname{Sample}$ returns error at step $t+1$ is at most $n^{-5\xi}$. The case $t = T$ is analogous, the only difference here is that we call $\operatorname{Sample}$ on $\Phi^{X_T}$, where $X_T \sim \pi_T$ is an assignment of all marked variables, so we apply Lemma~\ref{lem:cc-general} with $\rho = 0$ instead of $\rho = \lceil 2^{-k-1} \lvert \mathcal{V}_{\mathrm m} \rvert \rceil$.

Finally, we carry out a union bound over $t \in \{0, 1, \ldots, T\}$, so the probability that any of the calls to $\operatorname{Sample}$ returns error is at most $(T+1) n^{-5\xi} \le n^{-3\xi}$ for large enough $n$ as $T = O(n^\theta \log n) = O(n \log n)$.
\end{proof}

Once we have established Lemmas~\ref{lem:mixing-time},~\ref{lem:sample}, and~\ref{lem:cc},  the proof of Theorem~\ref{thm:sampling} follows as below.

\begin{thmsampling} 
  \statethmsampling
\end{thmsampling}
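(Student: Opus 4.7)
The plan is to combine the main technical lemmas—Lemma~\ref{lem:marking} (existence and efficient computation of a marking), Lemma~\ref{lem:mixing-time} (fast mixing of the $\rho$-uniform-block Glauber dynamics on marked variables), Lemma~\ref{lem:sample} (efficient sampling from components of logarithmic size), and Lemma~\ref{lem:cc} (small connected components appear w.h.p.\ during the execution)—to analyze Algorithm~\ref{alg:main}. First I would fix $\theta \in (0,1)$, set $\theta' := \theta/2$, and choose $k_0$ large enough so that $k \ge k_0$ implies the hypotheses of Lemmas~\ref{lem:marking},~\ref{lem:mixing-time} (with parameter $\theta'$),~\ref{lem:sample}, and~\ref{lem:cc} all apply for any $\alpha \le 2^{0.039 k}$. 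Since $k_0(\theta') = \Theta(\log(1/\theta')) = \Theta(\log(1/\theta))$ by Lemma~\ref{lem:mixing-time}, the claimed bound $k_0 = O(\log(1/\theta))$ in the theorem statement is preserved. I would then condition on the high-probability event $\mathcal{E}$ over $\Phi=\Phi(k,n,\lfloor \alpha n\rfloor)$ on which all four lemmas hold simultaneously.

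Next I would bound the total variation distance of the algorithm's output from $\mu_\Omega$. Let $\nu$ be the distribution of the algorithm's output on $\mathcal{E}$ (treating ``error'' as an auxiliary atom), let $E_1$ be the event that Step~2 fails to return a valid marking, and let $E_2$ be the event that some call to $\operatorname{Sample}$ returns error. By Lemma~\ref{lem:marking}, $\pr(E_1) \le \varepsilon/4$, and by Lemma~\ref{lem:cc} (whose hypothesis $\alpha \le 2^{rk}$ follows from $\alpha\le 2^{0.039k}$ for large $k$), $\pr(E_2 \mid \overline{E_1}) \le n^{-3\xi}$. On $\overline{E_1\cup E_2}$, each call $\operatorname{Sample}(\Phi^{\Lambda_t},S)$ returns a sample from $\restr{\mu_{\Omega^{\Lambda_t}}}{S}$ by Lemma~\ref{lem:sample}, so the main loop of Algorithm~\ref{alg:main} exactly simulates $T = \mtdef$ steps of the $\rho$-uniform-block Glauber dynamics on $\mathcal{V}_{\mathrm m}$ starting from the uniform distribution, and the final Sample call extends $X_T$ to a satisfying assignment by drawing from $\restr{\mu_{\Omega^{X_T}}}{\mathcal{V}_{\mathrm a}\cup\mathcal{V}_{\mathrm c}}$. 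By Lemma~\ref{lem:mixing-time}, the conditional distribution of $X_T$ is within $\varepsilon/2$ of $\restr{\mu_\Omega}{\mathcal{V}_{\mathrm m}}$ in TV, and since applying the Markov kernel $X_T \mapsto X_T \cup Y$ with $Y \sim \restr{\mu_{\Omega^{X_T}}}{\mathcal{V}_{\mathrm a}\cup\mathcal{V}_{\mathrm c}}$ can only contract TV distance (data-processing inequality), the resulting distribution on $\Omega$ is within $\varepsilon/2$ of $\mu_\Omega$. Combining,
\[
 d_{TV}(\nu,\mu_\Omega) \le \pr(E_1) + \pr(E_2) + \varepsilon/2 \le \varepsilon/4 + n^{-3\xi} + \varepsilon/2 \le \varepsilon = n^{-\xi}
\]
for all sufficiently large $n$, as required.

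Finally, for the running time: Step~1 runs in $O(n+mk)=O(n)$ by Proposition~\ref{prop:bad}; Step~2 runs in $O(n\log(1/\varepsilon)) = O(n\log n)$ by Lemma~\ref{lem:marking} with $p=\varepsilon/4=n^{-\xi}/4$ (treating $\xi$ as constant). The main loop executes $T = O(2^{2k}n^{\theta'}\log(n/\varepsilon))$ iterations; each iteration invokes $\operatorname{Sample}$ on a set of size $\rho = \lceil 2^{-k-1}\lvert \mathcal{V}_{\mathrm m}\rvert\rceil = O(n/2^k)$, which by Lemma~\ref{lem:sample} costs $O(\rho \log n)$. The final extension call costs $O(n\log n)$. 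Since $k$ is a constant depending only on $\theta$ and $\log(1/\varepsilon)=\xi\log n$, the aggregate running time is
\[
 O(n\log n) + O\bigl(T\cdot \rho\log n\bigr) + O(n\log n) = O\bigl( n^{1+\theta'}\log^2 n\bigr),
\]
and since $\log^2 n = o(n^{\theta/2}) = o(n^{\theta-\theta'})$, this is $O(n^{1+\theta})$ for large enough $n$. The heavy lifting—the spectral-independence bound, the mixing analysis, and the component bound—has been done in the earlier sections; the only delicate point in this final proof is verifying that the extension step from $X_T$ to a full assignment does not increase the TV distance to $\mu_\Omega$, which is handled by the data-processing inequality.
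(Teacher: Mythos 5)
Your proposal is correct and follows essentially the same structure as the paper's proof: condition on the high-probability event where Lemmas~\ref{lem:marking},~\ref{lem:mixing-time},~\ref{lem:sample}, and~\ref{lem:cc} hold; split the TV error into a marking-failure term ($\le\varepsilon/4$), a component-too-large term ($\le n^{-3\xi}$), and a mixing-time term ($\le\varepsilon/2$), with the last passed through the extension step via the data-processing inequality (the paper states this as "$\mu_{\mathrm{Glauber}}$ comes from sampling $X_T$ and then completing it by $\mu_\Omega(\cdot\mid X_T)$"); and then add up the running times of the subroutines, absorbing the $\log^2 n$ factor by setting $\theta'=\theta/2$. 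The only cosmetic differences are that you bound the per-iteration Sample cost by $O(\rho\log n)$ rather than the paper's coarser $O(n\log n)$, and you name the contraction step explicitly as the data-processing inequality; neither changes the argument.
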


\begin{proof} \label{thm:sampling:proof}
Let $k_0(\theta) = \Theta(\log (1/\theta))$ be large enough so that, for all integers $k \ge k_0(\theta)$, $\xi \ge 1$ and all densities $\alpha \le 2^{0.039 \cdot k}$, the conclusions of 
Lemmas~\ref{lem:marking},~\ref{lem:mixing-time},~\ref{lem:sample}, and~\ref{lem:cc}  hold w.h.p. over the choice of the random $k$-CNF formula $\Phi = \Phi(k, n, \lfloor \alpha n \rfloor)$. These lemmas are enough to analyse Algorithm~\ref{alg:main} and tackle this proof. We analyse the distribution $\mu_{alg}$ of the output of Algorithm~\ref{alg:main}. This distribution outputs either a satisfying assignment of the input formula $\Phi$ or $\textit{error}$. 
Let $\epsilon = n^{-\xi}$. Let $\mathcal{E}$ be the event that running Algorithm~\ref{alg:main} outputs $\textit{error}$. This happens with probability at most $\varepsilon/4$ when computing the marking $(\mathcal{V}_{\mathrm m}, \mathcal{V}_{\mathrm a}, \mathcal{V}_{\mathrm c})$ in line 2 of the algorithm,
and in lines 7 and 10 if the method $\operatorname{Sample}(\hat{\Phi}, S)$ returns error, which occurs when $G_{\hat{\Phi}}$ has a connected component with size more than $2k^4(1+\xi)\log(n)$. In view of Lemma~\ref{lem:cc}, the probability that Algorithm~\ref{alg:main} outputs \emph{error} due to the failure of the method $\operatorname{Sample}$ is at most $n^{-3\xi}=\varepsilon^3$. We conclude that the probability that the algorithm outputs error is bounded above by $\varepsilon / 4 + \varepsilon^3 \le \varepsilon / 2$ .
Let $\mu_{Glauber}$ be the distribution that Algorithm~\ref{alg:main} would output if there were no errors
(that is, the distribution assuming that the method Sample always
outputs from the appropriate distribution). Then $d_{\TV}(\mu_{alg}, \mu_{Glauber})$
is the probability that an error occurs, which is  
at most~$\varepsilon / 2$. Let $\pi_{Glauber}$ be the distribution output by the $\rho$-uniform-block Glauber dynamics on $\mathcal{V}_{\mathrm m}$ after $T$ steps. By Lemma~\ref{lem:mixing-time} on the mixing time of the Glauber dynamics, we have $d_{\TV}(\pi_{Glauber}, \restr{\mu_\Omega}{\mathcal{V}_{\mathrm m}}) \le \varepsilon / 2$. 
As $\mu_{Glauber}$ comes from sampling an assignment $X_T$ from $\pi_{Glauber}$ and then completing $X_T$ to all $\mathcal{V}$ by sampling from $\mu_{\Omega}(\cdot \vert X_T)$, we have $d_{\TV}(\mu_{Glauber}, \mu_{\Omega}) \le d_{\TV}(\pi_{Glauber}, \restr{\mu_\Omega}{\mathcal{V}_{\mathrm m}}) \le \varepsilon/2$. We find that $d_{\TV}(\mu_{alg}, \mu_\Omega) \le d_{\TV}(\mu_{alg}, \mu_{Glauber}) + d_{\TV}(\mu_{Glauber}, \mu_\Omega) \le  \varepsilon / 2 + \varepsilon / 2 = \varepsilon$
  as we wanted. The running time of Algorithm~\ref{alg:main} is now easily obtained by adding up the running times of the following subroutines. The good clauses and good variables are computed in time $O(n + k m) = O(n)$, see Proposition~\ref{prop:bad}. The marking $(\mathcal{V}_{\mathrm m}, \mathcal{V}_{\mathrm a}, \mathcal{V}_{\mathrm c})$ is computed with probability at least $1-\varepsilon/4$ in time  $O(n \Delta_r k^2 \log(4/\varepsilon)) = O(n \log n)$, see  Lemma~\ref{lem:marking}. Recall that there are $T+1 = O(n^\theta \log (n/\varepsilon^2) ) =  O(n^{\theta} \log n )$ calls to the method $\operatorname{Sample}(\Phi', S)$, and each call takes time $O(\vert S \rvert \log n) = O(n \log n)$ by Lemma~\ref{lem:sample}. We conclude that  the running time of Algorithm~\ref{alg:main} is $O(n^{1+\theta} \log(n)^2 )$. The result now follows by choosing $k_1 = k_0(\theta/2)$, so the running time for $k \ge k_1$ is  $O(n^{1+\theta/2} \log(n)^2 ) = O(n^{1+ \theta})$.
\end{proof}

We have now proved that it is possible to (approximately) sample uniformly at random from the satisfying assignments of $\Phi = \Phi(k, n, \lfloor \alpha n \rfloor)$. At this point, standard techniques can be applied to obtain a randomised approximation scheme for counting the satisfying assignments of $\Phi$. These techniques are based on the self-reducibility of $k$-SAT~\cite{jerrum1986}. The following remark shows how to obtain a randomised approximation scheme that runs in time $O(n^\theta (n/\varepsilon)^2)$ following~\cite[Chapter 7]{feng2020}, where the authors base their counting algorithm on the simulated annealing method~\cite{stefankovic2009,huber2015, kolmogorov2018}. \begin{remark}[Approximate counting for random $k$-SAT formulae] \label{rem:counting}
Let $k_0(\theta)$ be the integer depending on $\theta \in(0,1)$ obtained in Theorem~\ref{thm:sampling}. Let $k_1 = k_0(\theta/2)$, let $k \ge k_1$ be an integer, let $\xi$ be a positive integer and let $\alpha \le 2^{0.039 k}$ be a density. We apply Theorem~\ref{thm:sampling} to obtain an algorithm to sample from the satisfying assignments of $\Phi = \Phi(k, n, \lfloor \alpha n \rfloor )$ within $n^{-4\xi}$ total variation distance from the uniform distribution. This algorithm runs in time $O(n^{1+\theta/2})$ and succeeds w.h.p. over the choice of $\Phi$.

Let $\varepsilon \in (0,1)$ with $\varepsilon \ge n^{-\xi}$. A modified version of the approximate counting algorithm of~\cite[Section 7]{feng2020}, using $O(\varepsilon^{-2} n \log (n / \varepsilon))$ samples from the sampling algorithm above, approximates the number of satisfying assignments of the $k$-CNF formula $\Phi$ with multiplicative error $\varepsilon$, thus, achieving running time $O(n^{\theta/2} (n/\varepsilon)^2 \log(n/\varepsilon)) = O(n^\theta (n/\varepsilon)^2)$. Here we describe these minor modifications.

Let $\Omega_{\bad}$ be the set of assignments  $X \colon \mathcal{V} \to \{\mathsf{F},\mathsf{T}\}$ that satisfy the bad clauses of $\Phi$.  For $X \in \Omega_{\bad}$, we define $F(X)$ to be the set of good clauses that are not satisfied by $X$. For $\kappa \in \mathbb{R}$, we define $w_\kappa(X) = \exp(- \kappa \lvert F(X) \rvert )$ and we define the partition function $Z(\kappa) = \sum_{X \in \Omega_{\bad}} w_\kappa(X)$. The simulated annealing algorithm of~\cite[Section 7]{feng2020} uses $Z(\kappa)$ 
(with $\Omega^*$ from Definition~\ref{def:subformula} in place of $\Omega_{\bad}$) 
to approximate the number of satisfying assignments of~$\Phi$. We note that $Z(0) = \lvert \Omega_{\bad} \rvert$, which can be computed in linear time in $n$ using the exact counting algorithm given in Proposition~\ref{prop:counting}. 
Here one has to use the fact that the connected components of $G_{\Phi'}$ for the formula $\Phi' = (\mathcal{V}, \bc)$ have size at most $2k^4 \log n$, see Lemma~\ref{lem:bad-component} from Appendix~\ref{sec:ap:bad} and Lemma~\ref{lem:bound-Y}, and the fact that these connected component have tree-excess upper bounded as a function of $k$ (Lemma~\ref{lem:tree-excess}). Once one has performed these modifications, the algorithm given in ~\cite[Section 7]{feng2020} applies without any difficulties.
\end{remark}

\section{Proof of Theorems~\ref{thm:connectivity} and~\ref{thm:looseness}} \label{sec:connectivity}

In this section we exploit Lemma~\ref{lem:cc-general} to prove Theorems~\ref{thm:connectivity} and~\ref{thm:looseness} on the connectivity and looseness of the solution space of random $k$-CNF formulae. We recall that the density threshold in Theorems~\ref{thm:connectivity} and~\ref{thm:looseness} is $\alpha \le 2^{0.227 k}$, significantly larger than our algorithmic threshold in Theorem~\ref{thm:sampling}, which is $\alpha \le 2^{0.039 k}$. In order to conclude connectivity for densities up to $2^{0.227 k}$, we let $r_1 = 0.227092$ and consider the threshold $\Delta_{r} = \lceil 2^{r k}\rceil$ for $r = r_1-\delta$ in the definition of high-degree variables instead of $\Delta_{r_0-\delta} = \lceil 2^{(r_0-\delta) k}\rceil$. In all this section we set $r = r_1 - \delta$, so we omit $r$ in the notation and we write $\gv$ instead of $\gv(r)$. We work with an $(r, r_1, 0, r_1)$-marking $(\mathcal{V}_{\mathrm m}, \emptyset, \mathcal{V}_{\mathrm c})$ (the set of auxiliary variables is empty), which we can find w.h.p. over the choice of $\Phi = \Phi(k, n, \lfloor \alpha n \rfloor)$ as in Lemma~\ref{lem:marking:r1}. Let us briefly recall some of the properties of this marking. First of all, by definition, the set $\mathcal{V}_{\mathrm m}$ is $r_1$-distributed and is a subset of $\mathcal{V}_{good}$. Moreover, the distribution $\restr{\mu_\Omega}{\mathcal{V}_{\mathrm m}}$ is $(1/k)$-uniform by Lemma~\ref{lem:marginals}. In light of Lemma~\ref{lem:cc-general} for $r = r_1 - \delta$, these properties allow us to show that, when sampling $\Lambda \sim \restr{\mu_\Omega}{\mathcal{V}_{\mathrm m}}$, the connected components of $\Phi^\Lambda$ are logarithmic in size with probability $1-o(1)$ over the choice $\Lambda \sim \restr{\mu_\Omega}{\mathcal{V}_{\mathrm m}}$. In fact, this is also the case when $\Lambda \sim \restr{\mu_\Omega}{\mathcal{V}_{\mathrm m} \setminus \{v\}}$ for any variable $v$.

\begin{corollary} \label{cor:cc-r1}
    There is an integer $k_0 \ge 3$ such that, for any integer $k \ge k_0$, any density $\density \le \alpha_1 := 2^{(r_1-3\delta) k}$, the following holds w.h.p. over the choice of $\Phi = \Phi(k, n, \lfloor \density n\rfloor)$. 

    Let $V$ be a set of good variables of $\Phi$ that is $r_1$-distributed, let $\mu$ be a $(1/k)$-uniform distribution over the assignments $V \to \{\mathsf{F}, \mathsf{T}\}$ and  let $v \in V$. Then, with probability at least $1 - n^{-k}$ over the choice $\Lambda \sim \restr{\mu}{V \setminus \{v\}}$, the connected components of $\Phi^\Lambda$ have size at most $2k^4 \log n$.
\end{corollary}
\begin{proof}
   The result is an application of Lemma~\ref{lem:cc-general} with $r = r_1-\delta$, $b= 4k^4$, $\rho = 1$ and $L = \lceil 2k^4 \log n \rceil$. We need large enough $k_0$ such that $2^{- \delta k L} \le 2^{- \delta 2k^ 5 \log n } \le n^{-k}$ for all $k \ge k_0$. For these parameters, in the setting of Lemma~\ref{lem:cc-general}, the distribution $\tau$ is the uniform distribution over the variables in $V$. The experiment in the statement of Lemma~\ref{lem:cc-general} consists in sampling $v \sim \tau$ and then sampling $\Lambda \sim \restr{\mu}{V \setminus \{v\}}$. Let $\mathcal{F}_v$ be the event, concerning the choice $\Lambda \sim \restr{\mu}{V \setminus \{v\}}$, that there is a connected set of clauses $Y$ of $\Phi$ with $\lvert Y \rvert \ge  \lceil 2k^4 \log n \rceil$ such that all clauses in $Y$ are unsatisfied by $\Lambda$. Then by Lemma~\ref{lem:cc-general} we have $\pr_{v \sim \tau} \left( \pr_{\Lambda \sim \restr{\mu}{V \setminus \{v\}}} \left( \mathcal{F}_v \right) \le 2^{- \delta k L}  \right) \ge 1 - 2^{-\delta k L}$. From $2^{-\delta k L} \le n^{-k}$, we obtain the bound $\pr_{v \sim \tau} \left( \pr_{\Lambda \sim \restr{\mu}{V \setminus \{v\}}} \left( \mathcal{F}_v \right) \le 2^{- \delta k L}  \right) \ge 1 - n^{-k}$. Since $\tau$ is the uniform distribution over the variables in $V$, for $v \sim \tau$, either the event that $\pr_{\Lambda \sim \restr{\mu}{V \setminus \{v\}}} \left( \mathcal{F}_v \right) \le 2^{- \delta k L}$ has probability $1$ or it has probability at most $1- 1/\lvert V \rvert \le 1- 1/n$. The latter option is not possible due to $\pr_{v \sim \tau} \left( \pr_{\Lambda \sim \restr{\mu}{V \setminus \{v\}}} \left( \mathcal{F}_v \right) \le 2^{- \delta k L}  \right) \ge 1 - n^{-k}$ and $k \ge 3$. Thus, we conclude that $\pr_{v \sim \tau} \left( \pr_{\Lambda \sim \restr{\mu}{V \setminus \{v\}}} \left( \mathcal{F}_v \right) \le 2^{- \delta k L}  \right) = 1$, so for any $v \in V$ we have $\pr_{\Lambda \sim \restr{\mu}{V \setminus \{v\}}} \left( \mathcal{F}_v \right) \le 2^{- \delta k L} \le n^{-k}$. That is, for any $v \in V$, with probability at least $1 - n^{-k}$ over the choice of $\Lambda \sim \restr{\mu}{V \setminus \{v\}}$ the connected components of $\Phi^\Lambda$ have size at most $L-1 = \lceil 2k^4 \log n \rceil-1 < 2k^4 \log n$ as we wanted to prove. 
\end{proof}

Our connectivity and looseness results will follow from Corollary~\ref{cor:cc-r1}. In Section~\ref{sec:connectivity:proof} we prove Theorem~\ref{thm:connectivity} and in Section~\ref{sec:looseness} we prove Theorem~\ref{thm:looseness}.

\subsection{Proof of Theorem~\ref{thm:connectivity}} \label{sec:connectivity:proof}

We consider~Algorithm~\ref{alg:find-path}  that receives two satisfying assignments of a $k$-CNF formula $\Phi$ as the input and constructs a path between them. Before introducing this algorithm, recall that the graph $H_\Phi$ is the dependency graph of the variables of $\Phi$ introduced in Definition~\ref{def:graph-H}.

\begin{algorithm}[H]
  \begin{algorithmic}[1]

    \caption{Finding a $(\operatorname{poly}(k) \log n)$-path between two satisfying assignments}\label{alg:find-path}

    \REQUIRE a $k$-CNF formula $\Phi = (\mathcal{V},\mathcal{C})$ with $n$ variables, an $(r, r_1, 0, r_1)$-marking $(\mathcal{V}_{\mathrm m}, \emptyset, \mathcal{V}_{\mathrm c})$ of $\Phi$, and two satisfying assignments $\sigma, \sigma'$.

    \STATE Let $v_1, v_2, \ldots, v_\ell$ be the variables in $\mathcal{V}_{\mathrm m}$.
    
    \STATE $\zeta_0 \leftarrow \sigma$.

    \Comment*[h]{Stage 1: Update the marked variables}\;
    
    \FOR{$i \in [\ell]$}

    \STATE Find $\zeta_i \in \Omega$ with marked variables specified by
  $
  \zeta_i (v_j) = 
  \begin{cases}
  \sigma'(v_j), & j \le i; \\
  \sigma(v_j), & j > i;
  \end{cases}
  $ \newline
  such that $\|\zeta_i - \zeta_{i-1}\|_{1}$ is minimised. \label{alg:step2b}
    
    \ENDFOR

    \STATE $\xi_0 = \zeta_\ell$

    \Comment*[h]{Stage 2: Update the rest of variables}
    
    \STATE Let $\tau' = \restr{\sigma'}{\mathcal{V}_{\mathrm m}}$ and suppose that $H_{\Phi^{\tau'}}$ has connected components $\mathcal{E}_1, \mathcal{E}_2, \ldots, \mathcal{E}_t$.

    \FOR{$i \in [t]$}
	\STATE Let $\xi_i \in \Omega$ be defined as $\xi_i(v) = 
	\begin{cases}
	\sigma'(v), & v \in \left(\mathcal{V} \setminus \bigcup_{j = 1}^t \mathcal{E}_j \right) \cup \left(\bigcup_{j = 1}^{i} \mathcal{E}_j \right); \\
	\zeta_\ell(v), & v \in \bigcup_{j = i+1}^{t} \mathcal{E}_j. 
	\end{cases}
	$
	\label{alg:step3}
    \ENDFOR
    
    \RETURN The path $\sigma = \zeta_0 \leftrightarrow \cdots \leftrightarrow \zeta_\ell = \xi_0 \leftrightarrow \cdots \leftrightarrow \xi_r = \sigma'$.
  \end{algorithmic}
\end{algorithm}

To prove Theorem~\ref{t:robustconn}, it suffices to show that the output of Algorithm~\ref{alg:find-path}  is with high probability a $D$-path in the solution space for $D =  2k^5 \log n$ for the inputs $\sigma \sim \mu_\Omega$ and $\sigma' \sim \mu_\Omega$. We will not actually require $\sigma \sim \mu_\Omega$ and $\sigma' \sim \mu_\Omega$ in the proof; instead we will just use the fact that the restrictions of $\sigma$ and $\sigma'$ on $\mathcal{V}_{\mathrm m}$ follow a $(1/k)$-uniform distribution as guaranteed by Lemma~\ref{lem:marginals}, see the proof of Lemma~\ref{lem:IV} for details.

We need the following two lemmas to establish Theorem~\ref{t:robustconn}.
The first lemma (Lemma~\ref{lem:I}) shows that all the truth assignments $\zeta_i$, $\xi_i$ in the algorithm exist and satisfy the formula (i.e. the algorithm is well-defined), implying our constructed path is indeed a valid path comprising only satisfying assignments. The second lemma  (Lemma~\ref{lem:IV}) shows that w.h.p., two adjacent assignments differ by at most $2k^5 \log n$ variables. This result is an application of Corollary~\ref{cor:cc-r1}.

\begin{lemma}\label{lem:I}
For any $k$-CNF formula $\Phi$ with $n$ variables, any $(r, r_1, 0, r_1)$-marking $(\mathcal{V}_{\mathrm m}, \emptyset, \mathcal{V}_{\mathrm c})$ of $\Phi$, and any two satisfying assignments $\sigma, \sigma'$, Algorithm~\ref{alg:find-path} on these inputs is well-defined in the following sense:
\begin{enumerate}
\item \label{item:I-1} It is always possible to implement Line~\ref{alg:step2b} such that $\zeta_i \in \Omega$. 
\item \label{item:I-2} We have $\xi_i \in \Omega$ for each $i \in [t]$.
\end{enumerate}
\end{lemma}
\begin{proof}
To prove item~\ref{item:I-1}, we are going to show that for any partial assignment $X \colon \mathcal{V}_{\mathrm m} \to \{\mathsf{F}, \mathsf{T}\}$, we have $\Pr_{\mu_\Omega}(X) > 0$ and, thus, can extend $X$ to some satisfying assignment. If this claim holds, then we can indeed compute the satisfying assignments $\zeta_1, \zeta_2, \ldots, \zeta_\ell$ in Algorithm~\ref{alg:find-path}.
Recall that the distribution $\restr{\mu_\Omega}{\mathcal{V}_{\mathrm m}}$ is $(1/k)$-uniform, see Lemma~\ref{lem:marginals}. From the definition of $(1/k)$-uniform distribution, we find that an analogous statement to Proposition~\ref{prop:marginals} holds for our $(r, r_1, 0, r_1)$-marking (here $r = r_1 - \delta$): for any $v \in \gv(r)$, any $V \subseteq \mathcal{V}_{\mathrm m}$ with $v \not \in V$, and any $\Lambda \colon V \to \{\mathsf{F}, \mathsf{T}\}$, we have
\begin{equation*}
    \max \left\{ \pr_{\mu_{\Omega^\Lambda}}\left( \left. v \mapsto \mathsf{F} \right| \Lambda \right), \pr_{\mu_{\Omega}}\left( \left. v \mapsto \mathsf{T} \right| \Lambda \right)  \right\} \le \frac{1}{2} \exp\left(\frac{1}{k}\right).
\end{equation*}
Thus, by induction on the size of a set $S \subseteq \mathcal{V}_{\mathrm m}$, we conclude that any assignment $\Lambda \colon S \to  \{\mathsf{F}, \mathsf{T}\}$ has $\Pr_{\mu_\Omega}(\Lambda) > 0$, finishing the proof of item~\ref{item:I-1}.

Next consider item~\ref{item:I-2}. Let  $\tau' = \restr{\sigma'}{\mathcal{V}_{\mathrm m}}$ as in Algorithm~\ref{alg:find-path}. All clauses that do not appear in $G_{\Phi^{\tau'}}$ are satisfied by the partial assignment $\tau'$. Now consider two satisfying assignments $\Lambda, \Lambda'$ such that $\Lambda(\mathcal{V}_{\mathrm m}) = \Lambda'(\mathcal{V}_{\mathrm m}) = \tau'$. Let $G_{\Phi^{\tau'}}$ have connected components $\mathcal{C}_1, \mathcal{C}_2, \ldots, \mathcal{C}_{t'}$. In particular, $\restr{\Lambda}{{\var(\mathcal{C}_i)}}$ and $\restr{\Lambda'}{\var(\mathcal{C}_i)}$ each satisfy all clauses in $\mathcal{C}_i$. Each clause in $\Phi^{\tau'}$ is in exactly one connected component $\mathcal{C}_i$. Consequently, any assignment $X$ such that $\restr{X}{\mathcal{V}_{\mathrm m}} = \tau'$ and $\restr{X}{\var(\mathcal{C}_i)} \in \{\restr{\Lambda}{\var(\mathcal{C}_i)}, \restr{\Lambda'}{\var(\mathcal{C}_i)}\}$ for all $i \in [t']$ is a satisfying assignment (any variables that do not appear in $\mathcal{V}_{\mathrm m} \cup \left(\bigcup_{i = 1}^{t'} \var(\mathcal{C}_i) \right)$ can be chosen arbitrarily). We note that there are two types of connected components of $H_{\Phi^\tau}$. The first type are those corresponding to $\var(\mathcal{C}_i)$ for some $i \in [t']$. The second type are those connected components with variables in $\mathcal{V} \setminus \left( \mathcal{V}_{\mathrm m} \cup \left(\bigcup_{i = 1}^{t'} \var(\mathcal{C}_i) \right) \right)$. These connected components are singleton and consist of one variable $v$ that does not appear in $\Phi^\tau$ or, equivalently, every clause of $\Phi$ containing $v$ is satisfied by $\tau$. As a consequence, taking $\Lambda = \zeta_\ell$, $\Lambda' = \sigma'$ and $X = \xi_i$ in the argument above, we conclude that $\xi_0, \xi_1, \ldots, \xi_t$ are satisfying assignments by construction in Algorithm~\ref{alg:find-path} and item~\ref{item:I-2} holds.
\end{proof}

\begin{lemma}\label{lem:IV}
  There is an integer $k_0 \ge 3$ such that, for any integer $k \ge k_0$, any density $\density \le 2^{(r_1 - 3\delta) k}$, the following holds w.h.p. over the choice of $\Phi = \Phi(k, n, \lfloor \density n\rfloor)$. In Algorithm~\ref{alg:find-path} with inputs the formula $\Phi$, an $(r, r_1, 0, r_1)$-marking of $\Phi$ and the two satisfying assignments $\sigma$ and $\sigma'$, with probability at least $1-1/n$ over the choices $\sigma \sim \mu_\Omega, \sigma' \sim \mu_\Omega$, we have
\begin{enumerate}
\item $\|\zeta_i - \zeta_{i-1}\|_1 \le 2k^5 \log n$ for all $i \in [\ell]$; \label{item:IV:1}
\item $\|\xi_i - \xi_{i-1}\|_1 \le 2k^5 \log n$ for all $i \in [t]$. \label{item:IV:2}
\end{enumerate}
\end{lemma}
\begin{proof}
Let $\Phi$ and $(\mathcal{V}_{\mathrm m}, \emptyset, \mathcal{V}_{\mathrm a})$ be the first two inputs of Algorithm~\ref{alg:find-path}, and let $v_1, v_2, \ldots, v_\ell$ be the variables in $\mathcal{V}_{\mathrm m}$ in the order considered in Algorithm~\ref{alg:find-path}. Let $\sigma \sim \mu_\Omega$ and $\sigma' \sim \mu_\Omega$. Let $\sigma = \zeta_0 \leftrightarrow \cdots \leftrightarrow \zeta_\ell = \xi_0 \leftrightarrow \cdots \leftrightarrow \xi_r = \sigma'$ be the path between $\sigma$ and $\sigma'$ output by Algorithm~\ref{alg:find-path}. In light of Lemma~\ref{lem:I}, the assignments $\zeta_0, \zeta_1, \ldots, \zeta_\ell, \xi_1, \ldots, \xi_r$ are satisfying assignments of $\Phi$. We also note that the set of marked variables $\mathcal{V}_{\mathrm m}$ is $r_1$-distributed and does not contain bad variables by Definition~\ref{def:distributed-marking}. We are going to apply Corollary~\ref{cor:cc-r1} with $V = \mathcal{V}_{\mathrm m}$ several times in this proof.  In view of Lemma~\ref{lem:marginals}, the distribution $\restr{\mu_\Omega}{\mathcal{V}_{\mathrm m}}$ is $(1/k)$-uniform, and this will be relevant when applying Corollary~\ref{cor:cc-r1}.  We prove that Item~\ref{item:IV:1} holds with probability at least $1-1/(2n)$ and that Item~\ref{item:IV:2} holds with probability $1-1/(2n)$, so the result follows from a union bound.

Item~\ref{item:IV:1}. Let $i \in [\ell]$ and let $\tau_i$ be the restriction of $\zeta_i$ to $\mathcal{V}_{\mathrm m}$. By construction, $\tau_i$ agrees with $\sigma'$ on $v_1, v_2, \dots, v_i$ and it agrees with $\sigma$ on $v_{i+1}, v_{i+2}, \ldots, v_\ell$. Let $\Lambda_i$ denote the restriction of $\tau_i$ on $\mathcal{V}_{\mathrm m} \setminus \{v_i\}$, which agrees with $\zeta_i$ and $\zeta_{i-1}$ on $\mathcal{V}_{\mathrm m} \setminus \{v_i\}$. Recall that, by definition, $\zeta_i$ is the satisfying assignment that extends $\tau_i$ that minimises $\|\zeta_i - \zeta_{i-1}\|_1$, see Algorithm~\ref{alg:find-path}. We consider the connected components of $G_{\Phi^{\Lambda_i}}$, which can be seen as CNF formulae with variables in $\mathcal{V}_{\mathrm c} \cup \{v_i\}$ due to the fact that all marked variables other than $v_i$ are set by $\Lambda_i$. Each one of these connected components are satisfied as CNF formulae by the assignments $\zeta_i$ and $\zeta_{i-1}$. We conclude that $\zeta_i$ and $\zeta_{i-1}$ agree on the variables of all these connected components except for those variables in the connected component of the clauses containing $v_i$, where $\zeta_i$ and $\zeta_{i-1}$ may disagree. Let us denote this connected component by $C_{v_i}$, which is empty when all the clauses containing $v_i$ are satisfied by $\Lambda_i$. We have $\|\zeta_i - \zeta_{i-1}\|_{1} \le  k \lvert C_{v_i} \rvert$, where the factor $k$ comes from the fact that each clause has at most $k$ variables. We now bound the size of $C_{v_i}$. Since the restrictions of $\sigma$ and $\sigma'$ to $\mathcal{V}_{\mathrm m}$ follow $\restr{\mu_\Omega}{\mathcal{V}_{\mathrm m}}$, which is $(1/k)$-uniform, we find, by Definition~\ref{def:uniform}, that $\tau_i$ also follows an $(1/k)$-uniform distribution over the assignments $\mathcal{V}_{\mathrm m} \to \{\mathsf{F}, \mathsf{T}\}$. Let us denote this distribution by $\mu_i$. Then $\Lambda_i \sim \restr{\mu_i}{\mathcal{V}_{\mathrm m} \setminus \{v_i\}}$ and, by Corollary~\ref{cor:cc-r1} with $V = \mathcal{V}_{\mathrm m}$, $\Lambda = \Lambda_i$ and $\mu = \mu_i$, with probability at least $1 - n^{-k}$ over the choice $\Lambda_i \sim \restr{\mu_i}{\mathcal{V}_{\mathrm m} \setminus \{v_i\}}$, the connected component $C_{v_i} \subset G_{\Phi^{\Lambda_i}}$ containing $v_i$ has at most $2k^4 \log n$ clauses. Thus, with probability at least $1 - n^{-k}$, we have $\|\zeta_i - \zeta_{i-1}\|_{1} \le k \lvert C_{v_i} \rvert \le 2k^ 5 \log n$. By a union bound over $i \in [\ell]$ and the fact that $k \ge 3$ and $\ell \le n$, we conclude that, with probability at least $1-1/n^2$, we have $\|\zeta_i - \zeta_{i-1}\|_1 \le  2k^5 \log n$ for all $i \in [\ell]$.

Item~\ref{item:IV:2}. Let $\tau' = \sigma'|_{\mathcal{V}_{\mathrm m}}$ as in Algorithm~\ref{alg:find-path}. By construction, $\xi_0 = \zeta_\ell$ and $\xi_t = \sigma'$ agree with $\tau'$ on $\mathcal{V}_{\mathrm m}$. Since $\sigma' \sim \mu_\Omega$, we have $\tau' \sim \restr{\mu_{\Omega}}{\mathcal{V}_{\mathrm m}}$, which is $(1/k)$-uniform by Lemma~\ref{lem:marginals}. In view of Corollary~\ref{cor:cc-r1} for $V = \mathcal{V}_{\mathrm m}$, $\Lambda = \tau'$ and $\mu = \restr{\mu_{\Omega}}{\mathcal{V}_{\mathrm m}}$, with probability at least $1 - n^{-k}$, all of the connected components of $G_{\Phi^{\tau'}}$, have size at most $2k^4 \log n$. Thus, all the connected components of $H_{\Phi^{\tau'}}$ have size at most $2k^5 \log n$. By construction, see Line~\ref{alg:step3} in Algorithm~\ref{alg:find-path}, the assignments $\xi_{i-1}$ and $\xi_i$ agree on the variables in all the connected components of $H_{\Phi^{\tau'}}$ except for the variables in the $i$-th connected component, where they may disagree. Thus, they disagree on at most $2k^5 \log n$ variables. This gives the desired result.
\end{proof}

We can now complete the proof of Theorem~\ref{t:robustconn}.

\begin{thmconnectivity}
    \statethmconnectivity
\end{thmconnectivity}
\begin{proof}
Since $\alpha \le 2^{0.227 k} \le 2^{(r_1-3\delta) k}/k^3 \le 2^{(r_1 - \delta) k}/k^3$ for large enough $k$, w.h.p. over the choice of $\Phi$, there is an $(r, r_1, 0, r_1)$-marking $(\mathcal{V}_{\mathrm m}, \emptyset, \mathcal{V}_{\mathrm c})$ of $\Phi$, see Lemma~\ref{lem:marking:r1}. We run Algorithm~\ref{alg:find-path} with inputs $\Phi$, and the associated marking $(\mathcal{V}_{\mathrm m}, \emptyset, \mathcal{V}_{\mathrm c})$. W.h.p. over the choice of $\Phi$, Lemma~\ref{lem:IV} holds. Therefore, with probability at least $1 -1/n$ over the choice of two random satisfying assignments $\sigma \sim \mu_\Omega$ and $\sigma' \sim \mu_\Omega$, the output path of Algorithm~\ref{alg:find-path}  is well-defined by Lemma~\ref{lem:I} and satisfies that $\|\zeta_i - \zeta_{i-1}\|_1 \le 2k^5 \log n$ for all $i \in [\ell]$ and $\|\xi_i - \xi_{i-1}\|_1 \le 2 k^5 \log n$ for all $i \in [t]$ by Lemma~\ref{lem:IV}.
Hence, it is a $D$-path in the solution space $\Omega$ for $D = 2k^5 \log n$ as we wanted.
\end{proof}

\subsection{Proof of Theorem~\ref{thm:looseness}} \label{sec:looseness}

We next show $O(\log n)$-looseness for all variables with high probability over $(\Phi, \sigma)$ for random $k$-CNF instances $\Phi$ and uniformly random satisfying assignment $\sigma \in \Omega$. Consequently, in an \textit{algorithmic} regime where $\alpha \ll 2^{ck}$ for some $c < 1$, we resolve a conjecture of~\cite{AC08}. 
Our proof exploits Corollary~\ref{cor:cc-r1} on the size of the connected components of $\Phi^\Lambda$. It is important in our arguments that every variable in the formula is \emph{flippable}.

\begin{definition}
Let $\Phi = \Phi(k, n, m)$ be a random $k$-CNF. A variable $v \in \mathcal{V}$ is \textit{flippable} if there exists a pair of satisfying assignments $(X,Y)$ to $\Phi$, in one of which $X(v) = \mathsf{F}$ and in the other $Y(v) = \mathsf{T}$.
\end{definition}

\begin{lemma}\label{l:mostflippable}
For $\alpha < 2^{k-2}$, with high probability over the choice of $\Phi = \Phi(k, m, n)$, all variables in $\Phi$ are flippable.
\end{lemma}
\begin{proof}
Observe that we can define an NAE-SAT problem based on $\Phi$. By definition, any NAE-satisfying assignment ensures that every clause contains at least one satisfied literal and at least one unsatisfied literal. By Theorem~2 in~\cite{achlioptas2002}, with high probability $\Phi$ is NAE-satisfiable. 
Consequently, we can find some assignment $\sigma$ that NAE-satisfies $\Phi$ with high probability, and then the opposite assignment $\overline{\sigma}$ also NAE-satisfies $\Phi$ by the symmetry of NAE-SAT solutions. In particular, both $\sigma$ and $\overline{\sigma}$ are solutions to the original SAT formula $\Phi$. Observe that for every variable $v \in V$ we have $X(v) = \mathsf{T}$ and $X(v) = \mathsf{F}$ in exactly one of $\sigma, \overline{\sigma}$ and thus, with high probability, every variable in $\Phi$ is flippable.
\end{proof}

\begin{lemma} \label{lem:XY}
For any variable $v \in \mathcal{V}$ and any partial assignment $X \colon \mathcal{V}_{\mathrm m} \setminus \{v\} \to \{\mathsf{F}, \mathsf{T}\}$, we have
\[
\pr_{\mu_\Omega}(v \mapsto \mathsf{F} |X) >0 
\quad\text{and}\quad
\pr_{\mu_\Omega}(v \mapsto \mathsf{T} |X) > 0.
\]
\end{lemma}
\begin{proof}
We prove $\pr_{\mu_\Omega}(v \to \mathsf{F} |X) > 0$; the proof of $\pr_{\mu_\Omega}(v \to \mathsf{T} |X) > 0$ is analogous. We distinguish two cases.

The first case is when $v$ is a good variable.  Lemma~\ref{lem:marginals} gives $\pr_{\mu_\Omega}\left( v \mapsto \mathsf{F} \vert X, \Lambda_{\bad} \right) \ge 1 - \exp(1/k)/2 > 0$ for any satisfying assignment of the bad clauses $\Lambda_{\bad}$. Thus, we have $\pr_{\mu_\Omega}\left( v \mapsto \mathsf{F} \vert X \right) > 0$. 

The second case is when $v$ is a bad variable. By Lemma~\ref{l:mostflippable} there exists a satisfying assignment $\sigma$ with $\sigma(v) = \mathsf{F}$. 
Let $\Lambda_{\bad} = \restr{\sigma}{\bv}$ be the assignment on bad variables and so in particular $\pr_{\mu_\Omega}(\Lambda_{\bad}) > 0$.
Then by Lemma~\ref{lem:marginals} we have $\pr_{\mu_\Omega}\left( X \vert \Lambda_{\bad} \right) \ge (1 - \exp(1/k)/2)^{\lvert \mathcal{V}_{\mathrm m}\rvert} > 0$. This implies that $\pr_{\mu_\Omega}(X, \Lambda_{\bad}) > 0$
and in particular $\pr_{\mu_\Omega}(v \mapsto \mathsf{F}, X) > 0$, so $\pr_{\mu_\Omega}(v \mapsto \mathsf{F} \vert X) > 0$.
\end{proof}

We can now prove Theorem~\ref{thm:looseness} with the help of Corollary~\ref{cor:cc-r1}.

\begin{thmlooseness}
    \statethmlooseness
\end{thmlooseness}
\begin{proof} 
    Note that $2^{0.227 k} \le 2^{(r_1-3\delta)k} \le 2^{(r_1 - \delta)k}/k^3$ for large enough $k$. Thus, w.h.p. over the choice of $\Phi$, there is an $(r, r_1, \emptyset, r_1)$-marking $(\mathcal{V}_{\mathrm m}, \emptyset, \mathcal{V}_{\mathrm c})$ of $\Phi$, see Lemma~\ref{lem:marking}. The distribution $\restr{\mu_\Omega}{\mathcal{V}_{\mathrm m}}$ is $(1/k)$-uniform by Lemma~\ref{lem:marginals}. Hence, Corollary~\ref{cor:cc-r1} holds for $V = \mathcal{V}_{\mathrm m}$ and $\mu = \restr{\mu_\Omega}{\mathcal{V}_{\mathrm m}}$. Let $v$ be a variable of $\Phi$. Let $\sigma \sim \mu_\Omega$ and let $\Lambda$ be the restriction of $\sigma$ to $\mathcal{V}_{\mathrm m} \setminus \{v\}$. Then, with probability at least $1-n^{-k}$, the connected components of $G_{\Phi^{\Lambda}}$ have size at most $2k^4 \log n$. Let $\mathcal{C}^\Lambda_j$ be the connected component containing the variable $v$, which is empty if all clauses containing $v$ are satisfied. Let $\omega$ be the negation of $\sigma(v)$. By Lemma~\ref{lem:XY}, we have $\pr_{\mu_\Omega}(v \mapsto \omega \vert \Lambda) > 0$. Therefore, there is an assignment $Y$ of the variables in $\var(\mathcal{C}^\Lambda_j)$ that satisfies the clauses in $\mathcal{C}^\Lambda_j$ and has $Y(v) = \omega$. We construct the assignment $\sigma'$ that has $\sigma' (v) = \omega$, agrees with $Y$ in $\var(\mathcal{C}^\Lambda_j)$ and agrees with $\sigma$ in the rest of the variables of $\Phi$. In particular, this assignment agrees with $\Lambda$ and satisfies each one of the connected components of $\Phi^\Lambda$. Thus, $\sigma'$ is a satisfying assignment of $\Phi$. Moreover, w.h.p. $\sigma'$ differs with $\sigma$ in at most $2k^5 \log n$ variables (the variables in $\var(\mathcal{C}^\Lambda_j)$).  We have shown that, w.h.p. over the choice of $\Phi$, with probability at least $1 - n^{-k}$ a random assignment $\sigma \sim \mu_{\Omega}$ is $(2k^5 \log n)$-loose, so the statement follows.
\end{proof}

\bibliographystyle{plainurl}
\bibliography{sat}

 \begin{appendices}

 \section{Proof of Lemma~\ref{lem:bad}} \label{sec:ap:bad}
 
 In this section we prove Lemma~\ref{lem:bad}. Recall that this result is~\cite[Lemma 8.16]{galanis2019counting} with a less restrictive bound on the density of the formula and a more restrictive definition of good variables/clauses, see Section~\ref{sec:bad} for details.  Moreover, the obtained upper bound on the number of bad clauses in our version of~\cite[Lemma 8.16]{galanis2019counting} is tighter. The original proof of Lemma~\ref{lem:bad} given in~\cite[Section 8]{galanis2019counting} is split into a sequence of results on random formulae. Here we restate 
 some of these results  ---
 only those
whose statement needs to change  as a consequence of our definition of good variables/clauses and the tighter upper bound. We also explain how these changes affect the proofs
if any modifications are necessary. 

 Let us fix some notation first. The results stated in this section only hold for large enough $k$ unless  we say otherwise. We note that in~\cite{galanis2019counting} the density $\alpha$ is at most $2^{k/300}/k^3$ and $\Delta = 2^{k/300}$, where $\Delta$ is the threshold in the definition of high-degree variables, and the proofs are carried out for these particular values. It turns out that, following the proofs in~\cite[Section 8]{galanis2019counting}, the only properties of $\alpha$ and $\Delta$ needed in order to proof Lemma~\ref{lem:bad} are that, for $r \in (0, 1/(2 \log 2))$, we have $\Delta_r = \lceil 2^{rk} \rceil$ and $\alpha$ is bounded above by $\Delta_r / k^3$ (note the subscript $r$ here to indicate that $\Delta_r$ depends on $r$). First, we need some definitions. For any set of variables $S \subseteq \mathcal{V}$ of $\Phi$, we denote by $\operatorname{HD}(S, r)$ the set of high-degree variables in $S$ (recall that a variable is of high-degree if the degree of $v$ is at least $\Delta_r$). 

\begin{lemma}[{\cite[Lemma 8.1]{galanis2019counting}}]\label{lem:v0}
  Let $r \in (0,1)$. There is a positive integer $k_0$ such that for any integer $k \ge k_0$, $\Delta_r = \lceil 2^{r k} \rceil$, and any density $\alpha$ with $\alpha \le \Delta_r/k^3$, the following holds w.h.p. over the choice of $\Phi = \Phi(k, n, \lfloor \alpha n \rfloor)$. The size of $\mathcal{V}_0(r) := \operatorname{HD}(\mathcal{V}, r)$ is at most $(\alpha/ \Delta_r) n / 2^{k^{10}}$.
\end{lemma}
\begin{proof}
    The proof is the same to that of~\cite[Lemma 8.1]{galanis2019counting}, apart from one change that we highlight here. The degrees of the variables in $\Phi$ have the same distribution as a balls-and-bins experiment with $km$ balls and $n$ bins. Let $D_1, D_2, \ldots, D_n$ be independent variables following the Poisson distribution $\operatorname{Poi}(\mu)$ with parameter $\mu = k \alpha$. The degrees of the variables of $\Phi$ have the same distribution as $\{D_1, D_2, \ldots, D_n\}$ conditioned on the event $\mathcal{E}$ that $D_1 + D_2 + \cdots + D_n = m$, see for instance~\cite[Chapter 5.4]{pcbook}. Let $U = \{i \in [n] : D_i \ge \Delta_r\}$. We want to show that $\pr(\lvert U \rvert > (\alpha/ \Delta_r) n / 2^{k^{10}} \vert \mathcal{E}) = o(1)$. In~\cite[Lemma 8.1]{galanis2019counting} the authors show that $\pr(\lvert U \rvert  >  n / 2^{k^{10}} \vert \mathcal{E}) = o(1)$. Their bound is not tight, but it is enough for their purposes. In fact, one can change $k^{10}$ by any polynomial and the result would still hold for large enough $k$. Here we obtain the extra factor $\alpha/\Delta_r$ by slightly modifying the application of the tail bound $\pr(\operatorname{Poi}(\mu) \ge x) \le e^{-\mu} (e \mu)^x / x^{x}$. For $x = \Delta_r$, instead of using the bound $e^{-\mu} (e \mu)^x / x^{x} \le e^{- \Delta_r} \le 2^{-k^{10}-1}$, which holds for large enough $k$ as $\mu/x \le k^{-2}$ and $\Delta_r$ is exponential in $k$, we use the bound $e^{-\mu} (e \mu)^x / x^{x} \le (e \mu / x) e^{- x + 1}  \le (\alpha / \Delta_r) 2^{-k^{10}-1}$. The rest of the proof is analogous; we have $\mathbb{E}[\lvert U \rvert] \ge n (\alpha / \Delta_r) 2^{-k^{10}-1}$, so by a Chernoff bound we find that $\pr(\lvert U \rvert \ge (\alpha/ \Delta_r) n / 2^{k^{10}}) \le \exp(-\Omega(n))$. From the connection between a balls-and-bins experiment and the Poisson distribution, see~\cite[Theorem 5.7]{pcbook}, we conclude that $\pr(\lvert U \rvert \ge (\alpha/ \Delta_r) n / 2^{k^{10}} \vert \mathcal{E}) \le \exp(-\Omega(n))$ as we wanted. 
\end{proof}
  
 \begin{corollary}[{\cite[Corollary 8.4]{galanis2019counting}}]  \label{cor:bt}
    There is a positive integer $k_0$ such that for any integer $k \ge k_0$ and any density $\alpha$ with $\alpha \le 2^k/(ek^3)$ the following holds w.h.p. over the choice of $\Phi = \Phi(k, n, \lfloor \alpha n \rfloor)$.  For every set of variables $Y$ such that $2 \le \lvert Y \rvert \le n / 2^k$, the number of clauses that contain at least $3$ variables from $Y$ is at most $\lvert Y\rvert$.
 \end{corollary}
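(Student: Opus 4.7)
The plan is to proceed by a standard first-moment/union-bound argument in two layers. For a fixed set of variables $Y$ of size $s$ with $2 \le s \le n/2^k$, I would first estimate the probability that a single fixed clause $c$ contains at least three variables from $Y$. Since $c$ is sampled uniformly by choosing $k$ literals independently, choosing $3$ of the $k$ positions for the ``hit'' variables from $Y$ and bounding the remaining $k-3$ positions trivially gives
\begin{equation*}
\pr(\lvert \var(c) \cap Y \rvert \ge 3) \le \binom{k}{3}\left(\frac{s}{n}\right)^3 \le \frac{k^3 s^3}{6 n^3}.
\end{equation*}

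Next, to bound the probability that there exist $s+1$ clauses each containing $\ge 3$ variables from $Y$, I would union bound over the $\binom{m}{s+1}$ choices of clauses and use independence of the clauses. Then a second union bound over the $\binom{n}{s}$ choices of $Y$ gives
\begin{equation*}
\pr(\exists \text{ bad } Y \text{ of size } s) \le \binom{n}{s}\binom{m}{s+1}\left(\frac{k^3 s^3}{6 n^3}\right)^{s+1}.
\end{equation*}
Applying $\binom{a}{b} \le (ea/b)^b$ and $m \le \alpha n$, this simplifies to
\begin{equation*}
\left(\frac{e^2 \alpha k^3 s}{6n}\right)^{s}\cdot \frac{e \alpha k^3 s^2}{6 n^2}.
\end{equation*}

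The main (and essentially only nontrivial) step is to exploit the hypothesis $s \le n/2^k$ to absorb the first factor. With this bound on $s/n$ and $\alpha \le \alpha_0 = 2^{(r_0 - 2\delta)k}/k^3$ (so in particular $\alpha \le 2^{r_0 k}$), the base of the $s$-th power is at most $(e^2 k^3/6)\cdot 2^{(r_0-1)k}$, which for large enough $k$ is at most $1/2$, since $r_0 = \rvalue < 1$. Hence each term is at most $2^{-s}\cdot O_k(1/n^2)$. Summing the geometric-times-constant series over $s \in \{2,\ldots,\lfloor n/2^k \rfloor\}$ gives a total probability of $O_k(1/n^2) = o(1)$, which yields the corollary with high probability.

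The calculation is routine; the only place one must be careful is making sure the $s+1$ rather than $s$ in the clause-selection factor is cleanly absorbed (using $s+1 \le 2s$ and pulling out one spare factor of the clause-probability to give the $1/n^2$ bonus outside the $s$-th power), and verifying that for $k \ge k_0$ the constant $e^2 k^3 \alpha / (6 \cdot 2^k)$ is indeed less than $1$, which uses precisely that $r_0$ is bounded away from $1$.
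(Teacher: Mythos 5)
Your argument is correct. The paper's own ``proof'' is a one-line citation to~\cite[Lemma 35]{galanis2019counting} (applied with $b=3$ and $t=1$), which is a general first-moment bound of precisely this shape, so in substance you are re-deriving the cited lemma rather than taking a different route: the single-clause estimate $\binom{k}{3}(s/n)^3$, the two-layer union bound over $(s+1)$-subsets of clauses (using independence of clauses) and over $s$-subsets of variables, and the geometric absorption via $s/n \le 2^{-k}$, $\alpha \le 2^{r_0 k}$, and $r_0<1$ are exactly what must sit under the hood of that lemma. Your bookkeeping checks out: the bound $\binom{m}{s+1}\le (em/(s+1))^{s+1}\le (em/s)^{s+1}$ cleanly separates one spare clause-probability factor, so the total contribution for fixed $s$ is $\left(\tfrac{e^2\alpha k^3 s}{6n}\right)^s\cdot\tfrac{e\alpha k^3 s^2}{6n^2}$, and for $k\ge k_0$ the base is below $1/2$. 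The last step, summing over $s\in\{2,\dots,\lfloor n/2^k\rfloor\}$, works because $\sum_{s\ge 2} s^2 2^{-s}$ converges, giving a total failure probability of $O_k(1/n^2)=o(1)$. The only implicit hypotheses you use ($k$ large, $\alpha\le\alpha_0$) are in force in the appendix where this corollary appears, so the proof is a valid self-contained replacement for the paper's citation.
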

 \begin{proof}
   This is a consequence of~\cite[Lemma 35]{galanis2019counting} with $b = 3$ and $t = 2/(b-1) = 1$, whose proof only requires $\alpha \le 2^k/(ek^3)$.
 \end{proof}

Recall that the graph $H_\Phi$ is the dependency graph of the variables of $\Phi$, see Definition~\ref{def:graph-H}.
 
 \begin{lemma}[{\cite[Lemma 8.8]{galanis2019counting}}]\label{lem:hd:upper}
  Let $r \in (0,1)$. There is a positive integer $k_0$ such that for any integer $k \ge k_0$, $\Delta_r = \lceil 2^{r k} \rceil$, and any density $\alpha$ with $\alpha \le \Delta_r/k^3$, the following holds w.h.p. over the choice of $\Phi = \Phi(k, n, \lfloor \alpha n \rfloor)$. Every connected set $U$ of variables in $H_\Phi$ with size at least $2 k^4 \log n$ satisfies that $\lvert \hd(U, r) \rvert \le \frac{1}{2k^3} \lvert U \rvert $. 
 \end{lemma}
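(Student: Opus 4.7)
The plan is to prove Lemma~\ref{lem:hd:upper} by a first-moment/union-bound argument. For each integer $L$ with $2k^4 \log n \le L \le n$, let $s := \lceil L/(2k^3) \rceil$. I would bound the probability that there exists a connected subset $U \subseteq \mathcal{V}$ of size $L$ in $H_\Phi$ with $|\operatorname{HD}(U)| \ge s$, and then sum over $L$.

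The first ingredient is a Chernoff tail bound on the degree of a single variable. In $\Phi = \Phi(k, n, \lfloor \alpha n\rfloor)$, the $mk$ literal slots are (essentially) independent uniform samples from $2n$ literals, so the degree $\deg(v)$ of any variable $v$ is stochastically dominated by $\mathrm{Bin}(mk, 1/n)$ with mean $\mu = mk/n \le k\alpha$. Since $\alpha \le \alpha_0 \le \Delta/k^3$, we have $\Delta \ge k^2 \mu$, and the Chernoff bound yields $\Pr[\deg(v) \ge \Delta] \le (e\mu/\Delta)^\Delta \le (e/k^2)^\Delta =: q$. Crucially, the degree vector $(\deg(v))_{v\in\mathcal{V}}$ is the marginal of a multinomial distribution, whose coordinates are negatively associated; therefore, for any $S \subseteq \mathcal{V}$ of size $s$,
\[
\Pr[S \subseteq \operatorname{HD}] \le q^{s} = (e/k^2)^{\Delta s}.
\]

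The second ingredient handles the connectivity of $U$ in $H_\Phi$. Using Cayley's formula, there are at most $L^{L-2}$ labeled spanning trees on the vertex set of $U$; for each such tree $T$, the probability that every edge of $T$ is realised in $H_\Phi$ can be bounded by $(\alpha k^2/n)^{L-1}$ since a random clause contains a fixed unordered pair of variables with probability at most $k(k-1)/n^2$. To combine the connectivity event with the high-degree event in a way that permits multiplication of probabilities, I would enumerate over the clauses playing each role: for each tree edge $e$ of $T$ designate one clause that realises $e$, and for each $v \in S$ designate $\Delta$ clauses containing $v$ (these roles may overlap, but the number of such role-assignments is easily bounded). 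Since distinct clauses are independent in $\Phi(k,n,m)$, this yields
\[
\Pr[U \text{ conn in } H_\Phi \text{ and } S \subseteq \operatorname{HD}] \le L^{L-2}(\alpha k^2/n)^{L-1} \cdot q^{s}.
\]

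Finally, I would take a union bound over the choice of $U$ (at most $\binom{n}{L} \le (en/L)^L$) and the choice of $S \subseteq U$ (at most $\binom{L}{s} \le (2ek^3)^{s}$ since $s \ge L/(2k^3)$). After simplification, the resulting per-$L$ bound has the form $n\, \gamma^{L}$ where
\[
\gamma \;\lesssim\; e\alpha k^2 \cdot (2ek^3)^{1/(2k^3)} \cdot (e/k^2)^{\Delta/(2k^3)}.
\]
Because $\Delta = \lceil 2^{(r_0-2\delta)k}\rceil$ grows double-exponentially in $k$, the factor $(e/k^2)^{\Delta/(2k^3)}$ dominates, and $\gamma < 1/e$ for all sufficiently large $k$, even accounting for $\alpha \le 2^{r_0 k}$. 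Summing $n\,\gamma^{L}$ over $L \ge 2k^4\log n$ is geometric and bounded by $n \cdot \gamma^{2k^4 \log n} = n^{1 - 2k^4 \log(1/\gamma)} = o(1)$.

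The main obstacle is Step 4, the joint handling of the high-degree and connectivity events, which are positively correlated (extra clauses help both) so that a naive product bound via FKG goes the wrong way. The role-assignment enumeration above circumvents this by reducing the joint event to a disjoint product of independent single-clause events, but care is needed when a single clause is assigned multiple roles; this contributes only lower-order factors that are absorbed by the large gap between $(e/k^2)^{\Delta/(2k^3)}$ and $e\alpha k^2$.
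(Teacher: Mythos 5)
Your proposal is necessarily a different route from the paper: the paper's own proof of this statement is not self-contained, but simply defers to~\cite[Lemma~8.8]{galanis2019counting}, noting that the constant $\delta_0$ there (our $1/(2k^3)$, their $1/21600$) plays no role as long as it is inverse-polynomial in $k$. So the proposal has to be judged entirely on its own merits, and there is a genuine gap precisely at the step you yourself flag.

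The framework (union bound over $L$, $U$, $S$, spanning trees and clause witnesses; Chernoff for the degree tail; negative association of the multinomial degree vector) is reasonable, and your final estimate would indeed close if the Step-4 bound
\[
\Pr\bigl[U\ \text{connected in}\ H_\Phi\ \text{and}\ S\subseteq\hd(U)\bigr]\ \le\ L^{L-2}\left(\alpha k^2/n\right)^{L-1} q^{\,s}
\]
were true. But this is not an innocuous product: both events are increasing in the multiset of clauses incident to $U$, hence \emph{positively} correlated, so the naive product gives a lower bound, not an upper bound. The ``role-assignment enumeration'' you sketch does not rescue it as written, because the sentence ``these roles may overlap, but the number of such role-assignments is easily bounded'' conflates two distinct issues. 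Counting the role-assignments is not the obstacle; the obstacle is that when one clause is assigned both a tree-edge role and degree slots for variables of that edge, the per-clause event collapses ($c_e\supseteq e$ already forces $c_e$ to contain both endpoints, so it pays nothing extra for their degree slots), and the factors $(\alpha k^2/n)^{L-1}$ and $q^{s}$ double-charge the same clause. Since a spanning tree of $U$ may route through the same $v\in S$ with arbitrarily high tree-degree, and tree-witnessing clauses can contain $v$ even for edges not incident to $v$, a vertex of $S$ can have essentially \emph{all} of its $\ge\Delta$ incident clauses among the connectivity witnesses; the degree constraint then contributes no extra $(k/n)$ factor at all, and the loss is not ``lower-order''.

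A fix requires an idea that is absent from the proposal: one must force a constant fraction of the degree witnesses to be disjoint from the connectivity witnesses before multiplying. For instance, pass to a spanning tree of the bipartite incidence graph between $U$ and the clauses meeting $U$ in two or more variables; its total variable-side degree is $O(L + |Q|)$, so by Markov at least half of $S$ has bipartite tree-degree $O(k^4)\ll\Delta$, and for those variables one can carve out $\Delta/2$ degree-witnessing clauses disjoint from the tree. Only after such a separation is the product decomposition (now with $s/2,\Delta/2$ in place of $s,\Delta$) justified; the large slack from $(e/k^2)^{\Delta/(2k^3)}$ does absorb the weakened constants. Without some argument of this shape, the central inequality of Step~4 is unjustified and the remainder of the calculation rests on it.
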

 \begin{proof}
   The proof is that of~\cite[Lemma 8.8]{galanis2019counting}, with the difference that $\delta_0= 1/(2 k^3)$ instead of $\delta_0 = 1/21600$, as the exact value of $\delta_0$ does not play a role in the proof as long as, for $\theta_0 = \Delta_r - 2(k+1)$, we have $\delta_0 \theta_0 \log \frac{\theta_0}{k^2 \alpha} \ge 3 \log k +  \log \alpha$, which holds for large enough $k$ when $\delta_0 = \operatorname{poly}(k)$. Moreover, the only restriction on $\alpha$ is that of Corollary~\ref{cor:bt}, and the fact that $\alpha \le \Delta_r /k^3$.
 \end{proof}

  \begin{lemma}[{\cite[Lemma 2.4]{coja2014} and~\cite[Lemma 8.10]{galanis2019counting}}] \label{lem:coja}
  Let $k \ge 3$ be an integer and let $\alpha$ be a positive real number with $\alpha  \le e^{k/2} / (2 e^2 k^2)$. For any $\varepsilon \in [1/n,1)$ (depending on $n$) such that $\varepsilon <  e^{-3k}$ for all $n$, the following holds w.h.p. over the choice of the random formula $\Phi = \Phi(k, n, \lfloor \alpha n \rfloor)$. Let $Z$ be a set of clauses with size at most $\varepsilon n$ and let $c_1, \ldots, c_l \in \mathcal{C} \setminus Z$ be distinct clauses. For $s \in \{1, 2, \ldots, \ell\}$, let $N_s := \var(Z) \cup \bigcup_{j = 1}^{s-1} \var(c_{j})$. If $\left| \var(c_s) \cap N_s \right| \ge 3$ for all $s \in \{1, 2, \ldots, \ell\}$, then $\ell \le \varepsilon n$.
  \end{lemma}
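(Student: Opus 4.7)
The plan is a first moment argument on the union $T := Z \cup \{c_1,\ldots,c_\ell\}$. By truncating the sequence (the conditions are preserved by restriction to initial segments), we may assume $\ell = \ell_0 := \lfloor \varepsilon n \rfloor + 1$, so that $\ell_0 > \varepsilon n$. The key structural observation is that the hypothesis $|\var(c_s) \cap N_s| \geq 3$ forces each clause $c_s$ to contribute at most $k-3$ fresh variables to the running union $N_s$. Hence, writing $z := |Z|$ and $t_z := z + \ell_0$, the variable set of $T$ satisfies
\begin{equation*}
|\var(T)| = |N_{\ell_0+1}| \leq |\var(Z)| + (k-3)\ell_0 \leq kz + (k-3)\ell_0 =: v_z,
\end{equation*}
and in particular $kt_z - v_z = 3\ell_0$.

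The bad event is therefore contained in the event that, for some $z \in \{0,\ldots,\lfloor \varepsilon n\rfloor\}$, a subset $T \subseteq \mathcal{C}$ of size $t_z$ and a variable container $V \subseteq \mathcal{V}$ of size $v_z$ exist with $\var(T) \subseteq V$. Since clauses in the random $k$-SAT model are sampled independently and each variable slot is uniform on $\mathcal{V}$, a union bound gives
\begin{equation*}
\pr[\text{bad}] \leq \sum_{z=0}^{\lfloor \varepsilon n\rfloor} \binom{m}{t_z} \binom{n}{v_z} \left(\frac{v_z}{n}\right)^{k t_z}.
\end{equation*}
Applying $\binom{m}{t} \leq (em/t)^t$, $\binom{n}{v} \leq (en/v)^v$, and the identity $kt_z - v_z = 3\ell_0$, each summand is bounded by $e^{t_z + v_z} (m/t_z)^{t_z} (v_z/n)^{3\ell_0}$. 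Substituting $v_z \leq 2k\varepsilon n$, $t_z \leq 2\ell_0$, and $m/t_z \leq \alpha/\varepsilon$ (since $\ell_0 > \varepsilon n$) yields a bound of the form $\bigl[C \cdot e^{O(k)} \alpha^2 \varepsilon \bigr]^{\ell_0}$ (after merging powers of $\varepsilon$ and $k$); the density bound $\alpha \leq e^{k/2}/(2e^2 k^2)$ together with $\varepsilon \leq e^{-3k}$ is precisely what makes the bracketed base $\ll 1$ for large $k$.

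The main technical care required is handling the two regimes. When $\varepsilon n \to \infty$, $\ell_0$ grows and the exponential-in-$\ell_0$ factor drives the bound to $0$, with the $\varepsilon n + 1$ values of $z$ contributing only a polynomial factor. When $\varepsilon n = O(1)$, $\ell_0$ is bounded and the exponential factor alone is only a $k$-dependent constant; here one must instead exploit the explicit $(v_z/n)^{3\ell_0}$ term, which contributes $n^{-3\ell_0}$ once one notes that $v_z = O(k\varepsilon n) = O(1)$ in this regime. In both cases the sum over $z$ is $o(1)$, completing the proof. The delicate point is book-keeping these factors so that the tail with large $z$ (where $v_z$ is close to $kt_z$) is handled simultaneously with the $z = 0$ term, but this is routine once the above observation is made.
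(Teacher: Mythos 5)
The proposal follows the same high-level route as the paper: a first-moment / union-bound argument over bad configurations, which is what the paper adapts from~\cite[Lemma 2.4]{coja2014}. Your accounting via the pair $(T,V)$ --- with $T$ the set of $t_z = z + \ell_0$ clauses and $V\supseteq\var(T)$ a container of size $v_z = kz + (k-3)\ell_0$, probability bounded by $(v_z/n)^{kt_z}$ --- is a clean, self-contained variant, and the identity $kt_z - v_z = 3\ell_0$ is exactly the right structural observation. The arithmetic showing the base of $[\,\cdot\,]^{\ell_0}$ drops below $1$ once $\alpha \le e^{k/2}/(2e^2k^2)$ and $\varepsilon < e^{-3k}$ are applied checks out, and is parallel to the paper's displayed inequality~\eqref{eq:Ns}.

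The gap is in the final case analysis, and it is not merely ``routine book-keeping.'' The regimes ``$\varepsilon n\to\infty$'' and ``$\varepsilon n = O(1)$'' are neither exhaustive nor complementary: $\varepsilon$ is an arbitrary function of $n$ in $[1/n,\,e^{-3k})$, so $\ell_0$ can for instance sit in $\omega(1)\cap o(\log n)$ or oscillate. Your polynomial mechanism (``the $(v_z/n)^{3\ell_0}$ term contributes $n^{-3\ell_0}$ since $v_z=O(1)$'') only works when $\ell_0=O(1)$, while the bare exponential bound $\ell_0\,c^{\ell_0}$ --- with $c$ a fixed constant below $1$ but above $1/e$ for small $k$ (for $k=3$ one gets $c\approx 16/(ke^2)\approx 0.72$) --- only beats the $\ell_0\le\varepsilon n + 1$ prefactor once $\ell_0\gtrsim C\log n$ for a suitable $C$. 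That leaves a genuine middle regime uncovered. The paper's own proof splits precisely at $\varepsilon\ge 10(\log n)/n$ versus $1/n\le\varepsilon<10(\log n)/n$, and in the second regime the decisive mechanism is not the polynomial factor but the explicit $\varepsilon$ already sitting in your base: since $\ell_0\ge 2$ and the base $C_k\alpha^2\varepsilon\,e^{O(k)}<1$, one has $[\text{base}]^{\ell_0}\le\text{base}^2=O(\varepsilon^2)$, so the sum over the at most $\varepsilon n+1$ values of $z$ is $O(\varepsilon^3 n)=O((\log n)^3/n^2)=o(1)$. Invoke that (and choose the threshold as the paper does) and your argument closes; as stated, it does not.
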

  \begin{proof}
    The proof is almost identical to the proof of~\cite[Lemma 2.4]{coja2014}. There are four differences. First, here, as it is also the case in~\cite[Lemma 44]{galanis2019counting}, $\varepsilon$ can depend on $n$. This will arise later in this proof. Second, the proof of~\cite[Lemma 2.4]{coja2014} is carried out for the condition $\left| \var(c_s) \cap N_s \right| \ge \lambda$, where $\lambda $ is an integer with $\lambda > 4$. Here we set $\lambda = 3$ and impose stricter hypotheses on $\alpha$ and $\varepsilon$ to compensate for a smaller $\lambda$. Their (more relaxed) hypotheses on $\alpha$ and $\varepsilon$ are $\alpha \le 2^k \log 2$, $\varepsilon \le k^{-3}$ and $\varepsilon^\lambda \le (2e)^{-4k} / e$. Third, we substitute the last inequality of~\cite[Equation 4]{coja2014}, which is 
      \begin{equation*}
      \left[ \left(\frac{e m/n}{\varepsilon} \right)^2 \exp(2k) (2k\varepsilon)^\lambda \right]^{\varepsilon n}
      \le  \left[ \left(2 e \right)^{2k}  \varepsilon^{\lambda /2} \right]^{\varepsilon n},
    \end{equation*}
    by the inequality
    \begin{equation} \label{eq:Ns}
    \begin{aligned}
      \left[ \left(\frac{e m/n}{\varepsilon} \right)^2 \exp(2k) (2k\varepsilon)^\lambda \right]^{\varepsilon n}
      & \le  \left[ \left(e m/n \right)^2 \exp(2k) (2k)^3 \varepsilon \right]^{\varepsilon n} \\
      & \le  \left[ \exp(3k -1) \varepsilon \right]^{\varepsilon n},
    \end{aligned}
    \end{equation}
    where we used $\lambda = 3$ and $m / n \le \alpha \le e^{k/2} / (2 e^2 k^2) $. Now, as it is done in~\cite[Lemma 8.10]{galanis2019counting}, we distinguish two cases depending on $\varepsilon$. If $\varepsilon \ge 10 (\log n)/ n$, then using this in conjunction with $\varepsilon <  e^{-3k}$, the right hand size of~\eqref{eq:Ns} is bounded by  $e^{-\varepsilon n} \le 1/n^{10} = o(1/n)$. If $1/n \le \varepsilon < 10 (\log n)/ n$, then, for large enough $n$, the right hand size of~\eqref{eq:Ns} is bounded above by $\exp(3k-1) \varepsilon = o(1)$. The last difference between the proofs is that our argument works for all $k \ge 3$, whereas the bound~\cite[Equation 4]{coja2014}  only holds for large $k$.
  \end{proof}

  The remaining results in this section do not need any changes in their original proofs, other than that every time Corollary~8.4, Lemma~8.8 and Lemmas~8.10-8.16 are invoked in~\cite[Section 8]{galanis2019counting}, we use the version given in this appendix instead. We note that the statements of these results are slightly different to their~\cite[Section 8]{galanis2019counting} versions, and these changes are again due to the fact that we use $\lambda = 3$ instead of $\lambda = k/10$ in the definition of good variables/clauses.

\begin{corollary}[{\cite[Corollary 8.11]{galanis2019counting}}] \label{cor:Ns}
  Let $r \in (0, 1/(2\log 2)]$. There is a positive integer $k_0$ such that for any integer $k \ge k_0$, $\Delta_r = \lceil 2^{r k} \rceil$, and any density $\alpha$ with $\alpha \le \Delta_r/k^3$, the following holds w.h.p. over the choice of $\Phi = \Phi(k, n, \lfloor \alpha n \rfloor)$. Let $Z$ be a set of clauses with size at most $2 n / 2^{k^{10}}$ and let $c_1, \ldots, c_l \in \mathcal{C} \setminus Z$ be distinct clauses. For $s \in \{1, 2, \ldots, \ell\}$, let $N_s := \var(Z) \cup \bigcup_{j = 1}^{s-1} \var(c_{j})$. If $\left| \var(c_s) \cap N_s \right| \ge 3$ for all $s \in \{1, 2, \ldots, \ell\}$, then $\ell \le \lvert Z \rvert$.
\end{corollary}
\begin{proof}
    The proof given in~\cite[Corollary 8.11]{galanis2019counting} also applies here. We note that the density $\alpha$ is at most $ e^{k/2} / (2 e^2 k^2)$ so we can indeed apply Lemma~\ref{lem:coja} when  the proof given in~\cite[Corollary 8.11]{galanis2019counting} invokes~\cite[Lemma 8.10]{galanis2019counting}.
\end{proof}

\begin{lemma}[{\cite[Lemma 8.13]{galanis2019counting}}] \label{lem:hd:lower}
  Let $r \in (0, 1/(2\log 2)]$. There is a positive integer $k_0$ such that for any integer $k \ge k_0$, $\Delta_r = \lceil 2^{r k} \rceil$, and any density $\alpha$ with $\alpha \le \Delta_r/k^3$, the following holds w.h.p. over the choice of $\Phi = \Phi(k, n, \lfloor \alpha n \rfloor)$. For any bad component $S$ of variables, we have $\lvert S \rvert \le 2k \lvert \hd(S, r) \rvert$.
\end{lemma}
\begin{proof}
 The proof given in~\cite[Lemma 8.13]{galanis2019counting} applies using our versions of~\cite[Lemma 8.1, Corollary 8.4 and Corollary 8.11]{galanis2019counting}.
\end{proof}

\begin{lemma}[{\cite[Lemma 8.14]{galanis2019counting}}] \label{lem:bad-component}
  Let $r \in (0, 1/(2\log 2)]$. There is a positive integer $k_0$ such that for any integer $k \ge k_0$, $\Delta_r = \lceil 2^{r k} \rceil$, and any density $\alpha$ with $\alpha \le \Delta_r/k^3$, the following holds w.h.p. over the choice of $\Phi = \Phi(k, n, \lfloor \alpha n \rfloor)$. Every bad component $S$ has size at most $2k^4 \log n$.
\end{lemma}
\begin{proof}
 The proof given in~\cite[Lemma 8.14]{galanis2019counting} applies using our versions of~\cite[Lemma 8.8 and Lemma 8.13]{galanis2019counting}.
\end{proof}

\begin{lemma}[{\cite[Lemma 8.15]{galanis2019counting}}] \label{lem:bad-variables}
Let $r \in (0, 1/(2\log 2)]$. There is a positive integer $k_0$ such that for any integer $k \ge k_0$, $\Delta_r = \lceil 2^{r k} \rceil$, and any density $\alpha$ with $\alpha \le \Delta_r/k^3$, the following holds w.h.p. over the choice of $\Phi = \Phi(k, n, \lfloor \alpha n \rfloor)$. For every connected set of $S$ variables with size at least $2 k^4 \log n$, we have $\lvert S \cap \bv \rvert \le \lvert S \rvert / k^2$.
\end{lemma}
\begin{proof}
  The proof is analogous to that given in~\cite[Lemma 8.15]{galanis2019counting}. The only differences are that we apply Lemma~\ref{lem:hd:upper} instead of~\cite[Lemma 8.8]{galanis2019counting}, we apply Lemma~\ref{lem:hd:lower} instead of~\cite[Lemma 8.13]{galanis2019counting}, and we have $\delta_0 = 1/(2 k^3)$ instead of $\delta_0 = 1 / 21600$.
\end{proof}

\begin{lembad}[{\cite[Lemma 8.16]{galanis2019counting}}]
 \statelembad
\end{lembad}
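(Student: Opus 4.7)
The plan is to prove the bound by combining Lemma~\ref{lem:bad-variables}, which controls the density of bad variables in any connected set of variables, with a local application of Corollary~\ref{cor:bt} inside each bad component of variables. The key idea is that a bad clause, having all of its variables bad and connected through the clause itself, must lie entirely inside a single bad component of variables, and since bad components have small size (Lemma~\ref{lem:bad-component}), Corollary~\ref{cor:bt} can be applied locally to each one to bound the number of bad clauses supported on it.

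Concretely, I would first pass to a w.h.p.\ event under which Lemmas~\ref{lem:linearity}, \ref{lem:bad-component}, \ref{lem:bad-variables} and Corollary~\ref{cor:bt} simultaneously hold. Given a connected set of clauses $Y$ in $G_\Phi$ with $|\var(Y)| \ge 2k^4\log n$, the set $\var(Y)$ is connected in the variable graph $H_\Phi$, so Lemma~\ref{lem:bad-variables} yields $|\bv \cap \var(Y)| \le |\var(Y)|/k^2$. Let $\{C_i\}$ denote the bad components (the connected components of the restriction of $H_\Phi$ to $\bv$); these partition $\bv$, and each satisfies $|C_i| \le 2k^4\log n$.

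The main step is to bound $|Y \cap \bc|$ component by component. Using Proposition~\ref{prop:bad} (every variable of a bad clause is bad) together with the fact that the variables of a single clause are mutually connected in $H_\Phi$, each bad clause $c\in Y\cap\bc$ has $\var(c)\subseteq C_{i(c)}$ for a unique bad component $C_{i(c)}$, and $\var(c)\subseteq C_{i(c)} \cap \var(Y)$. For each $C_i$ with $|C_i \cap \var(Y)| \ge 2$, the set $S_i := C_i \cap \var(Y)$ has size between $2$ and $2k^4\log n$, which is at most $n/2^k$ for large $n$, so Corollary~\ref{cor:bt} bounds by $|S_i|$ the number of clauses with at least $3$ variables in $S_i$; by Lemma~\ref{lem:linearity} each bad clause $c\subseteq C_i$ has $\ge k-1 \ge 3$ variables in $S_i$, hence the bad clauses of $Y$ supported in $C_i$ number at most $|S_i|$. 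Components with $|C_i\cap\var(Y)|\le 1$ contribute no bad clauses at all, since $|\var(c)|\ge k-1\ge 2$. Summing yields
\begin{equation*}
|Y\cap\bc| \le \sum_i |C_i\cap\var(Y)| = |\bv \cap \var(Y)| \le \frac{|\var(Y)|}{k^2} \le \frac{k|Y|}{k^2} = \frac{|Y|}{k},
\end{equation*}
using $|\var(Y)|\le k|Y|$ in the final step.

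The main thing to be careful about is verifying that the hypothesis $|S_i| \le n/2^k$ of Corollary~\ref{cor:bt} is indeed met; this is where Lemma~\ref{lem:bad-component}'s bound on the size of bad components is essential, since without that bound the bad variables in $\var(Y)$ could a priori form one large set and directly invoking Corollary~\ref{cor:bt} on $\bv\cap\var(Y)$ would fail for large $\alpha$. Everything else is a routine counting step, and the proof of~\cite[Lemma~8.16]{galanis2019counting} can be adapted almost verbatim once the supporting lemmas in this appendix are in place with the modified constants $\lambda = 3$ and $\delta_0 = 1/(2k^3)$.
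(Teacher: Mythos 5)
Your proof is correct. The key observation you identified — that one cannot apply Corollary~\ref{cor:bt} directly to $\bv\cap\var(Y)$ (whose size is only bounded by $|\var(Y)|/k^2$, which can exceed $n/2^k$ when $Y$ is large), but can apply it component-by-component after noting that Proposition~\ref{prop:bad} forces every bad clause to sit entirely inside a single bad component and Lemma~\ref{lem:bad-component} caps each such component at $2k^4\log n \le n/2^k$ — is exactly the right way to circumvent the size hypothesis of Corollary~\ref{cor:bt}, and the final accounting $|Y\cap\bc|\le\sum_i|C_i\cap\var(Y)|=|\bv\cap\var(Y)|\le|\var(Y)|/k^2\le k|Y|/k^2$ is clean. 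The paper itself does not spell out the argument: its proof of Lemma~\ref{lem:bad} is a one-line pointer to the proof of~\cite[Lemma~8.16]{galanis2019counting}, citing only its modified Corollary~8.4 and Lemma~8.15 (our Corollary~\ref{cor:bt} and Lemma~\ref{lem:bad-variables}). You have reconstructed what that proof must do, and in doing so you additionally invoke Lemma~\ref{lem:bad-component} explicitly to legitimize the application of Corollary~\ref{cor:bt} on each bad-component intersection; this is a natural and necessary piece of the argument that the paper's terse citation leaves implicit. The only other minor points — that $\var(Y)$ is connected in $H_\Phi$ whenever $Y$ is connected in $G_\Phi$, that $|\var(c)|\ge k-1\ge 3$ via Lemma~\ref{lem:linearity}, and that bad components with at most one vertex in $\var(Y)$ contribute nothing — are all handled correctly.
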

\begin{proof} \label{lem:bad:proof}
 The same proof applies using our versions of~\cite[Corollary 8.4 and Lemma 8.15]{galanis2019counting}.
\end{proof}

\begin{lembadall}[{\cite[Lemma 8.12]{galanis2019counting}}]
 \statelembadall
\end{lembadall}
\begin{proof} 
 We consider the set of high-degree variables $\mathcal{V}_0(r) = \operatorname{HD}(\mathcal{V},r)$, which w.h.p. over the choice of $\Phi$ has $\lvert \mathcal{V}_0(r) \rvert \le (\alpha/ \Delta_r)  n / 2^{k^{10}}$ by Lemma~\ref{lem:v0}. In view of Corollary~\ref{cor:bt} with $Y = \mathcal{V}_0(r)$, we have $\lvert \mathcal{C}_0(r) \rvert \le \lvert \mathcal{V}_0(r) \rvert \le n / 2^{k^{10}}$, where $\mathcal{C}_0(r)$ is the set of clauses with at least $3$ variables in $\mathcal{V}_0(r)$, see Algorithm~\ref{alg:bad}. From Corollary~\ref{cor:Ns} and the construction of $\bc(r)$ in Algorithm~\ref{alg:bad}, we find that $\lvert \bc(r) \rvert \le  2\lvert \mathcal{C}_0(r) \rvert \le 2\lvert \mathcal{V}_0(r) \rvert \le 2 (\alpha/ \Delta_r) n / 2^{k^{10}}$. By construction of $\bv(r)$, see Algorithm~\ref{alg:bad}, we conclude that $\lvert \bv(r) \rvert \le \lvert \mathcal{V}_0(r) \rvert + k \lvert \bc(r) \rvert \le 2(k+1) (\alpha/ \Delta_r) n / 2^{k^{10}}$.
\end{proof}

\newpage 

\section{Proof of Lemma~\ref{lem:block-glauber}} \label{sec:ap:mt}

In this section we collect the results from~\cite{Chen2021} that one needs to combine to obtain Lemma~\ref{lem:block-glauber} on the mixing time of the $\rho$-uniform-block Glauber dynamics. 

\begin{definition} \label{def:entropy-factorisation}
    Let $\mu$ be a distribution supported on $\Omega \subseteq [q]^V$. Let $f \colon \Omega \to \mathbb{R}_{\ge 0}$. We denote the entropy of $f$ by $\operatorname{Ent}_\mu(f)$, that is, $\operatorname{Ent}_\mu(f) = \mathbb{E}_{\mu}(f \log f )) - \mathbb{E}_{\mu}(f) \log(\mathbb{E}_{\mu}(f))$ when $\mathbb{E}_{\mu}(f) > 0$, and $\operatorname{Ent}_\mu(f) = 0$ when $\mathbb{E}_{\mu}(f) = 0$. For $S \subseteq V$, we denote $\operatorname{Ent}_\mu^S(f) = \mathbb{E}_{\tau \sim \restr{\mu}{V \setminus S}}\operatorname{Ent}_{\mu}(f \vert \, \tau)$, where $\operatorname{Ent}_{\mu}(f \vert \, \tau)$ is the entropy of $f$ conditioning to the event that the assignment drawn from $\mu$ agrees with $\tau$ in $V \setminus S$.
    
    Let $\rho \in \{1, 2, \ldots, n \}$.
    We say that $\mu$ satisfies the $\rho$-uniform block factorisation of entropy (with constant $C_\rho$) if for all $f \colon \Omega \to \mathbb{R}_{\ge 0}$ we have
    \begin{equation*}
        \frac{\rho}{n} \operatorname{Ent}_\mu(f) \le C_\rho \frac{1}{\binom{n}{\rho}} \sum_{S \in \binom{V}{\rho}} \operatorname{Ent}_\mu^S(f).
    \end{equation*}
\end{definition}

One of the main results of~\cite{Chen2021} is showing that $\mu$ satisfies the $\rho$-uniform block factorisation of entropy when the distribution $\mu$ is $\eta$-spectrally independent and $b$-marginally bounded. In the proof of~\cite[Corollary 19]{bez2021} the authors observe that the proof of Lemma~\ref{lem:factorisation-entropy} also holds when $\eta$ depends on $n$ and, in particular, in the case $\eta = \epsilon \log n$.

 \begin{lemma}[ {\cite[Lemma 2.4]{Chen2021}}] \label{lem:factorisation-entropy}
  The following holds for any reals $b, \eta > 0$,  any $\kappa \in (0,1)$ and any integer $n$ with $n \ge \frac{2}{\kappa} (4\eta / b^2 + 1)$.
 
 Let $q \ge 2$ be an integer, let $V$ be a set of size $n$ and let $\mu$ be a distribution over $[q]^V$. If $\mu$ is $b$-marginally bounded and $\eta$-spectrally independent, then $\mu$ satisfies the $\lceil \kappa n \rceil$-uniform block factorisation of entropy with constant $C = (2/\kappa)^{4 \eta / b^2 + 1}$.
 \end{lemma}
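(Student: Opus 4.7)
The plan is to derive the $\rho$-uniform block factorization of entropy from spectral independence via the local-to-global argument of Chen, Liu and Vigoda, adapted to the entropic setting. The full argument is that of~\cite{Chen2021}; what I sketch here is the strategy and the main quantitative checkpoints.

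The first step is to upgrade the $\eta$-spectral independence hypothesis, which by definition holds for every conditional distribution $\mu(\cdot \mid \tau)$, into a uniform spectral gap bound for the conditional single-site Glauber dynamics at every ``level''. The standard trigonometric/Cheeger argument for the simplicial complex associated with $\mu$ gives that, for any pinning $\tau$ of a set $T \subseteq V$, the spectral gap of the Glauber dynamics on $\mu(\cdot \mid \tau)$ is at least $(1 - \eta/(n - |T|))/(n - |T|)$. Equivalently, the single-site down-up walk at level $n - |T|$ contracts variance at rate $(1 - \eta/(n-|T|))$ per step.

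The second step is to convert the variance-contraction estimate into an entropy-contraction estimate. This is where the $b$-marginal-boundedness hypothesis enters. Using the entropy-variance comparison inequality valid for distributions with all conditional marginals $\ge b$, one loses a factor of $b^2/4$: the single-site entropy contraction at level $n - |T|$ is at least $(b^2/4)(1 - \eta/(n-|T|))$. I would then telescope over levels, following the standard down-up recursion
\begin{equation*}
\operatorname{Ent}_\mu(f) \;=\; \frac{1}{\binom{n}{\rho}} \sum_{S \in \binom{V}{\rho}} \operatorname{Ent}_\mu^S(f) \;+\; \operatorname{Ent}_\mu\bigl(\mathbb{E}_\mu[f \mid V \setminus S]\bigr),
\end{equation*}
and iteratively reducing block size. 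The net multiplicative loss over levels $|T| = 0, 1, \ldots, n - \rho - 1$ is
\begin{equation*}
\prod_{j=\rho+1}^{n} \Bigl( 1 - \frac{\eta}{j} \Bigr)^{-4/b^2},
\end{equation*}
which after taking logarithms and summing is at most $(4\eta/b^2)\log(n/\rho) + O(1)$, exponentiating to the claimed constant $(2/\kappa)^{4\eta/b^2 + 1}$ once one uses $\rho = \lceil \kappa n \rceil$.

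The main obstacle is controlling this product for small levels $j$, where the factor $(1 - \eta/j)^{-1}$ blows up. This is exactly what forces the size assumption $n \ge (2/\kappa)(4\eta/b^2 + 1)$: it guarantees $\rho = \lceil \kappa n \rceil \ge 2(4\eta/b^2 + 1) \ge 2\eta + 1$, so each factor in the product is bounded (the denominators stay away from $\eta$). The delicate bookkeeping is to carry the constant $4\eta/b^2 + 1$ in the exponent exactly through the recursion, rather than obtaining a looser bound like $O(\eta/b^2)$; this is where I would follow~\cite[Lemma 2.4]{Chen2021} verbatim, since the constant is optimised through a careful application of the Jensen inequality at each telescoping step and any slack introduced along the way would degrade the final bound.
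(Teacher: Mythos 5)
The paper does not prove this lemma; it is cited verbatim from \cite[Lemma 2.4]{Chen2021}, and the only supporting comment is the remark immediately preceding it in Appendix~\ref{sec:ap:mt}, noting that the argument of~\cite{Chen2021} (observed in~\cite{bez2021}) carries over when $\eta$ grows with $n$. There is therefore no internal proof to compare against; the paper, like you, defers the bookkeeping to the cited source. Your sketch captures the local-to-global entropy-factorization strategy that underlies the cited lemma (per-level local expansion from spectral independence, passage to entropy via $b$-marginal boundedness at a $\Theta(b^2)$ cost, telescoping over levels $n, n-1, \ldots, \rho = \lceil \kappa n \rceil$ with the size hypothesis $n \ge \frac{2}{\kappa}(4\eta/b^2+1)$ preventing blowup at small levels). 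A couple of intermediate quantities are stated loosely (e.g., the form of the per-level contraction rate and whether the argument really passes through variance contraction before entropy or works with entropy directly), but you explicitly acknowledge this and defer to~\cite[Lemma 2.4]{Chen2021} for the exact constants, which is precisely what the paper itself does.
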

 
It turns out that one can bound the mixing time of the $\rho$-uniform-block Glauber dynamics when the target distribution $\mu$ satisfies the $\rho$-uniform block factorisation of entropy.
 
\begin{lemma}[{See, e.g.,~\cite[Lemma~2.6 and Fact 3.5(4)]{Chen2021} or~\cite[Lemma~17]{bez2021}}] \label{lem:block-glauber:mt}
  Let $q \ge 2$, $\rho \ge 1$ be integers and $V$ be a set of size $n \ge \rho + 1$. Let $\mu$ be a distribution supported on $\Omega \subseteq [q]^V$ that satisfies the $\rho$-uniform-block factorisation of entropy with multiplier $C_\rho$. Then, for any $\varepsilon > 0$, the mixing time of the $\rho$-uniform-block Glauber dynamics on $\mu$ satisfies, for $\mu_{\min} = \min_{\Lambda \in \Omega} \mu(\Lambda)$,
  \begin{equation*}
      T_{\mix}(\varepsilon) \le \left\lceil C_\rho \frac{n}{\rho} \left( \log \log \frac{1}{\mu_{\min}} + \log \frac{1}{2 \varepsilon^2} \right) \right\rceil.
  \end{equation*}
\end{lemma}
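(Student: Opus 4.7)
The plan is to obtain Lemma~\ref{lem:block-glauber} as a direct combination of the two results imported from~\cite{Chen2021} that appear in Appendix~\ref{sec:ap:mt}: Lemma~\ref{lem:factorisation-entropy}, which converts spectral independence plus marginal boundedness into a block factorisation of entropy, and Lemma~\ref{lem:block-glauber:mt}, which converts such a factorisation into a mixing time bound for the uniform-block Glauber dynamics. Since Lemma~\ref{lem:block-glauber} specialises to the Boolean case $q=2$, no additional arguments beyond parameter matching are required; the lemma is essentially a restatement of the combined consequence of these two external results in the notation used in the body of the paper.

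First I would instantiate Lemma~\ref{lem:factorisation-entropy} with $q=2$, the ground set $V$ of size $n = M$, and the distribution $\mu$ from the statement, with the same $b$, $\eta$ and $\kappa$ as given. The hypothesis $M \ge \frac{2}{\kappa}(4\eta/b^2 + 1)$ in Lemma~\ref{lem:block-glauber} is exactly the size assumption required in Lemma~\ref{lem:factorisation-entropy}, and $\mu$ being $b$-marginally bounded and $\eta$-spectrally independent is precisely the remaining hypothesis. Hence Lemma~\ref{lem:factorisation-entropy} yields that $\mu$ satisfies the $\lceil \kappa M \rceil$-uniform block factorisation of entropy (Definition~\ref{def:entropy-factorisation}) with constant $C = (2/\kappa)^{4\eta/b^2 + 1}$, which matches the definition of $C_\rho$ in the statement for $\rho = \lceil \kappa M \rceil$.

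Next I would plug this conclusion into Lemma~\ref{lem:block-glauber:mt}. The hypothesis $n \ge \rho+1$ there follows from $\kappa \in (0,1)$ combined with $M \ge \frac{2}{\kappa}(4\eta/b^2 + 1) \ge 2/\kappa \ge \lceil \kappa M \rceil + 1$ for this range of $\kappa$ (a trivial check that I would include for completeness). Applying Lemma~\ref{lem:block-glauber:mt} with $\varepsilon$ as in the statement then gives the advertised mixing time bound
\begin{equation*}
  T_{\mathrm{mix}}(\rho, \varepsilon) \le \left\lceil C_\rho \frac{M}{\rho} \left( \log \log \frac{1}{\mu_{\min}} + \log \frac{1}{2\varepsilon^2}\right)\right\rceil,
\end{equation*}
which is exactly the bound claimed in Lemma~\ref{lem:block-glauber}.

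Since both ingredients are already quoted as black boxes, there is no real technical obstacle; the only non-routine aspect is a careful bookkeeping check that the parameters in Lemma~\ref{lem:factorisation-entropy} (where the relevant multiplier is called $C$) and Lemma~\ref{lem:block-glauber:mt} (where it is called $C_\rho$) line up with the constant $(2/\kappa)^{4\eta/b^2+1}$ that appears in the statement of Lemma~\ref{lem:block-glauber}. I would spell this bookkeeping out in one or two sentences to make the identification unambiguous, and I would also briefly remark (as the appendix already notes for Lemma~\ref{lem:factorisation-entropy}) that the conclusion is insensitive to whether $\eta$ is a constant or grows with $M$, so the bound can be applied later on with $\eta = \epsilon \log n$ as needed in the proof of Lemma~\ref{lem:mixing-time}.
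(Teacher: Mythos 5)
You have proved the wrong statement. The target is Lemma~\ref{lem:block-glauber:mt}: that $\rho$-uniform-block factorisation of entropy with multiplier $C_\rho$ implies the stated mixing-time bound for the $\rho$-uniform-block Glauber dynamics. Your argument instead derives Lemma~\ref{lem:block-glauber} (the spectral-independence-to-mixing statement used in the body of the paper), and it does so by \emph{invoking} Lemma~\ref{lem:block-glauber:mt} as a black box in the second step. That is, your outline treats the very lemma you were asked to prove as one of its two hypotheses, so as a proof of Lemma~\ref{lem:block-glauber:mt} it is vacuous. As an aside, the paper does not supply its own proof of Lemma~\ref{lem:block-glauber:mt}; it is imported verbatim from the literature (Chen, Liu, Mani, Moitra or Bez\'akov\'a et al.), with the only proof given in the appendix being the one-line observation that Lemmas~\ref{lem:factorisation-entropy} and~\ref{lem:block-glauber:mt} together give Lemma~\ref{lem:block-glauber} — which is indeed what your proposal reproduces.

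If you want to actually prove Lemma~\ref{lem:block-glauber:mt}, the parameter-bookkeeping you describe does not help; the substance is elsewhere. The standard route is: first show that the $\rho$-uniform-block factorisation of entropy with constant $C_\rho$ (Definition~\ref{def:entropy-factorisation}) is equivalent to one-step relative-entropy contraction for the block dynamics, i.e.
\begin{equation*}
\operatorname{Ent}_\mu(P_\rho f)\ \le\ \Bigl(1-\tfrac{\rho}{C_\rho\, n}\Bigr)\operatorname{Ent}_\mu(f),
\end{equation*}
where $P_\rho$ is the transition kernel of the $\rho$-uniform-block Glauber dynamics, using that $\tfrac{1}{\binom{n}{\rho}}\sum_{S}\operatorname{Ent}_\mu^S(f)$ is exactly $\operatorname{Ent}_\mu(f)-\operatorname{Ent}_\mu(P_\rho f)$ by the variance/entropy decomposition for a random projection; then iterate this contraction $t$ times, convert relative entropy of $X_t$ against $\mu$ into total variation via Pinsker's inequality, and bound the initial relative entropy from the worst starting state by $\log(1/\mu_{\min})$. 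Unwinding $\bigl(1-\tfrac{\rho}{C_\rho n}\bigr)^t \log(1/\mu_{\min}) \le 2\varepsilon^2$ gives exactly $t \ge C_\rho \tfrac{n}{\rho}\bigl(\log\log\tfrac{1}{\mu_{\min}} + \log\tfrac{1}{2\varepsilon^2}\bigr)$, which is the claimed bound. None of this appears in your proposal.
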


\begin{proof}[Proof of   Lemma~\ref{lem:block-glauber}] \label{lem:block-glauber:proof} 
The proof of Lemma~\ref{lem:block-glauber} follows directly from combining  Lemmas~\ref{lem:factorisation-entropy} and~\ref{lem:block-glauber:mt}.
\end{proof}

\newpage

\section{Notation and definitions reference} \label{ap:notation}

Here we gather the notation and definitions that are used globally in our work. If some notation is not here, then it is only used in one section of our work (and it is defined in that section).

\subsection{Table of notation}

\begin{table}[H]
    \centering
    \begin{tabular}{l|l|l}
        \textbf{Notation} & \textbf{Description} & \textbf{Reference} \\
        \hline
        $\Phi(k, n, m)$ & A random $k$-CNF formula with $n$ variables and $m$ clauses. & Section~\ref{sec:intro} \\ 
        $\alpha$ & The density of the formula $\Phi$, so $\alpha = m/n$. & Section~\ref{sec:intro}  \\
        $\mathcal{V}$ & The set of variables of $\Phi$.  & Section~\ref{sec:intro} \\
        $\mathcal{C}$ & The set of clauses of $\Phi$. & Section~\ref{sec:intro} \\
        w.h.p. & Stands for ``with high probability". & Section~\ref{sec:intro} \\ 
        $d_{\TV}$ & The total variation distance between two distributions. & Section~\ref{sec:intro} \\
        $\xi$ & Our sampling algorithm has error at most $n^{-\xi}$.  & Theorem~\ref{thm:sampling} \\
        $\Delta_r$ & The high-degree threshold, set to $\mdegdef$. & Definition~\ref{def:degree} \\
        $r_0$, $r_1$, $\delta$ & $r_0 = \rvalue$, $r_1 = 0.227092$ and $\delta = \deltadef$. & Definition~\ref{def:distributed-marking} \\
        $\var(c)$ & The set of variables in a clause $c$. & Section~\ref{sec:po:marking} \\
        $\var(S)$ & The set of variables $\bigcup_{c \in S} \var(c)$. & Section~\ref{sec:po:marking} \\
        $\gc(r)$, $\bc(r)$ & Good and bad clauses, a partition of $\mathcal{C}$. & Section~\ref{sec:bad} \\
        $\gv(r)$, $\bv(r)$ & Good and bad variables, a partition of $\mathcal{V}$. & Section~\ref{sec:bad} \\
        $\mathcal{V}_{\mathrm m}$, $\mathcal{V}_{\mathrm a}$, $\mathcal{V}_{\mathrm c}$ & The sets of marked, auxiliary and control variables. & Definition~\ref{def:distributed-marking} \\
        $\Omega^*$ & The set of all assignments $\mathcal{V} \to \{\mathsf{F}, \mathsf{T} \}$ & Definition~\ref{def:subformula} \\
        $\Omega$ & The set of satisfying assignments of $\Phi$. & Definition~\ref{def:subformula} \\
        $\mu_A$ & The uniform distribution over $A \subseteq \Omega*$. & Definition~\ref{def:subformula} \\
        $\Phi^\Lambda$ & The formula $\Phi$ simplified under $\Lambda$. & Definition~\ref{def:subformula} \\
         $\mathcal{V}^{\Lambda}$, $\mathcal{C}^{\Lambda}$ & The variables and clauses of $\Phi^\Lambda$ & Definition~\ref{def:subformula} \\
        $\Omega^\Lambda$ & The set of satisfying assignments of $\Phi^\Lambda$. & Definition~\ref{def:subformula} \\
        $\restr{\mu}{V}$ & The marginal distribution of $\mu$ on $V$. & Definition~\ref{def:marginal} \\
        $T_{\mix}(\rho, \varepsilon)$ & The mixing time of the $\rho$-uniform-block Glauber dynamics. & Section~\ref{sec:po:si:block} \\
        $\mathcal{I}^\Lambda(u \to v)$ & The influence of $u$ on $v$ (under $\Lambda$). & Section~\ref{sec:po:si:block}, \eqref{eq:influence} \\
        $G_\Phi$ &  The dependency graph of $\mathcal{C}$.  & Definition~\ref{def:graph-phi} \\
        $H_\Phi$ &  The dependency graph of $\mathcal{V}$.  & Definition~\ref{def:graph-H} \\
        $\Phi_{\mathrm{good}}(r)$& The subformula of $\Phi$ with all good variables and good clauses. & Definition~\ref{def:good-phi} \\ 
         $\Phi_{\mathrm{bad}}(r)$ & The subformula of $\Phi$ with all bad variables and bad clauses. & Definition~\ref{def:good-phi} 
    \end{tabular}
\end{table}

\subsection{Table of definitions }
\begin{table}[H]
    \centering
    \begin{tabular}{l|l}
        \textbf{Name} & \textbf{Reference} \\
        \hline
        high-degree & Definition~\ref{def:degree}, page~\pageref{def:degree}\\
        $r$-distributed & Definition~\ref{def:distributed-marking}, page~\pageref{def:distributed-marking} \\
        $(r, r_{\mathrm m}, r_{\mathrm a}, r_{\mathrm c})$-marking & Definition~\ref{def:distributed-marking}, page~\pageref{def:distributed-marking} \\
        $\varepsilon$-uniform & Definition~\ref{def:uniform}, page~\pageref{def:uniform} \\
        $b$-marginally bounded & Section~\ref{sec:po:si:block}, page~\pageref{sec:po:si:block} \\
        $\eta$-spectrally independent & Section~\ref{sec:po:si:block}, page~\pageref{sec:po:si:block} \\
    \end{tabular}
\end{table}
\end{appendices}

\end{document}